\documentclass[a4paper,dvips]{book}

\usepackage[latin1]{inputenc}
\usepackage{amsthm,amssymb,amsmath}
\usepackage{wasysym} 
\usepackage{bm}
\usepackage{url}
\usepackage{slashed}
\usepackage[all]{xy}
\CompileMatrices

\usepackage{virginialake}

 \usepackage[%
bookmarks,%
colorlinks=false,%
pdftitle={Nested Sequents},%
pdfauthor={Kai Brunnler},%
pdfpagemode=UseOutlines]{hyperref}%

\newcommand{\ftextwidth}{.97\textwidth}

\newcommand{\vltrnote}[6]{\vltrl{#1}{}{#6}{#2}{#3}{#4}{#5}}
\newcommand{\Proofleaf}[2]%
{\vltr{#1}{#2}{\vlhy{}}{\vlhy{\qquad}}{\vlhy{}}}

\newcommand{\Proof}[2]%
{\vlderivation{
\vltr{#1}{#2}{\vlhy{}}{\vlhy{\qquad}}{\vlhy{}}}}

\parindent 0mm
\parskip 1ex

\begin{document}

\newtheoremstyle{kaithm}{\topsep}{\topsep}%
{}
{}
{\bfseries}
{}
{ }
{\thmname{#1}\thmnumber{ #2} \thmnote{(#3)}}

\theoremstyle{kaithm}
  \newtheorem{theorem}{Theorem}[chapter]
  \newtheorem{lemma}[theorem]{Lemma}
  \newtheorem{corollary}[theorem]{Corollary}
  \newtheorem{proposition}[theorem]{Proposition}
  \newtheorem{conjecture}[theorem]{Conjecture}
  \newtheorem{fact}[theorem]{Fact}
  \newtheorem{problem}[theorem]{Problem}
  \newtheorem{definition}[theorem]{Definition}
  \newtheorem{remark}[theorem]{Remark}
  \newtheorem{example}[theorem]{Example}
  \newtheorem{notation}[theorem]{Notation}


 \newcommand{\kaiinf}[3]{{#1} \;
 \begin{array}{l}
   {#3} \\\hline
   {#2}\\
 \end{array}
}

\newcommand{\cinf}[3]{{#1} \;
 \begin{array}{c}
   {#3} \\\hline
   {#2}\\
 \end{array}
}

\newcommand{\K}{{\sf K}}
\newcommand{\Kc}{{\ensuremath{\sf K_c}}}
\newcommand{\X}{{\sf X}}
\newcommand{\Y}{{\sf Y}}
\newcommand{\Z}{{\sf Z}}

\newcommand{\ctr}{{\sf ctr}}
\newcommand{\wk}{{\sf wk}}
\newcommand{\nec}{{\sf nec}}
\newcommand{\cut}{{\sf cut}}
\newcommand{\ycut}{{\text{\sf Y-cut}}}
\newcommand{\ystr}{{\text{\sf Y-str}}}
\newcommand{\Km}{{\ensuremath{\sf K_m}}}

\newcommand{\dia}[1]{\Diamond #1}
\newcommand{\diac}[1]{\Diamond #1_{\sf c}}
\newcommand{\str}[1]{[#1]}

\newcommand{\mcut}{{\text{\sf mcut}}}
\newcommand{\mboxx}{{\sf m}\Box}
\newcommand{\mand}{{\sf m}\vlan}
\newcommand{\fctr}{{\sf fctr}}

\newcommand{\diak}{{\ensuremath{\sf \Diamond{k}}}}
\newcommand{\diad}{{\ensuremath{\sf\Diamond{d}}}}
\newcommand{\diat}{{\ensuremath{\sf\Diamond{t}}}}
\newcommand{\diab}{{\ensuremath{\sf\Diamond{b}}}}
\newcommand{\diafour}{{\ensuremath{\sf\Diamond{4}}}}
\newcommand{\diafive}{{\ensuremath{\sf\Diamond{5}}}}
\newcommand{\diafivepa}{{\ensuremath{\sf\Diamond{5_1}}}}
\newcommand{\diafivepb}{{\ensuremath{\sf\Diamond{5_2}}}}
\newcommand{\diafivepc}{{\ensuremath{\sf\Diamond{5_3}}}}

\newcommand{\kc}{{\ensuremath{\sf \Diamond{k}_c}}}
\newcommand{\diadc}{{\ensuremath{\sf\Diamond{d}_c}}}
\newcommand{\diadccirc}{{\ensuremath{\sf\Diamond{d}_c^\circ}}}
\newcommand{\diatc}{{\ensuremath{\sf\Diamond{t}_c}}}
\newcommand{\diabc}{{\ensuremath{\sf\Diamond{b}_c}}}
\newcommand{\diafourc}{{\ensuremath{\sf\Diamond{4}_c}}}
\newcommand{\diafivec}{{\ensuremath{\sf\Diamond{5}_c}}}
\newcommand{\diafivepac}{{\ensuremath{\sf\Diamond{5_1}_c}}}
\newcommand{\diafivepbc}{{\ensuremath{\sf\Diamond{5_2}_c}}}
\newcommand{\diafivepcc}{{\ensuremath{\sf\Diamond{5_3}_c}}}

\newcommand{\diacrules}{\diadc, \diatc, \diabc, \diafourc, \diafivec}
\newcommand{\modax}{{\{\sf d,t,b,4,5\}}}
\newcommand{\modaxnod}{{\{\sf t,b,4,5\}}}

\newcommand{\strd}{{\ensuremath{\sf [d]}}}
\newcommand{\strt}{{\ensuremath{\sf [t]}}}
\newcommand{\strb}{{\ensuremath{\sf [b]}}}
\newcommand{\strfour}{{\ensuremath{\sf [4]}}}
\newcommand{\strfive}{{\ensuremath{\sf [5]}}}
\newcommand{\strfivepa}{{\ensuremath{\sf [5a]}}}
\newcommand{\strfivepb}{{\ensuremath{\sf [5b]}}}
\newcommand{\strfivepc}{{\ensuremath{\sf [5c]}}}

\newcommand{\med}{{\sf med}}
\newcommand{\D}{\mathcal{D}}
\newcommand{\PP}{\mathcal{P}}

\newcommand{\rel}{\rightarrow}
\newcommand{\invrel}{\leftarrow}
\newcommand{\symrel}{\leftrightarrow}

\newcommand{\M}{{\mathcal M} }
\newcommand{\Sys}{{\mathcal S} }

\newcommand{\set}{\mathit{set}}
\newcommand{\st}{\mathit{st}}
\newcommand{\context}{\{\;\}}
\newcommand{\depth}[1]{\mathit{depth}(#1)}
\newcommand{\prove}[1]{\mathit{prove}(#1)}
\newcommand{\tree}[1]{\mathit{tree}(#1)}

\newcommand{\inverse}[1]{\overline{#1}}
\newcommand{\sufo}[1]{{\mathit{sf}(#1)}}

\newcommand{\nl}{\hspace{0mm}\\}

\def\grammareq {\mathrel{\raise.4pt\hbox{::}{=}}}%
\newcommand{\imp}{\vljm}%
\newcommand{\cons}[1]{{\{#1\}}}%

\renewcommand{\neg}[1]{\bar{#1}}
\def\true {{\sf{t}}}%
\def\false {{\sf{f}}}%

\newcommand{\formula}[1]{{\underline{#1}_{\scriptscriptstyle \sf F}}}

\frontmatter

\begin{titlepage}
\hspace{0ex}
\vfill
\begin{center}
 \textbf{\Huge Nested Sequents}  \\[5ex]
\Large Habilitationsschrift\\[5ex]
{\bf Kai~Br\"unnler}\\[1ex]
Institut f\"ur Informatik und\\
angewandte Mathematik\\
Universit\"at Bern\\[1ex]
\today\\
\end{center}
\vfill
\end{titlepage}

\pagestyle{empty}

\pagebreak

\chapter*{Abstract}

We see how \emph{nested sequents}, a natural generalisation of
hypersequents, allow us to develop a systematic proof theory for modal
logics.  As opposed to other prominent formalisms, such as the display
calculus and labelled sequents, nested sequents stay inside the modal
language and allow for proof systems which enjoy the subformula
property in the literal sense.

In the first part we study a systematic set of nested sequent systems
for all normal modal logics formed by some combination of the axioms
for seriality, reflexivity, symmetry, transitivity and euclideanness.
We establish soundness and completeness and some of their good
properties, such as invertibility of all rules, admissibility of the structural
rules, termination of proof-search, as well as syntactic
cut-elimination.

In the second part we study the logic of common knowledge, a modal
logic with a fixpoint modality. We look at two infinitary proof
systems for this logic: an existing one based on ordinary sequents,
for which no syntactic cut-elimination procedure is known, and a new
one based on nested sequents. We see how nested sequents, in contrast
to ordinary sequents, allow for syntactic cut-elimination and thus
allow us to obtain an ordinal upper bound on the length of proofs.

\pagebreak

\tableofcontents

\mainmatter

\pagestyle{headings}

\chapter{Introduction}

{\bf The problem of the proof theory of modal logic.} The proof theory
of modal logic as developed in Gentzen's sequent calculus is widely
recognised as unsatisfactory: it provides systems only for a few modal
logics, and does so in a non-systematic way. To solve this problem,
many extensions of the sequent calculus have been proposed. The survey
by Wansing \cite{Wan02} discusses many of them. The three most
prominent formalisms seem to be the \emph{hypersequent calculus}, due
to Avron \cite{avron96method}, the \emph{display calculus} due to
Belnap \cite{Belnap82,Wan98}, and \emph{labelled sequent systems},
which have been introduced and studied by many researchers. The book
by Vigan{\`o} \cite{Vigano00a} and the article by Negri \cite{Neg05}
provide a recent account of labelled sequent systems where more
references can be found.

{\bf Hypersequents, display calculus, and labelled sequents.} The
relationship between these formalisms might be summarised as follows.
The hypersequent calculus is a comparatively gentle extension of the
sequent calculus, in particular it allows for a subformula property in
the literal sense. Both the display calculus and the labelled sequent
calculus are departing further from the ordinary sequent calculus, in
particular they only satisfy weaker forms of the subformula property.
On the other hand, both the display calculus and labelled systems are
more expressive than hypersequents.  They are known to capture all the
basic modal logics that we are going to consider here, which is not
true for hypersequents. In fact, the only modal logic captured so far
in the hypersequent calculus, that has not been captured in the
ordinary sequent calculus, is the modal logic {\sf S5}. In general,
there seems to be a tension between the desire to have a formalism
which is expressive and the desire to have a formalism in which
cut-free proofs are simple objects with a true subformula property.

{\bf Staying inside the modal language.} A hypersequent is a sequence
of ordinary sequents and can be read as a formula of modal logic: it
is a disjunction where all disjuncts are prefixed by a box modality. A
display sequent generally does not correspond to a modal formula: it
contains structural connectives which correspond to backward-looking
modalities, so connectives of tense logic. Similarly, a labelled
sequent does not correspond to a formula of modal logic: it contains
variables and an accessibility relation, so notions from predicate
logic. In this sense, hypersequents stay inside the modal language,
while display calculus and labelled sequents do not. In this work, we
develop a proof theory for modal logic which aims to be as systematic
and expressive as the display calculus and labelled sequents, but
stays within the modal language and allows for a true subformula
property, like hypersequents.

{\bf Nested sequents.} To that end, we use \emph{nested sequents},
which are essentially trees of sequents. They naturally generalise
both sequents (which are nested sequents of depth zero) and hypersequents
(which essentially are nested sequents of depth one). The notion of
nested sequent has been invented several times independently. Bull
\cite{Bull92} gives a proof system based on nested sequents for a
fragment of propositional dynamic logic with converse. Kashima
\cite{Kash94} gives proof systems for some tense logics and attributes
the idea to Sato \cite{Sato77}. Unaware of these works, the author
introduced the same notion of nested sequent under the name \emph{deep
  sequent} in \cite{Bru06ds}. Poggiolesi introduced again the same
notion but with a rather different notation under the name
\emph{tree-hypersequent} \cite{Pog09}. Nested sequents are also used
by Gor{\'e} et al.\ to give a proof system for bi-intuitionistic logic
which is suitable for proof-search \cite{GorPosTiu08}.

{\bf Deep inference.} Nested sequents are tree-like structures with
formulas occurring deeply inside of them.  The proof systems
introduced in this work crucially rely on being able to apply
inference rules to all formulas, including those deeply inside.  The
general idea of applying rules deeply has been proposed several times
in different forms and for different purposes. Sch\"utte already used
it in the 1950s in order to obtain systems without contraction and
weakening, which he considered more elegant \cite{SchuSKP}.  Guglielmi
developed a formalism which is centered around applying rules deeply
and which replaces the traditional tree-format of sequent calculus
proofs by a linear format \cite{Gugl07}. This solved the problem of finding
a proof-theoretic system for a certain substructural logic which
cannot be captured in the sequent calculus. The name of this formalism
used to be \emph{calculus of structures} but is now simply \emph{deep
  inference}. Deep inference systems then have also been developed for
some modal logics \cite{HeinStew05,StSt,Sto06,GoreTiu07}. The design
of the proof systems in this work is inspired by deep inference. We
will see the precise connection between nested sequent systems and
deep inference systems later.

\textbf{The big picture.} This work is a case study in designing
proof-theoretic systems for non-classical logics. It is an instance of
the widely-known phenomenon that the notion of sequent, so the
structural level of the proof system, has to be extended in order to
accommodate certain logics. Our methodology here is that the structural
level is not extended by arbitrary structural connectives, but only
by those from the logic. As we do this, sequents become nested
structures and so more formula-like. It then turns out that we need
to allow inference rules to apply inside of these nested structures in
order to obtain complete cut-free proof systems. There are many other
instances of this phenomenon.  The \emph{logic of bunched}
implications by Pym \cite{Pym02} is a substructural logic which has
both a multiplicative and an additive conjunction. The proof systems
for this logic have two corresponding structural connectives, which
can be nested. Logics with non-associative conjunction also naturally lead
to sequents which are nested structures, for example the
non-associative Lambek calculus which can be found in the handbook
article by Moortgat \cite{Moortgat97}. Another example are the proof
systems for logics with \emph{adjoint modalities}, certain epistemic
logics for reasoning about information in a multi-agent system, by
Dyckhoff and Sadrzadeh \cite{RoySadr09}.

\textbf{The plan.} In the following there are two chapters which are
independent.  In the first chapter we study nested sequent systems for
all normal modal logics formed by some combination of the axioms for
seriality, reflexivity, symmetry, transitivity and euclideanness. We
establish soundness and completeness and some of their good
properties, such as invertibility, admissibility of the structural
rules, termination of proof-search, as well as syntactic
cut-elimination. This chapter contains work from
\cite{Bru06ds,Bru08dsl} and also from \cite{BrSt09} which is joint
work with Lutz Stra{\ss}burger.

In the second chapter we study the logic of common knowledge, a modal
logic with a fixpoint modality. We look at two infinitary proof
systems for this logic: an existing one based on ordinary sequents,
for which no syntactic cut-elimination procedure is known, and a new
one based on nested sequents. We see how nested sequents, in contrast
to ordinary sequents, allow for syntactic cut-elimination and thus
allow us to obtain an ordinal upper bound on the length of proofs.  This
chapter contains work from \cite{BruStu07,BrSt09} which are joint
work with Thomas Studer.

{\bf Acknowledgements.}  This work benefited from discussions with Roy
Dyckhoff, Rajeev Gor{\'e}, Gerhard J\"ager, Roman Kuznets, Richard
McKinley, Dieter Probst, Thomas Strahm, Lutz Stra{\ss}burger,
Thomas Studer and Alwen Tiu. Special thanks go to Alessio Guglielmi
for his constant support and for his LaTeX macros.

\chapter{Systems for Basic Normal Modal Logics}

In this chapter we consider modal logics formed from the least normal
modal logic $\K$ by adding axioms from the set $\modax$ which is shown
in Figure~\ref{fig:framecond}. This gives rise to the modal logics
shown in Figure~\ref{fig:cube}. In the first section we consider
sequent systems in which modal axioms are turned into logical rules,
namely rules for the $\Diamond$-connective. For each modal logic we
find a corresponding cut-free sequent system which is sound and
complete for this logic. However, some modal logics can be axiomatised
in different ways, for example {\sf S5} can be axiomatised as
$\K+\{{\sf t,b,4}\}$ and as $\K+\{{\sf t,5}\}$. Without cut, some of
these axiomatisations turn out to be incomplete. For those cut-free
systems which are complete we give a syntactic cut-elimination
procedure, in the course of which we discover certain structural modal
rules. In the second section we then study sequent systems where modal
axioms are formalised not by using logical rules, but by using the
structural modal rules we just found. This turns out to yield cut-free
systems where each possible way of axiomatising a modal logic is
complete.

At the end of the chapter we discuss some related formalisms.

\begin{figure}
\centering
\fbox{\parbox{\ftextwidth}{
\centering
\vspace{1ex}
  \begin{tabular}{lll}
    {\sf k}:  no condition & $\top$ & $\Box(A \vlor B) \imp (\Box A \vlor \Diamond B)$\\
    {\sf d}:  serial & $\forall s\exists t. \; s \rel t$ & $\Box A \imp \Diamond A$\\
    {\sf t}:  reflexive & $\forall s. \; s \rel s$ & $A \imp \Diamond A$\\
    {\sf b}:  symmetric & $\forall st. \; s \rel t \, \imp \, t \rel s$ & $A \imp \Box \Diamond A$\\
    {\sf 4}:   transitive & $\forall stu. \; s \rel t \, \vlan \, t
    \rel u \, \imp \, s \rel u$ & $\Box A \imp \Box\Box A$\\
    {\sf 5}:   euclidean \quad &  $\forall stu. \; s \rel t \, \vlan \, s
    \rel u \, \imp \, t \rel u$ \quad & $\Diamond A \imp \Box \Diamond A$\\
  \end{tabular}}
\vspace{1ex}
}
  \caption{Frame conditions and modal axioms}
  \label{fig:framecond}
\end{figure}

\begin{figure}
  \centering
\[
  \xymatrix@R-3ex{ {} & *{\circ} \save
    []+<-1.5ex,+1.5ex>*\txt{\scriptsize \sf S4}\restore \ar@{-}[rrrr]
    \ar@{-}[dd] \ar@{-}[ld] & {} & {} & {} & *{\circ}
    \save []+<1.5ex,+1.5ex>*\txt{\scriptsize \sf S5}\restore \ar@{-}[ld] \ar@{-}[ddddd] \\
    *{\circ} \save []+<-1.5ex,+1.5ex>*\txt{\scriptsize \sf T}\restore
    \ar@{-}[rrrr] \ar@{-}[ddd] & {} & {} & {} &
    *{\circ} \save []+<-1.5ex,+1.5ex>*\txt{\scriptsize \sf TB}\restore \ar@{-}[ddd] & {} \\
    {} & *{\circ} \save []+<-1.5ex,+1.5ex>*\txt{\scriptsize \sf
      D4}\restore \ar@{-}[ddd] \ar@{-}[rr] & {} & *{\circ} \save
    []+<2ex,-1.5ex>*\txt{\scriptsize \sf D45}\restore
    \ar@{-}@/_/[uurr] & {} & {} \\
    {} & {} & *{\circ} \save []+<2ex,-1.5ex>*\txt{\scriptsize \sf
      D5}\restore
    \ar@{-}[ur] & {} & {} & {} \\
    *{\circ} \save []+<-1.5ex,0ex>*\txt{\scriptsize \sf D}\restore
    \ar@{-}[ddd] \ar@{-}[ruu] \ar@{-}[urr] \ar@{-}[rrrr]& {} & {} & {}
    & *{\circ} \save []+<2.2ex,0ex>*\txt{\scriptsize \sf DB}\restore
    \ar@{-}[ddd] & {} \\
    {} & *{\circ} \save []+<-1.5ex,+1.5ex>*\txt{\scriptsize \sf
      K4}\restore \ar@{-}[ldd] \ar@{-}[rr]& {} & *{\circ} \save
    []+<1.5ex,-1.5ex>*\txt{\scriptsize \sf K45}\restore \ar@{-}[uuu]
    \ar@{-}[rr] & {} &
    *{\circ} \save []+<2.5ex,-1ex>*\txt{\scriptsize \sf KB5}\restore \ar@{-}[ldd] \\
    {} & {} & *{\circ} \save []+<+1.5ex,-1.5ex>*\txt{\scriptsize \sf
      K5}\restore \ar@{-}[ru] \ar@{-}[uuu] & {} & {} & {} \\
    *{\circ} \save []+<-1.5ex,-1.5ex>*\txt{\scriptsize \sf K}\restore
    \ar@{-}[rrrr] \ar@{-}[rru] & {} & {} & {}
    & *{\circ} \save []+<1.5ex,-1.5ex>*\txt{\scriptsize \sf KB}\restore & {} \\
  }
  \]
  \caption{The modal ``cube'' \cite{GarStanEncy}}
\label{fig:cube}
\end{figure}

\section{Modal Axioms as Logical Rules}

The plan of this section is as follows: we first introduce the sequent
systems and then we see that they are sound and complete for the
respective Kripke semantics.  After that we see the syntactic
cut-elimination procedure.

\subsection{The Sequent Systems}

{\bf Formulas.} Propositions $p$ and their negations $\neg p$ are
\emph{atoms}, with $\neg{\neg p}$ defined to be $p$. Atoms are denoted
by $a, b, c, d$. \emph{Formulas}, denoted by $A,B,C,D$ are given by
the grammar
\[
A \grammareq p \mid \neg p \mid (A \vlor A) \mid (A \vlan A) \mid
\Diamond A \mid \Box A \quad .
\]

Given a formula $A$, its \emph{negation} $\neg A$ is defined as usual
using the De Morgan laws, $A \supset B$ is defined as $\neg A \vlor B$
and $\bot$ and $\top$ are defined as $p \vlan \neg p$ and $p \vlor
\neg p$, respectively, for some proposition $p$. Binary connectives
are left-associative: $A \vlor B \vlor C$ denotes $((A \vlor B)
\vlor C)$, for example.

\textbf{Nested sequents.} The set of \emph{nested sequents} is
inductively defined as follows: 
\begin{enumerate}
\item a finite multiset of formulas is a nested sequent,
\item  the multiset union of two nested sequents is a nested sequent,
\item if $\Gamma$ is a nested sequent then the singleton multiset
  containing $[\Gamma]$ is a nested sequent.
\end{enumerate}
In the following a \emph{sequent} is a nested sequent.  Sequents are
denoted by $\Gamma$,$\Delta$,$\Lambda$,$\Pi$ and $\Sigma$. We adopt
the usual notational conventions for sequents, in particular the comma
in the expression $\Gamma, \Delta$ is multiset union and there is no
distinction between a singleton multiset and its element. A sequent of
the form $[\Gamma]$ is also called a \emph{boxed sequent}. Clearly, a
sequent is always a multiset of formulas and boxed sequents,
so it is of the form
\[
A_1,\dots,A_m,[\Delta_1],\dots,[\Delta_n] \rlap{\quad .}
\]
We assume a fixed arbitrary linear order on formulas and another fixed
arbitrary linear order on boxed sequents.  The \emph{corresponding
  formula} of a sequent $\Gamma$, denoted $\formula{\Gamma}$,
is defined as follows: the corresponding formula of a sequent as given
above is $\bot$ if $m=n=0$ and otherwise it is
\[
A_1 \vlor \dots \vlor A_m 
\vlor \Box(\formula{\Delta_1}) 
\vlor \dots \vlor \Box(\formula{\Delta_n})
\rlap{\quad ,}
\]
where formulas and boxed sequents are list according to the fixed
orders.  Often we do not distinguish between a sequent and its
corresponding formula, for example a model of a sequent is a model of
its corresponding formula. A sequent $\Gamma$ has a
\emph{corresponding tree}, denoted $\tree{\Gamma}$, whose nodes are
marked with multisets of formulas. The corresponding tree of the above
sequent is
\[
\vcenter{\xymatrix@!C=1cm{{}  & {} & {\{A_1,\dots,A_m\}} \ar[dll] \ar[dl] \ar[dr] \ar[drr]\\
    {\tree{\Delta_1}} & {\tree{\Delta_2}} & {\dots} &
    {\tree{\Delta_{n-1}}}& {\tree{\Delta_n}}}} \qquad .
\]
Often we do not distinguish between a sequent and its corresponding
tree, for example the \emph{root} of a sequent is the root of its
corresponding tree.

{\bf Sequent contexts, unary.} Informally, a context is a sequent with
holes.  We will mostly encounter sequents with just one hole. To mark
the place of a hole in a sequent we use the symbol $\context$, called
the \emph{hole}. We inductively define the set of \emph{unary
  contexts}: 
\begin{enumerate}
\item the multiset containing a single hole is a unary context,
\item the multiset
  union of a sequent and a unary context is a unary context, and
\item given a unary context $\cal C$, the multiset containing a single
  occurrence of $[\cal C]$ is a unary context.
\end{enumerate}
Unary contexts are
denoted by $\Gamma\context, \Delta\context$ and so on. The multiset
containing a single hole is also called the \emph{empty context}. Our
conventions for writing sequents also apply to sequent contexts, in
particular comma denotes multiset union. The \emph{depth} of a unary
context $\Gamma\context$, denoted $\depth{\Gamma\context}$ is defined
as follows:
\begin{enumerate}
\item $\depth{\context} = 0$
\item $\depth{\Gamma,\Delta\context} = \depth{\Delta\context}$
\item  $\depth{[\Delta\context]} = \depth{\Delta\context} +1$ \quad .
\end{enumerate}
Given a unary context $\Gamma\context$ and a sequent $\Delta$ we can
obtain the sequent $\Gamma\cons{\Delta}$ by filling the hole in
$\Gamma\context$ with $\Delta$. Formally, $\Gamma\cons{\Delta}$ is
defined inductively as follows: 
\begin{enumerate}
\item if $\,\Gamma\context=\context\,$ then $\,\Gamma\cons{\Delta}=\Delta$,
\item  if $\,\Gamma\context=\Gamma_1,\Gamma_2\context\,$ then
  $\,\Gamma\cons{\Delta}=\Gamma_1,\Gamma_2\cons{\Delta}\,$ and
\item  if $\,\Gamma\context=[\Gamma_1\context]\,$ then
  $\,\Gamma\cons{\Delta}=[\Gamma_1\cons{\Delta}]$\quad.
\end{enumerate}
\begin{example}
  Given the unary context $\Gamma\context=A,[[B],\context]$ and the
  sequent $\Delta=C,[D]$ we can obtain the sequent
  \[
  \Gamma\cons{\Delta}=A,[[B],C,[D]] \quad .
  \]
\end{example}

{\bf Sequent contexts, generally.} We want to allow multiple holes in
a context and we want to allow filling holes with contexts, not just
sequents. This is conceptually straightforward and formally somewhat
technical, so the reader is invited to skip to Example~\ref{excon}. To
keep track of the order of holes we index them with a number $i>0$ as
in $\context_i$.  Later the indices will never be shown since holes in
a context are of course naturally ordered when written down on paper.
We inductively define the set of \emph{precontexts}: 
\begin{enumerate}
\item a multiset containing a single hole $\context_i$ with $i>0$ is a precontext,
\item a  multiset containing a single formula is a precontext,
\item the multiset  union of two precontexts is a precontext, and
\item given a precontext
  $\cal C$, the multiset containing a single occurrence of $[\cal C]$
  is a precontext.
\end{enumerate}
The \emph{arity} of a context is the number of holes occurring in it. A
\emph{sequent context}, or just \emph{context}, is a precontext of
arity $n$ such that for each $i\leq n$ the hole $\context_i$ occurs
exactly once in it. Notice that sequents are exactly the contexts of
arity zero and, disregarding the index on the hole, unary contexts are
exactly the contexts of arity one.  A context of arity $n$ is denoted
by
\[
\Gamma\underbrace{\context\dots\context}_{n-\text{times}} \quad .
\]

\newcommand{\C}{{\cal C}}

Given an $n$-ary context
$\Gamma\context\dots\context$ and $n$ contexts ${\cal
  C}_{1},\dots,{\cal C}_{n}$ we can obtain the context
\[
\Gamma\cons{{\cal C}_{1}}\dots\cons{{\cal C}_{n}}
\]
by filling the holes in $\Gamma\context\dots\context$ with ${\cal
  C}_{1},\dots,{\cal C}_{n}$. Formally, to define this we first need
an auxiliary definition adjusting indices of holes. Given a precontext
$\C$, let $\C^{\mathord{+}j}$ be the precontext obtained from it by
replacing each hole $\context_i$ by $\context_{i+j}$. Given a
precontext $\C$ and contexts $\C_1,\dots,\C_n$ we now inductively
define $\cons{{\C}_{1}}\dots\cons{{\C}_{n}}$ as follows, where $a_j$
is the arity of $\C_j$:
\begin{enumerate}
\item if $\,\C=\context_i\,$ then
  $\,\C\cons{{\C}_{1}}\dots\cons{{\C}_{n}}={\C}_{i}^{+\sum_{j<i}a_j}\,$,
\item if $\,\C=\C',\C''\,$ then $\,\C\cons{{\C}_{1}}\dots\cons{{\C}_{n}}=
  \C'\cons{{\C}_{1}}\dots\cons{{\C}_{n}},
  \C''\cons{{\C}_{1}}\dots\cons{{\C}_{n}}\,$
  and
\item   if $\,\C=[\C']\,$ then $\,\C\cons{{\C}_{1}}\dots\cons{{\C}_{n}}=
  [\C'\cons{{\C}_{1}}\dots\cons{{\C}_{n}}]$\quad.
\end{enumerate}
Clearly, $\C\cons{{\C}_{1}}\dots\cons{{\C}_{n}}$ is a context if
$\C$ and $\C_1,\dots,\C_n$ are contexts. We leave out replacements of
holes by holes, so by convention we write $\Gamma\cons{\C_1}\context$
instead of $\Gamma\cons{\C_1}\cons{\C_2}$ if ${\C}_2$ is a hole.

\begin{example}\label{excon}
  Given the binary context
  $\Gamma\context\context=A,[[B],\context],\context$ and the unary
  context $\Delta\context=C,[\context]$ we can obtain the binary
  context
  \[
  \Gamma\cons{\Delta\context}\context=A,[[B],C,[\context]],\context
  \quad,
  \]
  where we omitted the indices of holes since in all contexts the
  holes are ordered from left to right as shown.
\end{example}


{\bf Inference rules, derivations and proofs.} In the following
instance of an inference rule $\rho$
\[
\cinf{\rho}{\Delta}{\Gamma_1 \quad
  \dots \quad \Gamma_n}
\]
we call $\Gamma_1\dots\Gamma_n$ its \emph{premises} and $\Delta$ its
\emph{conclusion}.  We write $\rho^n$ to denote $n$ instances of
$\rho$ and $\rho^*$ to denote an unspecified number of instances of
$\rho$. A \emph{system}, denoted by $\Sys$, is a set of inference
rules.  A \emph{derivation} in a system $\Sys$ is a finite tree whose
nodes are labelled with sequents and which is built according to the
inference rules from $\Sys$. The sequent at the root is the
\emph{conclusion} and the sequents at the leaves are the
\emph{premises} of the derivation. Derivations are denoted by $\D$. A
derivation $\D$ with conclusion $\Gamma$ in system $\Sys$ is sometimes
shown as
\[
\vlderivation{
 \vltrnote{\D}{\Gamma}{
\vlhy {}}{\vlhy {\qquad}}{\vlhy {}}{\Sys}}
\qquad .
\]
The depth of a derivation $\D$ is denoted by $|\D|$.  Note that the
depth of a derivation, which is a tree, has nothing to do with the
depth of the sequents in it, which are also trees. A \emph{proof} of a
sequent $\Gamma$ in a system is a derivation in this system with
conclusion $\Gamma$  where all premises are instances of the axiom
$\Gamma\cons{p,\neg p}$.  Proofs are denoted by $\PP$.

{\bf The sequent systems.} Figure~\ref{fig:kx} shows the set of rules
from which we form our deductive systems. \emph{System {\K}} is the
set of rules $\{\vlan,\vlor,\Box,{\kc}\}$. We will look at extensions
of System $\K$ with any combination of the rules $\diacrules$.  Each
rule name $\diac{\rho}$ in $\X$ has a corresponding frame condition
and modal Hilbert-style axiom $\rho$ as shown in
Figure~\ref{fig:framecond}. The subscript {\sf c} denotes that a rule
has a built-in contraction. We also consider rules without built-in
contraction. They have the same names but without the subscript and
are shown in Figure~\ref{fig:dia}. The purpose of the built-in
contraction is to make all rules invertible and to make contraction
admissible. Given a set of names of modal axioms $\X
\subseteq \{{\sf d,t,b,4,5}\}$, $\dia{\X}$ is the set of rule names
$\{\dia{\rho}\,|\, \rho \in \X \}$, and $\diac{\X}$ is the set of rule names
$\{\diac{\rho}\,|\, \rho \in \X \}$.

The $\diafivec$-rule is a bit special since it uses a
binary context. It can actually be decomposed into three rules that
use unary contexts, as we will see.  However, we prefer the
presentation as a single rule.  The rule is best understood as
allowing to do the following: when going from premise to conclusion,
take some formula $\Diamond A$, which is not at the root, and copy it
to any other place in the sequent.

\begin{figure}
\fbox{\parbox{\ftextwidth}{
  \centering
  \[ \Gamma\cons{p,\neg p} \qquad \cinf{\vlan}{\Gamma\cons{A \vlan
      B}}{\Gamma\cons{A} \quad \Gamma\cons{B}} \qquad
  \cinf{\vlor}{\Gamma\cons{A \vlor B}}{\Gamma\cons{A,B}}
\]
\[
  \kaiinf{\Box}{\Gamma\cons{\Box A}}{\Gamma\cons{[A]}} \qquad
\kaiinf{\kc}{\Gamma\cons{\Diamond A, [\Delta]}}{\Gamma\cons{\Diamond
    A, [\Delta, A]}}
\]

\[
\kaiinf{\diadc}{\Gamma\cons{\Diamond
    A}}{\Gamma\cons{\Diamond A,[A]}}  \qquad
\kaiinf{\diatc}{\Gamma\cons{\Diamond
    A}}{\Gamma\cons{\Diamond A,A}} \qquad
\kaiinf{\diabc}{\Gamma\cons{[\Delta, \Diamond A]}}{\Gamma\cons{[\Delta,
    \Diamond A], A}}
\]

\[
\kaiinf{\diafourc}{\Gamma\cons{\Diamond A,
    [\Delta]}}{\Gamma\cons{\Diamond A, [\Delta, \Diamond A]}} \qquad\quad
\kaiinf{\diafivec}{\Gamma\cons{\Diamond
    A}\cons{\emptyset}}{\Gamma\cons{\Diamond A}\cons{\Diamond
A}}{\quad \mbox{$\depth{\Gamma\context\cons{\emptyset}}>0$}}
\]
}}
  \caption{System \K+$\{\diacrules\}$ }
  \label{fig:kx}
\end{figure}

\begin{figure}
  \centering
\[
\cinf{\nec}{[\Gamma]}{\Gamma} \qquad\quad
\cinf{\wk}{\Gamma\cons{\Delta}}{\Gamma\cons{\emptyset}} \quad\qquad
\cinf{\ctr}{\Gamma\cons{\Delta}}{\Gamma\cons{\Delta,\Delta}}\quad\qquad
  \cinf{{\cut}}{\Gamma\cons{\emptyset}}{\Gamma\cons{A} \qquad
    \Gamma\cons{\neg A}}
\]
  \caption{Necessitation, weakening, contraction and cut}
  \label{fig:admissible}
\end{figure}

\begin{example}
  Here is an example of a proof in system $\K$, namely of some
instance of the $\sf k$-axiom:
\[
\vlderivation{
\vlin{=}{}{\Box(a \vlor b) \imp (\Box a \vlor \Diamond b)}{
\vlin{\vlor^2}{}{\Diamond(\neg a \vlan \neg b) \vlor (\Box a \vlor \Diamond b) }{
\vlin{\Box}{}{\Diamond(\neg a \vlan \neg b) , \Box a , \Diamond b}{
\vlin{\kc}{}{\Diamond(\neg a \vlan \neg b) , [a] , \Diamond b}{
\vliin{\vlan}{}
{\Diamond(\neg a \vlan \neg b) , [a, \neg a \vlan \neg b] , \Diamond
b}
{\vlhy{\Diamond(\neg a \vlan \neg b) , [a, \neg a] , \Diamond
b}}
{\vlin{\kc}{}{\Diamond(\neg a \vlan \neg b) , [a, \neg b] , \Diamond b}{
\vlhy{\Diamond(\neg a \vlan \neg b) , [a, \neg b, b] , \Diamond b}}}}}}}}
\]
\end{example}

{\bf Admissibility, derivability and invertibility.} We write $\Sys
\vdash \Gamma$ if there is a proof of $\Gamma$ in system $\Sys$. An
inference rule $\rho$ is \emph{(depth-preserving) admissible} for a
system $\Sys$ if for each proof in $\Sys \cup \{\rho\}$ there is a
proof of in $\Sys$ with the same conclusion (and with at most the same
depth).  An inference rule $\rho$ is \emph{derivable} for a system $\Sys$ if
for each instance of $\rho$ there is a derivation $\D$ in $\Sys$ with
the same conclusion and such that each premise of $\D$ is a premise of
the given instance of $\rho$.

For each rule $\rho$ there is its
\emph{inverse}, denoted by $\inverse{\rho}$, which is obtained by
exchanging premise and conclusion. The $\inverse{\vlan}$-rule allows
both $\Gamma\cons{A}$ and $\Gamma\cons{B}$ as conclusions of
$\Gamma\cons{A\vlan B}$. An inference rule $\rho$ is
\emph{(depth-preserving) invertible} for a system $\Sys$ if
$\inverse{\rho}$ is (depth-preserving) admissible for $\Sys$.

The rules shown in Figure~\ref{fig:admissible} turn out to be
admissible. We will now show this for the first three rules, for the
cut rule it will be shown later.

\begin{lemma}[Admissibility of structural rules and invertibility]\label{l:stradm} 
For each system $\K+\diac{\X}$ with $\X \subseteq \modax$ the following hold:\\
(i)  The rules necessitation, weakening and contraction are depth-preserving admissible.\\
(ii)  All its rules are depth-preserving invertible.\\
\end{lemma}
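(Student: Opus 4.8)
All five claims (admissibility of $\nec$, $\wk$ and $\ctr$, together with invertibility, which I would split into the modal rules on one side and $\vlan,\vlor,\Box$ on the other) I would prove by induction on the depth of the given proof, in an order respecting their dependencies: first $\nec$ and $\wk$, then invertibility, and finally $\ctr$. For $\nec$ I would simply box the whole proof, replacing every sequent $\Sigma$ in it by $[\Sigma]$; since boxing a context is again a context and every rule acts in an arbitrary context, each rule instance stays valid, the proof depth is unchanged, and the side condition $\depth{\Gamma\context\cons{\emptyset}}>0$ of $\diafivec$ survives because boxing only increases depth. For $\wk$ I would show by induction on proof depth that a proof of $\Gamma\cons{\emptyset}$ becomes a proof of $\Gamma\cons{\Delta}$ of at most the same depth, permuting the added material upward past the last rule; the only thing to verify is that adding material never decreases any depth, so again no side condition of $\diafivec$ can be violated.

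\textbf{Invertibility.} The key observation is that for $\kc$ and for every rule $\diac{\rho}$ the premise arises from the conclusion merely by \emph{adding} material (this is exactly the effect of the built-in contraction). Hence each inverse $\inverse{\kc}$, $\inverse{\diadc}$, $\dots$, $\inverse{\diafivec}$ is literally an instance of weakening, and these rules are depth-preservingly invertible as an immediate consequence of the admissibility of $\wk$. For $\vlan$, $\vlor$ and $\Box$ I would run the usual induction on proof depth: if the relevant formula is principal in the last rule, the desired premise is read off directly; otherwise the inversion is permuted through the last rule and the induction hypothesis is applied to its premises. In the $\Box$ case the inversion replaces a formula $\Box A$ by a child $[A]$, which increases the depth at that node and therefore, once more, can only help the side condition of $\diafivec$.

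\textbf{Contraction.} This is the heart of the lemma. I would first reduce contraction of $\Gamma\cons{\Delta,\Delta}$ to the case where $\Delta$ is a single formula or a single boxed sequent, contracting the elements of a general $\Delta$ one at a time. Then I would argue by induction on proof depth, uniformly for all elements. If both copies of the contracted element lie in the context of the last rule, the contraction permutes through it and the induction hypothesis applies to the premises. The interesting case is when one copy is principal. For a principal $A\vlor B$, $A\vlan B$ or $\Box A$ I would use the corresponding invertibility to expose the inner material of the \emph{surviving} copy in the premise, contract the resulting smaller pieces (for $\Box A$, the two boxes $[A],[A]$) by the induction hypothesis at depth $\le n-1$, and then reapply the one connective rule. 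For a principal modal rule the built-in contraction is precisely what makes this work: the principal $\Diamond A$ is retained in the premise, so after contracting the duplicated diamonds by the induction hypothesis I can simply reapply the single modal rule and land back at the conclusion, the depth bookkeeping giving $\le (n-1)+1=n$.

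\textbf{Main obstacle.} I expect the genuinely delicate points to be the structural contractions. To contract two identical boxed sequents $[\Sigma],[\Sigma]$ when one is the box acted on by, say, $\kc$ or $\diafourc$, the premise contains two boxes of slightly different shape (one enriched by the rule); I would first use $\wk$ to bring both boxes to the common enriched shape, then merge them by the induction hypothesis, and finally reapply the rule, all depth-preservingly. The second delicate point is $\diafivec$ with its binary context: here I would check that the source diamond is preserved in the premise, so the rule can be reapplied after the induction hypothesis, and that none of the transformations can lift the relevant hole to the root and thereby break $\depth{\Gamma\context\cons{\emptyset}}>0$. The remaining work, namely verifying in each case that the sequence weakening / invertibility / induction hypothesis / single reapplication stays within depth $n$, is routine bookkeeping, but it must be carried out uniformly for every $\X\subseteq\modax$.
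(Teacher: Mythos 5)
Your proposal is correct and follows essentially the same route as the paper's proof: necessitation and weakening first, the inverses of $\kc$ and the modal $\Diamond$-rules dismissed as mere weakenings thanks to the built-in contraction, invertibility of $\vlan,\vlor,\Box$ by induction on proof depth, and contraction by induction using exactly the equalize-by-weakening, contract-by-induction-hypothesis, reapply-the-rule-once pattern, including the paper's two delicate cases (a rule like $\kc$ or $\diafourc$ enriching one of two contracted boxes, and $\diafivec$ with its binary context and the proviso $\depth{\Gamma\context\cons{\emptyset}}>0$). The only differences are organizational rather than substantive: you handle $\nec$ by boxing the entire proof instead of a depth induction, and you reduce contraction to single formulas or boxed sequents first, where the paper instead performs a direct case analysis on the positions of the active material relative to the contracted pair.
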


\begin{proof}
  The admissibility of necessitation and weakening follows from a
  routine induction on the depth of the proof. The same works for the
  invertibility of the $\vlan, \vlor$ and $\Box$-rules in (ii). The
  inverses of all other rules are just weakenings. For the
  admissibility of contraction we also proceed by induction on the
  depth of the proof tree, using depth-preserving invertibility of the
  rules. The cases are easy for the propositional rules and for the
  $\Box,{\diadc,\diatc}$-rules. For the {\kc}-rule we consider the formula
  $\Diamond A$ from its conclusion $\Gamma\cons{\Diamond A, [\Delta]}$
  and its position inside the premise of contraction
  $\Lambda\cons{\Sigma,\Sigma}$. We have the cases 1) $\Diamond A$ is
  inside $\Sigma$ or 2) $\Diamond A$ is inside $\Lambda\context$. We
  have three subcases for case 1: 1.1) $[\Delta]$ inside
  $\Lambda\context$, 1.2) $[\Delta]$ inside $\Sigma$, 1.3)
  $\Sigma,\Sigma$ inside $[\Delta]$. There are two subcases of case 2:
  2.1) $[\Delta]$ inside $\Lambda\context$ and 2.2) $[\Delta]$ inside
  $\Sigma$.  All cases are either simpler than or similar to case 1.2,
  which is as follows:
\[
\vlderivation{
\vlin{{\ctr}}{}{\Lambda'\cons{\Diamond A,\Sigma', [\Delta]}}{ 
\vlin{{\kc}}{}{\Lambda'\cons{\Diamond A, \Sigma', [\Delta],\Sigma',
        [\Delta]}}{
\vlhy{ \Lambda'\cons{\Diamond A,\Sigma', [\Delta,A], \Sigma',
[\Delta]} }}}}
 \quad \leadsto \quad 
\vlderivation{
   \vlin{{\kc}}{}{\Lambda'\cons{\Diamond A,
           \Sigma', [\Delta]}}{ 
\vlin{\ctr}{}{\Lambda'\cons{\Diamond A,
           \Sigma', [\Delta,A]}}{
       \vlin{\inverse{{\kc}}}{}{\Lambda'\cons{\Diamond A,
           \Sigma', [\Delta,A], \Sigma', [\Delta,A]}}{
\vlhy{ \Lambda'\cons{\Diamond A,\Sigma', [\Delta,A], \Sigma',
[\Delta]} }}}} }
 \quad ,
\]
where the instance of $\inverse{{\kc}}$ in the proof on the right is
removed because it is depth-preserving admissible and the instance of
contraction is removed by the induction hypothesis. The case for the
{\diafourc}-rule works the same way.

For the {\diabc}-rule we make a case analysis based on the position of
$[\Delta,\Diamond A]$ from its conclusion
$\Gamma\cons{[\Delta,\Diamond A]}$ inside the premise of contraction
$\Lambda\cons{\Sigma,\Sigma}$. We have three cases: 1)
$[\Delta,\Diamond A]$ inside $\Lambda\context$, 2) $[\Delta,\Diamond
A]$ in $\Sigma$ and 3) $\Sigma,\Sigma$ inside $[\Delta,\Diamond A]$.
Case 3 has two subcases: either $\Diamond A \in \Sigma$ or not. All
cases are trivial except for case 2 where invertibility of the $\diabc$-rule is used.

For the {\diafivec} rule we make a case analysis based on the positions of
the sequent occurrences $\Diamond A$ and $\emptyset$ from its
conclusion $\Gamma\cons{\Diamond A}\cons{\emptyset}$ inside the
premise of contraction $\Lambda\cons{\Sigma,\Sigma}$. We have two
cases: 1) $\emptyset$ inside $\Lambda\context$, 2) $\emptyset$ inside
$\Sigma$. The first case is trivial, in the second we have two
subcases: 1) $\Diamond A$ inside $\Lambda\context$ and 2) $\Diamond A$
inside $\Sigma$. Case 2.1 is similar to case 2.2 which is as follows:
\[
\vlderivation{\vlin{{\ctr}}{}{\Lambda\cons{\Sigma\cons{\Diamond
        A}\cons{\emptyset}}}{ \vlin{{\diafivec}}{}
    {\Lambda\cons{\Sigma\cons{\Diamond
          A}\cons{\emptyset},\Sigma\cons{\Diamond
          A}\cons{\emptyset}}}{ \vlhy{\Lambda\cons{\Sigma\cons{\Diamond
            A}\cons{\emptyset},\Sigma\cons{\Diamond A}\cons{\Diamond
            A}} }}}} 
\quad \leadsto \quad 
\vlderivation{
\vlin{{\diafivec}}{}{\Lambda\cons{\Sigma\cons{\Diamond A}\cons{\emptyset}} }{
    \vlin{\ctr}{}{ \Lambda\cons{\Sigma\cons{\Diamond A}\cons{\Diamond
          A}} }{ \vlin{\inverse{{\diafivec}}}{}{\Lambda\cons{\Sigma\cons{\Diamond A}\cons{\Diamond
            A}, \Sigma\cons{\Diamond A}\cons{\Diamond A}}
      }{\vlhy{\Lambda\cons{\Sigma\cons{\Diamond A}\cons{\emptyset},
            \Sigma\cons{\Diamond A}\cons{\Diamond A}} }}}} } \quad .
\]

\end{proof}

\begin{figure}
  \centering
\fbox{
\parbox{\ftextwidth}{
\medskip
\[
\cinf{{\diak}}{\Gamma\cons{\Diamond A, [\Delta]}}{\Gamma\cons{[A,
\Delta]}} \qquad 
\cinf{{\diad}}{\Gamma\cons{\Diamond A}}{\Gamma\cons{[A]}} 
\qquad 
\cinf{{\diat}}{\Gamma\cons{\Diamond      A}}{\Gamma\cons{A}}
\qquad 
\cinf{{\diab}}{\Gamma\cons{[\Delta, \Diamond A]}}{\Gamma\cons{[\Delta], A}}
\]

\[
\cinf{{\diafour}}{\Gamma\cons{\Diamond A,
    [\Delta]}}{\Gamma\cons{[\Delta, \Diamond A]}}
\qquad\quad \cinf{{\diafive}}{\Gamma\cons{\Diamond
    A}\cons{\emptyset}}{\Gamma\cons{\emptyset}\cons{\Diamond
    A}}{\quad \mbox{$\depth{\Gamma\context\cons{\emptyset}}>0$}}
\]
}}
\caption{Diamond rules without built-in contraction}
\label{fig:dia}
\end{figure}

By using weakening admissibility, we easily get the following proposition.

\begin{proposition}[Relation between the $\dia{}$-rules and the $\diac{}$-rules]\nl
  For each $\rho \in \{{\sf k,d,t,b,4,5}\}$ we have that\\
  (i) the rule $\dia{\rho}$ is derivable for $\{\diac{\rho},\wk\}$ and
 admissible for system $\K+\diac{\rho}$,\\
  (ii) the rule $\diac{\rho}$ is derivable for $\{\dia{\rho},\ctr\}$.
\end{proposition}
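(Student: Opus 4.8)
The key observation is that each rule $\diac\rho$ differs from the corresponding $\dia\rho$ only in that it retains an extra copy of the principal formula $\Diamond A$ in its premise. So the plan is to read (i) as ``supply that extra copy by a weakening'' and (ii) as ``produce that extra copy by a contraction''. Both directions are then uniform across all six rules, and I would treat the euclidean rule last because it is the only one using a binary context.

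For (i), I would take an arbitrary instance of $\dia\rho$, weaken the principal occurrence of $\Diamond A$ into its premise at the position dictated by the shape of $\diac\rho$, and then apply $\diac\rho$; its conclusion coincides with that of $\dia\rho$. For $\rho={\sf d}$ this reads
\[
\cinf{\diadc}{\Gamma\cons{\Diamond A}}{\cinf{\wk}{\Gamma\cons{\Diamond A, [A]}}{\Gamma\cons{[A]}}}
\]
and the rules ${\sf k,t,b,4}$ are entirely analogous. This gives derivability of $\dia\rho$ for $\{\diac\rho,\wk\}$. Admissibility for $\K+\diac\rho$ then follows by composition: in any proof in $\K+\diac\rho+\dia\rho$ I replace each instance of $\dia\rho$ by the above derivation, obtaining a proof in $\K+\diac\rho+\wk$, and then remove the weakenings using their depth-preserving admissibility from Lemma~\ref{l:stradm}(i) (for $\rho={\sf k}$ note that $\K+\diac{\sf k}$ is just $\K$, the $\X=\emptyset$ case of that lemma).

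For (ii) I would work bottom-up from the conclusion of $\diac\rho$: first apply $\ctr$ to duplicate the retained occurrence of $\Diamond A$, then apply $\dia\rho$ to one of the two copies, which recreates exactly the premise of $\diac\rho$. For $\rho={\sf d}$:
\[
\cinf{\ctr}{\Gamma\cons{\Diamond A}}{\cinf{\diad}{\Gamma\cons{\Diamond A, \Diamond A}}{\Gamma\cons{\Diamond A, [A]}}}
\]
Again the remaining rules follow the same template, with the contraction and the subsequent $\dia\rho$ applied at the appropriate depth.

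The one case needing slightly more care is the euclidean rule, since $\diafivec$ and $\diafive$ use a binary context together with the side condition $\depth{\Gamma\context\cons{\emptyset}}>0$. Here the two directions read
\[
\cinf{\diafivec}{\Gamma\cons{\Diamond A}\cons{\emptyset}}{\cinf{\wk}{\Gamma\cons{\Diamond A}\cons{\Diamond A}}{\Gamma\cons{\emptyset}\cons{\Diamond A}}} \qquad\text{and}\qquad \cinf{\ctr}{\Gamma\cons{\Diamond A}\cons{\emptyset}}{\cinf{\diafive}{\Gamma\cons{\Diamond A, \Diamond A}\cons{\emptyset}}{\Gamma\cons{\Diamond A}\cons{\Diamond A}}}
\]
and the only nontrivial point is that the side condition transfers. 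In the first derivation the $\diafivec$ instance uses the very context $\Gamma$, so the condition is inherited verbatim; in the second, the $\diafive$ instance is applied in the context obtained from $\Gamma$ by adjoining a sibling $\Diamond A$ to its first hole, which leaves the depth of that hole unchanged, so the condition again holds. I expect this context bookkeeping for the binary case to be the only mildly delicate point; everything else is a routine check against the rule shapes in Figures~\ref{fig:kx} and~\ref{fig:dia}.
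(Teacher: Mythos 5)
Your proposal is correct and matches the paper's intended argument exactly: the paper states this proposition with only the remark that it follows easily ``by using weakening admissibility,'' and your derivations --- weakening in the retained copy of $\Diamond A$ and applying $\diac{\rho}$ for (i), then eliminating the introduced weakenings via the depth-preserving admissibility of $\wk$ from Lemma~\ref{l:stradm}, and duplicating via $\ctr$ after $\dia{\rho}$ for (ii) --- are precisely the intended routine checks. Your attention to the binary context and the side condition $\depth{\Gamma\context\cons{\emptyset}}>0$ for the euclidean rules is the right place to be careful, and your handling of it (the condition transfers since adjoining a sibling formula to a hole does not change its depth) is correct.
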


\subsection{Soundness}

To prove soundness, we first need some standard definitions for Kripke
semantics.

\begin{definition}[frames, models, validity]
  A \emph{frame} is a pair $(S,\rel)$ of a nonempty set $S$ of
  \emph{states} and a binary relation $\rel$ on it.  A \emph{model}
  $\M$ is a triple $(S,\rel,V)$ where $(S,\rel)$ is a frame and $V$ is
  a a mapping which assigns a subset of $S$ to each proposition, and
  which is called \emph{valuation}. A model $\M$ as given above
  induces a relation $\models$ between states and formulas which is
  defined as usual. In particular we have $s \models p$ iff $s \in
  V(p)$, $s \models \neg p$ iff $s \not\in V(p)$, $s \models A \vlor B$
  iff $s \models A$ or $s \models B$, $s \models A \vlan B$ iff $s
  \models A$ and $s \models B$, $s \models \Diamond A$ iff there is a
  state $t$ such that $s \rel t$ and $t \models A$, and $s \models
  \Box A$ iff for all $t$ if $s \rel t$ then $t \models A$. Further, a
  formula $A$ is \emph{valid in a model} $\M$, denoted $\M \models A$,
  if for all states s of $\M$ we have $s \models A$. A formula $A$ is
  \emph{valid in a frame} $(S,\rel)$, denoted $(S,\rel) \models A$, if
  for all valuations $V$ we have $(S,\rel,V) \models A$. A formula is
  \emph{valid} if it is valid in all frames. For a set of $\X$ of rule
  names or names of modal axioms we call a frame an \emph{$\X$-frame}
  if it satisfies all the frame conditions corresponding to the names
  in $\X$.  A formula is \emph{{\X}-valid} if it is valid in all
  $\X$-frames.
\end{definition}

The {\diafivec}-rule requires some care when proving its soundness
because it is defined in terms of a binary context. We first show
how it is derivable for three rules which, modulo built-in
contraction, are special cases of the {\diafivec}-rule. The soundness of
these rules is then easy to establish.

\begin{lemma}[Decompose \diafivec]\label{l:split5}
  The $\diafivec$-rule is derivable for $\{
  \diafivepa,\diafivepb,\diafivepc,\ctr\}$, where
  {\diafivepa, \diafivepb, \diafivepc} are the rules
\[
\kaiinf{{\diafivepa}}{\Gamma\cons{[\Delta, \Diamond
    A]}}{\Gamma\cons{[\Delta],\Diamond A}} \quad,\quad 
\kaiinf{{\diafivepb}}{\Gamma\cons{[\Delta, \Diamond
    A],[\Lambda]}}{\Gamma\cons{[\Delta],[\Lambda,\Diamond A]}}
\quad,\quad 
\kaiinf{{\diafivepc}}{\Gamma\cons{[\Delta, \Diamond A,
    [\Lambda]]}}{\Gamma\cons{[\Delta, [\Lambda, \Diamond A]]}} \quad .
\]
\end{lemma}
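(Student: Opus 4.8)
The plan is to factor the $\diafivec$-rule into a single contraction followed by a \emph{relocation} of one diamond through the tree of the sequent, and to observe that the three primed rules are precisely the elementary moves needed to push a $\Diamond A$ along any path in that tree. Reading a derivation upward from the conclusion $\Gamma\cons{\Diamond A}\cons{\emptyset}$, I would first apply $\ctr$ at the source hole to duplicate the diamond, obtaining $\Gamma\cons{\Diamond A,\Diamond A}\cons{\emptyset}$; the task then reduces to transporting one of the two copies from the source node to the (currently empty) target node, which yields exactly the premise $\Gamma\cons{\Diamond A}\cons{\Diamond A}$ of the given instance at the leaf. This reduces the whole lemma to a \emph{move} version: it suffices to show that the rule with conclusion $\Gamma\cons{\Diamond A}\cons{\emptyset}$ and premise $\Gamma\cons{\emptyset}\cons{\Diamond A}$, carrying the same side condition $\depth{\Gamma\context\cons{\emptyset}}>0$ (this is the $\diafive$-rule of Figure~\ref{fig:dia}), is derivable from $\{\diafivepa,\diafivepb,\diafivepc\}$ alone. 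Applying that derivation to one of the two copies produced by $\ctr$, treating the other as part of the context at the source node, then transports it to the target.

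For the move version I would read off the roles of the three rules on the corresponding tree: in the bottom-up direction, $\diafivepa$ lifts a $\Diamond A$ from a boxed sequent to its parent level (an \emph{up} move), $\diafivepb$ transfers it between two sibling boxed sequents (a \emph{sibling} move), and $\diafivepc$ pushes it from a node into one of its child boxes (a \emph{down} move). Let $u$ be the source node (hole $1$) and $v$ the target node (hole $2$), and let $w$ be their least common ancestor. I would transport the diamond by repeatedly applying $\diafivepa$ to raise it from $u$ up to the child of $w$ on the $u$-side, applying $\diafivepb$ once to cross to the child of $w$ on the $v$-side, and finally applying $\diafivepc$ repeatedly to lower it down to $v$. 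The degenerate cases $w=u$ (purely downward) and $w=v$ (purely upward) are handled by dropping the sibling step and one of the two families, and the case where $u$ and $v$ already coincide is the base case, in which the initial $\ctr$ alone suffices. The formal justification is a routine induction on the length of the path from $u$ to $v$, equivalently on the structure of the binary context $\Gamma\context\context$, peeling off one box at each step with the matching primed rule.

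The main point requiring care is the verification of the depth side conditions at every intermediate step, and this is exactly where the hypothesis $\depth{\Gamma\context\cons{\emptyset}}>0$, i.e.\ that the source is not the root, gets used. The up- and down-moves touch only proper descendants of $w$, which automatically have a parent and so meet the box requirements of $\diafivepa$ and $\diafivepc$; moreover, when $w$ happens to be the root I deliberately route the diamond through the sibling rule $\diafivepb$ between two children of the root, rather than lifting it onto the rootless top level, so that $\diafivepc$ is never applied at the root. The single configuration in which this routing would fail is $w=u=\text{root}$ with $v$ a proper descendant, that is, relocating a diamond sitting \emph{at} the root downwards; this is precisely the configuration excluded by the side condition. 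Thus the hard part is not any deep combinatorics but the disciplined bookkeeping of tree positions, together with the observation that the side condition is exactly what removes the one otherwise-problematic base case.
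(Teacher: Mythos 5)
Your proof is correct and follows essentially the same route as the paper's: a single $\ctr$ to duplicate the $\Diamond A$, then reading $\diafivepa$, $\diafivepb$, $\diafivepc$ bottom-up as the elementary up, sibling, and down moves on the sequent tree and composing them into a path from source to target. The only difference is cosmetic---the paper routes the copy through the root's children while you route through the least common ancestor---and your explicit check that the side condition $\depth{\Gamma\context\cons{\emptyset}}>0$ rules out exactly the one failing configuration (a down-move starting at the root) is a welcome sharpening of the paper's three-sentence sketch.
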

\begin{proof}
  Seen bottom-up, the $\diafivec$-rule allows to put a formula $\Diamond
  A$ which occurs at a node different from the root into an arbitrary
  node. We can use contraction to duplicate $\Diamond A$ and move one
  copy to the root and also to some child of the root by $\diafivepa$. By
  $\diafivepb$ we can move it to any child of the root and by $\diafivepc$
  into any descendant of a child of the root.
\end{proof}

\begin{lemma}[Deep inference is sound]\label{l:deepsound}
  Let $\X \subseteq \modax$, $\Gamma\context$ be a
  context and $A,B$ be formulas. If the formula $A \supset B$ is
  {\X}-valid then $\Gamma\cons{A} \supset \Gamma\cons{B}$ is
  {\X}-valid.
\end{lemma}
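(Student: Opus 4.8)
The plan is to prove the statement by structural induction on the unary context $\Gamma\context$, following its three-clause inductive definition. Throughout I work with corresponding formulas, using that a model of a sequent is a model of its corresponding formula, so it suffices to show that $\formula{\Gamma\cons{A}} \supset \formula{\Gamma\cons{B}}$ is $\X$-valid. Two monotonicity facts drive the induction: disjunction is monotone in each argument, and the box modality is monotone. Crucially, the second fact holds over \emph{every} class of frames, so the restriction to $\X$-frames never interferes.

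For the base case, if $\Gamma\context = \context$ then $\Gamma\cons{A} = A$ and $\Gamma\cons{B} = B$, and the $\X$-validity of $A \supset B$ is exactly the hypothesis. For the first inductive step, suppose $\Gamma\context = \Pi, \Delta\context$ for a sequent $\Pi$ and a unary context $\Delta\context$, so that $\Gamma\cons{A} = \Pi, \Delta\cons{A}$ and $\Gamma\cons{B} = \Pi, \Delta\cons{B}$. First I would record the auxiliary fact that the corresponding formula of a multiset union is, up to reordering of disjuncts and hence up to semantic equivalence, the disjunction $\formula{\Pi} \vlor \formula{\Delta\cons{A}}$ of the corresponding formulas (using that $\vlor$ is semantically associative and commutative, with $\bot$ as unit for the empty-sequent corner case). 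The induction hypothesis then gives that $\formula{\Delta\cons{A}} \supset \formula{\Delta\cons{B}}$ is $\X$-valid, and the propositional tautology $(C \supset D) \supset ((E \vlor C) \supset (E \vlor D))$, instantiated at every state of every $\X$-frame, yields the claim.

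For the second inductive step, suppose $\Gamma\context = [\Delta\context]$, so that $\formula{\Gamma\cons{A}} = \Box \formula{\Delta\cons{A}}$ and $\formula{\Gamma\cons{B}} = \Box \formula{\Delta\cons{B}}$. By the induction hypothesis, $C \supset D$ is $\X$-valid where $C = \formula{\Delta\cons{A}}$ and $D = \formula{\Delta\cons{B}}$. To conclude that $\Box C \supset \Box D$ is $\X$-valid I would fix any $\X$-frame, valuation, and state $s$ with $s \models \Box C$; for every $t$ with $s \rel t$ we have $t \models C$, hence $t \models D$ by the validity of $C \supset D$ at $t$, so $s \models \Box D$.

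The main obstacle is conceptual rather than technical: one must observe that the frame conditions attached to $\X$ play no role in the box case, since monotonicity of $\Box$ is frame-independent and therefore survives any restriction of the frame class. The only mildly fiddly point is the bookkeeping around the fixed linear orders on formulas and boxed sequents in the definition of the corresponding formula; I would dispatch this once by passing to semantic equivalence, after which the orders become irrelevant and both inductive steps reduce to the two monotonicity facts above.
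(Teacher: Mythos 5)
Your proof is correct, and it follows the same inductive skeleton as the paper's---induction over the context, with the boxed clause as the only interesting case---but it justifies that case by a genuinely different route. The paper appeals to the soundness of a Hilbert-style axiomatisation of $\K+\X$: from the induction hypothesis it obtains $\Gamma_2\cons{A} \supset \Gamma_2\cons{B}$, applies necessitation, then the $\sf k$-axiom to get $\Box(\Gamma_2\cons{A}) \supset \Box(\Gamma_2\cons{B})$, and finishes with propositional reasoning. You instead prove monotonicity of $\Box$ directly in the Kripke semantics (fix an $\X$-frame, valuation and state $s$, and chase the accessibility relation), which amounts to inlining the soundness of necessitation and of $\sf k$. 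What your route buys is self-containedness---no dependence on a soundness theorem for the Hilbert system---and it makes explicit an observation the paper leaves implicit: the frame conditions in $\X$ enter only through the hypothesis that $A \supset B$ is $\X$-valid (invoked at the successor states $t$, which is legitimate precisely because $\X$-validity means truth at \emph{all} states of \emph{all} models over $\X$-frames, and the ambient frame is an $\X$-frame), never through the box step itself. You are also more careful than the paper about two pieces of bookkeeping it glosses over: the multiset-union clause of the context definition, which you handle with the tautology $(C \supset D) \supset ((E \vlor C) \supset (E \vlor D))$, and the fixed linear orders in the definition of $\formula{\cdot}$, which you dispatch once by passing to semantic equivalence modulo associativity, commutativity and the $\bot$ unit. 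The paper's version is terser and reuses an assumed soundness result; yours is the more elementary argument, at the cost of a few extra lines.
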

\begin{proof}
  By induction on the depth of $\Gamma\context$. We use the soundness
  of some Hilbert-style axiomatisation of {\K+\X}. To show the
  validity of
\[
(\Gamma_1,[\Gamma_2\cons{A}]) \supset (\Gamma_1,[\Gamma_2\cons{B}])
\]
  we use the induction hypothesis to get $\Gamma_2\cons{A} \supset
  \Gamma_2\cons{B}$, necessitation to get $\Box(\Gamma_2\cons{A} \supset
  \Gamma_2\cons{B})$, the {\sf k}-axiom to get $\Box(\Gamma_2\cons{A}) \supset
  \Box(\Gamma_2\cons{B})$, and finally propositional reasoning to get $\Gamma_1,[\Gamma_2\cons{A}] \supset \Gamma_1,[\Gamma_2\cons{B}]$.

\end{proof}

\begin{theorem}[Soundness] Let $\Gamma, \Delta$ and
  $\Gamma_1,\dots,\Gamma_n$ be sequents. Then the following hold:\\
  (i) For any rule $\rho \in \K$ if $\,\cinf{\rho}{\Delta}{\Gamma_1 \;
    \dots \; \Gamma_n}\,$ then $\,\Gamma_1 \vlan \dots \vlan
  \Gamma_n \supset \Delta$
  is valid. \\
  (ii) For any rule $\rho \in \modax$ 
if  $\,\cinf{\diac{\rho}}{\Delta}{\Gamma}\,$ then 
$\Gamma \supset \Delta$ is $\{\rho\}$-valid.  \\
  (iii) For any $\X \subseteq \modax$ if $\,\K+\diac{\X} \vdash
  \Gamma$ then $\Gamma$ is $\X$-valid.
\end{theorem}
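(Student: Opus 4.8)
The plan is to establish the two local soundness statements (i) and (ii) first, and then obtain the global soundness (iii) by a straightforward induction on the proof. Throughout, the workhorse is the deep inference lemma (Lemma~\ref{l:deepsound}): since every rule acts inside a context $\Gamma\context$, it suffices to verify the corresponding \emph{shallow} implication at the active node and then let Lemma~\ref{l:deepsound} propagate it through $\Gamma\context$ while preserving $\X$-validity. I would also record the trivial observation that filling any context with a valid formula yields a valid sequent (a one-line induction on $\depth{\Gamma\context}$ using that $\vlor$ and $\Box$ preserve validity), since this is exactly what the axiom case of (iii) needs.

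For (i) I would treat the four rules of $\K$ separately. The $\vlor$- and $\Box$-rules are immediate: passing from premise to conclusion leaves the corresponding formula unchanged up to associativity and commutativity of $\vlor$ (a comma reads as $\vlor$ and a boxed singleton $[A]$ reads as $\Box A$), so the implication holds between equivalent formulas. For $\kc$ I would apply Lemma~\ref{l:deepsound} with $\X=\emptyset$ to reduce soundness to the validity of $\Diamond A \vlor \Box(\formula{\Delta}\vlor A) \supset \Diamond A \vlor \Box(\formula{\Delta})$, checked by a two-line Kripke argument: if no successor satisfies $A$, then $\Box(\formula{\Delta}\vlor A)$ forces $\Box\formula{\Delta}$. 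The only genuinely binary rule is $\vlan$, for which Lemma~\ref{l:deepsound} does not directly apply; here I would prove the contextual distribution $\Gamma\cons{A}\vlan\Gamma\cons{B} \supset \Gamma\cons{A\vlan B}$ by induction on $\depth{\Gamma\context}$, the step cases being the propositional distributivity $C\vlor(A\vlan B)\equiv(C\vlor A)\vlan(C\vlor B)$ and the $\K$-validity $\Box A \vlan \Box B \supset \Box(A\vlan B)$.

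For (ii), each modal rule $\diac{\rho}$ is again reduced by Lemma~\ref{l:deepsound}, now with $\X=\{\rho\}$, to a small local implication that I would verify directly in the appropriate frame class: $\Diamond A \vlor \Box A \supset \Diamond A$ in serial frames for $\diadc$, $A \vlor \Diamond A \supset \Diamond A$ in reflexive frames for $\diatc$, and the analogous symmetric- and transitive-frame validities behind $\diabc$ and $\diafourc$ (in each case a direct unfolding of the Kripke clauses against the corresponding frame condition). The main obstacle is $\diafivec$, precisely because of its binary context: Lemma~\ref{l:deepsound} is stated for a single hole and does not see the two independent positions at once. I would circumvent this exactly as anticipated by Lemma~\ref{l:split5}, decomposing $\diafivec$ into the unary rules $\diafivepa,\diafivepb,\diafivepc$ together with $\ctr$. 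Each of these is handled like the other modal rules—reduce by Lemma~\ref{l:deepsound} to a local euclidean-frame validity (for instance $\Box\formula{\Delta}\vlor\Diamond A \supset \Box(\formula{\Delta}\vlor A)$ for $\diafivepa$), while $\ctr$ leaves the corresponding formula unchanged—and since the decomposition is a one-branch derivation, composing the pointwise implications yields $\{{\sf 5}\}$-validity for $\diafivec$ itself.

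Finally (iii) follows by induction on the depth of the proof $\PP$ of $\Gamma$ in $\K+\diac{\X}$. At a leaf the sequent is an axiom $\Gamma\cons{p,\neg p}$, whose corresponding formula fills the context with the valid formula $p\vlor\neg p$ and is therefore $\X$-valid by the preliminary observation. At an inner node we inspect the last rule: if it is a $\K$-rule, (i) gives that $\Gamma_1\vlan\dots\vlan\Gamma_n\supset\Delta$ is valid, hence $\X$-valid, and the induction hypothesis makes each $\Gamma_i$ $\X$-valid, so $\Delta$ is $\X$-valid; if it is $\diac{\rho}$ with $\rho\in\X$, (ii) gives that the step is $\{\rho\}$-valid, and since every $\X$-frame is a $\{\rho\}$-frame this is in particular $\X$-valid, so again the conclusion is $\X$-valid. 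The one point to keep straight is this monotonicity of validity under shrinking the frame class, which is what lets the per-axiom soundness of (ii) combine correctly for an arbitrary combination $\X$.
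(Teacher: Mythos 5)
Your proposal is correct and takes essentially the same route as the paper: the paper likewise proves validity of the axiom by induction on the depth of the context (using necessitation in the step), handles $\vlan$ by an induction on the context via $\Box A \vlan \Box B \supset \Box(A \vlan B)$, reduces $\kc$ and the unary $\Diamond$-rules to local implications lifted through Lemma~\ref{l:deepsound}, treats $\diafivec$ exactly via the decomposition of Lemma~\ref{l:split5}, and assembles (iii) by induction on proofs. One small slip worth fixing: your sample implication for $\diafivepa$ must keep the diamond inside the box, i.e.\ it should read $\Box\formula{\Delta}\vlor\Diamond A \supset \Box(\formula{\Delta}\vlor\Diamond A)$ --- with bare $A$ in the box, as you wrote it, the implication fails even on euclidean frames, whereas the intended instance is precisely what euclideanness delivers (every successor of the current state sees the witness of $\Diamond A$).
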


\begin{proof}
  The axiom is valid in all frames which follows from an induction on
  the depth of $\Gamma\context$ where necessitation is used in the
  induction step.  Thus (i) and (ii) imply (iii). Most cases of (i)
  are trivial, for the $\vlan$-rule it follows from an induction on
  the context and uses the implication $\Box A \vlan \Box B \supset
  \Box(A \vlan B)$.  Lemma~\ref{l:deepsound}~(Deep inference is
  sound) used together with the $\sf k$-axiom yields that the premise
  of the $\kc$-rule implies its conclusion.  The cases from (ii) for
  the $\{{\diadc,\diatc,\diabc,\diafourc}\}$-rules are similar to the
  $\kc$-rule, using the corresponding modal axiom.

  For the soundness of the {\diafivec}-rule we use
  Lemma~\ref{l:split5}~(Decompose \diafivec) and show soundness of the
  rules $\diafivepa,\diafivepb,\diafivepc$.  For $\diafivepc$ we show
  that a euclidean countermodel for the conclusion is also a
  countermodel for the premise, the other cases are similar. A
  countermodel for $[\Delta,\Diamond A,[\Lambda]]$ has to contain
  states $s \rel t \rel u$ such that $t \not\models \Delta$, $u
  \not\models \Lambda$ and $v \not\models A$ for any $v$ with $t \rel
  v$. We need to show that for any $w$ with $u \rel w$ we have $w
  \not\models A$. By euclideanness we obtain, in this order: $t \rel
  t, u \rel t, t \rel w$. Thus $w \not\models A$.

\end{proof}

\subsection{Completeness}

The current set of modal rules does not allow for a modular
completeness result of the form ``\emph{if $\Gamma$ is $\X$-valid then
  $\K+\diac{\X} \vdash \Gamma$}''. It is easy to check
that some of our systems are incomplete.

\begin{fact}[Incompleteness]\label{f:45needed} For any propositional
  variable $p$ we have that the formula $\Box p\imp \Box\Box p$ holds
  in any $\{{\sf t,5} \}$-frame and the formula
  $\Diamond p\imp\Box\Diamond p$ holds in any $\{{\sf b,4} \}$-frame, but:\\
  \begin{tabular}{ll}
    (i)   & $\K+\{\diatc,\diafivec\} \; \nvdash \; \Box p\supset
    \Box\Box p$ \quad \mbox{and}\\
    (ii) & $\K+\{\diabc,\diafourc\} \; \nvdash \; \Diamond p\supset
    \Box\Diamond p$
    \quad .
  \end{tabular}
\end{fact}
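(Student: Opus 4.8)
The validity claims are routine frame-condition checks (a reflexive euclidean relation is an equivalence, so $\Box p\supset\Box\Box p$ holds; a symmetric transitive relation is euclidean, so $\Diamond p\supset\Box\Diamond p$ holds), so the content is the two non-derivability statements. Plain soundness cannot separate the systems, since both target formulas are genuinely $\{{\sf t},{\sf 5}\}$- resp.\ $\{{\sf b},{\sf 4}\}$-valid; the plan is therefore a purely proof-theoretic confinement argument via backward proof search. By Lemma~\ref{l:stradm} every rule is depth-preserving invertible and contraction is admissible, so it suffices to show that the root-first saturation of the goal sequent, read modulo formula multiplicity, never produces a complementary pair $q,\neg q$ inside a single node. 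Since every instance of the axiom $\Gamma\cons{p,\neg p}$ does contain such a pair in one node, the absence of any reachable axiom leaf yields unprovability. As the only atoms present are $p$ and $\neg p$, the whole argument reduces to tracking where $p$ and $\neg p$ can travel.

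For (i) I would first apply the invertible $\Box$-rule twice to reduce the goal $\Diamond\neg p,\Box\Box p$ to $\Diamond\neg p,[[p]]$, whose tree has the single diamond $\Diamond\neg p$ at the root, an empty child, and a grandchild carrying $p$. The remaining applicable rules are $\kc$, $\diatc$ and $\diafivec$. The crucial circularity is this: $\diafivec$ may fire only on a diamond lying at depth $>0$, whereas the only diamond is $\Diamond\neg p$ at the root, and neither $\kc$ nor $\diatc$ ever creates a new diamond (each merely copies the modal-free formula $\neg p$). Hence $\diafivec$ never becomes applicable; $\neg p$ can reach the root (via $\diatc$) and the middle child (via $\kc$) but never the grandchild, while $p$ stays stuck at the grandchild. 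No node ever contains both $p$ and $\neg p$, so the search never closes.

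For (ii) I would dually reduce $\Box\neg p,\Box\Diamond p$ by the $\Box$-rule to $[\neg p],[\Diamond p]$: an empty root with two children, one holding $\neg p$ and one holding $\Diamond p$. Here the remaining rules are $\kc$, $\diabc$ and $\diafourc$, and now the diamond $\Diamond p$ is trapped at a leaf. Both $\diafourc$ (copying $\Diamond p$) and $\kc$ (copying the bare $p$) act only into a child of the node that carries $\Diamond p$, but that node is a leaf, while $\diabc$ merely copies the bare atom $p$ from $[\Diamond p]$ up to the root. Thus $\Diamond p$ never leaves its leaf and never reaches the root, so $p$ can appear at the root but never in the sibling node holding $\neg p$ (which would require $\Diamond p$ at the root followed by $\kc$), and $\neg p$, being modal-free, never moves at all. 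Again no node collects a complementary pair and the search fails.

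The main obstacle is making the confinement precise: one must verify, by a finite but careful case analysis over the rules $\{\Box,\kc,\diatc,\diafivec\}$ (resp.\ $\{\Box,\kc,\diabc,\diafourc\}$) read bottom-up, that the invariant ``the single diamond stays at its original position and $p,\neg p$ never co-occur in one node'' is preserved by every rule, hence holds throughout the saturated search space. The conceptual heart in each case is the bootstrapping failure highlighted above — $\diafivec$ needs a non-root diamond that only $\diafivec$ itself could supply, and $\diafourc$ needs the trapped diamond to acquire a child it never gets — which is exactly the absence of the complementary rule ($\diafourc$ in (i), $\diafivec$ in (ii)) together with the absence of cut.
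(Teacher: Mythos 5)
Your proposal is correct, and it supplies exactly the check that the paper leaves implicit: the Fact is introduced with ``It is easy to check that some of our systems are incomplete'' and carries no proof, the intended argument being precisely your root-first saturation using the depth-preserving invertibility and contraction admissibility of Lemma~\ref{l:stradm} (so that one may work with set-sequents and conclude unprovability from a non-axiomatic saturated search space). Your case analyses are accurate: in (i) the reduced sequent $\Diamond\neg p,[[p]]$ keeps its only diamond at the root, so the proviso $\depth{\Gamma\context\cons{\emptyset}}>0$ permanently blocks \diafivec{} and only $\neg p$ circulates (root and middle node, never the grandchild), while in (ii) the reduced sequent $[\neg p],[\Diamond p]$ traps $\Diamond p$ in a childless leaf, blocking \kc{} and \diafourc{} so that only \diabc{} fires, copying $p$ to the root but never into the node holding $\neg p$ --- in both cases no node ever contains a complementary pair, so no axiom is reachable.
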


However, while not every combination of modal rules is sound and
complete for the respective set of frames, we can define a condition
on rule combinations which ensures that they are complete.

\begin{definition}[45-closed]
  Let $\X\subseteq\{{\sf d,t,b,4,5}\}$. The set $\X$ is
  \emph{45-closed} if for $\rho \in \{{\sf 4,5}\}$ we have that if all
  $\X$-frames satisfy $\rho$ then $\rho \in \X$.  Both of the sets
  $\{{\sf t,5}\}$ and $\{{\sf b,4}\}$ are not 45-closed, for example,
  while both $\{{\sf t,4,5}\}$ and $\{{\sf b,4,5}\}$ are. A set of
  modal rules is \emph{45-closed} if its underlying set of names of
  modal axioms is 45-closed.
\end{definition}

The completeness result we are about to prove holds for 45-closed
$\X$. It is easy to check that for each set of frames which can be
characterised by our five axioms there is a combination of modal rules
which is 45-closed and thus is also sound and complete.  In order to
prove our completeness result, we first need some preliminary
definitions which will help us to extract a tree-like Kripke model
from a sequent.

\begin{definition}[subtree of a sequent]
  A sequent $\Delta$ is an \emph{immediate subtree} of
  a sequent $\Gamma$ if there is a sequent $\Lambda$ such that $\Gamma
  = \Lambda, [\Delta]$. It is a \emph{proper subtree} if it is an
  immediate subtree either of $\Gamma$ or of a proper subtree of
  $\Gamma$, and it is a \emph{subtree} if it is either a proper
  subtree of $\Gamma$ or $\Delta=\Gamma$. The set of all subtrees of
  $\Gamma$ is denoted by $\st(\Gamma)$. A formula $A$ is \emph{in} a
  sequent $\Gamma$ if $A \in \Gamma$ and it is \emph{inside} $\Gamma$
  if there is a subtree $\Delta$ of $\Gamma$ such that $A \in \Delta$.
\end{definition}

Our sequents are based on multisets. We need a way to stop proof
search once their underlying sets remain the same, so we need the
following notion:

\begin{definition}[set sequent]
  The \emph{set sequent} of the sequent 
\[
  A_1,\dots,A_m,[\Delta_1],\dots,[\Delta_n] 
\]
is the underlying set of
\[
  A_1,\dots,A_m,[\Lambda_1],\dots,[\Lambda_n] \quad ,
\]
  where $\Lambda_1\dots\Lambda_n$ are the set sequents of
  $\Delta_1\dots\Delta_n$. Clearly the set sequent of a given sequent
  is again a sequent since a set is a multiset.
\end{definition}

We will not directly prove completeness of the systems $\K+\diac{\X}$,
but of different, equivalent systems $(\K+\diac{\X})^\circ$ that we define
now. For each rule $\rho$ we define a rule $\rho^\circ$ which keeps
the main formula from the conclusion. For most rules $\rho=\rho^\circ$
except for the following rules:
\[
\cinf{\vlan^\circ}{\Gamma\cons{A \vlan B}}{\Gamma\cons{A\vlan
      B, A} \quad \Gamma\cons{A \vlan B, B}}
\qquad
\kaiinf{\vlor^\circ}{\Gamma\cons{A \vlor B}}{\Gamma\cons{A\vlor B,A,B}}
\]

\[
\kaiinf{\Box^\circ}{\Gamma\cons{\Box A}}{\Gamma\cons{\Box A, [A]}}
\quad\parbox{8.5cm}{where in the conclusion the node of the active
  formula does not have a child node which contains $A$}
\]

\[
\kaiinf{{\diadccirc}}{\Gamma\cons{\Diamond A}}{\Gamma\cons{\Diamond A,
    [A]}} \quad\parbox{8.5cm}{where in the conclusion the node of the
  active formula does not have a child node.}
\]
In addition, each rule $\rho^\circ$ carries the proviso that for all
of its premises the set sequent is different from the set sequent of
the conclusion.  Given a system $\Sys$ the system $\Sys^\circ$ is
obtained by replacing each rule $\rho \in \Sys$ by $\rho^\circ$.
Systems $\Sys$ and $\Sys^\circ$ will turn out to be equivalent, as we
will know after the completeness theorem. For now we just prove one
direction of the equivalence.

\begin{lemma}[$\Sys^\circ$ into $\Sys$]\label{l:equiv}
  For all $\X \subseteq \modax$ and for all sequents
  $\Gamma$ we have that $(\K+\diac{\X})^\circ \vdash \Gamma$
  \;implies\; $\K+\diac{\X} \vdash \Gamma$.
\end{lemma}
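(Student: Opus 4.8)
The plan is to show that any proof in $(\K+\diac{\X})^\circ$ can be transformed into a proof in $\K+\diac{\X}$ with the same conclusion. The essential observation is that each circled rule $\rho^\circ$ is \emph{derivable} in $\K+\diac{\X}$ using its un-circled counterpart $\rho$ together with the admissible structural rules from Lemma~\ref{l:stradm}. Indeed, the only difference between $\rho$ and $\rho^\circ$ is that $\rho^\circ$ retains a copy of the main formula (the principal $A\vlan B$, $A\vlor B$, $\Box A$, or $\Diamond A$) in its premises, and additionally each $\rho^\circ$ carries the proviso that the set sequent of each premise differs from that of the conclusion. Neither the retained copies nor the side proviso causes trouble when going from $\Sys^\circ$ to $\Sys$: the proviso only \emph{restricts} which instances are allowed, so any $\rho^\circ$-instance is also (after forgetting the extra copy) obtainable, and the retained copy can be reconstructed by weakening.

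Concretely, I would proceed by induction on the depth of the given proof $\PP$ in $(\K+\diac{\X})^\circ$, replacing each inference bottom-up. For a rule with $\rho=\rho^\circ$ (all the diamond rules except $\diadccirc$, and the axiom) the step is immediate. For the modified rules one reconstructs the corresponding $\rho$-step and then restores the extra main-formula copy. For example, for $\vlor^\circ$ the premise $\Gamma\cons{A\vlor B,A,B}$ contains the extra $A\vlor B$; applying $\inverse{\vlor}$ (depth-preserving admissible, being a weakening by Lemma~\ref{l:stradm}) or simply weakening-admissibility removes it, yielding $\Gamma\cons{A,B}$, to which the ordinary $\vlor$-rule applies to give $\Gamma\cons{A\vlor B}$. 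The cases of $\vlan^\circ$, $\Box^\circ$ and $\diadccirc$ are handled the same way, using that $\Box^\circ$ and $\diadccirc$ differ from $\Box$ and $\diadc$ only by keeping the principal formula and by the side conditions on the child node, both of which are harmless in this direction.

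Since weakening (and hence the inverses of the propositional and box rules, which are weakenings) is depth-preserving admissible for every $\K+\diac{\X}$ by Lemma~\ref{l:stradm}, each local replacement stays within $\K+\diac{\X}$, and the induction goes through. The set-sequent proviso plays no role here because it only narrows the set of available $\rho^\circ$-instances; dropping it when passing to the ordinary rules is always permitted.

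I do not expect a serious obstacle in this direction: every circled rule is clearly derivable from its plain version plus weakening, and the structural admissibility results have already been established. The only point requiring a little care is making sure that after stripping the retained main formula one genuinely lands on the \emph{premise} shape of the ordinary rule (so that the ordinary rule indeed applies), which is a straightforward inspection of the four modified rules. The harder, converse direction (embedding $\Sys$ into $\Sys^\circ$, which makes the two systems equivalent) is deliberately postponed until after the completeness theorem, so it need not be addressed here.
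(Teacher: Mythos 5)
Your overall strategy---induction on the depth of the given $(\K+\diac{\X})^\circ$-proof, simulating each circled rule by its plain counterpart together with admissible structural rules, and observing that the set-sequent proviso merely restricts instances and is therefore harmless in this direction---is exactly the paper's (its proof is a one-liner: a standard induction on the proof tree, using contraction and weakening admissibility). However, your local justification contains a genuine error. You claim the retained principal copy can be removed from the premise by ``weakening admissibility''. Weakening admissibility goes the other way: it says weakening inferences can be eliminated, equivalently that provability of $\Gamma\cons{\emptyset}$ yields provability of $\Gamma\cons{\Delta}$; it never licenses \emph{deleting} a formula from a provable sequent. Your fallback via $\inverse{\vlor}$ also does not produce what you state: applying $\inverse{\vlor}$ to the retained copy in $\Gamma\cons{A\vlor B,A,B}$ yields $\Gamma\cons{A,B,A,B}$, not $\Gamma\cons{A,B}$. (Moreover, $\inverse{\vlor}$ is not ``a weakening''; in Lemma~\ref{l:stradm} only the inverses of the diamond rules are weakenings, while invertibility of $\vlor$, $\vlan$ and $\Box$ is proved by a separate induction.)

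The missing ingredient is contraction admissibility, which you never invoke and which is precisely what the paper uses. The clean simulation keeps the copy as a side formula rather than stripping it: from the premise $\Gamma\cons{A\vlor B,A,B}$ apply the ordinary $\vlor$-rule in the context $\Gamma\cons{A\vlor B,\context}$ to obtain $\Gamma\cons{A\vlor B,A\vlor B}$, then remove the duplicate by depth-preserving contraction admissibility, Lemma~\ref{l:stradm}(i); the same pattern handles $\vlan^\circ$ (using that $\inverse{\vlan}$ may conclude either conjunct) and $\Box^\circ$. For $\diadccirc$ nothing is needed at all, since its premise--conclusion pair is literally an instance of $\diadc$, whose built-in contraction already absorbs the copy. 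With contraction admissibility in place your induction goes through and coincides with the paper's argument; without it, the step ``strip the retained copy, then apply the plain rule'' is unjustified as written.
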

\begin{proof}
  By a standard induction on the proof tree, using contraction and
  weakening admissibility for $\K+\diac{\X}$.

\end{proof}

In order to prove completeness we need some closures of relations.

\begin{definition}[some closures of relations]
  Let $\rightarrow$ be a binary relation on a set $S$. Then
  $\leftarrow$ denotes its inverse, $\leftrightarrow$ its symmetric
  closure, $\rightarrow^+$ its transitive closure and $\rightarrow^*$
  its reflexive-transitive closure. For $\X \subseteq \modaxnod$
  $\rightarrow^\X$ denotes the smallest relation that includes
  $\rightarrow$ and has the properties in $\X$. The same conventions
  are used for different arrows that denote relations, such as
  $\Rightarrow$, the inverse of which is $\Leftarrow$, and so on.
\end{definition}

We will see shortly that $\rightarrow^\X$ is well-defined. First we
need to characterise the euclidean and the transitive-euclidean
closure of a relation.

\begin{definition}[(transitive-)euclidean connection]
  Let $\rightarrow$ be a binary relation on a set $S$ and let $s,t \in
  S$. A \emph{euclidean connection} for $\rightarrow$ from $s$ to $t$
  is a nonempty sequence $s_1\dots s_n$ of elements of $S$ such that
  we have
\[
s \leftarrow s_1 \leftrightarrow s_2 \leftrightarrow \dots
\leftrightarrow s_n \rightarrow t \qquad .
\]
A \emph{transitive-euclidean connection} is defined likewise but such that
\[
s = s_1 \leftrightarrow s_2 \leftrightarrow \dots
\leftrightarrow s_n \rightarrow t \qquad .
\]
We write $s \rightarrow_{(4)5} t$ if there is a (transitive-)euclidean
connection for $\rel$ from $s$ to $t$.
\end{definition}

\begin{lemma}[$\rightarrow^\X$ is well-defined]\label{l:45}
  Let $\rightarrow$ be a binary relation on a set $S$.
  Then the following hold:\\
  (i) For all $\X \subseteq \modaxnod$ the relation $\rightarrow^\X$ is well-defined.\\
  (ii) The relation $\rightarrow \cup  \rightarrow_5$  is the least euclidean relation that contains $\rightarrow$. \\
  (iii) The relation $\rightarrow_{45}$ is the least transitive and
  euclidean relation that contains $\rightarrow$.
\end{lemma}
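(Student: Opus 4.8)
The plan is to dispatch (i) by an abstract intersection argument, and to prove (ii) and (iii) by the usual two-sided strategy for a least closure: show that the proposed relation extends $\rightarrow$ and satisfies the frame conditions in question, and show that it is contained in every relation extending $\rightarrow$ that satisfies those conditions. For (i), the key observation is that each of the four conditions for $\X \subseteq \modaxnod$ is a Horn-type condition on pairs (reflexivity asserts all diagonal pairs are present; symmetry, transitivity and euclideanness assert that certain conjunctions of membership force one further membership), and every such condition is preserved under arbitrary intersections. Since $S \times S$ satisfies all four conditions, the family of relations extending $\rightarrow$ and satisfying the conditions named in $\X$ is nonempty, so its intersection again extends $\rightarrow$, again satisfies those conditions, and is by construction the smallest such relation. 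That intersection is exactly $\rightarrow^\X$, so $\rightarrow^\X$ is well-defined.

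For (ii) I must check three things. Containment $\rightarrow \,\subseteq\, \rightarrow \cup \rightarrow_5$ is trivial. That $\rightarrow \cup \rightarrow_5$ is euclidean follows by splicing connections: given $x$ with $(\rightarrow \cup \rightarrow_5)$-edges to $y$ and to $z$, I unfold each edge into either a single $\rightarrow$-step or a euclidean connection, reverse the one ending at $x$, and concatenate it with the one starting at $x$; a short four-case distinction (according to which of the two edges are genuine connections) shows the result is always a euclidean connection from $y$ to $z$, so $y \rightarrow_5 z$. The crux is the minimality direction: $\rightarrow_5 \,\subseteq\, E$ for every euclidean $E \supseteq \rightarrow$. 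The naive idea of propagating the edge $s_1 \rightarrow s$ along the zigzag $s_1 \leftrightarrow \dots \leftrightarrow s_n$ of a euclidean connection $s \leftarrow s_1 \leftrightarrow \dots \leftrightarrow s_n \rightarrow t$ fails, because a backward step $s_{i+1} \rightarrow s_i$ would need transitivity, which a merely euclidean $E$ need not possess. The remedy is to exploit the cluster structure of euclidean relations: writing $x \approx y$ for $x \mathrel{E} y \wedge y \mathrel{E} x$, one verifies that $\approx$ is an equivalence relation on the set of $E$-targets, that the $E$-successors of any fixed point form a single $\approx$-class on which $E$ is the full relation, and that every $E$-target carries a loop. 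Each $s_i$ in the connection has a successor (along the zigzag, or the loop if it is a target), hence a well-defined successor class $K_i$; each zigzag step forces $K_i = K_{i+1}$ by applying the loop at the newly created target. Since $s \in K_1$ and $t \in K_n = K_1$, we conclude $s \approx t$, whence $s \mathrel{E} t$. This settles minimality, and the three parts together give (ii).

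For (iii) the same template applies to $\rightarrow_{45}$ but is easier, since $E$ is now additionally transitive. Containment $\rightarrow \,\subseteq\, \rightarrow_{45}$ is immediate from the degenerate transitive-euclidean connection $s = s_1 \rightarrow t$; note this is why (iii) needs no extra union with $\rightarrow$, whereas (ii) does. Transitivity and euclideanness of $\rightarrow_{45}$ again follow by concatenating and reversing connections of the form $s = s_1 \leftrightarrow \dots \leftrightarrow s_n \rightarrow t$, the junctions being legitimate zigzag steps because $s_n \rightarrow t$ turns into $s_n \leftrightarrow t$ in the symmetric closure. For minimality, the transitivity of $E$ repairs precisely the step that failed in (ii): if $a \leftrightarrow b$ and $b \mathrel{E} c$, then $a \mathrel{E} c$ — by transitivity when $a \mathrel{E} b$, and by euclideanness (common source $b$) when $b \mathrel{E} a$. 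Applying this from the $t$-end of a transitive-euclidean connection propagates $s_n \mathrel{E} t$ backwards through the whole zigzag to $s_1 = s$, yielding $s \mathrel{E} t$ and hence $\rightarrow_{45} \,\subseteq\, E$.

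I expect the main obstacle to be the minimality half of (ii): unlike (iii), a purely euclidean $E$ offers no transitivity to push an edge across a backward zigzag step, so the argument genuinely needs the cluster/equivalence analysis of euclidean relations (the loops on targets and the completeness of successor classes) rather than a one-line propagation. Once that structural lemma is in hand, both (ii) and (iii) close quickly, and (i) merely supplies the abstract existence that makes the phrase ``the least such relation'' meaningful.
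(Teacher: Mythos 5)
Your proof is correct, but it takes a genuinely different route from the paper's at all three parts. For (i), the paper gives no abstract argument: it calls the $5$-free cases ``easy to check'' and derives the cases $\{{\sf 5}\}$ and $\{{\sf 4,5}\}$ from the explicit characterisations (ii) and (iii); your observation that all four frame conditions are Horn conditions, hence preserved under arbitrary intersection (with $S\times S$ witnessing nonemptiness), settles every subset of $\modaxnod$ uniformly and makes (i) independent of (ii)--(iii) --- the explicit characterisations are still needed downstream in the completeness proof, but you prove them anyway, so nothing is lost. For the minimality half of (ii), the paper inducts on the length of the euclidean connection: if some $s_i$ has both neighbours pointing into it, the connection splits into two shorter connections, from $s_i$ to $s$ and from $s_i$ to $t$, and the induction hypothesis plus euclideanness at the common source $s_i$ yields $s \Rightarrow t$; otherwise the connection has a unique source $s_j$, which euclideanness lets one delete, shortening the connection. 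Your cluster analysis ($\approx\,=E\cap E^{-1}$ is an equivalence on targets, every target carries a loop, the successor class is invariant along the zigzag, and $s\in K_1$, $t\in K_n$ force $s\approx t$) proves the same inclusion without induction on the connection and extracts reusable structural facts about euclidean relations; the paper's induction is shorter and needs no auxiliary notions, while yours makes the mechanism visible. Your closing claim that the argument ``genuinely needs'' the cluster analysis does overshoot, though --- the paper's peak-removal induction is a counterexample --- but this is commentary, not a gap. For (iii), the paper splits at the first backward step, reuses the argument of (ii) for the euclidean connection from that point on, and closes with transitivity, whereas your single propagation step ($a\leftrightarrow b$ and $b\mathrel{E}c$ give $a\mathrel{E}c$, by transitivity or by euclideanness at the common source $b$ according to the orientation) runs straight down the connection without routing through (ii) and is arguably the cleaner of the two. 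Your remaining verifications (the four splicing cases for euclideanness of $\rightarrow\cup\rightarrow_5$, the degenerate connection $s=s_1\rightarrow t$ giving $\rightarrow\subseteq\rightarrow_{45}$, and concatenation/reversal for transitivity and euclideanness of $\rightarrow_{45}$) all check out, so the proposal stands as a complete alternative proof.
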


\begin{proof}
  (i) is easy to check except for the cases for $\{{\sf 5}\}$ and
  $\{{\sf 4,5}\}$, which follow from (ii) and (iii).

  (ii) Euclideanness is easy to check. For leastness we show that any
  euclidean relation $\Rightarrow$ that includes $\rel$ also includes
  $\rel_5$. If $s \rel_5 t$ then $s{\Rightarrow}_5 t$. We show
  $s{\Rightarrow}_5 t$ for a euclidean connection of length $n$
  implies $s {\Rightarrow} t$ by induction on $n$. Assume there is an
  $s_i$ in the euclidean connection such that $s_{i-1} {\Rightarrow} s_i
  {\Leftarrow} s_{i+1}$.  Then we have two smaller euclidean
  connections to which we apply the induction hypothesis and obtain $s
  {\Rightarrow} t$ by euclideanness.  If there is no such $s_i$ then the
  euclidean connection looks as follows:
\[
s = s_0 {\Leftarrow} s_1 {\Leftarrow} \dots {\Leftarrow} s_j
{\Rightarrow} \dots {\Rightarrow} s_n {\Rightarrow} s_{n+1}=t
\rlap{\quad ,}
\]
and by euclideanness we have $s_{j-1} {\Rightarrow} s_{j+1}$ and thus
removing $s_j$ yields a smaller euclidean connection from $s$ to $t$
which by induction hypothesis implies $s {\Rightarrow} t$.

(iii) Euclideanness and transitivity are easy to check. For leastness
we show that any transitive-euclidean relation ${\Rightarrow}$ that
includes $\rel$ also includes $\rel_{45}$. If $s\rel_{45} t$ then
$s{\Rightarrow}_{45} t$. If there is no $s_i$ in the
transitive-euclidean such that $s_i {\Leftarrow} s_{i+1}$, then $s
{\Rightarrow} t$ follows by transitivity.  Otherwise, choose the first
such $s_i$.  We have a euclidean connection from $s_i$ to $t$, thus
similarly to (ii) obtain $s_i {\Rightarrow} t$ and by transitivity $s
{\Rightarrow} s_i$ and $s {\Rightarrow} t$.

\end{proof}

\begin{definition}[serial closure]
Let $\rightarrow$ be a binary relation on a set $S$. Its
\emph{serial closure}, denoted $\rightarrow^{\sf d}$, is obtained
from $\rightarrow$ by adding $s \rightarrow s$ for each $s \in S$ which
violates seriality. For $\X \subseteq \{{\sf t,b,4,5}\}$ the relation
$\rightarrow^{\sf X \cup \{{d}\}}$ is defined as $(\rightarrow^{\sf
X})^{\sf d}$.
\end{definition}

\begin{lemma}[Serial closure preserves frame conditions]\label{l:d}
  Let $\rightarrow$ be a binary relation on a set $S$. If
  $\rightarrow$ satisfies a frame condition in $\{{\sf t,b,4,5}\}$
  then $\rightarrow^{\sf d}$ also satisfies that frame condition.
\end{lemma}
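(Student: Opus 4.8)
The plan is to reduce the statement to an edge-by-edge case analysis, after isolating the one structural fact that makes every case go through. Recall that $\rel^{\sf d}$ is $\rel$ together with a loop $s \rel s$ for precisely those states $s$ that have no $\rel$-successor; call such a state a \emph{dead end}. The decisive observation is that a dead end has no outgoing $\rel$-edge whatsoever, so its only outgoing $\rel^{\sf d}$-edge is its freshly added loop. Consequently, whenever a configuration uses a new loop as an edge leaving some state $x$, that $x$ must be a dead end and therefore has no original outgoing edge at all; this collapses every mixed original/loop combination to either a vacuous or a trivial one. I fix a frame condition that $\rel$ satisfies and treat the four conditions in turn.

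The two easy cases use the observation only trivially. For \emph{reflexivity}, if $\rel$ is reflexive then every $s$ already has the successor $s$, so there are no dead ends and $\rel^{\sf d} = \rel$; reflexivity is preserved outright. For \emph{symmetry}, the only edges in $\rel^{\sf d}$ that are not already in $\rel$ are loops, which are their own symmetric partners, while every original edge retains its symmetric partner from $\rel$; hence $\rel^{\sf d}$ is symmetric.

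The remaining two cases are where the dead-end observation does the work. For \emph{euclideanness}, suppose $s \rel^{\sf d} t$ and $s \rel^{\sf d} u$, so both edges leave $s$. If either is a new loop then $s$ is a dead end, forcing both to be loops with $t = u = s$, and then $t \rel^{\sf d} u$ holds by that same loop; otherwise both are original and euclideanness of $\rel$ gives $t \rel u$, hence $t \rel^{\sf d} u$. For \emph{transitivity}, suppose $s \rel^{\sf d} t$ and $t \rel^{\sf d} u$. If $t$ is a dead end then $t \rel^{\sf d} u$ can only be the loop, so $u = t$ and $s \rel^{\sf d} u$ is just $s \rel^{\sf d} t$; if $t$ is not a dead end then the first edge cannot be a loop (a loop would make $s = t$ a dead end), so $s \rel t$ is original, and the second edge is either the loop (giving $u = t$, hence done) or original $t \rel u$ (giving $s \rel u$ by transitivity of $\rel$, hence $s \rel^{\sf d} u$).

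I expect no genuine obstacle here: the entire content sits in the dead-end observation, and the only point requiring care is confirming that the combinations mixing an original edge with a new loop are exhaustively and correctly handled, which the observation dispatches uniformly by ruling out an original outgoing edge at any dead end.
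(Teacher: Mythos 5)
Your proof is correct and follows essentially the same route as the paper's: a per-condition, edge-by-edge case analysis in which the key step for transitivity and euclideanness is that any freshly added loop sits at a state with no original outgoing edge (the paper phrases this as ``$s$ does not violate seriality,'' your ``dead end'' observation). The only cosmetic difference is that you split the transitivity case on whether $t$ is a dead end, while the paper splits on $s=t$ or $t=u$; the content is identical.
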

\begin{proof}
  For reflexivity this is clear since a reflexive relation is its own
  serial closure. For symmetry this is clear since only loops are
  added, which are their own inverses. For transitivity, assume that
  we have $s\rightarrow^{\sf d} t$ and $t \rightarrow^{\sf d} u$. If
  either $s=t$ or $t=u$ then we have $s\rightarrow^{\sf d} u$. So
  assume $s\neq t$ and $t\neq u$. Then $s\rightarrow t$ and $t
  \rightarrow u$ and by transitivity of $\rightarrow$ we get $s
  \rightarrow u$ and thus $s  \rightarrow^{\sf d} u$.

  For euclideanness, assume that $s \rightarrow^{\sf d} t$ and $s
  \rightarrow^{\sf d} u$. We need to show that $t \rightarrow^{\sf d}
  u$.  If $s=t$ then we are done, so assume $s\neq t$ which implies $s
  \rightarrow t$. Since $s\rightarrow^{\sf d} u$ and since $s$ does
  not violate seriality we have $s\rightarrow u$. By euclideanness of
  $\rightarrow$ we obtain $t \rightarrow u$ and thus $t
  \rightarrow^{\sf d} u$.  
\end{proof}

\begin{definition}[cyclic, finished, $\prove{\Gamma,\X}$]
  A leaf of a sequent is \emph{cyclic} if there is an inner node in
  the sequent that carries the same set of formulas. A node in a
  sequent is \emph{finished} for a system $\Sys$ if no rule from
  $\Sys$ applies to a formula in this node. A sequent is
  \emph{finished} for a system $\Sys$ if all its nodes are either
  finished for $\Sys$ or cyclic. We define a procedure
  $\prove{\Gamma,\X}$, which takes a sequent $\Gamma$ and a set $\X
  \subseteq \modax$ and builds a derivation tree for
  $\Gamma$ by applying rules from $(\K+\diac{\X})^\circ$ to non-axiomatic and
  unfinished derivation leaves in a bottom-up fashion. It is shown in
  Figure~\ref{fig:prove}.  If $\prove{\Gamma,\X}$ terminates and all
  derivation leaves are axiomatic then it \emph{succeeds} and if it
  terminates and there is a non-axiomatic derivation leaf then it
  \emph{fails}.
\end{definition}

  \begin{figure}
    \centering
\fbox{
\parbox{\ftextwidth}{
\begin{description}
    \item[{\bf Repeat}]
    \item[\quad (step 1)] Apply the rules in $((\K+\diac{\X})
      \setminus \{\Box,\diadc\})^\circ$ as long as possible.
    \item[\quad (step 2)] Wherever possible, apply the rules in
      $(\{\Box\} \cup (\diac{\X} \cap \{\diadc\}))^\circ$ once.
    \item[{{\bf Until} each non-axiomatic derivation leaf is
        finished.}]
    \end{description}
}}
    \caption{The algorithm $\prove{\Gamma,\X}$}
\label{fig:prove}
\end{figure}

\begin{definition}[size of a sequent, $\sufo{\Gamma}$]
  The \emph{size} of a sequent is the number of nodes of its
  corresponding tree. The set of subformulas of a sequent $\Gamma$,
  denoted $\sufo{\Gamma}$ is the set of all subformulas of all
  formulas which are element of some node of the sequent.  
\end{definition}

\begin{lemma}[Termination]\label{l:terminates}
  For all sets $\X \subseteq \{{\sf d,t,b,4,5}\}$ and for all sequents
  $\Gamma$ the procedure $\prove{\Gamma,\X}$ terminates after at most
  $2^{|\sufo{\Gamma}|}$ iterations (of the repeat-until-loop).  
\end{lemma}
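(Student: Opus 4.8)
The plan is to bound the number of loop iterations by the depth of the sequent-trees built during proof search, and in turn to bound that depth by the number of distinct node-sets, which is at most $2^{|\sufo{\Gamma}|}$.

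\textbf{Subformula property and bounded node-sets.} First I would observe that every rule $\rho^\circ$ of $(\K+\diac{\X})^\circ$, read bottom-up, keeps the main formula and introduces only subformulas of formulas already present in the conclusion. Hence every formula occurring at any node of any sequent produced by $\prove{\Gamma,\X}$ lies in $\sufo{\Gamma}$, so the set of formulas carried by a node is a subset of $\sufo{\Gamma}$. Consequently there are at most $N:=2^{|\sufo{\Gamma}|}$ distinct node-sets available throughout the whole run. I would also record that all rules in question only add formulas to existing nodes or add new boxed children; they never delete anything, so the set carried by a fixed node grows monotonically and stabilises after finitely many changes, and the proviso that each $\rho^\circ$ alters the set sequent guarantees that every rule application makes genuine progress.

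\textbf{Bounded depth.} Next I would bound the depth of the sequent-trees occurring in the derivation. Along any path from the root to a leaf the (stabilised) node-sets are pairwise distinct: as soon as a leaf carries the same set of formulas as one of its ancestors it is \emph{cyclic}, hence \emph{finished}, so proof search never extends it. Since node-sets are subsets of $\sufo{\Gamma}$, every such path has length at most $N$, and therefore every sequent-tree appearing in the derivation has depth at most $N$.

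\textbf{Iterations versus depth.} Finally I would relate iterations to depth. In one iteration, step~1 (all rules other than $\Box$ and $\diadc$) creates no new nodes and only enriches existing ones, while step~2 applies $\Box^\circ$ and $\diadccirc$ once, attaching fresh boxed children one level below their parent. If an iteration creates no new node, then after its step~1 every leaf is saturated for the horizontal rules and after its step~2 no $\Box/\diadc$-rule applies, so every leaf is finished and the loop stops. Thus every non-final iteration pushes the construction strictly deeper, and with the depth bounded by $N$ the loop runs at most $N=2^{|\sufo{\Gamma}|}$ times.

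The delicate point, and the step I expect to be the main obstacle, is the last one: the claim that each non-final iteration genuinely advances the frontier, and that no node is created more than a bounded number of iterations ``late''. This is not immediate, because the deep-inference rules $\diabc$ and $\diafivec$ propagate formulas upwards and sideways, so a shallow node can acquire a new diamond — and hence, via $\Box^\circ$, spawn a new child — in a late iteration, after deeper nodes have already been built. I would control this with an invariant to the effect that after $k$ iterations every node at depth below $k$ is permanently finished, and verify that the upward and sideways propagation respects it, so that a node at depth $d$ is always created within about $d$ iterations; the accumulated ``lateness'' can then never outrun the per-branch budget of $N$ distinct node-sets established above, which keeps the total number of iterations within $2^{|\sufo{\Gamma}|}$.
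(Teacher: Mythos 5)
Your skeleton matches the paper's proof up to the midpoint: both arguments use the subformula property to bound node-sets by $2^{|\sufo{\Gamma}|}$, the strict growth of node-sets (with no new nodes) to terminate step~1, the fact that step~2 grows the tree, and the cyclicity pigeonhole. But the step you yourself flag as the main obstacle is a genuine gap, and the invariant you propose to close it is false. Under $\diabc$, read bottom-up, the parent of a node acquires the formula $A$ whenever the child contains $\Diamond A$; if $A$ is of the form $\Box B$, then $\Box^\circ$ becomes newly applicable at that parent and spawns a fresh child there, so a node of small depth can become unfinished again and create new nodes at a late iteration. Likewise $\diafivec$ plants $\Diamond A$ at an arbitrary non-root node, after which $\diadccirc$ or $\kc$ can re-activate it. Hence ``after $k$ iterations every node at depth below $k$ is permanently finished'' fails, and with it your accounting that a node at depth $d$ is created within about $d$ iterations. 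Your depth-bound step is also unsound as argued: node-sets keep growing after a node has been extended, and cyclicity is tested against the \emph{current} sets, so the stabilised sets along a path need not be pairwise distinct --- two nodes on a path can converge to the same set later, and a once-cyclic leaf can cease to be cyclic and be extended after all. What survives is only the weaker fact that a leaf is extended only at a moment when its set differs from every inner node's current set. (A smaller slip: ``every non-final iteration pushes the construction strictly deeper'' is about global depth, whereas step~2 is only guaranteed to grow the \emph{size} of the sequent; a child added off the deepest branch does not increase depth, so depth-versus-iteration accounting needs per-branch bookkeeping anyway.)

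The paper's proof avoids any frontier invariant: it fixes a branch of the derivation, notes that step~1 terminates because each application strictly grows some node-set within $\sufo{\Gamma}$ without creating nodes, that a sequent with an unfinished leaf after step~1 strictly grows in size during step~2, and then applies the pigeonhole per unfinished leaf --- since only $2^{|\sufo{\Gamma}|}$ node-sets exist, each unfinished leaf must become cyclic before $2^{|\sufo{\Gamma}|}$ iterations, at which point the sequent is finished (or axiomatic) and the loop exits. No claim that shallow nodes are permanently finished is made or needed. If you want to repair your route, replace the invariant by this kind of per-leaf, per-branch argument: track, for each branch, the moments at which its tip is extended, using that extension requires non-cyclicity at that moment, rather than trying to synchronise iteration count with a monotonically advancing depth frontier.
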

\begin{proof}
  Consider a sequence of sequents along a given branch of the
  derivation starting from the root.  A rule application in step 1
  does not create new nodes in the sequent and causes the set of
  formulas at some node in the sequent to strictly grow. By the
  subformula property only finitely many formulas can occur in a node,
  so step 1 terminates. If after step 1 there is an unfinished leaf in
  a sequent then the size of the sequent strictly grows in step 2.
  Since there are only $2^{|\sufo{\Gamma}|}$ different sets of
  formulas that can occur each unfinished sequent leaf has to be
  cyclic before $2^{|\sufo{\Gamma}|}$ iterations. Then the sequent
  will be finished if it is not axiomatic, and thus the algorithm
  terminates.

\end{proof}

\begin{theorem}[Completeness]
\label{t:completeness}
For all 45-closed sets $\X \subseteq\modax$ and for all sequents
$\Gamma$ the following hold:\\
(i) If $\Gamma$ is $\X$-valid then  $\K+\diac{\X} \vdash \Gamma$.\\
(ii) If $\prove{\Gamma,\X}$ fails then there is a finite $\X$-frame
in which $\Gamma$ is not valid.
\end{theorem}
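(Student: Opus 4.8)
The plan is to prove (ii) first and then derive (i) from it, since the two are closely linked: if $\Gamma$ is $\X$-valid then $\prove{\Gamma,\X}$ cannot fail (its failure would produce a finite countermodel by (ii)), so it must succeed, and a successful run of $\prove{\Gamma,\X}$ is by construction a proof in $(\K+\diac{\X})^\circ$; Lemma~\ref{l:equiv}~($\Sys^\circ$ into $\Sys$) then turns this into a proof in $\K+\diac{\X}$, giving (i). Termination of the procedure is already guaranteed by Lemma~\ref{l:terminates}~(Termination), so the procedure always halts and either succeeds or fails.

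The heart of the argument is therefore the model-construction in (ii): assuming $\prove{\Gamma,\X}$ fails, I must build a finite $\X$-frame refuting $\Gamma$. A failing run terminates with a finished, non-axiomatic leaf sequent $\Sigma$. First I would take the corresponding tree $\tree{\Sigma}$ and read off a Kripke model whose states are the nodes of this tree, whose valuation sends each proposition $p$ to the set of nodes whose formula-multiset contains $p$, and whose underlying accessibility relation $\rightarrow$ is the immediate-subtree (parent-to-child) relation. Because $\Sigma$ is finished, each node is saturated under the propositional rules and under the $\Box^\circ$, $\kc$, and $\diadccirc$ rules, so a routine truth-lemma induction on formula complexity shows that every formula occurring in a node is falsified at that node; in particular the root falsifies $\formula{\Sigma}$ and hence, tracing the proof-search back up, falsifies $\Gamma$. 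The cyclic leaves are handled by identifying a cyclic leaf with the inner node carrying the same set of formulas, which keeps the frame finite.

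The main obstacle, and the step that requires the most care, is making the raw tree-frame satisfy the frame conditions in $\X$ while preserving the truth lemma. Here I would close $\rightarrow$ under the required conditions using exactly the machinery already set up: Lemma~\ref{l:45}~($\rightarrow^\X$ is well-defined) shows $\rightarrow^\X$ is well-defined and gives the least euclidean and least transitive-euclidean closures, and Lemma~\ref{l:d}~(Serial closure preserves frame conditions) lets me add the reflexive loops for seriality last without destroying the other conditions. The delicate point is that adding edges must not introduce new satisfactions that break the truth lemma for $\Diamond$- and $\Box$-formulas; this is precisely where finishedness under the structural modal rules $\diabc,\diafourc,\diafivec$ is used, since those rules guarantee that whenever the closure forces a new edge $s\rightarrow t$, every $\Diamond A$ that would need a witness, or every $\Box A$ whose scope widens, has already been propagated into the target node during proof-search. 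This is also exactly where the 45-closedness hypothesis is essential: it ensures that the frame conditions genuinely entailed by $\X$ (for the axioms $\sf 4$ and $\sf 5$) are backed by corresponding rules in the system, so that the closed relation still validates the truth lemma rather than outrunning what the calculus can prove. I expect the bulk of the technical work to be a case analysis over which conditions lie in $\X$, checking for each modal rule that finishedness plus the relevant closure lemma yields the required local truth condition at every node.
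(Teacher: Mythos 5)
Your overall route is exactly the paper's: (i) is derived from (ii) via Lemma~\ref{l:equiv} and Lemma~\ref{l:terminates} just as you describe, and the countermodel for (ii) is likewise built from the finished non-axiomatic leaf, with states the subtrees (of its set sequent), cyclic leaves identified with inner nodes carrying the same formula set, and the accessibility relation taken to be the $\X$-closure $\rel^\X$ of the parent-to-child relation, with Lemma~\ref{l:45} and Lemma~\ref{l:d} playing precisely the roles you assign them. Two problems remain, one an outright error and one a substantive omission.

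First, your valuation is inverted. You put a node into $V(p)$ when its formula multiset contains $p$; but then a node containing $p$ \emph{satisfies} $p$, and the base case of your truth lemma (``every formula occurring in a node is falsified at that node'') fails immediately. The paper sets $V(p)=\{\Delta \in S \mid \neg p \in \Delta\}$, and non-axiomaticity of the leaf is what makes this consistent. Relatedly, your worry about ``$\Box A$ whose scope widens'' is vacuous: falsifying $\Box A$ at a node needs a single child falsifying $A$, which survives the addition of closure edges; only the $\Diamond$-formulas are endangered, since falsity of $\Diamond A$ is a universal condition over successors.

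Second, the step you defer --- ``finishedness plus the relevant closure lemma yields the required local truth condition'' --- is not a routine check but the bulk of the paper's proof (its Claim~3): for each 45-closed $\X$ one must show that $\Delta \rel^\X \Lambda$ and $\Diamond A \in \Delta$ force $A \in \Lambda$, and this requires a separate argument per logic, built on the concrete characterizations of the closures. For instance, for $\X=\{{\sf 5}\}$ one uses the euclidean-connection characterization from Lemma~\ref{l:45}: the connection supplies a parent of (a copy of) $\Delta$, so that copy sits at depth greater than zero, $\diafivec$ is applicable and moves $\Diamond A$ into the parent of a copy of $\Lambda$, and then $\kc$ deposits $A$ into $\Lambda$; for $\X=\{{\sf d,b,4,5}\}$ one even needs $\diadc$ to manufacture a child of $\Lambda$ before $\diafourc$ and $\diabc$ can act. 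Your sketch correctly locates where 45-closedness enters, but as written it asserts rather than proves the propagation property --- and that property is exactly what fails for the non-45-closed combinations of Fact~\ref{f:45needed}, so it cannot be waved through.
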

\begin{proof}
  The contrapositive of (i) follows from (ii): if $\K+\diac{\X}
  \nvdash \Gamma$ then by Lemma~\ref{l:equiv}~($\Sys^\circ$ into
  $\Sys$) also $(\K+\diac{\X})^\circ \nvdash \Gamma$ and thus in
  particular $\prove{\Gamma,\X}$ cannot yield a proof and by
  Lemma~\ref{l:terminates}~(Termination) has to fail. Thus by (ii)
  $\Gamma$ is not $\X$-valid.  For (ii) we define a model $\M$ on an
  $\X$-frame for which we prove that it is a countermodel for
  $\Gamma$.  Let $\Gamma^*$ be the set sequent of the non-axiomatic
  finished sequent obtained.  Let $Y$ be the set of all cyclic leaves
  in $\Gamma^*$.  Let $S = \st(\Gamma^*) \setminus Y$.  Let $f: Y
  \rightarrow S$ be some function which maps a cyclic leaf to a
  sequent in $S$ whose root carries the same set of formulas and
  extend $f$ to $\st(\Gamma^*)$ by the identity on $S$.  Define a
  binary relation $\rel$ on $S$ such that $\Delta \rel \Lambda$ iff
  either 1) $\Lambda$ is an immediate subtree of $\Delta$ or 2)
  $\Delta$ has an immediate subtree $\Sigma \in Y$ and
  $f(\Sigma)=\Lambda$. Let $V(p) = \{\Delta \in S \,|\, \neg p \in
  \Delta\}$. Let $\M = (S,\rel^\X,V)$. We prove three claims about
  $\M$, each claim depending on the next.  Since all rules seen
  top-down preserve countermodels Claim 1 implies that $\M \not\models
  \Gamma$.

  \textbf{Claim 1} \; For each sequent $\Delta \in \st(\Gamma^*)$ we have
  that $\M, f(\Delta) \not\models \Delta$.

  By induction on the depth of $\Delta$. For depth zero this follows
  from Claim 2 and the fact that a formula is in $\Delta$ iff it is in
  $f(\Delta)$. So let 
\[
\Delta = A_1, \dots,A_m, [\Delta_1], \dots,
  [\Delta_n] \quad \mbox{ and } \quad n>0 \quad .
\]  Then $f(\Delta)=\Delta$. We have $\M,
  f(\Delta) \not\models A_i$ for all $i\leq m$ by Claim 2 and $\M,
  \Delta \not\models [\Delta_i]$ because $\Delta \rel f(\Delta_i)$ and
  by induction hypothesis $\M , f(\Delta_i) \not\models \Delta_i$.

  \textbf{Claim 2} \; For each sequent $\Delta \in S$ and for each
  formula $A \in \Delta$ we have that $\M, \Delta \not\models A$.

  By induction on the depth of $A$. For atoms it is clear from the
  definition of $\M$ and the fact that $\Gamma^*$ is not axiomatic.
  For the propositional connectives it is clear from the shape of the
  $\vlan, \vlor$-rules. If $A = \Box B$ then by the $\Box$-rule we
  have some $[\Lambda] \in \Delta$ with $B \in \Lambda$. By induction
  hypothesis we have $\M, \Lambda \not\models B$ and thus $\M, \Delta
  \not\models \Box B$. If $A=\Diamond B$ then by Claim 3 we have $B
  \in \Lambda$ for all $\Lambda$ with $\Delta \rel^\X \Lambda$, and
  thus $\M, \Lambda \not\models B$. Thus $\M,\Delta \not\models
  \Diamond B$.

  \textbf{Claim 3} \; For all sequents $\Delta,\Lambda\in S$ with
  $\Delta \rel^\X \Lambda$ and for each formula $A$ it holds that if
  $\Diamond A \in \Delta$ then $A \in \Lambda$.

  We make a case analysis on $\X$. Note that each modal logic has
  exactly one 45-complete axiomatisation, with the exception of {\sf
    S5}, which has two.

  {\bf K} $\X=\emptyset:$ By the definition of $\rel$ there is an
  immediate subtree of $\Delta$ whose root node carries the same set
  of formulas as the root node of $\Lambda$. By the $\kc$-rule we
  have $A$ in (the root node of) all immediate subtrees of $\Delta$.

  {\bf T} $\X=\{{\sf t}\} $ : $\Delta \rel^{\{\sf t\}} \Lambda$ iff $\Delta
  \rel \Lambda$ or $\Delta = \Lambda$. In the second case $A \in
  \Lambda$ follows from the $\diatc$-rule.

  {\bf KB} $\X=\{{\sf b}\}$: $\Delta \rel^{\{\sf b\}} \Lambda$ iff $\Delta
  \rel \Lambda$ or $\Lambda \rel \Delta$. In the second case $A \in
  \Lambda$ follows by the $\diabc$-rule.

  {\bf K4} $\X=\{{\sf 4}\}$: $\Delta \rel^{\{\sf 4\}} \Lambda$ iff there is a
  sequence
  \[\Delta = \Delta_0 \rightarrow \Delta_1 \rightarrow
  \Delta_2 \rightarrow \dots \rightarrow \Delta_n=\Lambda\;,
  \]
  with $n \geq 1$. An induction on $i$ gives us that $\Diamond A \in
  \Delta_i$ for $0 \leq i\leq n$ by using the $\diafourc$-rule. By the
  $\kc$-rule it follows that $A \in \Delta_n$.

  {\bf K5} $\X=\{{\sf 5}\}$: By Lemma~\ref{l:45}~($\rightarrow^\X$ is
  well-defined) we have $\Delta \rel^{\{\sf 5\}} \Lambda$ iff $\Delta
  \rel \Lambda$ or there is a euclidean connection from $\Delta$ to
  $\Lambda$. In the second case there are sequents $\Pi, \Sigma$ such
  that $\Delta \invrel \Pi$ and $\Sigma \rel \Lambda$. Thus there is
  an immediate subtree $\Delta'$ of $\Pi$ with the same formulas as
  $\Delta$ and an immediate subtree $\Lambda'$ of $\Sigma$ with the
  same formulas as $\Lambda$.  Since $\Diamond A \in \Delta$ we have
  $\Diamond A \in \Delta'$ and since $\Delta'\neq\Gamma^*$ by the
  $\diafivec$-rule we have $\Diamond A \in \Sigma$. Thus by the
  $\kc$-rule we have $A$ in $\Lambda'$ and thus in $\Lambda$.

  {\bf K45} $\X=\{{\sf 4,5}\}$: By Lemma~\ref{l:45}~($\rightarrow^\X$
  is well-defined) we have $\Delta \rel^{\{\sf 4,5\}} \Lambda$ iff
  $\Delta \rel \Lambda$ or there is a transitive-euclidean connection
  from $\Delta$ to $\Lambda$. In the second case there is a sequent
  $\Sigma$ such that $\Sigma\rel\Lambda$ and thus an immediate subtree
  $\Lambda'$ of $\Sigma$ with the same formulas as $\Lambda$. Since
  $\Diamond A \in \Delta$, by the $\diafivec$- and $\diafourc$-rules
  we have $\Diamond A$ in every subtree of $\Gamma^*$ and thus also in
  $\Sigma$, and by the $\kc$-rule we have $A$ in $\Lambda'$ and thus
  in $\Lambda$. (It is sufficient to have the {\diafivepac}-rule
  instead of the {\diafivec}-rule for all $\X$ which contain $\sf 4$.)

  {\bf KB5} $\X=\{{\sf b,4,5}\}$: $\Delta \rel^{\{\sf b,4,5\}}
  \Lambda$ iff $\Delta \symrel^+ \Lambda$. Thus there is a sequent
  $\Sigma$ such that either $\Sigma \rel \Lambda$ or $\Sigma \invrel
  \Lambda$. Rule $\sf 4,5$ imply that $\Diamond A$ is in every subtree
  of $\Gamma^*$ and thus in particular in $\Sigma$. We have $A \in
  \Lambda$ in the first case by the $\kc$-rule and in the second
  case by the $\diabc$-rule.

  {\bf KTB} $\X=\{{\sf b,t}\}$: $\Delta \rel^{\{\sf b,t\}} \Lambda$
  iff $\Delta \rel \Lambda$ or $\Delta \invrel \Lambda$ or $\Delta =
  \Lambda$. In these cases $A \in \Lambda$ respectively follows from
  the $\kc$- or $\diabc$- or $\diatc$-rule.

  {\bf S4} $\X=\{{\sf t,4}\}$: $\Delta \rel^{\{\sf t,4\}} \Lambda$ iff
  $\Delta \rel^+ \Lambda$ or $\Delta = \Lambda$. In the first case $A
  \in \Lambda$ follows from the rules $\diafourc$ and $\kc$ and in the
  second case from the $\diatc$-rule.

  {\bf S5}(1) $\X=\{{\sf t,4,5}\}$: $\Delta \rel^{\{\sf t,4,5\}}
  \Lambda$ iff $\Delta \symrel^* \Lambda$. We have $\Diamond A$ in all
  subtrees of $\Gamma^*$ by the rules $\diafourc,\diafivec$ and thus also $A$ by
  the $\diatc$-rule.

  {\bf S5}(2) $\X=\{{\sf d,b,4,5}\}$: $\Delta \rel^{\{\sf d,b,4,5\}}
  \Lambda$ iff $\Delta \symrel^* \Lambda$. We have $\Diamond A$ in all
  subtrees of $\Gamma^*$ by the rules $\diafourc,\diafivec$ and thus
also $\Diamond A \in \Lambda$. By the $\diadc$-rule the root of
$\Lambda$ has a child node. By the $\diafourc$-rule $\Diamond A$ is in
this child node and by the $\diabc$-rule $A \in \Lambda$.

  {\bf KD,KDB,KD4,KD5,KD45\,} The argument for all these cases is
  similar to the same system without {\sf d}. Take the corresponding
  $\X$, then $\Delta \rightarrow^{\X\cup\{{\sf d}\}} \Lambda$ iff
  $\Delta \rightarrow^{\X} \Lambda$ or ($\Delta=\Lambda$ and there is
  no $\Delta'$ with $\Delta\rightarrow^{\X}\Delta'$). In the second
  case, due to the ${\diadc}$-rule, there is no formula $\Diamond A$ in
  $\Delta$ and thus our claim is trivially true.
\end{proof}

Notice that each class of frames that can be characterised by our
modal axioms can also be characterised by a 45-closed set of axioms.
The restriction to 45-complete sets of rule names in the completeness
theorem is thus irrelevant for the two following corollaries.

\begin{corollary}[Finite Model Property]
For all $\X \subseteq \{{\sf d,t,b,4,5}\}$ it holds that if a formula is not 
$\X$-valid then there is a finite $\X$-frame in which it is not valid.
\end{corollary}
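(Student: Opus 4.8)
The plan is to reduce to the 45-closed case, where the Completeness theorem (Theorem~\ref{t:completeness}) already delivers a finite countermodel, and then transfer that countermodel back to the original axiom set. The key observation, already flagged in the remark preceding the statement, is that restricting to 45-closed sets costs no generality at the level of frame classes.

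First I would associate to an arbitrary $\X \subseteq \modax$ its \emph{45-closure}
\[
\Y = \X \cup \{\rho \in \{{\sf 4,5}\} : \text{every } \X\text{-frame satisfies } \rho\} \quad .
\]
Adjoining to $\X$ only frame conditions that already hold in every $\X$-frame does not change the class of frames, so the $\Y$-frames are exactly the $\X$-frames. Moreover $\Y$ is 45-closed: if every $\Y$-frame satisfies some $\rho \in \{{\sf 4,5}\}$, then since the $\Y$-frames coincide with the $\X$-frames, every $\X$-frame satisfies $\rho$, whence $\rho \in \Y$ by construction.

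Now suppose a formula $A$, read as a sequent of depth zero, is not $\X$-valid. Since the $\Y$-frames are precisely the $\X$-frames, $A$ is not $\Y$-valid. I would run the procedure $\prove{A,\Y}$. By Lemma~\ref{l:terminates}~(Termination) it halts. It cannot succeed: a successful run would be a proof of $A$ in $(\K+\diac{\Y})^\circ$, which by Lemma~\ref{l:equiv}~($\Sys^\circ$ into $\Sys$) yields $\K+\diac{\Y} \vdash A$, and then Soundness would force $A$ to be $\Y$-valid, contradicting the choice of $A$. Hence $\prove{A,\Y}$ fails. Since $\Y$ is 45-closed, Theorem~\ref{t:completeness}~(ii)~(Completeness) produces a finite $\Y$-frame in which $A$ is not valid; as the $\Y$-frames are exactly the $\X$-frames, this is the desired finite $\X$-frame.

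The substantive point is the 45-closure step; everything afterwards is a mechanical chaining of Termination, Lemma~\ref{l:equiv}, Soundness and Completeness~(ii). The only part needing genuine verification is that $\Y$ as defined is simultaneously 45-closed and frame-equivalent to $\X$, which amounts to confirming that among our five conditions the sole implications capable of making a set fail to be 45-closed are forced occurrences of transitivity or euclideanness --- exactly the two conditions folded into the closure.
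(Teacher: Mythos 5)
Your proof is correct and takes essentially the same route as the paper: the paper disposes of the corollary as ``immediate from part (ii) of the completeness theorem'', relying on the remark immediately before it that every frame class characterised by these axioms is also characterised by a 45-closed set --- exactly the 45-closure step you carry out explicitly (your $\Y$ is the paper's $\X^+$). You have simply filled in the details the paper leaves implicit, including the chain soundness / Lemma~\ref{l:equiv} / termination showing that $\prove{A,\Y}$ must fail.
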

\begin{proof}
  Immediate from part (ii) of the completeness theorem.
\end{proof}

\begin{corollary}[Decidability]
For all $\X \subseteq \{{\sf d,t,b,4,5}\}$ it is decidable whether a
formula is $\X$-valid.
\end{corollary}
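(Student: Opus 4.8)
The plan is to turn the proof-search algorithm $\prove{\cdot,\cdot}$ into a decision procedure. Given a formula $A$ and a set $\X\subseteq\modax$, note first that $A$ is itself a sequent (a singleton multiset) whose corresponding formula is $A$, so deciding $\X$-validity of the formula $A$ is the same as deciding $\X$-validity of the sequent $A$. Before running anything I would replace $\X$ by a 45-closed set $\X'$ characterising the same class of frames; as remarked just above, such an $\X'$ always exists, obtained by adjoining $\sf 4$ and/or $\sf 5$ whenever they are already forced by the frame conditions of $\X$. Since $\X$ and $\X'$ cut out exactly the same frames, $\X$-validity and $\X'$-validity coincide, and it suffices to decide the latter.

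Next I would run $\prove{A,\X'}$. By Lemma~\ref{l:terminates}~(Termination) this halts after at most $2^{|\sufo{A}|}$ iterations, and upon halting it either succeeds or fails; these two outcomes are exhaustive and mutually exclusive. It then remains to match success and failure against $\X'$-validity and its negation, which is where the earlier results do all the work.

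If the algorithm succeeds it has produced a proof of $A$ in $(\K+\diac{\X'})^\circ$; by Lemma~\ref{l:equiv}~($\Sys^\circ$ into $\Sys$) this yields a proof of $A$ in $\K+\diac{\X'}$, whence part (iii) of the Soundness theorem gives that $A$ is $\X'$-valid. If instead the algorithm fails, then, since $\X'$ is 45-closed, part (ii) of the Completeness theorem (Theorem~\ref{t:completeness}) supplies a finite $\X'$-frame in which $A$ is not valid, so $A$ is not $\X'$-valid. Thus $\prove{A,\X'}$ succeeds precisely when $A$ is $\X'$-valid and fails precisely when it is not, so it decides $\X'$-validity, hence $\X$-validity, of $A$.

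Essentially everything is borrowed: termination, the $\Sys^\circ$-into-$\Sys$ translation, soundness and completeness are already in hand. The single point requiring care — and the only real obstacle — is the reduction to a 45-closed set, since the Completeness theorem is stated only for 45-closed $\X$. I would therefore make sure to justify that passing from $\X$ to $\X'$ leaves the class of frames, and hence the notion of validity, unchanged, so that the verdict returned for $\X'$ is the correct verdict for the original $\X$.
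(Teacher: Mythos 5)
Your proposal is correct and follows the same route as the paper: the paper's proof is simply ``by the termination lemma and part~(ii) of the completeness theorem,'' relying on the immediately preceding remark that every frame class characterisable by these axioms is also characterised by a 45-closed set, which is exactly your passage from $\X$ to $\X'$. You have merely made explicit the pieces the paper leaves implicit (the use of Lemma~\ref{l:equiv} and soundness to certify the success case, and the fact that the 45-closure leaves the frame class unchanged), which is a faithful elaboration rather than a different argument.
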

\begin{proof}
  By the termination lemma and part (ii) of the completeness theorem.
\end{proof}

\subsection{Syntactic Cut-Elimination}

While cut admissibility is an easy corollary of the completeness
theorem, it is still interesting to provide a nontrivial procedure
which removes cuts from a proof. The existence of a step-by-step cut
elimination procedure shows a certain symmetry, a certain good design
of the inference rules. Also, it can serve as a starting point for a
computational interpretation, maybe along the lines of \cite{Mas96}.

We now see a cut-elimination procedure which follows the lines of the
one for system {\sf G3} for first-order predicate logic, see for
example \cite{TroSch96}. The interesting twist is that the modalities
require some form of multicut, similar to Gentzen's original
procedure, even though contraction is admissible. We first need some
standard definitions.

\begin{definition}[depth of a formula]
  The \emph{depth} of a formula $A$, denoted by $\depth{A}$, is defined
  as usual:
\[
  \begin{array}{l}
  \depth{p} = \depth{\neg p} = 0\\
  \depth{\Box A} = \depth{\Diamond A} = \depth{A}+1\\
  \depth{A \vlan B} = \depth{A \vlor B} ={\mathit
    max}({\depth{A},\depth{B}}) +1 \quad .\\
\end{array}
\]
\end{definition}

\begin{definition}[cut rank, cut-rank-preserving]
  Given an instance of the $\cut$ rule as shown in
  Figure~\ref{fig:admissible}, its \emph{cut formula} is $A$ and its
  \emph{cut rank} is one plus the depth of its cut formula. For $r\geq
  0$ we define the rule $\cut_r$ which is cut with at most rank $r$.
  The \emph{cut rank} of a derivation is the supremum of the cut ranks
  of its cuts. A rule is \emph{cut-rank (and depth-) preserving
    admissible} for a system $\cal S$ if for all $r\geq 0$ the rule is
  (depth-preserving) admissible for ${\cal S}+\cut_r$. A rule is
  \emph{cut-rank (and depth-) preserving invertible} for a system
  $\cal S$ if its inverse is \emph{cut-rank (and depth-) preserving
    admissible} for $\cal S$.
\end{definition}

The problem with proving cut-elimination in the presence of the rules
{\diafourc} and {\diafivec} is that these rules, seen upwards, do not
decompose their main formula $\Diamond A$. If that formula happens to
be the cut formula, then we cannot form a new derivation by appealing to an
induction hypothesis based on a lower rank. We thus generalise the
cut-rule to incorporate instances of rules {\diafourc} and
{\diafivec}. This leads to the following definition.

\begin{definition}[\ycut]
  Let $\cons{\Delta}^n$ denote
$\underbrace{\cons{\Delta}\dots\cons{\Delta}}_{n-\text{times}}$ .
 For $\Y \subseteq \{{\sf 4,5}\}$ and $n\geq 0$ we define the rule
\[
\vliinf{\ycut}{}
{\Gamma\cons{\emptyset}\cons{\emptyset}^n}
{\Gamma\cons{\Box A}\cons{\emptyset}^n}
{\Gamma\cons{\Diamond \neg A}\cons{\Diamond \neg A}^n}
\] with the proviso that there is a derivation from
$\Gamma\cons{\Diamond \neg A}\cons{\Diamond \neg A}^n$ to
$\Gamma\cons{\Diamond \neg A}\cons{\emptyset}^n$ in system $\Y$.
\end{definition}

\begin{fact}[Properties of \ycut] Consider an instance of $\,\ycut$ as above.\nl If
  $\,\Y=\emptyset$ then it is an instance of $\,\cut$, so $n=0$.\\
  If $\,\Y=\{{\sf 4}\}$ then $\Gamma\context\context^n$ is of the form
  $\Gamma_1\cons{\context,\Gamma_2\context^n}$.\\
  If $\,\Y=\{{\sf 5}\}$ and $n>0$ then the first hole is inside a box, so
  $\depth{\Gamma\context\cons{\emptyset}^n}>0$.\\
  (If $\,\Y=\{{\sf 4,5}\}$ then nothing can be said about the context
  since the proviso is trivially fulfilled.)
\end{fact}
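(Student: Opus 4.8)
The plan is to split on the value of $\Y$ and, in each case, to read the $\Y$-derivation provided by the proviso from its conclusion $\Gamma\cons{\Diamond\neg A}\cons{\emptyset}^n$ at the bottom upwards to its premise $\Gamma\cons{\Diamond\neg A}\cons{\Diamond\neg A}^n$ at the top. Since the context $\Gamma$ is fixed along this derivation, only the contents of the $n{+}1$ holes change, and the task reduces to tracking how the $n$ occurrences of $\Diamond\neg A$ filling the extra holes are produced. Reading upwards, each such occurrence is created by a single instance of $\diafourc$ or $\diafivec$ from an occurrence already present; at the conclusion the only occurrence of $\Diamond\neg A$ among the holes is the one in the first hole, so every later copy ultimately descends from it.

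The case $\Y=\emptyset$ is immediate: with no rules available the derivation is empty, its premise and conclusion coincide, $\Gamma\cons{\Diamond\neg A}\cons{\Diamond\neg A}^n=\Gamma\cons{\Diamond\neg A}\cons{\emptyset}^n$, and this forces $n=0$; the instance of $\ycut$ is then literally an instance of $\cut$ with cut formula $\Box A$. For $\Y=\{{\sf 4}\}$ I would argue by induction on the length of the derivation, maintaining the invariant that every occurrence of $\Diamond\neg A$ descending from the first-hole occurrence sits at the node of the first hole or strictly below it. The base case holds because at the conclusion only the first hole carries $\Diamond\neg A$. For the inductive step, note that $\diafourc$, read upwards, only copies a $\Diamond\neg A$ from a node into one of that node's child boxes; by the induction hypothesis the source node lies at or below the first hole's node, hence so does the freshly created copy. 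Since at the premise all $n$ extra holes carry $\Diamond\neg A$, each of them lies in the subtree rooted at the first hole's node, which is exactly the assertion that $\Gamma\context\context^n$ has the shape $\Gamma_1\cons{\context,\Gamma_2\context^n}$.

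For $\Y=\{{\sf 5}\}$ with $n>0$ the derivation is nontrivial and therefore contains at least one instance of $\diafivec$. Reading upwards, consider the lowest such instance: it creates the first new copy of $\Diamond\neg A$, and at that stage the only occurrence available to copy from is the one in the first hole. The side condition of $\diafivec$ requires its source to have positive depth, so the first hole has positive depth, i.e.\ $\depth{\Gamma\context\cons{\emptyset}^n}>0$. The remaining case $\Y=\{{\sf 4,5}\}$ needs no real argument: with both rules available the proviso can be met by contexts with the first hole both at and below the root, so it yields no structural information about $\Gamma$, exactly as the parenthetical remark records.

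I expect the main obstacle to be the precise book-keeping of occurrences. One must phrase the argument in terms of the copies that descend from the distinguished first-hole occurrence of $\Diamond\neg A$ rather than arbitrary occurrences of that formula that might already sit inside $\Gamma$, for otherwise an unrelated $\Diamond\neg A$ in $\Gamma$ could serve as the source of a $\diafivec$- or $\diafourc$-step and break the claimed shape. The second delicate point is purely notational: checking that the informal ``lies in the subtree of the first hole'' reasoning in the $\{{\sf 4}\}$ case translates faithfully into the formal context form $\Gamma_1\cons{\context,\Gamma_2\context^n}$, where $\Gamma_1$ captures the path down to the first hole's node and $\Gamma_2\context^n$ captures the content hanging below it.
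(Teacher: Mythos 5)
Your overall strategy --- reading the proviso derivation bottom-up and tracking how the $n$ copies of $\Diamond\neg A$ are created --- is the intended one (the paper states this as a Fact without proof, treating it as an easy observation), and your cases $\Y=\emptyset$ and $\Y=\{{\sf 4,5}\}$ are fine. But in the two substantive cases there is a genuine gap, exactly at the point you flag in your closing paragraph and then leave unresolved. The proviso only requires \emph{some} derivation in system $\Y$ from $\Gamma\cons{\Diamond\neg A}\cons{\Diamond\neg A}^n$ to $\Gamma\cons{\Diamond\neg A}\cons{\emptyset}^n$; nothing forces its rule instances to act on the displayed occurrences. Hence your inductive invariant (``every occurrence descending from the first-hole occurrence sits at or below the first hole's node'') says nothing about the occurrences filling the extra holes in the premise, since those need not descend from the first-hole occurrence at all: $\Gamma$ itself may contain $\Diamond\neg A$. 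Concretely, take $\Gamma\context\context=\Diamond\neg A,[\context],[\context]$ and $n=1$: a single $\diafourc$ step whose main formula is the \emph{root} occurrence of $\Diamond\neg A$ derives $\Diamond\neg A,[\Diamond\neg A],[\emptyset]$ from $\Diamond\neg A,[\Diamond\neg A],[\Diamond\neg A]$, so the $\Y=\{{\sf 4}\}$ proviso is met although the two holes sit in sibling boxes and the context is not of the form $\Gamma_1\cons{\context,\Gamma_2\context}$. Likewise $\Gamma\context\context=[\Diamond\neg A],\context,[\context]$ defeats your $\Y=\{{\sf 5}\}$ argument: the lowest $\diafivec$ instance can source the boxed occurrence inside $\Gamma$, which has depth $1>0$, while the first hole sits at depth $0$, contradicting the claimed $\depth{\Gamma\context\cons{\emptyset}^n}>0$.

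So ``phrasing the argument in terms of copies that descend from the first-hole occurrence'' is not a book-keeping refinement, as your last paragraph suggests; it is a change of statement, because the claim is literally false for arbitrary proviso derivations. The Fact has to be read with the occurrence-tracking convention that the paper makes explicit in the neighbouring lemmas (``where all rules in $\D$ apply to the instance of $\Diamond A$ shown'', as in the lemmas Push down \diafour, \diafive\ and Reflect \diafour, \diafive): every rule in the proviso derivation has one of the displayed copies of $\Diamond\neg A$ as its active formula. This is also how all $\ycut$ instances produced in the Reduction Lemma actually arise. Once you impose that restriction as part of the proviso (rather than trying to derive it), your induction for $\{{\sf 4}\}$ and your lowest-instance argument for $\{{\sf 5}\}$ go through and constitute essentially the intended justification. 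As a minor further caveat, the parenthetical for $\Y=\{{\sf 4,5}\}$ is also only morally true: for instance with $\Gamma\context\context=\context,\context$ both holes are at the root of a box-free sequent and neither $\diafourc$ nor $\diafivec$ can create the second copy, so ``trivially fulfilled'' should not be taken literally either.
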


{\bf Structural modal rules.} The rules which are shown in
Figure~\ref{f:admmodstr} are called \emph{structural modal rules}.
They are structural in the sense of not affecting connectives of
formulas. The modal rules $\diac{\X}$ are all \emph{$\Diamond$-rules},
in the sense that the active formula in the conclusion has $\Diamond$
as main connective. Given a set $\X$ of names of modal axioms,
$\str{\X}$ is defined as $\{\str{\rho}\,|\, \rho \in \X\}$. The
structural modal rules have the obvious corresponding frame
conditions. 

We need the admissibility of these structural modal rules for our
cut-elimination procedure. In some sense, they are the result of
``reflecting'' the corresponding diamond-rule at the cut. This comment
will hopefully become more clear after the reduction lemma. The
structural modal rules are cut-rank preserving admissible, as we will
see.

The case of the seriality is a bit different from the other rules. The
rule $\strd$ is admissible, but we cannot show this in the presence of
cut. Consider the problematic case where {\strd} cannot be pushed
above {\cut}:
\[
\vlderivation{
\vlin{\strd}{}{\Gamma\cons{\emptyset}}{
\vliin{\cut}{}{\Gamma\cons{[\emptyset]}}
{\vlhy{\Gamma\cons{[A]}}}
{\vlhy{\Gamma\cons{[\neg A]}}}
}}
\]

So we cannot use $\strd$-admissibility in the cut-elimination proof.
Our solution is to eliminate cut in the presence of $\strd$ and only
afterwards replace $\strd$ by $\diadc$. This means that in the
following we always have to consider the possible presence of the
$\strd$-rule.

\begin{figure}
\fbox{
    \parbox{\ftextwidth}{
\medskip 
\[
      \cinf{\strd}{\Gamma\cons{\emptyset}}{\Gamma\cons{[\emptyset]}}
      \qquad \cinf {{\strt}} {\Gamma\cons{\Delta}}
      {\Gamma\cons{[\Delta]}} \qquad \cinf {\strb}
      {\Gamma\cons{[\Delta],\Sigma}} {\Gamma\cons{[\Delta,[\Sigma]]}}
      \] 

      \[ \qquad\qquad\qquad\qquad\cinf {\strfour}
      {\Gamma\cons{[[\Delta],\Sigma]}} {\Gamma\cons{[\Delta],[\Sigma]}} \qquad
      \cinf {\strfive} {\Gamma\cons{\emptyset}\cons{[\Delta]}}
      {\Gamma\cons{[\Delta]}\cons{\emptyset}}{\quad
        \depth{\Gamma\context\cons{\emptyset}}>0}
\]

}}
    \caption{Modal structural rules}
\label{f:admmodstr}
\end{figure}

Before we eliminate the cut we need to make sure that
contraction and weakening can be eliminated without increasing the cut
rank. We just strengthen Lemma~\ref{l:stradm} (Admissibility of
structural rules and invertibility) accordingly to get the following
lemma.

\begin{lemma}[Cut-rank preserving admissibility of structural rules, invertibility]\label{l:stradmcut}
  Let $\X=\modax$. For each system $\K+\Y$ with 
$\Y \subseteq \diac{\X} \cup \{\strd\}$ the following hold:\\
  (i) The rules $\nec,\wk$ and $\ctr$ are depth- and
  cut-rank  preserving admissible.\\
  (ii) All its rules are depth- and cut-rank preserving invertible.
\end{lemma}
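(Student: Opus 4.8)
The plan is to revisit the proof of Lemma~\ref{l:stradm}~(Admissibility of structural rules and invertibility) and verify that every transformation used there can be performed inside $(\K+\Y)+\cut_r$, for an arbitrary but fixed $r\geq 0$, without ever raising the cut rank. As before I would argue by induction on the depth of the given proof, and I would establish the claims in the order: cut-rank preserving admissibility of $\nec$ and $\wk$; then invertibility (part~(ii)); and finally cut-rank preserving admissibility of $\ctr$. The guiding observation is that all the rewriting steps in the original argument---permuting a structural rule upward past the last inference, and replacing an instance of an inverse rule by a weakening---leave every $\cut_r$ instance, and hence every cut formula, completely untouched. Thus the only genuinely new cases to inspect are the interactions with $\cut_r$ as last rule and the cases produced by the new rule $\strd$.

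For $\nec$ and $\wk$ the induction step permutes the rule upward. When the last rule is $\cut_r$, I apply $\nec$ (respectively $\wk$) to both premises by the induction hypothesis and reassemble the cut; since its cut formula is unchanged, the cut rank is preserved. The $\strd$-case is routine because $\strd$ affects only an empty boxed sequent $[\emptyset]$, which commutes with any weakening or necessitation.

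For part~(ii) the inverses of $\vlan$, $\vlor$ and $\Box$ are treated by a direct depth-preserving induction; the single new last-rule case is again $\cut_r$, and pushing the inversion into the two premises leaves the cut formula, and so the cut rank, unchanged. The inverses of $\kc$, of every $\Diamond$-rule in $\Y$, and of $\strd$ are all instances of $\wk$---for $\strd$ the inverse merely inserts the empty box $[\emptyset]$---so their cut-rank preserving invertibility is immediate from the admissibility of $\wk$ already obtained.

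For $\ctr$ I would reuse the case analysis of Lemma~\ref{l:stradm} verbatim: in each of the displayed reductions the two derivations differ only by moving $\ctr$ above a modal rule and deleting an inverse-rule instance, neither of which creates or alters a cut. The new last-rule cases are immediate: $\strd$ because its removed box is empty, so the contracted multiset $\Sigma,\Sigma$ cannot lie inside it; and $\cut_r$ because contraction commutes with it while its cut formula is preserved. The main point to keep in view is the interdependence of (i) and (ii): the contraction argument appeals to invertibility, so one must confirm that these appeals are themselves cut-rank preserving. This is precisely why the proof is staged in the order above---every inverse invoked bottoms out either in the induction hypothesis for $\wk$ or in a propositional inversion, and none of these ever introduces a cut. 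Consequently no step can push the cut rank above $r$, which is exactly the strengthening over Lemma~\ref{l:stradm}.
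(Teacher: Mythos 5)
Your overall strategy is exactly the paper's: rerun the induction of Lemma~\ref{l:stradm}, observe that none of its transformations touches a cut formula, and treat $\cut_r$ and $\strd$ as the only new last-rule cases, with the claims staged so that contraction may appeal to the previously established, cut-rank preserving weakening and invertibility. The parts on $\nec$, $\wk$ and on invertibility are fine.

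There is, however, one genuine gap, and it is precisely the one case the paper bothers to display (calling it ``mildly interesting''): in the contraction argument you dismiss the last-rule case $\cut_r$ with ``contraction commutes with it while its cut formula is preserved.'' This fails when the hole of the cut lies \emph{inside} one of the two contracted copies. There the conclusion of the cut is $\Gamma\cons{\Delta\cons{\emptyset},\Delta\cons{\emptyset}}$, but its premises are $\Gamma\cons{\Delta\cons{A},\Delta\cons{\emptyset}}$ and $\Gamma\cons{\Delta\cons{\neg A},\Delta\cons{\emptyset}}$: the two copies are no longer equal, so neither contraction nor the induction hypothesis applies to the premises, and---unlike the cases for $\kc$, $\diabc$, $\diafivec$ in Lemma~\ref{l:stradm}---you cannot restore equality by inverting the rule, since cut has no weakening-like inverse. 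The repair is to first \emph{weaken} the untouched copy (using the cut-rank and depth-preserving weakening admissibility already proved) to obtain $\Gamma\cons{\Delta\cons{A},\Delta\cons{A}}$ and $\Gamma\cons{\Delta\cons{\neg A},\Delta\cons{\neg A}}$, contract each by the induction hypothesis, and then cut on the \emph{same} formula $A$, so the rank is unchanged. Note also that the analogous sub-case for $\strd$---its hole $[\emptyset]$ sitting inside one copy of $\Sigma$, so that the copies again differ---needs the same weaken-then-contract step; your remark that $\Sigma,\Sigma$ cannot lie inside the removed box covers only the other sub-case. With this case supplied, your proof matches the paper's.
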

\begin{proof}
  The proof is just like the one for
  Lemma~\ref{l:stradm}~(Admissibility of structural rules and
  invertibility) except that we also consider $\cut_r$ and $\strd$. In
  proving contraction admissibility there is one more case which is
  mildly interesting and which is handled as follows:
\[
\vlderivation{
\vlin{\ctr}{}{\Gamma\cons{\Delta\cons{\emptyset}}}
  {\vliin{\cut_r}{}{\Gamma\cons{\Delta\cons{\emptyset},\Delta\cons{\emptyset}}}
    {\vlhy{\Gamma\cons{\Delta\cons{A},\Delta\cons{\emptyset}}}}
{\vlhy{\Gamma\cons{\Delta\cons{\neg A},\Delta\cons{\emptyset}}}}}}
\qquad\leadsto
\]
\[
\vlderivation{
\vliin{\cut_r}{}{\Gamma\cons{\Delta\cons{\emptyset}}}{
\vlin{\ctr}{}{\Gamma\cons{\Delta\cons{A}}}{
\vlin{\wk}{}{\Gamma\cons{\Delta\cons{A},\Delta\cons{A}}}{
\vlhy{\Gamma\cons{\Delta\cons{A},\Delta\cons{\emptyset}}}}}}{
\vlin{\ctr}{}{\Gamma\cons{\Delta\cons{\neg
          A}}}{
\vlin{\wk}{}{\Gamma\cons{\Delta\cons{\neg
            A},\Delta\cons{\neg
            A}}}{
\vlhy{\Gamma\cons{\Delta\cons{\neg
A},\Delta\cons{\emptyset}}}}}}} \qquad .
\]

\end{proof}

\begin{lemma}[Admissibility of the modal structural rules]\label{l:strmodadm} \hspace{0mm}\\
  (i) Let $\X$ be a 45-closed subset of $\{{\sf t,b,4,5}\}$ and let
  $\rho \in \X$.  Then the rule $\str{\rho}$ is cut-rank preserving
  admissible for system $\K+\diac{\X}$ and also
  for system $\K+\diac{\X}+\strd$.\\
  (ii) Let $\X$ be a 45-closed subset of $\{{\sf d,t,b,4,5}\}$ and let
  ${\sf d}\in\X$. Then the rule $\strd$ is admissible for system
  $\K+\diac{\X}$.
\end{lemma}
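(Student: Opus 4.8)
The plan is to prove both parts by induction on the height of the given derivation, reducing admissibility to the statement that from a proof of the premise of the structural rule one can build a proof of its conclusion. For part (i) I would run the induction while tracking the cut rank, so that every transformation uses only the \emph{cut-rank preserving} admissibility of $\nec$, $\wk$, $\ctr$ and the cut-rank preserving invertibility of all rules from Lemma~\ref{l:stradmcut}; since these never raise the rank, the reconstructed proof of the conclusion stays within the same cut rank, which is exactly what part (i) demands. The induction is a case analysis on the lowermost rule $\sigma$ of the proof of the premise, including the possibility $\sigma=\strd$ so as to cover the system $\K+\diac{\X}+\strd$.

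The overwhelming majority of cases are commuting cases: $\sigma$ acts on a node, formula or boxed sequent disjoint from the box that $\str{\rho}$ collapses or relocates, so $\sigma$ and $\str{\rho}$ trivially permute and the induction hypothesis applies to the premises of $\sigma$. The heart of the argument is the principle informally described before the lemma as ``reflecting the diamond-rule at the cut'': whenever $\sigma$ is a $\Diamond$-rule that reads or writes \emph{across} the boundary of the box being manipulated, the effect of $\sigma$ must be redone once the box has been merged or moved, and this turns out to be precisely a same-level instance of the reflected diamond rule. As the representative case, for $\strt$ (which merges a boxed node $[\Delta]$ into its parent) an instance of $\kc$ that had copied $A$ into $[\Delta]$ becomes, after applying the induction hypothesis to merge the enlarged box, an instance of $\diatc$ removing the copied $A$ at the parent node; this rule is available exactly because $\sf t\in\X$. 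An instance of $\diafourc$ into the same box is absorbed by admissible $\ctr$. The analogous reflections are $\strb\leftrightarrow\diabc$, $\strfour\leftrightarrow\diafourc$ and $\strfive\leftrightarrow\diafivec$.

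The genuinely delicate cases, and where I expect the real work to lie, are $\strfour$ and especially $\strfive$, which relocate an entire boxed subtree rather than merging a single node. Permuting such a rule above a diamond rule that reads or writes through the box at its \emph{old} position forces that action to be reconstructed at the \emph{new} position, and the diamond rule needed for the reconstruction need not literally be $\diac{\rho}$: it may be $\diafourc$ or $\diafivec$ for an axiom that is only \emph{entailed} by the frame conditions in $\X$. This is exactly where 45-closedness enters, as it guarantees that whenever the $\X$-frames satisfy $\sf 4$ or $\sf 5$ the corresponding rule is actually present to carry out the simulation; for $\strfive$ it is convenient to run the argument through the decomposition into $\diafivepa,\diafivepb,\diafivepc$ of Lemma~\ref{l:split5}, since these use unary contexts. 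Matching every box-crossing instance to an available reflected rule under 45-closedness is the main obstacle.

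For part (ii) the situation is different because, as noted in the discussion preceding the structural rules, $\strd$ cannot be permuted above $\cut$; this is why (ii) is stated for the cut-free system $\K+\diac{\X}$ only, where the blocking configuration never arises. Here the box being deleted is empty, so no $\Diamond$-rule can read a formula out of it; the interesting case is the first rule that writes a formula into it going upward, namely $\kc$ (turning $[\emptyset]$ into $[A]$), which together with the deletion is exactly a $\diadc$ step, available since $\sf d\in\X$. The one genuinely fiddly subcase is $\diafourc$ writing a $\Diamond A$ into the otherwise-empty box: an empty box corresponds to $\Box\bot$, which is $\{\sf d\}$-false on serial frames, whereas $\Box\Diamond A$ is not, so one cannot simply discard the filled box. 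Rather than track this syntactically, I would obtain part (ii) most cleanly by a semantic argument that is legitimate here because it is applied only after cut-elimination: $\strd$ is $\{\sf d\}$-sound (an empty box is $\Box\bot$; cf.\ Lemma~\ref{l:deepsound}), so from a proof of $\Gamma\cons{[\emptyset]}$ the Soundness theorem gives $\X$-validity of $\Gamma\cons{\emptyset}$, and Theorem~\ref{t:completeness} then yields a cut-free proof of $\Gamma\cons{\emptyset}$ in $\K+\diac{\X}$.
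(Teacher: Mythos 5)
Your part (i) is essentially the paper's proof: a double induction (the paper phrases it as an outer induction on the number of $\str{\rho}$ instances and an inner one on the depth above a topmost instance, which amounts to your height induction), a case analysis on the rule $\sigma$ immediately above $\str{\rho}$, and exactly the reflections you name, with 45-closedness invoked precisely where the reconstructed diamond rule is $\diafourc$ or $\diafivec$ rather than $\diac{\rho}$ itself (for instance $\strt$ over $\diafivec$ needs $\diafourc$, while $\strb$ over $\diafourc$ and $\strfour$ over $\diabc$ need $\diafivec$). The paper handles $\strfive$ and $\diafivec$ directly in their binary-context form rather than via the decomposition of Lemma~\ref{l:split5}, but that is a presentational choice.

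For part (ii) you take a genuinely different route. The paper stays syntactic: it runs the same permutation argument, excluding $\sigma=\cut_r$, and the subcase you judge too fiddly to track syntactically --- $\diafourc$ writing $\Diamond A$ into the otherwise empty box --- is in fact easy, because that instance presupposes a $\Diamond A$ at the parent node, so one does not delete the filled box at all but re-derives it with $\diadc$: from $\Gamma\cons{\Diamond A,[\Diamond A]}$ weaken to $\Gamma\cons{\Diamond A,[\Diamond A, A]}$, apply $\diafourc$ to obtain $\Gamma\cons{\Diamond A,[A]}$, and then $\diadc$ yields $\Gamma\cons{\Diamond A}$; a similar short rewrite handles $\diafivec$. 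Your semantic detour via soundness and Theorem~\ref{t:completeness} is not circular --- completeness is established independently of cut-elimination, and the hypothesis of (ii) supplies the 45-closedness that theorem requires --- so it does prove the stated admissibility. What it costs is this: Theorem~\ref{t:ce} invokes part (ii) as the final step of cut-elimination when ${\sf d}\in\X$, so with your version the overall procedure ceases to be a syntactic proof transformation for those logics --- the cut-free proof is produced by proof search on a valid sequent rather than by rewriting the given proof --- which runs against the declared purpose of the section. The paper's purely syntactic case analysis, made possible by the $\diadc$ trick above, is what keeps cut-elimination effective throughout.
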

\begin{proof}
  For (i) the proof works by an outer induction on the number of
  instances of $\str{\rho}$ in a given proof, eliminating topmost
  instances first, and an inner induction on the depth of the proof
  above such a topmost instance. For each rule $\str{\rho}$ with $\rho
  \in \X$ we make a case analysis on the rule $\sigma$ above
  $\str\rho$. The induction base and the cases where $\sigma$ is among
  the rules $\vlor, \vlan, \Box, \cut_r, {\strd}$ and $\diatc$ are trivial. We
  use cut-rank preserving admissibility of contraction and weakening
  provided by the previous lemma without explicitly mentioning
  it.

{ $\str\rho={\strt}:$}
\[
\vlderivation{
\vlin{\strt}{}{\Gamma\cons{\Diamond A, \Delta}}
{\vlin{{\kc}}{}{\Gamma\cons{\Diamond A, [\Delta]}}
{\vlhy{\Gamma\cons{\Diamond A, [A, \Delta]}}}}}
\quad \leadsto\quad
 \vlderivation{\vlin{{\diatc}}{}{\Gamma\cons{\Diamond A, \Delta}}
{\vlin{\strt}{}{\Gamma\cons{\Diamond A, A, \Delta}}
{\vlhy{\Gamma\cons{\Diamond A, [A, \Delta]}}}}}
\]

The case for $\sigma={\diabc}$ is similar.

\[
\vlderivation{\vlin{\strt}{}{\Gamma\cons{\Diamond A, \Delta}} {
\vlin{{\diafourc}}{}{\Gamma\cons{\Diamond A, [\Delta]}}
    {\vlhy{\Gamma\cons{\Diamond A, [\Diamond A, \Delta]}}}}}
\quad \leadsto\quad
\vlderivation{\vlin{\ctr}{}{\Gamma\cons{\Diamond A, \Delta}}
  {\vlin{\strt}{}{\Gamma\cons{\Diamond A, \Diamond A, \Delta}}
    {\vlhy{\Gamma\cons{\Diamond A, [\Diamond A, \Delta]}}}}}
\]

For $\sigma={\diafivec}$ the case is trivial unless the diamond formula in
its conclusion is at depth 1. Then there are two cases, either the
{\diafivec}-rule moves the formula to somewhere outside the box that is
removed by $\strt$ or somewhere inside it. The second case is
similar to the first, which is as follows, where $\rho^*$ denotes
several applications of $\rho$:

\[
\vlderivation{\vlin{\strt}{}{\Diamond A, \Delta, \Sigma\cons{\emptyset}}
{\vlin{{\diafivec}}{}{[\Diamond A], \Delta, \Sigma\cons{\emptyset}}
{\vlhy{[\Diamond A], \Delta, \Sigma\cons{\Diamond A}}}}}
\quad \leadsto\quad
 \vlderivation{\vlin{{\diafourc^*},\wk^*,\ctr^*}{}
{\Diamond A, \Delta, \Sigma\cons{\emptyset}}
{\vlin{\strt}{}{ \Diamond A, \Delta, \Sigma\cons{\Diamond A}}
{\vlhy{[\Diamond A], \Delta, \Sigma\cons{\Diamond A}}}}}
\]

{ $\str\rho={\strb}:$}

\[
\vlderivation{\vlin{\strb}{}{\Gamma\cons{\Sigma, [\Delta, \Diamond A]}}
{\vlin{{\kc}}{}{\Gamma\cons{[\Delta, \Diamond A, [\Sigma]]}}
{\vlhy{\Gamma\cons{[\Delta, \Diamond A, [A, \Sigma]]}}}}}
\quad \leadsto\quad
 \vlderivation{\vlin{{\diabc}}{}{\Gamma\cons{\Sigma, [\Delta, \Diamond A]}}
{\vlin{\strb}{}{ \Gamma\cons{\Sigma, A, [\Delta, \Diamond A]} }
{\vlhy{\Gamma\cons{[\Delta, \Diamond A, [A, \Sigma]]}}}}}
\]

\[
\vlderivation{\vlin{\strb}{}{\Gamma\cons{\Diamond A, \Sigma, [\Delta]}}
{\vlin{{\diabc}}{}{\Gamma\cons{[\Delta, [\Diamond A, \Sigma]]}}
{\vlhy{\Gamma\cons{[\Delta, A, [\Diamond A, \Sigma]]}}}}}
\quad \leadsto\quad
 \vlderivation{\vlin{{\kc}}{}{\Gamma\cons{\Diamond A, \Sigma, [\Delta]}}
{\vlin{\strb}{}{\Gamma\cons{\Diamond A, \Sigma, [A, \Delta]}}
{\vlhy{\Gamma\cons{[\Delta, A, [\Diamond A, \Sigma]]}}}}}
\]

\[
\vlderivation{\vlin{\strb}{}{\Gamma\cons{\Sigma,[\Diamond A, \Delta]}}
{\vlin{{\diafourc}}{}{\Gamma\cons{[\Diamond A, \Delta,[\Sigma]]}}
{\vlhy{\Gamma\cons{[\Diamond A, \Delta,[\Diamond A,\Sigma]]}}}}}
\quad \leadsto\quad
 \vlderivation{\vlin{{\diafivec}}{}{\Gamma\cons{\Sigma,[\Diamond A, \Delta]}}
{\vlin{\strb}{}{\Gamma\cons{\Sigma,\Diamond A, [\Diamond A, \Delta]}}
{\vlhy{\Gamma\cons{[\Diamond A, \Delta,[\Diamond A,\Sigma]]}}}}}
\]

For $\sigma={\diafivec}$ the case is trivial unless the diamond
formula in its conclusion is at depth 2 and in the inner box in the
premise of $\strb$.  Then there are three similar cases of which we
just see the following one:
\[
\vlderivation{\vlin{\strb}{}{\Diamond A, \Delta, [\Sigma],\Gamma\cons{\emptyset}}
{\vlin{{\diafivec}}{}{[\Sigma,[\Diamond A, \Delta]],\Gamma\cons{\emptyset}}
{\vlhy{[\Sigma,[\Diamond A, \Delta]],\Gamma\cons{\Diamond A}}}}}
\quad \leadsto\quad
 \vlderivation{\vlin{{\diafourc}^*,\wk^*,\ctr^*}{}
{\Diamond A, \Delta, [\Sigma],\Gamma\cons{\emptyset}}
{\vlin{\strb}{}{\Diamond A, \Delta, [\Sigma],\Gamma\cons{\Diamond A}}
{\vlhy{[\Sigma,[\Diamond A, \Delta]],\Gamma\cons{\Diamond A}}}}}
\]

{ $\str\rho={\strfour}:$}

\[
\vlderivation{\vlin{\strfour}{}{\Gamma\cons{\Diamond A, [[\Delta],\Sigma]}}
{\vlin{{\kc}}{}{\Gamma\cons{\Diamond A, [\Delta],[\Sigma]}}
{\vlhy{\Gamma\cons{\Diamond A, [A, \Delta],[\Sigma]}}}}}
\quad \leadsto\quad
 \vlderivation{\vlin{{\diafourc}}{}{\Gamma\cons{\Diamond A, [[\Delta],\Sigma]}}
{\vlin{\wk, {\kc}}{}{\Gamma\cons{\Diamond A, [\Diamond A, [\Delta],\Sigma]}}
{\vlin{\strfour}{}{\Gamma\cons{\Diamond A, [[A,\Delta],\Sigma]}}
{\vlhy{\Gamma\cons{\Diamond A, [A, \Delta],[\Sigma]}}}}}}
\]

The case for $\sigma={\diafourc}$ is similar and the case for
$\sigma={\diafivec}$ is trivial.

\[
\vlderivation{\vlin{\strfour}{}{\Gamma\cons{[[\Diamond A, \Delta],\Sigma]}}
{\vlin{{\diabc}}{}{\Gamma\cons{[\Diamond A, \Delta],[\Sigma]}}
{\vlhy{\Gamma\cons{A, [\Diamond A, \Delta],[\Sigma]}}}}}
\quad \leadsto\quad
 \vlderivation{\vlin{{\diafivec}}{}{\Gamma\cons{[[\Diamond A, \Delta],\Sigma]}}
{\vlin{\wk, {\diabc}}{}{\Gamma\cons{[\Diamond A, [\Diamond A, \Delta],\Sigma]}}
{\vlin{\strfour}{}{\Gamma\cons{A, [[\Diamond A, \Delta],\Sigma]}}
{\vlhy{\Gamma\cons{A, [\Diamond A, \Delta],[\Sigma]}}}}}}
\]
{ $\str\rho={\strfive}:$}
\[
\vlderivation{\vlin{\strfive}{}
{\Gamma\cons{\Diamond A}\cons{[\Delta]}}
{\vlin{{\kc}}{}{\Gamma\cons{\Diamond A,[\Delta]}\cons{\emptyset}}
{\vlhy{\Gamma\cons{\Diamond A,[A,\Delta]}\cons{\emptyset}}}}}
\quad \leadsto\quad
 \vlderivation{\vlin{{\diafivec}}{}{\Gamma\cons{\Diamond A}\cons{[\Delta]}}
{\vlin{\wk,{\kc}}{}{\Gamma\cons{\Diamond A}\cons{\Diamond A, [\Delta]}}
{\vlin{\strfive}{}{\Gamma\cons{\Diamond A}\cons{[A,\Delta]}}
{\vlhy{\Gamma\cons{\Diamond A,[A,\Delta]}\cons{\emptyset}}}}}}
\]

The case for $\sigma={\diafourc}$ is similar and the case for $\sigma=
{\diafivec}$ is trivial. For $\sigma={\diabc}$ we have:
\[
\vlderivation{\vlin{\strfive}{}{\Gamma\cons{[\Sigma]}\cons{[\Diamond A, \Delta]}}
{\vlin{{\diabc}}{}{\Gamma\cons{[[\Diamond A, \Delta],\Sigma]}\cons{\emptyset}}
{\vlhy{\Gamma\cons{[A,[\Diamond A, \Delta],\Sigma]}\cons{\emptyset}}}}}
\quad \leadsto\quad
 \vlderivation{\vlin{{\diafivec}}{}{\Gamma\cons{[\Sigma]}\cons{[\Diamond A, \Delta]}}
{\vlin{\wk,{\kc}}{}{\Gamma\cons{\Diamond A,[\Sigma]}\cons{[\Diamond A, \Delta]}}
{\vlin{\strfive}{}{\Gamma\cons{[A,\Sigma]}\cons{[\Diamond A, \Delta]}}
{\vlhy{\Gamma\cons{[A,[\Diamond A, \Delta],\Sigma]}\cons{\emptyset}}}}}}
\]

The proof for (ii) is similar to the  one for (i), except that we
exclude $\sigma=\cut_r$. The case $\sigma={\diabc}$ is trivial.

{ $\str\rho={\strd}:$}

\[
\vlderivation{\vlin{\strd}{}{\Gamma\cons{\Diamond A}}{
\vlin{{\kc}}{}{\Gamma\cons{\Diamond A,
        [\emptyset]}}{\vlhy{\Gamma\cons{\Diamond A, [A]}}}}}
\quad\leadsto\quad \vlderivation{\vlin{{\diadc}}{}{\Gamma\cons{\Diamond
      A}}{\vlhy{\Gamma\cons{\Diamond A,[A]}}}}
\]

\[
\vlderivation{\vlin{\strd}{}{\Gamma\cons{\Diamond A}}{
\vlin{{\diafourc}}{}{\Gamma\cons{\Diamond A,[\emptyset]}}{
\vlhy{\Gamma\cons{\Diamond A,[\Diamond A]}}}}}
\quad\leadsto\quad
\vlderivation{\vlin{{\diadc}}{}{\Gamma\cons{\Diamond A}}{
\vlin{{\diafourc}}{}{\Gamma\cons{\Diamond A,[A]}}{
\vlin{\wk}{}{\Gamma\cons{\Diamond
A,[\Diamond A,A]}}{\vlhy{\Gamma\cons{\Diamond A,[\Diamond A]}}}}}}
\]

\[
\vlderivation{\vlin{\strd}{}{\Gamma\cons{\Diamond
      A}\cons{\emptyset}}{
\vlin{{\diafivec}}{}{\Gamma\cons{\Diamond
        A}\cons{[\emptyset]} }{\vlhy{ \Gamma\cons{\Diamond
          A}\cons{[\Diamond A]} }}}} 
\quad\leadsto\quad
\vlderivation{\vlin{{\diafivec}}{}{\Gamma\cons{\Diamond
      A}\cons{\emptyset}}{
\vlin{{\diadc}}{}{\Gamma\cons{\Diamond
      A}\cons{\Diamond A}}{
\vlin{{\diafivec}}{}{\Gamma\cons{\Diamond
      A}\cons{\Diamond A, [A]}}{
\vlin{\wk^2}{}{\Gamma\cons{\Diamond
      A}\cons{\Diamond A, [\Diamond A, A]}}{\vlhy{\Gamma\cons{\Diamond
              A}\cons{[\Diamond A]}}}}}}}
\]

\end{proof}

To keep the cut-elimination procedure short and uniform, we define a
structural rule which moves a box inside a sequent from one place to
another. Notice that the conditions on the context in the proviso
exactly match the conditions in the $\ycut$-rule:

\begin{definition}[{\ystr}-rule]
  For $\Y \subseteq \{{\sf 4,5}\}$ we define a rule
\[
\kaiinf{\ystr}{\Gamma\cons{\emptyset}\cons{[\Delta]}}{\Gamma\cons{[\Delta]}\cons{\emptyset}}
\]
with the proviso that:\\
if $\Y=\emptyset$ then $\Gamma\context\context$ is of the form
$\Gamma'\cons{\context,\context}$,\\
if $\Y=\{{\sf 4}\}$ then $\Gamma\context\context$ is
of the form $\Gamma_1\cons{\context,\Gamma_2\context}$, and\\
if   $\Y=\{{\sf 5}\}$ then $\depth{\Gamma\context\cons{\emptyset}} >
0$.\\ (This means there is no proviso for the case $\Y=\{{\sf 4,5}\}$.)
\end{definition}

\begin{lemma}[Admissibility of {\ystr}]\label{l:yadm}
  For 45-closed $\X \subseteq \{{\sf \strd,t,b,4,5}\}$ and for
  $\Y\subseteq \{{\sf 4,5}\}$ the rule $\ystr$ is cut-rank preserving
  admissible for system $\K+\X$ if $\Y\subseteq \X$.
\end{lemma}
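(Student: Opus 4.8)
The plan is to reduce $\ystr$ to the structural modal rules $\strfour$ and $\strfive$, which are already cut-rank preserving admissible for $\K+\X$ by Lemma~\ref{l:strmodadm} (Admissibility of the modal structural rules): since $\Y\subseteq\X$ and $\X$ is $45$-closed, whichever of $\strfour,\strfive$ the reduction invokes is indeed available. As the reductions introduce no new cuts, cut-rank preservation is inherited. I would first dispatch the two easy cases. If $\Y=\emptyset$, the proviso forces both holes into the same node, so the premise and conclusion of $\ystr$ coincide as multisets and the rule is the identity. If $\Y=\{{\sf 5}\}$, then $\ystr$ is literally $\strfive$ (same premise, conclusion, and proviso $\depth{\Gamma\context\cons{\emptyset}}>0$), hence cut-rank preserving admissible whenever ${\sf 5}\in\X$.

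The real work is the case $\Y=\{{\sf 4}\}$. Here the proviso gives $\Gamma\context\context=\Gamma_1\cons{\context,\Gamma_2\context}$, so the box $[\Delta]$ must travel from being a sibling of $\Gamma_2\cons{\emptyset}$ down into the second hole located inside $\Gamma_2$. I would show $\ystr$ is derivable from $\strfour$ by induction on $\depth{\Gamma_2\context}$. When the depth is $0$ the two holes are siblings and the rule is again the identity. For the inductive step, write $\Gamma_2\context=\Pi,[\Gamma_2'\context]$ with $\depth{\Gamma_2'\context}$ one smaller; a single $\strfour$ pushes $[\Delta]$ from alongside $\Pi$ into the box $[\Gamma_2'\cons{\emptyset}]$, after which $[\Delta]$ is a sibling of $\Gamma_2'\cons{\emptyset}$ inside that box and the remaining move is exactly an instance of $\ystr$ with the shallower context $\Gamma_2'$, handled by the induction hypothesis. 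Thus $\ystr$-$\{{\sf 4}\}$ unfolds into a chain of $\strfour$-steps, one per level of descent.

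For the remaining case $\Y=\{{\sf 4,5}\}$ there is no proviso, so the first hole (the source of $[\Delta]$) may be anywhere, and I would split on whether it is at the root. If it is not at the root, then $\depth{\Gamma\context\cons{\emptyset}}>0$ and the instance is exactly an instance of $\strfive$, its second hole being unconstrained. If the source hole is at the root, then the second hole is either the root itself (again the identity) or lies in some child subtree of the root; in the latter situation the instance has precisely the shape $\Gamma_1\cons{\context,\Gamma_2\context}$ of the $\Y=\{{\sf 4}\}$ case and is realised by the $\strfour$-descent of the previous paragraph. In every sub-case the move is carried out by $\strfour$- and $\strfive$-steps alone.

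Finally, replacing each $\ystr$ in a given proof by the derivation just described yields a proof of the same conclusion in $\K+\X+\strfour+\strfive+\cut_r$, and it remains to eliminate the introduced structural modal rules. To avoid interference between the two kinds, I would always remove a \emph{topmost} structural modal instance first: above such an instance only rules of $\K+\X+\cut_r$ occur, so exactly the permutation steps already checked in Lemma~\ref{l:strmodadm} apply, and driving the instance to the axioms removes it; iterating on the (finite) number of structural modal instances finishes the argument and preserves the cut rank. The main obstacle is the $\Y=\{{\sf 4}\}$ descent: keeping the context bookkeeping correct so that after each $\strfour$-step the configuration is genuinely a smaller instance of $\ystr$, and verifying that the $\Y=\{{\sf 4,5}\}$ root-source case really does collapse to this descent rather than demanding a genuinely new move.
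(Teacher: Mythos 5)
Your proof is correct, and its skeleton coincides with the paper's: the same case split on $\Y$, with $\Y=\emptyset$ trivial, $\Y=\{{\sf 5}\}$ coinciding with $\strfive$, and $\Y=\{{\sf 4,5}\}$ decomposing (by whether the source hole sits at the root) into the other two cases, everything then discharged by Lemma~\ref{l:strmodadm} --- where, as you note, $45$-closedness of $\X$ together with $\Y\subseteq\X$ guarantees that whichever of $\strfour,\strfive$ is invoked is actually covered by that lemma. The one genuine divergence is the $\Y=\{{\sf 4}\}$ case. The paper derives $\ystr$ there from $\{\strfour,\wk,\ctr\}$ in a single macro: $\strfour^*$ turns $[\Delta]$ into a bare tower of boxes mirroring the path to the target hole (read literally this block needs interleaved weakenings to supply the empty sibling boxes, but weakening is in play anyway), further weakenings fatten the tower into a full copy $\Sigma\cons{[\Delta]}$ of the target subtree, another weakening inserts $[\Delta]$ into the untouched copy $\Sigma\cons{\emptyset}$, and a final contraction merges the two copies --- the duplication is forced because weakening can add but never delete the displaced original, so it must be absorbed by $\ctr$. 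You avoid duplication altogether: your induction on $\depth{\Gamma_2\context}$ walks $[\Delta]$ down one box per step, and each step is a legal $\strfour$ instance (the residual configuration is indeed again a $\ystr$ instance with inner context one shallower, as your bookkeeping requires), so your $\{{\sf 4}\}$-case uses $\strfour$ alone and never touches Lemma~\ref{l:stradmcut}. This is the more economical derivation, at the price of an explicit induction where the paper has a one-shot construction; since no cuts are introduced either way, both inherit cut-rank preservation. One simplification: your closing topmost-first elimination of the introduced $\strfour$/$\strfive$ instances is sound but redundant, since Lemma~\ref{l:strmodadm} already asserts cut-rank preserving admissibility for proofs containing arbitrarily many such instances (its proof does exactly the topmost-first induction you describe), so you may cite it directly, as the paper does.
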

\begin{proof}
  For $\Y=\emptyset$ that is trivial. For $\Y=\{{\sf 4}\}$ the rule is
  derivable as follows:
\[
\vlderivation{\vlin{\ctr}{}{\Gamma\cons{\Sigma\cons{[\Delta]}}}{
\vlin{\wk}{}{\Gamma\cons{\Sigma\cons{[\Delta]},\Sigma\cons{[\Delta]}}}{
\vlin{\wk^*}{}{\Gamma\cons{\Sigma\cons{[\Delta]},\Sigma\cons{\emptyset}}}{
\vlin{\strfour^*}{}{\Gamma\cons{[\dots[\Delta]\dots],\Sigma\cons{\emptyset}}}{
\vlhy{\Gamma\cons{[\Delta], \Sigma\cons{\emptyset}} }}}}}} \qquad ,
\]
and thus admissible by Lemma~\ref{l:stradmcut}~(Cut-rank preserving
admissibility of structural rules) and
Lemma~\ref{l:strmodadm}~(Admissibility of the modal structural rules).
For $\Y=\{{\sf 5}\}$ the rule coincides with $\strfive$ and is thus
admissible by Lemma~\ref{l:strmodadm}. For $\Y=\{{\sf 4,5}\}$ an
instance of the rule is either an instance of the $\ystr$-rule for
$\Y=\{{\sf 4}\}$ or $\Y=\{{\sf 5}\}$ and thus admissible as in the
previous two cases.  \end{proof}

\begin{lemma}[Reduction Lemma]\label{l:rl} 
  Let $\X$ be a 45-closed subset of $\{{\sf t,b,4,5}\}$, let
  $\Y$ be a subset of $\{{\sf 4,5}\} \cap \X$ and let either
  $\Z=\diac{\X}$ or $\Z=\diac{\X}+\strd$. Further, let $r >0$ and $n \geq 0$.\\
  (i) If there is a proof
\[
\vlderivation{\vliin{\cut_{r+1}}{} {\Gamma\cons{\emptyset}}
  {\Proofleaf{\PP_1}{\Gamma\cons{A}}}
  {\Proofleaf{\PP_2}{\Gamma\cons{\neg A}}}}
\]
with $\PP_1$ and $\PP_2$ in $\K+\Z+\cut_r\,$  then  $\,\K+\Z+\cut_r \vdash \Gamma\cons{\emptyset}\,$.\\
(ii)  If there is a proof
\[
\vlderivation{\vliin {\ycut_{r+1}}{}
  {\Gamma\cons{\emptyset}\cons{\emptyset}^n}
  {\Proofleaf{\PP_1}{\Gamma\cons{\Box A}\cons{\emptyset}^n}}
  {\Proofleaf{\PP_2}{\Gamma\cons{\Diamond \neg A}\cons{\Diamond \neg A}^n}}}
\]
with $\PP_1$ and $\PP_2$ in $\K+\Z+\cut_r\,$ then  $\,\K+\Z+\cut_r \vdash \Gamma\cons{\emptyset}\cons{\emptyset}^n\,$.
\end{lemma}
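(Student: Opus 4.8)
The plan is to prove (i) and (ii) simultaneously by induction on $|\PP_1| + |\PP_2|$, reducing a single topmost cut of rank $r+1$ to cuts of rank $r$. Part (i) is largely subsumed by part (ii): a cut whose cut formula is $\Box A$ is literally an instance of $\ycut$ with $\Y=\emptyset$ and $n=0$ (by the Fact on properties of $\ycut$), and a cut on $\Diamond B$ turns into a cut on $\Box\neg B$ after swapping $\PP_1$ and $\PP_2$; for a propositional cut formula the standard Gentzen-style reductions apply and strictly shrink the proofs, so the induction hypothesis closes those cases. The real work is therefore in (ii), which I would organise as a case analysis on the last rule of $\PP_1$ and of $\PP_2$.

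First I would dispatch the \emph{commutative cases}, where the last rule $\sigma$ of $\PP_1$ or of $\PP_2$ does not act on the relevant occurrence of $\Box A$, respectively $\Diamond \neg A$. Here one permutes the $\ycut$ above $\sigma$ and appeals to the induction hypothesis on the shortened subproofs. The subtlety is that moving the cut changes the context $\Gamma\cons{\emptyset}\cons{\emptyset}^n$, so the $\ycut$-proviso (existence of a $\Y$-derivation gathering the copies of $\Diamond \neg A$) and the side condition $\depth{\Gamma\context\cons{\emptyset}}>0$ must be re-established; this is exactly the purpose of the admissibility of $\ystr$ (Lemma~\ref{l:yadm}) together with the cut-rank preserving admissibility of the structural modal rules and of weakening and contraction (Lemmas~\ref{l:strmodadm} and~\ref{l:stradmcut}), which let me relocate boxes so that the proviso again holds.

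The heart of the argument is the \emph{principal case}, where $\Box A$ is introduced by the $\Box$-rule in $\PP_1$ (so $\PP_1$ has a subproof $\PP_1'$ of $\Gamma\cons{[A]}\cons{\emptyset}^n$) and the occurrence of $\Diamond \neg A$ is active in a modal rule of $\PP_2$. When that rule is $\kc$, the content $\neg A$ is deposited one box deep, matching the box $[A]$ produced by $\PP_1'$; using invertibility of the modal rules and admissibility of weakening and contraction (Lemma~\ref{l:stradmcut}) I arrange for $A$ and $\neg A$ to meet inside a common box and cut them there, which is a cut on a formula of depth $r-1$, i.e.\ a $\cut_r$, and hence permitted. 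The diamond rules $\diac{\rho}$ that acted on $\Diamond \neg A$ in $\PP_2$ become, once the cut is pushed through, the corresponding structural modal rules $\str{\rho}$ on the premise side — matching the paper's remark that the structural modal rules arise by reflecting the diamond rules at the cut — and these are removed by Lemma~\ref{l:strmodadm}, restoring the shape $\Gamma\cons{\emptyset}\cons{\emptyset}^n$.

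The main obstacle is precisely the treatment of the $\diafourc$- and $\diafivec$-rules, which is the whole reason for generalising $\cut$ to $\ycut$. Seen upwards these rules do not decompose their principal formula $\Diamond \neg A$ but merely copy it to another node, so when one of them is the last rule of $\PP_2$ the number $n$ of tracked copies and the shape of the context change, and a naive cut would reappear at the undiminished rank $r+1$. Keeping $n$ explicit and carrying the $\Y$-proviso lets me absorb these copies and re-apply the induction hypothesis at a strictly smaller proof size, while Lemma~\ref{l:yadm} guarantees that the proviso can always be restored after the permutation. I expect the delicate bookkeeping — verifying that every last-rule combination, especially the interaction of $\diafourc$ and $\diafivec$ in $\PP_2$ with the $\Box$-rule in $\PP_1$, respects the $\depth{\Gamma\context\cons{\emptyset}}>0$ condition and the $\Y$-derivation proviso — to consume most of the effort.
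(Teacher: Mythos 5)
Your proposal is correct and follows essentially the same route as the paper's proof: the same simultaneous induction on $|\PP_1|+|\PP_2|$, the same reduction of (i) to (ii) when the cut formula is modal, the same absorption of $\diafourc$ and $\diafivec$ into the tracked copies and the $\Y$-proviso of $\ycut$ (the paper's ``trivial'' cases), and the same use of Lemma~\ref{l:yadm}, Lemma~\ref{l:strmodadm} and Lemma~\ref{l:stradmcut} in the principal $\Box$-versus-$\kc,\diatc,\diabc$ cases. The only detail your sketch glosses over is that, because of the built-in contraction in the $\kc$-, $\diatc$- and $\diabc$-rules, the principal case yields not just the rank-reduced $\cut_r$ but also a residual $\ycut_{r+1}$ on strictly smaller subproofs that is discharged by the induction hypothesis --- exactly what your induction measure accommodates (and note that $\ystr$ is needed there to relocate the box $[A]$, not to repair the proviso in the commutative cases, where it survives passive permutations automatically).
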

\begin{proof}
  We prove (i) and (ii) simultaneously by induction on
  $|\PP_1|+|\PP_2|$. We perform a case analysis on the two lowermost
  rules in $\PP_1$ and $\PP_2$. If one of the two rules is passive and
  an axiom then $\Gamma\cons{\emptyset}$ is axiomatic as well. If one
  is active and an axiom then we have
\[
\vlderivation{\vliin {\cut_{r+1}}{} {\Gamma\cons{\neg a}}
  {\vlhy{\Gamma\cons{a, \neg a}}}
  {\Proofleaf{\PP_2}{\Gamma\cons{\neg a, \neg a}}}}
\qquad \leadsto \qquad
\vlderivation{\vlin {\ctr}{} {\Gamma\cons{\neg a}}
  {\Proofleaf{\PP_2}{\Gamma\cons{\neg a, \neg a}}}}
\quad .
\]
If one rule is passive then we have
\[
\vlderivation{\vliin {\cut_{r+1}}{} {\Gamma\cons{\emptyset}}
  {\Proofleaf{\PP_1}{\Gamma\cons{A}}}
  {\vlin{\rho}{}{\Gamma\cons{\neg A}}{
  \Proofleaf{\PP_2}{\Gamma'\cons{\neg A}}}}}
\qquad \leadsto \qquad
\vlderivation{\vlin{\rho}{}{\Gamma\cons{\emptyset}}
  {\vliin {\cut_{r+1}}{} {\Gamma'\cons{\emptyset}}
  {\vlin{\neg\rho}{}{\Gamma'\cons{A}}{\Proofleaf{\PP_1}{\Gamma\cons{A}}}}
  {\Proofleaf{\PP_2}{\Gamma'\cons{\neg A}}}}}
\]
for case (i) and similarly for (ii). This leaves the case that
both rules are active and not axioms. For (i) we have:
\[
\vlderivation{\vliin {\cut_{r+1}}{} {\Gamma\cons{\emptyset}}
  {\vliin{\vlan}{}{\Gamma\cons{B \vlan C}}
  {\Proofleaf{\PP_1}{\Gamma\cons{B}}}
  {\Proofleaf{\PP_2}{\Gamma\cons{C}}}}
  {\vlin{\rho}{}{\Gamma\cons{\neg B \vlor \neg C}}{
  \Proofleaf{\PP_3}{\Gamma'\cons{\neg B, \neg C}}}}}
\qquad \leadsto \qquad
\]\[
\vlderivation{\vliin{\cut_r}{}{\Gamma\cons{\emptyset}}
{\Proofleaf{\PP_1}{\Gamma\cons{B}}}
{\vliin{\cut_{r}}{} {\Gamma\cons{\neg B}}
  {\vlin{\wk}{}{\Gamma\cons{\neg B, C}}{\Proofleaf{\PP_2}{\Gamma\cons{C}}}}
  {\Proofleaf{\PP_3}{\Gamma\cons{\neg B, \neg C}}}}} \qquad .
\]
Notice that (i) is a special case of (ii) if $A$ has a modality as
its main connective. The remaining case is thus (ii) with both rules
active and not axioms, and thus on one side the $\Box$-rule and on the
other side either ${\kc, \diatc}$ or ${\diabc}$ (the cases for {\diafourc} and
{\diafivec} are trivial).  The case for the {\kc}-rule is as follows:
\[
\vlderivation{
\vliin{\ycut_{r+1}}{}{\Gamma\cons{\emptyset}\cons{[\Delta]}}
{\vlin{\Box}{}{\Gamma\cons{\Box A}\cons{[\Delta]}}
  {\Proofleaf{\PP_1}{\Gamma\cons{[A]}\cons{[\Delta]}}}}
{\vlin{{\kc}}{}{\Gamma'\cons{\Diamond \neg A}\cons{\Diamond \neg A,
      [\Delta']}} {\Proofleaf{\PP_2}{\Gamma'\cons{\Diamond \neg
        A}\cons{\Diamond \neg A, [\neg A, \Delta']}}}}} \qquad
\leadsto \qquad
\]
\[
\vlderivation{
\vliin{\cut_r}{}{\Gamma\cons{\emptyset}\cons{[\Delta]}}
{\vlin{\ctr}{}
{\Gamma\cons{\emptyset}\cons{[A, \Delta]}}
{\vlin{\wk^2}{}{\Gamma\cons{\emptyset}\cons{[A, \Delta],[A, \Delta]} }
{\vlin{\ystr}{}
{\Gamma\cons{\emptyset}\cons{[A],[\Delta]}}
      {\Proofleaf{\PP_1}{\Gamma\cons{[A]}\cons{[\Delta]}}}}}}
{\vliin{\ycut_{r+1}}{}{\Gamma\cons{\emptyset}\cons{[\neg A, \Delta]}}
  {\vlin{\Box,\wk}{}{\Gamma\cons{\Box
        A}\cons{[\neg A, \Delta]}}
    {\Proofleaf{\PP_1}{\Gamma\cons{[A]}\cons{[\Delta]}}}}
  {\Proofleaf{\PP_2}{\Gamma'\cons{\Diamond \neg A}\cons{\Diamond \neg
        A, [\neg A, \Delta']}}}}}
\qquad ,
\]
where the $\ystr$-rule is applicable since its condition on the
context matches the condition in the $\ycut$-rule. The $\ystr$-rule
can be removed by Lemma~\ref{l:yadm} (Admissibility of {\ystr}),
weakening and contraction can be removed by Lemma~\ref{l:stradmcut}
(Cut-rank preserving admissibility of structural rules) and the
instance of $\ycut$ can be removed by induction hypothesis. The cases
for {\diatc} and {\diabc} are as follows:
\[
\vlderivation{
\vliin{\ycut_{r+1}}{}{\Gamma\cons{\emptyset}\cons{\emptyset}}{
\vlin{\Box}{}{\Gamma\cons{\Box A}\cons{\emptyset}}
{\Proofleaf{\PP_1}{\Gamma\cons{[A]}\cons{\emptyset}}}}
{\vlin{{\diatc}}{}{\Gamma'\cons{\Diamond \neg A}\cons{\Diamond \neg A}}
{\Proofleaf{\PP_2}{\Gamma'\cons{\Diamond \neg A}\cons{\neg A}}}}}
\qquad
\leadsto \qquad
\]
\[
\vlderivation{
\vliin{\cut_r}{}{\Gamma\cons{\emptyset}\cons{\emptyset}}
{\vlin{\strt}{}{\Gamma\cons{\emptyset}\cons{A}}
{\vlin{\ystr}{}{\Gamma\cons{\emptyset}\cons{[A]}}
{\Proofleaf{\PP_1}{\Gamma\cons{[A]}\cons{\emptyset}}}}}
{\vliin{\ycut_{r+1}}{}{\Gamma\cons{\emptyset}\cons{\neg A}}
{\vlin{\Box,\wk}{}{\Gamma\cons{\Box A}\cons{\neg A}}
{\Proofleaf{\PP_1}{\Gamma\cons{[A]}\cons{\emptyset}}}}
{\Proofleaf{\PP_2}{\Gamma'\cons{\Diamond \neg A}\cons{\neg A}}}}}
\]
and
\[
\vlderivation{
\vliin{\ycut_{r+1}}{}{\Gamma\cons{\emptyset}\cons{[\Delta]}}
{\vlin{\Box}{}{\Gamma\cons{\Box A}\cons{[\Delta]}}
  {\Proofleaf{\PP_1}{\Gamma\cons{[A]}\cons{[\Delta]}}}}
{\vlin{\diabc}{}{\Gamma'\cons{\Diamond \neg A}\cons{[\Diamond \neg
A,\Delta']}}
 {\Proofleaf{\PP_2}{\Gamma'\cons{\Diamond \neg
        A}\cons{ \neg A, [\Diamond\neg A, \Delta']}}}}} \qquad
\leadsto \qquad
\]
\[
\vlderivation{
\vliin{\cut_r}{}{\Gamma\cons{\emptyset}\cons{[\Delta]}}
{\vlin{\strb}{}{\Gamma\cons{\emptyset}\cons{A, [\Delta]} }
{\vlin{\ystr}{}
{\Gamma\cons{\emptyset}\cons{[[A],\Delta]}}
      {\Proofleaf{\PP_1}{\Gamma\cons{[A]}\cons{[\Delta]}}}}}
{\vliin{\ycut_{r+1}}{}{\Gamma\cons{\emptyset}\cons{\neg A, [\Delta]}}
  {\vlin{\Box,\wk}{}{\Gamma\cons{\Box
        A}\cons{\neg A, [\Delta]}}
    {\Proofleaf{\PP_1}{\Gamma\cons{[A]}\cons{[\Delta]}}}}
  {\Proofleaf{\PP_2}{\Gamma'\cons{\Diamond \neg A}\cons{\neg
        A, [\Diamond \neg A, \Delta']}}}}}
\qquad ,
\]

In general the $\ycut$, seen upwards, introduces several diamond
formulas. One of them is special in being in the same position as its
dual cut formula in the other premise. In the transformations given
above, the active formula of the diamond-rule above the cut is
different from that special formula.  That is not always the case, of
course, but if the two coincide, then the transformations are simpler.

\end{proof}

\begin{theorem}[Cut-Elimination]\label{t:ce}
  Let $\X$ be a 45-closed subset of $\{{\sf d,t,b,4,5}\}$. Then we have:
\begin{center}
  If $\,\K+\diac{\X}+\cut \vdash \Gamma\,$ then $\,\K+\diac{\X} \vdash
  \Gamma\,$.
\end{center}
\end{theorem}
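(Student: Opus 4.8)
The plan is to run the classical two-level cut-elimination induction — an inner rank-reduction step driven by the Reduction Lemma (Lemma~\ref{l:rl}), and an outer induction on the cut rank — but to route seriality through the structural rule $\strd$, as foreshadowed by the clash between $\strd$ and $\cut$ displayed just before Figure~\ref{f:admmodstr}. So the first move is to get rid of $\diadc$. Put $\X'=\X\setminus\{{\sf d}\}$. Since dropping a frame condition only enlarges the class of frames, $\X'$ is again 45-closed and is now a subset of $\{{\sf t,b,4,5}\}$, exactly the setting required by Lemma~\ref{l:rl}. The rule $\diadc$ is derivable from $\{\kc,\strd\}$: reading a derivation from conclusion upward, apply $\strd$ to insert an empty box beside $\Diamond A$, turning $\Gamma\cons{\Diamond A}$ into $\Gamma\cons{\Diamond A,[\emptyset]}$, and then $\kc$ to fill that box, yielding $\Gamma\cons{\Diamond A,[A]}$. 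Replacing every $\diadc$-instance in this way rewrites any proof in $\K+\diac{\X}+\cut$, with the same conclusion, as a proof in $\K+\Z+\cut$, where $\Z=\diac{\X'}+\strd$ if ${\sf d}\in\X$ and $\Z=\diac{\X'}$ otherwise.

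Next I would prove the rank-reduction step: for every $r\geq 0$, if $\K+\Z+\cut_{r+1}\vdash\Gamma$ then $\K+\Z+\cut_r\vdash\Gamma$. This goes by induction on the number of cuts of rank exactly $r+1$ occurring in a given proof. Choosing a topmost such cut, the two subproofs above its premises contain only cuts of rank $\leq r$, hence lie in $\K+\Z+\cut_r$; part~(i) of the Reduction Lemma then furnishes a derivation of the conclusion of that cut in $\K+\Z+\cut_r$, which I splice back in, strictly lowering the count. Here it is essential that the auxiliary $\ycut$-instances and the $\ystr$-moves that the Reduction Lemma produces when the cut formula is modal are entirely internal to its proof, being discharged there via Lemma~\ref{l:yadm} and the cut-rank-preserving admissibility of the structural rules (Lemma~\ref{l:stradmcut}); from the outside only ordinary cuts are ever manipulated. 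Iterating this step by a second induction on the cut rank of the proof turns any proof in $\K+\Z+\cut$ into a cut-free proof in $\K+\Z$.

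Finally I would remove the auxiliary $\strd$-rule. If ${\sf d}\notin\X$ then $\Z=\diac{\X}$ and the cut-free proof is already in the desired system. If ${\sf d}\in\X$, the cut-free proof, being built from $\diac{\X'}\subseteq\diac{\X}$ and $\strd$, is a cut-free proof in $\K+\diac{\X}+\strd$, and part~(ii) of Lemma~\ref{l:strmodadm} — admissibility of $\strd$ for $\K+\diac{\X}$, whose proof trades each $\strd$ for $\diadc$ — eliminates the $\strd$-instances to give a proof in $\K+\diac{\X}$. The main obstacle is precisely this seriality detour: because $\strd$ is \emph{not} admissible in the presence of cut, one cannot eliminate cut directly in $\K+\diac{\X}$, but must first pass to a system with $\strd$ in place of $\diadc$, eliminate cut there, and only afterwards reabsorb $\strd$ into $\diadc$. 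The genuinely heavy lifting — the modal case analysis and the $\ycut$/$\ystr$ machinery needed precisely because $\diafourc$ and $\diafivec$ do not decompose their principal formula — has already been carried out inside the Reduction Lemma, so the theorem itself reduces to correctly assembling these pieces.
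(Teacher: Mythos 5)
Your proposal is correct and takes essentially the same route as the paper's own proof: the identical seriality detour (replace $\diadc$ by $\kc$ plus $\strd$ up front, eliminate cut in the system with $\strd$, then reabsorb $\strd$ into $\diadc$ via Lemma~\ref{l:strmodadm}(ii)), combined with the standard two-level induction discharged by part~(i) of the Reduction Lemma~\ref{l:rl}, with the $\ycut$/$\ystr$ machinery correctly recognised as internal to that lemma. The only inessential deviation is that your inner induction counts topmost maximal-rank cuts where the paper inducts on the depth of the proof.
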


\begin{proof}
  We first prove the theorem in case that ${\sf d} \notin \X$.  Then
  it follows from a routine induction on the cut-rank of the given
  proof.  The induction step follows by another induction, on the
  depth of the proof. It uses the reduction lemma in the case of a
  maximal-rank cut. In case ${\sf d} \in \X$ we first replace
  instances of the rule $\diadc$ by instances of the rules $\kc$ and
  $\strd$, then proceed as before, and finally apply
  Lemma~\ref{l:strmodadm} (Admissibility of modal structural rules) to
  replace $\strd$ by $\diadc$.
\end{proof}

This finishes the section of sequent systems where modal axioms are
represented as logical rules. The systems cover the entire modal cube
and are systematic in the sense that there is a one-to-one
correspondence between the modal rules and the frame conditions.
However, unlike Hilbert systems and labelled sequent systems, they are
not modular in the sense that each combination of modal rules is
complete for the corresponding class of frames. This forced us to
resort to formulating the condition of 45-closed systems
and proving completeness only for those. It is hard to see how to
achieving modularity using these systems. 

However, during the cut-elimination procedure we discovered the
possibility of forming proof systems not using $\Diamond$-rules but
using the structural rules shown in Figure~\ref{f:admmodstr} on
page~\pageref{f:admmodstr}.  

In particular, the examples from
Fact~\ref{f:45needed}~(Incompleteness) which showed that systems
$\K+\{{\diatc,\diafivec}\}$ and $\K+\{{\diabc,\diafourc}\}$ are
incomplete are provable in systems $\K+\{{\strt,\strfive}\}$ and
$\K+\{{ \strb,\strfour}\}$, respectively:
\[
\vlderivation{\vlin{\Box^2}{}{\Diamond \neg p, \Box\Box
    p}{
\vlin{\strt}{}{\Diamond \neg p, [[p]]}{
\vlin{\strfive}{}{[\Diamond
        \neg p, [[p]]]}{
\vlin{{\sf k}}{}{[\Diamond \neg p,
          [p],[\emptyset]]}{\vlhy{[\Diamond \neg p, [p, \neg
            p],[\emptyset]]}}}}}} \qquad \mbox{and} \qquad
\vlderivation{
\vlin{\Box^2}{}{\Box \neg p,\Box\Diamond p}{
\vlin{\strb}{}{[\neg p],[\Diamond p]}{
\vlin{\strfour}{}{[[[\neg p]],\Diamond p]}{
\vlin{{\sf k}}{}{[[\neg p],\Diamond p]}{\vlhy{[[\neg
p,p],\Diamond p]}}}}}}
\qquad .
\]

We consider such proof systems in the next section.

\section{Modal Axioms as Structural Rules}

The plan of this section is as follows: we first introduce the sequent
systems and state soundness, cut-elimination and completeness, which
we prove by embedding a Hilbert system and using cut-elimination. The
remainder of the section is devoted to proving cut-elimination.
The cut-elimination proof is
interesting: it relies on a decomposition of the contraction rule,
similar to what has been observed in deep inference systems for
propositional logic, where contraction is decomposed into an atomic
version and a local \emph{medial} rule \cite{BruTiu01}.

\subsection{The Sequent Systems}

{\bf System $\Kc+\str{\X}$.} Figure~\ref{fig:kcx} shows the set of
rules from which we form our deductive systems. \emph{System \Kc} is
the set of rules $\{\vlan,\vlor,\Box,{\sf k},\ctr\}$. We will look at
extensions of System $\Kc$ with the structural modal rules $\str{\X}
\subseteq \{\strd,\strt,\strb,\strfour,\strfive\}$ that we have
encountered previously and that are shown in Figure~\ref{fig:kcx} for
convenience. Contrary to systems considered in the last section, the
systems we consider now are not fully invertible and contraction is
not admissible for the contraction-free systems. Of course it is easy
to obtain equivalent systems which are fully invertible and for which
contraction is admissible by using system $\K$ instead of $\Kc$ and by
absorbing contraction into the modal structural rules. However, we
choose not to do this because our cut-elimination technique, which
relies on decomposing contraction, is more natural in a system with an
explicit contraction rule.

\begin{figure}
\centering
\fbox{
\parbox{\ftextwidth}{
  \centering
  \[ \Gamma\cons{p,\neg p} \qquad \cinf{\vlan}{\Gamma\cons{A \vlan
      B}}{\Gamma\cons{A} \quad \Gamma\cons{B}} \qquad
  \cinf{\vlor}{\Gamma\cons{A \vlor B}}{\Gamma\cons{A,B}}
\]
\[
  \kaiinf{\Box}{\Gamma\cons{\Box A}}{\Gamma\cons{[A]}} \qquad
\cinf{{\sf k}}{\Gamma\cons{\Diamond A, [\Delta]}}{\Gamma\cons{[A,
\Delta]}} \qquad 
\cinf{\ctr}{\Gamma\cons{\Delta}}{\Gamma\cons{\Delta,\Delta}}
\]
\medskip 
\[
      \cinf{\strd}{\Gamma\cons{\emptyset}}{\Gamma\cons{[\emptyset]}}
      \qquad \cinf {{\strt}} {\Gamma\cons{\Delta}}
      {\Gamma\cons{[\Delta]}} \qquad \cinf {\strb}
      {\Gamma\cons{[\Delta],\Sigma}} {\Gamma\cons{[\Delta,[\Sigma]]}}
      \] 

      \[ \qquad\qquad\qquad\qquad\cinf {\strfour}
      {\Gamma\cons{[[\Delta],\Sigma]}} {\Gamma\cons{[\Delta],[\Sigma]}} \qquad
      \cinf {\strfive} {\Gamma\cons{\emptyset}\cons{[\Delta]}}
      {\Gamma\cons{[\Delta]}\cons{\emptyset}}{\quad
        \depth{\Gamma\context\cons{\emptyset}}>0}
\]
}}
\caption{System {\Kc+\{\strd,\strt,\strb,\strfour,\strfive
    }\}}
  \label{fig:kcx}
\end{figure}

Soundness of our systems is easily established similarly to soundness
of the systems in the previous section.

\begin{theorem}[Soundness] 
  Let $\X \subseteq \{{\sf d,t,b,4,5}\}$.  If a sequent is provable in
  $\,\Kc+\str{\X}$ then its corresponding formula is provable in a
  Hilbert system for the modal logic $\K$ extended by the axioms in
  $\X$.
\end{theorem}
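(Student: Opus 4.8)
The plan is to prove the theorem by induction on a proof $\PP$ of $\Gamma$ in $\Kc+\str{\X}$, showing along the way that for every sequent $\Sigma$ occurring in $\PP$ its corresponding formula $\formula{\Sigma}$ is derivable in the Hilbert system $H$ for $\K+\X$ (propositional logic with modus ponens, necessitation, the $\K$-distribution axiom, and the axioms $\rho\in\X$ of Figure~\ref{fig:framecond}). Since the leaves of $\PP$ are instances of the axiom $\Gamma\cons{p,\neg p}$, it suffices to show (a) that $\formula{\Gamma\cons{p,\neg p}}$ is an $H$-theorem and (b) that each rule is \emph{$H$-sound}, meaning $H\vdash\bigwedge_i\formula{\Sigma_i}\supset\formula{\Delta}$ whenever $\Sigma_1,\dots,\Sigma_n$ are its premises and $\Delta$ its conclusion; the claim then propagates downward, the conjunction of the premise formulas being derivable and the conclusion formula following by modus ponens.

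The technical core is a syntactic version of Lemma~\ref{l:deepsound}: if $H\vdash\formula{\Sigma_1}\supset\formula{\Sigma_2}$ then $H\vdash\formula{\Gamma\cons{\Sigma_1}}\supset\formula{\Gamma\cons{\Sigma_2}}$. Its proof is exactly the induction on $\depth{\Gamma\context}$ given there, reading ``$\X$-valid'' as ``$H$-derivable'': the step uses necessitation and the $\sf k$-axiom to pass from the inner implication to its boxed version and then closes propositionally. A conjunctive variant, proved the same way but using the $\K$-theorem $\Box X\vlan\Box Y\supset\Box(X\vlan Y)$ in the step, handles the two-premise $\vlan$-rule; and the evident ``provable-lifting'' variant, using necessitation and disjunction introduction, handles (a), since the local disjunct $p\vlor\neg p$ is an $H$-theorem. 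With these liftings in hand, every rule other than $\strfive$ reduces to a single $H$-provable \emph{local} implication between the corresponding formulas of the hole contents: the $\vlor$- and $\Box$-rules are liftings of an identity; contraction lifts $\formula{\Delta}\vlor\formula{\Delta}\supset\formula{\Delta}$; the $\sf k$-rule lifts precisely the $\sf k$-axiom $\Box(\formula{\Delta}\vlor A)\supset(\Box\formula{\Delta}\vlor\Diamond A)$; and the structural rules lift the $\K$-consequences $\Box\bot\supset\bot$ (for $\strd$, via $\sf d$), $\Box\formula{\Delta}\supset\formula{\Delta}$ (for $\strt$, via $\sf t$), $\Box(\formula{\Delta}\vlor\Box\formula{\Sigma})\supset(\Box\formula{\Delta}\vlor\formula{\Sigma})$ (for $\strb$, using the $\sf b$-dual $\Diamond\Box X\supset X$ after the $\sf k$-axiom), and $(\Box\formula{\Delta}\vlor\Box\formula{\Sigma})\supset\Box(\Box\formula{\Delta}\vlor\formula{\Sigma})$ (for $\strfour$, via $\sf 4$ and box-monotonicity).

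The main obstacle is the $\strfive$-rule, since it lives in a \emph{binary} context and moves a boxed subtree between two different positions, so the single-hole lifting does not apply directly. I would treat it in parallel with the previous section's soundness argument for $\diafivec$: first decompose $\strfive$, in the manner of Lemma~\ref{l:split5}, into elementary moves that shift $[\Delta]$ between a node and an adjacent node (out of its box, into a sibling subtree, or into a deeper one); each elementary move, after stripping the common surrounding context by the single-hole lifting, is an $H$-provable instance of euclideanness in the $\sf 5$-dual form $\Diamond\Box X\supset\Box X$ preceded by the $\sf k$-axiom, and iterating them along a euclidean connection realises the full move. Where the previous section invoked a euclidean countermodel, here one instead produces these $\sf 5$-axiom derivations; alternatively, one may establish $H$-soundness of $\strfive$ semantically as there and appeal to Kripke-completeness of the Hilbert system for $\X$-frames. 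Checking that this elementary decomposition covers every instance permitted by the proviso $\depth{\Gamma\context\cons{\emptyset}}>0$ is the one place demanding genuine care.
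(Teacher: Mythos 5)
Your proof is correct, but it takes a genuinely different route from the paper. The paper gives no detailed argument for this theorem: it simply remarks that soundness ``is easily established similarly to soundness of the systems in the previous section'', i.e., it intends the semantic argument --- show that each rule of $\Kc+\str{\X}$ preserves validity on $\X$-frames, exactly as in the soundness theorem for $\K+\diac{\X}$ (using Lemma~\ref{l:deepsound} and, for $\strfive$, a decomposition in the spirit of Lemma~\ref{l:split5} together with the euclidean-countermodel argument) --- with Hilbert provability then read off via the standard Kripke completeness of $\K+\X$. You instead embed directly and syntactically into the Hilbert system: your syntactic version of Lemma~\ref{l:deepsound} (necessitation, the $\sf k$-axiom, propositional closure, and the conjunctive variant via $\Box X\vlan\Box Y\supset\Box(X\vlan Y)$) is sound; your local implications for $\ctr$, $\sf k$, $\strd$, $\strt$, $\strb$, $\strfour$ are all genuine theorems of the relevant Hilbert systems; and your routing of $\strfive$ --- up-moves towards depth one, sideways moves between depth-one boxes, down-moves only from nodes of positive depth, never a down-move from the root --- does cover every instance permitted by the proviso $\depth{\Gamma\context\cons{\emptyset}}>0$, which resolves the point you flagged as delicate. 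What your route buys is self-containedness and explicit, constructive Hilbert derivations with no appeal to Kripke completeness; what the paper's (implicit) route buys is brevity, since the semantic machinery is already in place from the previous section.

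One small repair is needed in your treatment of $\strfive$: for the sideways and downward elementary moves, the dual $\Diamond\Box X\supset\Box X$ preceded by the $\sf k$-axiom does not suffice on its own, because re-boxing the moved $[\Delta]$ at its target would require $\Box\formula{\Delta}\supset\Box\Box\formula{\Delta}$, which is not a theorem of $\K+\{{\sf 5}\}$. Use instead the $\K+\{{\sf 5}\}$-derivable $\Diamond\Box X\supset\Box\Box X$ (from the $\sf 5$-instance $\Diamond\Box X\supset\Box\Diamond\Box X$, followed by necessitating the dual $\Diamond\Box X\supset\Box X$ and distributing with the $\K$-axiom). Then, writing $D=\formula{\Delta}$, the sibling move $\Box(\Lambda_1\vlor\Box D)\vlor\Box\Lambda_2\supset\Box\Lambda_1\vlor\Box(\Lambda_2\vlor\Box D)$ follows from the $\sf k$-axiom, this theorem and box-monotonicity, and the deeper move $\Box(\Lambda_1\vlor\Box D\vlor\Box\Lambda_2)\supset\Box(\Lambda_1\vlor\Box(\Lambda_2\vlor\Box D))$ follows the same way; note that both derivations place the $\sf k$-step at the box enclosing the source node, which is exactly where the positive-depth proviso is used. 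With this adjustment your syntactic argument goes through in full.
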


Our main result is cut-elimination, which we prove in the next
subsection.

\begin{theorem}[Cut-Elimination]\label{t:cestr}
  Let $\X \subseteq \modax$. If $\Kc + \str{\X} + \cut
\vdash \Gamma$ then $\Kc+ \str{\X}\vdash \Gamma$. 
\end{theorem}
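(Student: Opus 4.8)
The plan is to prove cut-elimination by the familiar two-layered induction --- an outer induction on the cut rank of the given proof, together with an inner induction that permutes a topmost cut of maximal rank upwards --- but preceded by one preparatory move forced by the presence of an \emph{explicit}, non-admissible contraction rule. Before touching any cut I would decompose every instance of $\ctr$ into instances of an atomic contraction $\fctr$ together with the local medial rules $\mand$ (for conjunction) and $\mboxx$ (for the box), so that the only contraction we ever manipulate in a proof acts on atoms. That general $\ctr$ is derivable from $\{\fctr,\mand,\mboxx\}$ is the nested-sequent analogue of the decomposition of contraction known from deep inference for propositional logic \cite{BruTiu01}; together with the fact that $\fctr$, $\mand$ and $\mboxx$ are all admissible in $\Kc+\str{\X}$, it shows that the local system and $\Kc+\str{\X}$ prove the same sequents, so it suffices to eliminate cut from the local system and re-assemble $\ctr$ at the end.

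Next I would establish a reduction lemma in the style of Lemma~\ref{l:rl}: a cut of rank $r+1$ whose premises are proved with cut rank at most $r$ can be replaced by a proof of the same conclusion of cut rank at most $r$. The proof is an induction on the sum of the heights of the two subproofs, with a case analysis on their lowermost rules. The cases in which one rule is passive, or one premise is an axiom, are routine permutations; so are the cases where a lowermost rule is $\mand$, $\mboxx$, or one of the structural modal rules $\str{\X}$, since all of these are \emph{linear} --- they merely rearrange the tree of boxes and the conjunctive structure without copying a whole formula --- so permuting the cut past them never duplicates the cut formula. The principal propositional case ($\vlan$ against $\vlor$) splits into two cuts of strictly smaller rank exactly as before. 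The one genuinely modal principal case pits $\Box A$, introduced by the $\Box$-rule, against $\Diamond\neg A$, introduced by the ${\sf k}$-rule; here the fresh box $[A]$ produced by $\Box$ and the box $[\Delta]$ consumed by ${\sf k}$ sit as siblings of one node, and I would merge the two boxes into $[A,\Delta]$ and then cut $A$ against $\neg A$ inside it, which lowers the rank. Weakening, used here and below, is depth- and cut-rank-preserving admissible by a routine induction.

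The payoff of the decomposition now appears. In Gentzen's argument the obstacle to a height induction is a contraction on a \emph{compound} cut formula, which he avoids by a multicut; once contraction is atomic that configuration cannot occur, so the rank descends to $1$ with no multicut required. What is left is the elimination of \emph{atomic} cuts, and this is where I expect the substantive difficulty to lie: an atomic cut on $a$ may still meet an atomic contraction on $a$, and naive permutation reproduces the classical non-terminating loop. The remedy, again following the deep-inference philosophy of \cite{BruTiu01}, is to remove atomic cuts not by permutation but \emph{at the axioms}. One traces the cut occurrence of $a$ upwards --- it forks, but only through instances of $\fctr$ --- to each axiom $\Gamma'\cons{a,\neg a}$ that introduces it, and replaces that axiom by a copy of the other cut premise adapted to the deeper context $\Gamma'\context$ in which it sits. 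Since the other premise is already cut-free, copying it along the several threads creates no new cuts, so no multicut is needed.

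The hardest step, and the one I would treat with the most care, is exactly this contextualisation: the other premise proves $\Gamma\cons{\neg a}$ at the root context of the cut, whereas the axioms introducing $a$ lie arbitrarily deep inside the first subproof, so that premise must be relativised to each deeper context before it can be plugged in --- it is here that the deep, tree-shaped nature of nested sequents, together with weakening to discard the spurious $a$ surviving at the axiom, does the real work. Once the atomic cuts are gone the proof is cut-free in the local system, and a final appeal to the decomposition lemma re-derives $\ctr$, returning us to $\Kc+\str{\X}$. Note that, in contrast to the previous section, no $45$-closedness hypothesis is needed and the rule $\strd$ causes no difficulty: it stays in the target system and is simply permuted over like any other structural modal rule.
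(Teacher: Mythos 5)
Your preparatory move fails already at the decomposition step, and this failure propagates. In the nested sequent formalism there are no formula-level medial rules: inference rules act on the sequent structure (commas and structural boxes), not inside formulas, so contraction on a compound formula $A\vlan B$ or $\Box A$ cannot be reduced to \emph{atomic} contraction. (You have also misread the rules you invoke: in this system $\fctr$ is contraction on arbitrary \emph{formulas}, and $\mand,\mboxx$ are not medials but multi-premise versions of $\vlan$ and $\Box$.) The decomposition that actually exists (Lemma~\ref{l:med}) splits $\ctr$ into $\fctr$ plus the sequent-level medial $\med$, which merges two boxed sequents $[\Delta],[\Sigma]$ into $[\Delta,\Sigma]$ --- and $\fctr$ is then not eliminated but permuted \emph{down} to the root of the proof (Lemma~\ref{l:cdown}, Proposition~\ref{p:cdown}), which is precisely what forces $\cut$, $\Box$ and $\vlan$ to be generalised to the multi-rules $\mcut$, $\mboxx$, $\mand$. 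So your claim that ``no multicut is required'' is wrong both as a report of what is possible and on the combinatorics: in the key reduction an $\mboxx$-introduced box $[B,\dots,B]$ is cut against a $\sf k$-introduced diamond, and several copies of $B$ must be cut simultaneously ($\mcut$, removed afterwards via Lemma~\ref{l:mcut}). Consequently your atomic-cut-elimination-at-the-axioms phase rests on a configuration (only atomic contractions in the proof) that cannot be reached.

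The second, independent gap is your treatment of the structural modal rules as ``routine permutations''. The obstruction is not duplication of the cut formula but non-invertibility: if the left premise of a cut ends with, say, $\strfour$ or $\strt$, the hole containing the cut formula sits in a \emph{different} context in the premise of that structural rule, and transforming the right premise $\Gamma\cons{\neg A}$ to match would require the \emph{inverse} of the structural rule, which is not admissible. This is why the reduction lemma in the paper is stated with interposed blocks of $\str{\X}+\med$ between the cut and the logical rules: the cut is pushed up together with these blocks, and when the block meets the $\diak$-rule, that rule must be permuted \emph{down} through the block, mutating into $\diat$, $\diab$, $\diafour$ or $\diafive$ (Lemmas~\ref{l:45down} and~\ref{l:ktbdown}); the rank-non-decreasing rules $\diafour,\diafive$ are then ``reflected'' at the cut back into structural rules (Lemma~\ref{l:45reflect}). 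This is also where instances of $\str{\X}^+\setminus\str{\X}$ enter, so although the theorem needs no 45-closedness hypothesis, the proof passes through the 45-closure and must remove the extra rules afterwards (Lemma~\ref{l:xplustox}) --- a phenomenon your proposal has no mechanism for; your ``merge the two sibling boxes'' case covers only the situation with no intervening structural rules. Finally, $\strd$ is \emph{not} like the other structural rules: permuting $\diak$ down past $\strd$ would require the logical rule $\diad$, which the structural systems lack; the paper instead first permutes all $\strd$-instances below every cut (Lemma~\ref{l:ddown}) and eliminates cut in their absence.
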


By using cut-elimination we obtain the completeness theorem:

\begin{theorem}[Completeness] 
  Let $\X \subseteq \modax$.  If a formula is provable in a
  Hilbert system for the modal logic $\K$ extended by the modal axioms in
  $\X$ then it is provable in system $\Kc+\str{\X}$.
\end{theorem}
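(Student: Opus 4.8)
The plan is to embed a Hilbert-style axiomatisation of $\K+\X$ into the sequent system \emph{with} cut, $\Kc+\str{\X}+\cut$, and then invoke Theorem~\ref{t:cestr}~(Cut-Elimination) to pass to the cut-free system $\Kc+\str{\X}$. Concretely, I fix a Hilbert system for $\K+\X$ consisting of the propositional axioms, the normality axiom $\Box(A\supset B)\supset(\Box A\supset\Box B)$, the modal axioms $\rho$ for $\rho\in\X$ from Figure~\ref{fig:framecond}, and the rules modus ponens and necessitation. By induction on the length of a Hilbert proof I show that every theorem $A$ satisfies $\Kc+\str{\X}+\cut\vdash A$, reading $A$ as a singleton sequent. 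The induction has one case per axiom and one per rule, so the work splits into (a) proving each axiom as a sequent and (b) simulating the two rules.

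For the rules, necessitation is \emph{derivable}: given a proof of the sequent $A$, relativising it under a single box (replacing every sequent $\Sigma$ occurring in the proof by $[\Sigma]$) yields a proof of $[A]$, since every rule instance remains an instance in the enlarged context, axioms remain axioms, and the depth proviso of $\strfive$ is only relaxed by the extra box; one application of the $\Box$-rule then gives $\Box A$. Modus ponens is simulated by $\cut$: from proofs of $A$ and of $\neg A\vlor B$ I obtain $A,B$ by weakening and $\neg A,B$ by inverting the $\vlor$-rule, and then
\[
\cinf{\cut}{B}{A,B \qquad \neg A,B}
\]
produces $B$. This uses that weakening is (routinely) admissible and that $\vlor$ is invertible, both established exactly as in the previous section. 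The propositional axioms are provable because the propositional fragment (the axiom together with $\vlan,\vlor,\ctr$ and admissible weakening) is complete for classical propositional logic; in particular the generalised identity $\Gamma\cons{A,\neg A}$ is provable for every formula $A$ by an easy induction on $A$. The normality axiom is provable using only the $\Box$- and ${\sf k}$-rules, just as in the example proof of the ${\sf k}$-axiom in the first section.

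The heart of the argument, and the main obstacle, is clause (a) for the \emph{modal} axioms: for each $\rho\in\X$ one must exhibit a cut-free proof of $\rho$ that uses precisely the structural rule $\str{\rho}$. The uniform recipe is to decompose the outer boxes of the conclusion with $\vlor$ and the $\Box$-rule, then use $\str{\rho}$ to relocate or introduce a box so that the principal formula and its negation become siblings in a common node, and finally close the branch with the ${\sf k}$-rule and the generalised identity. For instance the axiom ${\sf 5}$, i.e.\ $\Diamond A\supset\Box\Diamond A=\Box\neg A\vlor\Box\Diamond A$, is obtained by applying $\vlor$ and then $\Box$ twice to reach $[\neg A],[\Diamond A]$, then using $\strfive$ (whose depth proviso is met, since the first hole lies inside the box $[\Diamond A]$) to reach $[\Diamond A,[\neg A]]$, and closing with ${\sf k}$ to reach the axiom $[[A,\neg A]]$; the cases for ${\sf d},{\sf t},{\sf b},{\sf 4}$ are analogous and simpler, mirroring the sample derivations at the end of the previous section.

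With (a) and (b) in place, every Hilbert theorem is provable in $\Kc+\str{\X}+\cut$, and Theorem~\ref{t:cestr}~(Cut-Elimination) removes the cuts, yielding provability in $\Kc+\str{\X}$. The one point requiring care throughout is that all structural-rule provisos, in particular the depth condition on $\strfive$, are respected both in the axiom derivations and under relativisation for necessitation; verifying this bookkeeping, rather than any deep new idea, is where the genuine effort lies.
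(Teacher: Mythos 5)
Your proposal is correct and takes essentially the same route as the paper: embed the Hilbert system for $\K+\X$ into $\Kc+\str{\X}+\cut$ and then invoke Theorem~\ref{t:cestr} (Cut-Elimination), with the structural rules used exactly as you describe to prove the modal axioms --- your $\strfive$ derivation of axiom $\sf 5$ matches the paper's almost verbatim. The details you make explicit (modus ponens via cut with weakening and $\vlor$-invertibility, necessitation by relativising the whole proof under a box, the generalised identity $\Gamma\cons{A,\neg A}$) are precisely the routine steps the paper compresses into ``we construct a proof in $\Kc+\str{\X}+\cut$ as usual.''
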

\begin{proof}
  Given a proof in the Hilbert system we construct a proof in
  $\Kc+\str{\X}+\cut$ as usual, and then apply Theorem~\ref{t:cestr}
  (Cut-elimination). We show proofs for the modal axioms:
\[
\vlderivation{
\vlin{\vlor}{}{\Box A \imp \Diamond A}{
\vlin{\strd}{}{\Diamond \neg A,\Diamond A}{
\vlin{{\sf k}^2}{}{\Diamond \neg A,\Diamond A,[\emptyset]}{
\vlhy{[\neg A,A]}}}}
}
\qquad
\vlderivation{
\vlin{\vlor}{}{A\imp \Diamond A}{
\vlin{\strt}{}{\neg A, \Diamond A}{
\vlin{{\sf k}}{}{[\neg A], \Diamond A}{
\vlhy{[A,\neg A]}}}}
}
\quad
\vlderivation{
\vlin{\vlor}{}{A \imp \Box\Diamond A }{
\vlin{\Box}{}{\neg A, \Box\Diamond A}{
\vlin{\strb}{}{\neg A, [\Diamond A]}{
\vlin{{\sf k}}{}{[[\neg A],\Diamond A]}{
\vlhy{[[A,\neg A]]}}}}}
}
\quad
\vlderivation{
\vlin{\vlor}{}{\Box A \imp \Box\Box A}{
\vlin{\Box^2}{}{\Diamond\neg A, \Box\Box A }{
\vlin{\strfour}{}{\Diamond\neg A, [[A]]}{
\vlin{{\sf k}}{}{\Diamond\neg A,[A], [\emptyset]}{
\vlhy{[\neg A,A], [\emptyset]}}}}}
}
\quad \vlderivation{
\vlin{\vlor}{}{\Diamond A \imp \Box\Diamond A}{
\vlin{\Box^2}{}{\Box\neg A, \Box\Diamond A}{
\vlin{\strfive}{}{[\neg A], [\Diamond A]}{
\vlin{{\sf k}}{}{[[\neg A],\Diamond A]}{
\vlhy{[[\neg A,A]]}}}}}
}
\qquad .
\]

\end{proof}

\subsection{Syntactic Cut-Elimination}

We first show that weakening and necessitation are admissible.

\begin{lemma}[Weakening and necessitation admissibility]\label{l:wk}
  Let $\X \subseteq \{{\sf d,t,b,4,5}\}$.  The $\wk$-rule and the
  $\nec$-rule are depth- and cut-rank-preserving admissible for
  $\Kc+\str{\X}$.
\end{lemma}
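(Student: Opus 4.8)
The plan is to prove both admissibilities together with the pattern already used for Lemma~\ref{l:stradm}~(Admissibility of structural rules and invertibility), but now tracking the cut rank, so I fix an arbitrary $r\geq 0$ and work throughout in $\Kc+\str{\X}+\cut_r$, and I keep in mind that contraction $\ctr$ is here an explicit rule of the system. For weakening I would argue by strong induction on the depth of the given proof of $\Gamma\cons{\emptyset}$, producing a proof of $\Gamma\cons{\Delta}$ of at most the same depth and cut rank, where $\Gamma\context$ is the weakening context and $\Delta$ the inserted material. In the base case an axiom $\Gamma\cons{p,\neg p}$ stays an axiom after the insertion, since the pair $p,\neg p$ survives. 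In the inductive step I case on the lowermost rule $\rho$. Every rule of $\Kc+\str{\X}+\cut_r$ is a context rule leaving a surrounding context untouched between premises and conclusion, and in the corresponding tree of each premise there is a node matching the weakening position; one therefore inserts $\Delta$ at that matching node of the premise(s) by the induction hypothesis and reapplies $\rho$. For the two-premise rules $\vlan$ and $\cut_r$ the insertion is carried out in both premises, and for $\cut_r$ the cut formula and its rank are left unchanged, so no cut of higher rank is created.

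The only case that is not immediate is $\ctr$ when the weakening position lies inside the contracted material. From a proof of the premise $\Gamma\cons{\Delta',\Delta'}$ I insert $\Delta$ into \emph{both} copies, obtaining $\Gamma\cons{\Delta'',\Delta''}$, and then contract to $\Gamma\cons{\Delta''}$, which is the desired $\Gamma\cons{\Delta}$. The two insertions are two successive appeals to the induction hypothesis: the first is applied to the immediate subproof, of strictly smaller depth, and the second to the proof it yields, whose depth is again strictly smaller than that of the original proof, so the induction on depth licenses it and depth is preserved. One also checks that the reapplied instances of $\strfive$ still meet the proviso $\depth{\Gamma\context\cons{\emptyset}}>0$: filling a hole with $\Delta$ never shortens the box-path from the root to the active holes, so their depths do not decrease. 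This double-weakening bookkeeping for $\ctr$, together with the proviso check, is the only mildly delicate point; everything else is routine permutation.

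For necessitation the cleanest route is a single global transformation rather than an induction. Given a proof $\PP$ of $\Gamma$, I replace every sequent $\Sigma$ occurring in $\PP$ by $[\Sigma]$. Each instance of a rule $\rho$ acting in a context $\Lambda\context$ becomes an instance of the same $\rho$ acting in the deeper context $[\Lambda\context]$, because all rules may be applied at arbitrary depth; every axiom $\Lambda\cons{p,\neg p}$ is sent to $[\Lambda\cons{p,\neg p}]$, which is again an axiom with context $[\Lambda\context]$; and neither the depth of the derivation nor its cut rank changes, since cuts remain cuts on the same (now boxed) cut formula. The provisos cause no trouble, as wrapping the whole proof in one extra box increases all depths by one, so every side condition of the form $\depth{\cdots}>0$, in particular that of $\strfive$, is preserved. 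The result is the required proof of $[\Gamma]$ of the same depth and cut rank.
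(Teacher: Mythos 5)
Your proof is correct and matches the paper's (deliberately terse) argument: the paper invokes exactly this routine depth induction eliminating a single $\wk$- or $\nec$-instance, iterated over the number of instances, and the only non-routine case is the one you isolate --- weakening into contracted material, since $\ctr$ is primitive in $\Kc+\str{\X}$ --- which you resolve correctly by inserting into both copies and re-contracting, the double appeal to the induction hypothesis being licensed by your strong induction on depth. Your treatment of necessitation by globally replacing every sequent $\Sigma$ of the proof by $[\Sigma]$ is a slightly slicker uniform variant of permuting $\nec$ upward instance by instance; it is sound here precisely because every rule of $\Kc+\str{\X}+\cut_r$ is closed under deep contexts and the $\strfive$ proviso $\depth{\Gamma\context\cons{\emptyset}}>0$ only bounds depth from below, so both depth and cut rank are manifestly preserved.
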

\begin{proof}
  A routine induction shows that a single $\nec$ or $\wk$-rule can be
  eliminated from a given proof, a second induction on the number of
  $\nec$ or $\wk$-rules yields our lemma.
\end{proof}

Similarly to the $\diadc$-rule in the previous section, the
$\strd$-rule is different from the other rules: it trivially permutes
below the cut. So we can get it out of the way and then we need to prove
cut-elimination only for the systems without it.

\begin{lemma}[Push down seriality]\label{l:ddown}
  Let $\X \subseteq \{{\sf d,t,b,4,5}\}$ and $\sf d \in \X$. For each
proof as shown on the left there is a proof as shown on the right:
\[
\vlderivation{
 \vltrnote{}{\Gamma}{
\vlhy {}}{\vlhy {\qquad}}{\vlhy {}}{\Kc + \str{\X} + \cut}}
\qquad\leadsto\qquad
\vlderivation{
 \vlde{}{\strd}{\Gamma}{
 \vltrnote{}{\Gamma'}{
\vlhy {}}{\vlhy {\qquad}}{\vlhy {}}{\Kc + \str{\X}-\strd + \cut}}}
\qquad .
\]  
\end{lemma}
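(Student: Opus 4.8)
The plan is to show that a single instance of $\strd$ can always be permuted downwards past the rule immediately below it, and then to assemble all instances of $\strd$ into a block at the bottom of the proof by an induction on its height. Throughout I use that $\wk$ is depth- and cut-rank-preserving admissible for $\Kc+\str{\X}$ (Lemma~\ref{l:wk}), and hence also for the subsystem $\Kc+\str{\X}-\strd+\cut$.

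The core is a permutation lemma: if an instance of $\strd$, which deletes an empty box $[\emptyset]$ occurring at some position $\pi$, sits immediately above an instance of a rule $\rho$ of $\Kc+\str{\X}+\cut$ other than $\strd$ itself, then the two can be swapped, at the cost of some admissible weakenings. The key observation is that $[\emptyset]$ is a purely structural object: it cannot occur inside a formula, and in particular it cannot lie inside the principal formula of $\rho$ nor inside the cut formula of a $\cut$. Hence the position $\pi$ is always disjoint from, or properly nested relative to, the active material of $\rho$, so the two actions operate on different parts of the sequent and commute. For the unary rules ($\vlor, \Box, {\sf k}, \strt, \strb, \strfour, \strfive$) this gives an immediate swap: one applies $\rho$ to the premise $\Lambda\cons{[\emptyset]}$ of $\strd$ and then deletes the still-present empty box by a single $\strd$ below.

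The only rules requiring care are those that branch or copy. For $\cut$ and $\vlan$ the instance of $\strd$ sits above just one premise, so the box at $\pi$ is present in that premise but absent from the sibling premise; I restore it in the sibling by an admissible weakening, apply $\rho$, and delete the box by one $\strd$ below. For $\ctr$, when $\pi$ lies inside the subsequent $\Delta$ being contracted, the two copies of $\Delta$ in the premise of $\strd$ differ by exactly one empty box; I weaken the deficient copy so that the two copies coincide, then contract and finish with a single $\strd$. In every case the result is $\rho$ (together with admissible $\wk$'s) above a block of $\strd$'s, and crucially no new instance of $\strd$ is created above $\rho$.

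With the permutation lemma in hand the theorem follows by induction on the height of the given proof. If the bottommost rule is $\strd$, it simply joins the $\strd$-block supplied by the induction hypothesis for its premise. Otherwise the bottommost rule $\rho$ is not $\strd$; by the induction hypothesis each premise is derived by a clean proof in $\Kc+\str{\X}-\strd+\cut$ followed by a block of $\strd$'s, and I use the permutation lemma repeatedly to move every $\strd$ in these blocks below $\rho$, absorbing the weakenings into the clean subproofs (which keeps them clean, since $\wk$ is admissible and is not $\strd$). This yields exactly a clean proof in $\Kc+\str{\X}-\strd+\cut$ sitting above a block of $\strd$'s with conclusion $\Gamma$, as required. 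The main obstacle is the bookkeeping around the branching rules $\cut$, $\vlan$ and the copying rule $\ctr$, where the empty box must be re-created on a second premise or copy; this is precisely where weakening admissibility (Lemma~\ref{l:wk}) is used, and it is also the reason the push is possible only downwards — the analogous upward permutation past $\cut$ fails, as the discussion preceding this lemma already illustrated.
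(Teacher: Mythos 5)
Your proof is correct and takes exactly the route the paper intends: the paper's own proof is the one-line remark ``by an easy permutation argument, making use of weakening admissibility,'' and your argument is precisely that permutation argument spelled out, with weakening admissibility (Lemma~\ref{l:wk}) used in just the places where it is genuinely needed, namely to restore the deleted empty box on the sibling premises of $\cut$ and $\vlan$ and on the second copy under $\ctr$. Your observation that $[\emptyset]$ can never occur inside a principal or cut formula, which makes all the unary permutations immediate, is the right key point, so nothing further is needed.
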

\begin{proof}
  By an easy permutation argument, making use of weakening
  admissibility.
\end{proof}

We also get contraction out of the way in order to eliminate the cut.
First, we decompose contraction into the $\fctr$-rule, which is
contraction on formulas, and the $\med$-rule, shown in
Figure~\ref{f:mrules}. We permute down the $\fctr$-rule. It does not
permute down below the rules $\cut, \Box$ and $\vlan$, so we
generalise these rules as in Figure~\ref{f:mrules}. We define a
contraction-free system $\Km$ as $\Km = \Kc - \ctr + \{ \med,
\mboxx, \mand\}$ and will show cut-elimination for that system. But
first we develop the machinery to show that cut-elimination for
$\Km$ leads to cut-elimination for $\Kc$ (with any $\str{\X}$).

\begin{figure}
\fbox{\parbox{\ftextwidth}{
  \centering
\[
  \vlinf{\mboxx}{}{\Gamma\cons{\Box A}}{\Gamma\cons{[A,\dots,A]}}
\qquad
  \vliinf{\mand}{}{\Gamma\cons{A\vlan B}}{\Gamma\cons{A,\dots,A}}
{\Gamma\cons{B,\dots,B]}}
\]
\[
\vliinf{\mcut}{}{\Gamma\cons{\emptyset}}{\Gamma\cons{A,\dots,A}}
{\Gamma\cons{\neg A, \dots, \neg A}}
\qquad
\vlinf{\med}{}{\Gamma\cons{[\Delta, \Sigma]}}{\Gamma\cons{[\Delta],[
    \Sigma]}}
\qquad
\vlinf{\fctr}{}{\Gamma\cons{A}}{\Gamma\cons{A,A}}
\]
}}
  \caption{Multi-rules, medial, and formula contraction}
  \label{f:mrules}
\end{figure}

\begin{lemma}[Decompose contraction]\label{l:med}
   The $\ctr$-rule is derivable for $\{\fctr,\med\}$.
\end{lemma}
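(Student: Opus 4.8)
The plan is to show that every instance $\Gamma\cons{\Delta}/\Gamma\cons{\Delta,\Delta}$ of $\ctr$ can be replaced by a derivation built solely from $\fctr$ and $\med$, proceeding by induction on the size of $\Delta$ (the number of nodes of its corresponding tree). I would write $\Delta$ in its canonical form $A_1,\dots,A_m,[\Delta_1],\dots,[\Delta_n]$, so that the premise of the contraction instance is
\[
\Gamma\cons{A_1,\dots,A_m,A_1,\dots,A_m,[\Delta_1],\dots,[\Delta_n],[\Delta_1],\dots,[\Delta_n]} \quad .
\]

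First I would eliminate the duplicated top-level formulas: $m$ successive applications of $\fctr$ contract each pair $A_i,A_i$ into a single $A_i$, yielding $\Gamma\cons{A_1,\dots,A_m,[\Delta_1],\dots,[\Delta_n],[\Delta_1],\dots,[\Delta_n]}$. Next, for each $i$ I would apply the medial rule $\med$ once to the pair $[\Delta_i],[\Delta_i]$, merging the two copies of the $i$-th boxed sequent into $[\Delta_i,\Delta_i]$; after $n$ such steps the sequent is $\Gamma\cons{A_1,\dots,A_m,[\Delta_1,\Delta_1],\dots,[\Delta_n,\Delta_n]}$. Since contexts are multiset-based, the two copies of each box need not be adjacent, so $\med$ applies without any bookkeeping.

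Finally, inside each box we now have two copies of $\Delta_i$. As $\mathit{size}(\Delta)=1+\sum_j \mathit{size}(\Delta_j)>\mathit{size}(\Delta_i)$ (the root of $\Delta$ contributes an extra node beyond the subtree of any $\Delta_i$), the induction hypothesis supplies a derivation of $\Gamma_i\cons{\Delta_i}$ from $\Gamma_i\cons{\Delta_i,\Delta_i}$ in $\{\fctr,\med\}$, where the context $\Gamma_i\context$ directs the hole into the $i$-th box. Appealing to the fact that $\fctr$ and $\med$ are context-closed (each is stated inside an arbitrary $\Gamma\context$), these sub-derivations can be transplanted into the surrounding context and composed for $i=1,\dots,n$, transforming $\Gamma\cons{A_1,\dots,A_m,[\Delta_1,\Delta_1],\dots,[\Delta_n,\Delta_n]}$ into $\Gamma\cons{A_1,\dots,A_m,[\Delta_1],\dots,[\Delta_n]}=\Gamma\cons{\Delta}$. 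The base case is $\mathit{size}(\Delta)=1$, where $\Delta$ consists only of top-level formulas (possibly none) and the claim already follows from the first step. The single leaf of the resulting chain is exactly $\Gamma\cons{\Delta,\Delta}$, as required by the definition of derivability.

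The essential point — and the reason the lemma needs $\med$ at all — is the boxed case: $\fctr$ can only contract formulas, so two syntactically identical boxes $[\Delta_i],[\Delta_i]$ cannot be merged by formula contraction. The medial rule is precisely what lets the ambient contraction \emph{descend} through a box, reducing contraction of $[\Delta_i]$ to contraction of its strictly smaller contents $\Delta_i$. I expect everything else to be routine; the only care required is the standard observation that a derivation using the context-closed rules $\fctr$ and $\med$ remains valid when embedded in any surrounding context.
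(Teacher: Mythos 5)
Your proposal is correct and takes essentially the same route as the paper: the paper's proof likewise writes $\Delta$ as $A_1,\dots,A_m,[\Delta_1],\dots,[\Delta_n]$, merges the duplicated boxes with $\med^n$, handles the resulting inner contractions $[\Delta_i,\Delta_i]$ by the induction hypothesis, and removes the duplicated formulas with $\fctr^m$. The only (immaterial) differences are that the paper inducts on the depth of the contracted sequent rather than its size, and applies the three groups of steps in a different order.
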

\begin{proof}
  By induction the depth of a sequent which is contracted, we show the
  inductive step:
\[
\vlderivation{
\vlin{\ctr}{}{\Gamma\cons{A_1,\dots,A_m,[\Delta_1],\dots,[\Delta_n]}}{
\vlhy{\Gamma\cons{A_1,\dots,A_m,[\Delta_1],\dots,[\Delta_n], A_1,\dots,A_m,[\Delta_1],\dots,[\Delta_n]}}}
}
\]
\[
\leadsto\qquad
\vlderivation{
\vlin{\fctr^m}{}{\Gamma\cons{A_1,\dots,A_m,[\Delta_1],\dots,[\Delta_n]}}{
\vlin{\ctr^n}{}{\Gamma\cons{A_1,\dots,A_m,A_1,\dots,A_m,[\Delta_1],\dots,[\Delta_n]}}{
\vlin{\med^n }{}{\Gamma\cons{A_1,\dots,A_m,A_1,\dots,A_m,[\Delta_1,\Delta_1],\dots,[\Delta_n,\Delta_n]}}{
\vlhy{\Gamma\cons{A_1,\dots,A_m,[\Delta_1],\dots,[\Delta_n], 
A_1,\dots,A_m,[\Delta_1],\dots,[\Delta_n]}}}}}
}
\]

\end{proof}

\begin{lemma}[Weakening and necessitation admissibility for $\Km$]
\label{l:wkkminus}
\nl
  Let $\X \subseteq \modax$.  The $\wk$-rule and the
  $\nec$-rule are depth- and cut-rank-preserving admissible for
  $\Km+\str{\X}$.
\end{lemma}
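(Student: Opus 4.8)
The plan is to reuse, almost verbatim, the two-induction argument behind Lemma~\ref{l:wk} (Weakening and necessitation admissibility), checking only that the rules which distinguish $\Km$ from $\Kc$ — namely the multi-rules $\med$, $\mboxx$, $\mand$ — and the provisoed structural rules in $\str{\X}$ create no new cases. As there, I would first show that a single topmost instance of $\wk$ (respectively $\nec$) can be pushed out of a proof without increasing its depth or cut rank; a second induction on the number of such instances then gives the full statement for both rules simultaneously.

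For $\wk$ I would proceed by induction on the depth of the proof of $\Gamma\cons{\emptyset}$, establishing that $\Gamma\cons{\Delta}$ is provable with no greater depth and the same cut rank. Writing $\rho$ for the lowermost rule, the key observation is that every rule of $\Km+\str{\X}+\cut_r$ acts inside an arbitrary context while $\wk$ merely inserts the fresh material $\Delta$ at a hole disjoint from the occurrence that $\rho$ is active on; hence $\wk$ permutes above $\rho$, and I apply the induction hypothesis to the premise(s) before reapplying $\rho$. This is uniform in $\rho$: the multi-rules $\med$, $\mboxx$ and $\mand$ behave exactly like a single medial, $\Box$ or $\vlan$ step, and $\cut_r$ is reused with the same cut formula so the cut rank does not grow. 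Axioms stay axioms once extra material is added, and the side condition $\depth{\Gamma\context\cons{\emptyset}}>0$ of $\strfive$ survives because weakening never deletes a box and so never lowers the depth of the relevant hole.

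For $\nec$ the cleanest route is to eliminate a topmost instance, with premise $\Gamma$ and conclusion $\str{\Gamma}$, by wrapping its entire subderivation one box deeper: each sequent is replaced by its boxed form. Since all rules of $\Km+\str{\X}+\cut_r$ are context-closed, every instance remains an instance at the greater depth, and every axiom remains an axiom; the depth provisos of the structural rules only become easier to meet when the depth is raised by one. The derivation depth and every cut rank are manifestly unchanged, which yields depth- and cut-rank-preserving admissibility of $\nec$.

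The only genuinely new bookkeeping — and the reason the lemma is restated rather than inherited from Lemma~\ref{l:wk} — lies in the permutations of $\wk$ past $\med$, $\mboxx$, $\mand$ and past the two-hole rule $\strfive$. I expect this to be the main, though still routine, obstacle; but in each instance the multi-rule shifts or duplicates material in the same way as its single-premise ancestor, and the disjointness of the weakening hole from the $\strfive$ holes keeps its depth proviso intact, so every permutation goes through as in the contraction-based system.
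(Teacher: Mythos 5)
Your proposal is correct and takes essentially the same route as the paper, which states Lemma~\ref{l:wkkminus} without proof precisely because it is the same ``routine induction'' (eliminate one topmost instance, then induct on the number of instances) given for Lemma~\ref{l:wk}. Your spelled-out details are the right ones: $\wk$ permutes above every context-closed rule including $\med$, $\mboxx$, $\mand$ and $\cut_r$ with unchanged cut formula, a topmost $\nec$ is removed by boxing its entire subderivation, and the $\strfive$ depth proviso survives both transformations since neither ever decreases the depth of a hole.
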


\begin{lemma}[From $\mcut$ to $\cut$]\label{l:mcut}
  The rule $\mcut_r$ is derivable for $\{\cut_r,\wk\}$.
\end{lemma}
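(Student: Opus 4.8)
The plan is to take an arbitrary instance of $\mcut_r$, with premises $\Gamma\cons{A,\dots,A}$ and $\Gamma\cons{\neg A,\dots,\neg A}$ — say with $k$ occurrences of $A$ in the first and $l$ occurrences of $\neg A$ in the second — and conclusion $\Gamma\cons{\emptyset}$, and to build a derivation of $\Gamma\cons{\emptyset}$ using only $\cut_r$ and $\wk$ whose every leaf is one of these two premises. The essential point is that the notion of derivability here lets a single premise appear as \emph{several} leaves of the constructed derivation; this sharing is exactly what plays the role of contraction, so that $\mcut$ can be simulated from $\cut$ and $\wk$ alone even though contraction is not available in the target system. If $k=0$ or $l=0$ the conclusion already coincides with one of the premises and there is nothing to do, so I assume $k,l\geq 1$.

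For $i,j\geq 0$ let $S_{i,j}$ denote the sequent obtained from the context $\Gamma\context$ by filling its hole with $i$ copies of $A$ and $j$ copies of $\neg A$; thus the two premises are $S_{k,0}$ and $S_{0,l}$ and the conclusion is $S_{0,0}$. There are two basic moves. First, by one instance of $\wk$ we may pass from the leaf $S_{k,0}$ down to any $S_{i,j}$ with $i\geq k$, and from the leaf $S_{0,l}$ down to any $S_{i,j}$ with $j\geq l$, weakening the missing occurrences into the hole. Second, choosing the unary context $\Gamma\cons{\context,\,A,\dots,A,\,\neg A,\dots,\neg A}$ with $i$ copies of $A$ and $j$ copies of $\neg A$ surrounding the hole, a single $\cut$ on the cut formula $A$ has conclusion $S_{i,j}$ and premises $S_{i+1,j}$ and $S_{i,j+1}$. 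Since its cut formula is $A$, this cut has the same rank $1+\depth{A}$ as the given $\mcut_r$, hence is an instance of $\cut_r$ (recall $\depth{\neg A}=\depth{A}$).

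It remains to show that $S_{0,0}$ is derivable. I prove, by downward induction on $i+j$ over the finite region $0\leq i<k$, $0\leq j<l$, that each $S_{i,j}$ in this region is derivable from the two premises in $\{\cut_r,\wk\}$. For the step at $(i,j)$ I apply the cut move above: its left premise $S_{i+1,j}$ is a weakening of the leaf $S_{k,0}$ when $i+1=k$, and otherwise lies in the region with strictly larger index sum and so is available by the induction hypothesis; symmetrically its right premise $S_{i,j+1}$ is a weakening of the leaf $S_{0,l}$ when $j+1=l$ and is otherwise supplied by the induction hypothesis. The base case is the largest sum $i+j=(k-1)+(l-1)$, namely $(i,j)=(k-1,l-1)$, where both premises $S_{k,l-1}$ and $S_{k-1,l}$ are weakenings of the leaves. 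Taking $(i,j)=(0,0)$ yields the desired derivation of $\Gamma\cons{\emptyset}$. The only subtlety — and the point worth stating explicitly — is that no contraction is needed: the repeated occurrences of the cut formula are never contracted but are instead peeled off one at a time by rank-$r$ cuts against weakened, reused copies of the two premises, organised by the induction on $i+j$.
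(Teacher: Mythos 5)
Your proof is correct and takes essentially the same route as the paper's: the paper shows derivability of $\mcut_r^{m,n}$ by induction on $m+n$, replacing the multicut by a single $\cut_r$ whose premises are produced by $\mcut_r^{m-1,n}$ and $\mcut_r^{m,n-1}$ applied to weakened copies of the original premises, and unfolding that recursion gives exactly your grid of cuts $S_{i,j}$ from $S_{i+1,j}$ and $S_{i,j+1}$ with weakenings of the reused leaves $S_{k,0}$ and $S_{0,l}$ at the boundary. Your explicit observations that premise-sharing in the definition of derivability replaces contraction, and that every cut retains rank $1+\depth{A}$ since $\depth{\neg A}=\depth{A}$, are the same facts the paper uses implicitly.
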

\begin{proof}
  We define the rule $\mcut_r^{m,n}$ with $m,n>0$ as 
\[
\vliinf{}
{}
{\Gamma\cons{\emptyset}}
{\Gamma\cons{\overbrace{A,\dots,A}^{m-\text{times}}}}
{\Gamma\cons{\overbrace{\neg A,\dots,\neg A}^{n-\text{times}}}}
\qquad ,
\]
and show that rule derivable for $\{\cut_r,\wk\}$ by induction on $m+n$.  The
case for $m=n=1$ is trivial, for $m>1$ and $n=1$ we replace
\[
\vliinf{\mcut_r^{m,1}}
{}
{\Gamma\cons{\emptyset}}
{\Gamma\cons{A,\dots,A}}
{\Gamma\cons{\neg A}}
\]
by
\[
\vlderivation{
\vliin{\cut_r}{}{\Gamma\cons{\emptyset}}
{\vliin{\mcut_r^{m-1,1}}
{}
{\Gamma\cons{A}}
{\vlhy{\Gamma\cons{A,\dots,A}}}
{\vlin{\wk}{}{\Gamma\cons{\neg A, A}}{\vlhy{\Gamma\cons{\neg A}}}}}
{\vlhy{\Gamma\cons{\neg A}}}
}
\]
and apply the induction hypothesis, and for $m,n>1$ we replace
\[
\vliinf{\mcut_r^{m,n}}
{}
{\Gamma\cons{\emptyset}}
{\Gamma\cons{A,\dots,A}}
{\Gamma\cons{\neg A,\dots,\neg A}}
\]
by
\[
\vlderivation{
\vliin{\cut_r}{}{\Gamma\cons{\emptyset}}
{\vliin{\mcut_r^{m-1,n}}
{}
{\Gamma\cons{A}}
{\vlhy{\Gamma\cons{A,\dots,A}}}
{\vlin{\wk}{}{\Gamma\cons{\neg A,\dots,\neg A, A}}{\vlhy{\Gamma\cons{\neg
A,\dots,\neg A}}}}}
{\vliin{\mcut_r^{m,n-1}}
{}
{\Gamma\cons{\neg A}}
{\vlin{\wk}{}{\Gamma\cons{A,\dots, A, \neg A}}{\vlhy{\Gamma\cons{A,\dots, A}}}}
{\vlhy{\Gamma\cons{\neg A,\dots,\neg A}}}}
}
\]
and apply the induction hypothesis twice. 

\end{proof}

\begin{lemma}[Push down contraction]\label{l:cdown}
  Let $\X \subseteq \modaxnod$. Given a proof as shown on the
  left, with $\rho$ a
  single-premise-rule from $\Km + \str{\X}+\wk$, there is a proof as
  shown on the right, with $|\D'|\leq |\D|$:
\[
\vlderivation{
 \vlin{\rho}{}{\Gamma}{
 \vlde{\D}{\fctr}{\Gamma_1}{
 \vltrnote{\PP}{\Gamma_2}{
\vlhy {}}{\vlhy {\qquad}}{\vlhy {}}{\Km+\str{\X}+\mcut+\wk}}}
 }\qquad \leadsto \qquad
\vlderivation{
 \vlde{\D'}{\fctr}{\Gamma}{
 \vltrnote{\PP'}{\Gamma_3}{
\vlhy {}}{\vlhy {\qquad}}{\vlhy {}}{\Km+\str{\X}+\mcut+\wk } }
 }
\qquad .
\]  

\end{lemma}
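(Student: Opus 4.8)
The plan is to push the single rule $\rho$ upwards past the block $\D$, one $\fctr$ at a time, until every formula-contraction has sunk below it. The core is a single-step permutation: a single-premise rule $\tau$ of $\Km+\str{\X}+\wk$ sitting immediately below one $\fctr$ (so that the conclusion of $\fctr$ is the premise of $\tau$) is replaced by a derivation with the same conclusion and the same top sequent, built from an $\fctr$-free chain of single-premise rules followed by \emph{at most one} $\fctr$ at the bottom. Since $\tau$ may fan out into several rules during this step, I actually prove the statement in the slightly stronger form where the bottom rule is a chain of single-premise rules, and argue by induction on $|\D|$. In the inductive step I peel off the lowest instance $\fctr_0$ of $\D$ and permute it downwards through the whole chain; iterating the single step keeps the descending contraction count at most one throughout, so this leaves an $\fctr$-free chain with at most one residual $\fctr$ at the very bottom. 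The remaining block $\D_1$, with $|\D_1|=|\D|-1$, is then pushed below the new chain by the induction hypothesis, and the residual $\fctr$ is appended to the resulting bottom block, giving $|\D'|\le(|\D|-1)+1=|\D|$.

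Everything thus reduces to the single-step analysis, which splits on whether the formula $A$ contracted by $\fctr$ lies in the context of $\tau$ or is one of its active formulas; I use here that $\fctr$ always contracts a \emph{formula}, never a boxed sequent. The context case is a trivial swap preserving the one $\fctr$. The decisive active case is $\tau=\mboxx$: here $\fctr$ contracts one of the copies of $A$ inside the box that $\mboxx$ necessitates, so that, read upwards, the premise is $\Gamma\cons{[A,\dots,A]}$ with an extra copy; applying $\mboxx$ at the higher multiplicity yields $\Gamma\cons{\Box A}$ outright and the $\fctr$ vanishes. This is exactly why $\Box$ was generalised to $\mboxx$: ordinary $\Box$ would be stuck on a box holding two copies of $A$, whereas $\mboxx$ absorbs the contraction and the count strictly drops.

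The other active cases keep a single $\fctr$ while replacing $\tau$ by a short chain, and here I invoke the cut-rank-preserving admissibility of $\wk$ for $\Km+\str{\X}$ from Lemma~\ref{l:wkkminus}. If $\tau=\vlor$ and $A$ is a disjunct, I weaken in a copy of the partner disjunct, apply $\vlor$ to both pairs, and contract the two resulting disjunctions with one $\fctr$; if $\tau={\sf k}$ and $A$ is the formula drawn out of the box, I apply ${\sf k}$ twice and contract the two copies of $\Diamond A$ with one $\fctr$. For $\med$ and for each structural modal rule $\strt,\strb,\strfour,\strfive,\strd$ the contracted formula simply travels along with the box that the rule rearranges, so the $\fctr$ follows $\tau$ unchanged; in all of these at most one $\fctr$ emerges below, which is what the descending-count bookkeeping needs.

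The main obstacle is precisely this bookkeeping securing $|\D'|\le|\D|$ once the interacting cases make a single rule fan out into a chain: the point to get right is that although $\tau$ multiplies, the single descending $\fctr$ never does, so each of the $|\D|$ original contractions contributes at most one contraction to $\D'$. Isolating the chain-version of the statement and inducting on $|\D|$ is what keeps this accounting clean, and the only genuinely creative steps are the $\mboxx$-absorption and the two weakening tricks for $\vlor$ and ${\sf k}$.
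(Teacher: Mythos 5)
Your proof is correct and follows essentially the same route as the paper: induction on $|\D|$ with a case analysis on the bottom rule, the $\mboxx$-absorption for contractions inside a necessitated box, the $\wk+\vlor^2+\fctr$ replacement for $\vlor$, the ${\sf k}^2+\fctr$ replacement for ${\sf k}$, and trivial commutation for $\med$ and the structural modal rules (which only move boxed sequents, while $\fctr$ only contracts formulas). Your explicit ``chain of single-premise rules'' strengthening is just a more formal packaging of what the paper handles by applying the induction hypothesis twice after a rule fans out, and your bookkeeping $|\D'|\leq(|\D|-1)+1=|\D|$ matches the paper's bound.
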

\begin{proof}
  By induction on the length of $\D$ and a case analysis on $\rho$.
  Most cases are trivial. We show the two interesting ones. For
  $\rho=\vlor$ and $\rho={\sf k}$ we apply the following transformations:
\[
\vlderivation{
\vlin{\vlor}{}{\Gamma\cons{A\vlor B}}{
\vlin{\fctr}{}{\Gamma\cons{A,B}}{
\vlhy{\Gamma\cons{A,A,B}}}}
}
\qquad \leadsto\qquad
\vlderivation{
\vlin{\fctr}{}{\Gamma\cons{A\vlor B}}{
\vlin{\vlor^2}{}{\Gamma\cons{A\vlor B ,A \vlor B}}{
\vlin{\wk}{}{\Gamma\cons{A, B ,A , B}}{
\vlhy{\Gamma\cons{A,A,B}}}}}
}
\]

\[
\vlderivation{
\vlin{{\sf k}}{}{\Gamma\cons{\Diamond A, [\Delta]}}{
\vlin{\fctr}{}{\Gamma\cons{[A, \Delta] }}{
\vlhy{\Gamma\cons{[A,A,\Delta]}}}}
}
\qquad \leadsto\qquad
\vlderivation{
\vlin{\fctr}{}{\Gamma\cons{\Diamond A, [\Delta]}}{
\vlin{{\sf k}^2}{}{\Gamma\cons{\Diamond A, \Diamond A, [\Delta]}}{
\vlhy{\Gamma\cons{[ A,A,\Delta]}}}}
}
\qquad ,
\]
and in each case we apply the induction hypothesis twice.

\end{proof}

\begin{proposition}[Push down contraction]\label{p:cdown}
Given a proof as shown on the left, there is a proof as shown on the right:
\[
\vlderivation{
 \vltrnote{\PP}{\Gamma}{
\vlhy {}}{\vlhy {\qquad}}{\vlhy {}}{\Kc + \str{\X} + \cut}}
\qquad\leadsto\qquad
\vlderivation{
 \vlde{}{\fctr}{\Gamma}{
 \vltrnote{\PP'}{\Gamma'}{
\vlhy {}}{\vlhy {\qquad}}{\vlhy {}}{\Km + \str{\X} + \cut}}}
\qquad .
\]  

\end{proposition}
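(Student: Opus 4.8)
The plan is to reduce the proposition to the permutation lemmas already proved, by first removing ordinary contraction and then driving the resulting formula-contractions to the root. Starting from a proof in $\Kc+\str{\X}+\cut$, I would apply Lemma~\ref{l:med} (Decompose contraction) to replace every instance of $\ctr$ by instances of $\fctr$ and $\med$. Since $\med$ already belongs to $\Km$, and since $\vlan$, $\Box$ and $\cut$ are just the one-copy special cases of $\mand$, $\mboxx$ and $\mcut$, this yields a proof in $\Km+\str{\X}+\mcut+\fctr$. If $\sf d\in\X$ the single-premise rule $\strd$ causes no trouble, since $\fctr$ permutes below it trivially; so I treat it on a par with the other modal structural rules and may think of $\X$ as ranging over $\modaxnod$ for the permutation steps.

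The core step is to bring this proof into the shape ``a block of $\fctr$-instances followed by a $\fctr$-free proof in $\Km+\str{\X}+\mcut+\wk$'', by induction on the proof with a case analysis on the lowermost rule $\rho$. If $\rho=\fctr$ there is nothing to do beyond the induction hypothesis on its premise. If $\rho$ is any other single-premise rule — that is $\vlor$, ${\sf k}$, $\mboxx$, $\med$ or a rule of $\str{\X}$ — the induction hypothesis puts its subproof into the desired form, and Lemma~\ref{l:cdown} (Push down contraction) permutes $\rho$ above the entire $\fctr$-block, absorbing any contraction of the $\mboxx$-active formula into its multiplicity. The genuinely delicate case, which I expect to be the main obstacle, is a two-premise rule $\rho\in\{\mand,\mcut\}$: the induction hypothesis gives both premises in the form $\fctr^*$ over a clean proof, and I must move both blocks below $\rho$. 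A $\fctr$ that contracts copies of the principal formula (copies of $A$ or $B$ for $\mand$, copies of the cut formula for $\mcut$) is absorbed by raising the corresponding multiplicity of $\rho$. A $\fctr$ that contracts a context formula does not absorb; to permute it below $\rho$ I must first restore the duplicated context formula in the other premise, and this is exactly where weakening admissibility for $\Km$ (Lemma~\ref{l:wkkminus}) is used to realign the two contexts. Removing the lowermost context-contraction of either branch in this way strictly decreases the number of $\fctr$-instances standing above $\rho$, so an inner induction on that number terminates and clears both blocks; the weakenings thereby introduced are themselves single-premise and are dealt with like the other single-premise rules.

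After this step the clean upper part lives in $\Km+\str{\X}+\mcut+\wk$. To land in $\Km+\str{\X}+\cut$ as required, I replace each $\mcut$ by $\cut$ together with weakenings using Lemma~\ref{l:mcut} (From $\mcut$ to $\cut$), and then remove every $\wk$ by Lemma~\ref{l:wkkminus}. The $\fctr$-block sitting at the root is left untouched throughout, so the final derivation has precisely the shape demanded by the proposition, with $\Gamma'$ the conclusion of the clean upper proof and the $\fctr$-block contracting $\Gamma'$ down to $\Gamma$.
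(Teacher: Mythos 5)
Your proposal is correct and follows essentially the same route as the paper: decompose $\ctr$ via Lemma~\ref{l:med}, push the resulting $\fctr$-instances to the root by induction using Lemma~\ref{l:cdown}, then restore $\cut$ from $\mcut$ via Lemma~\ref{l:mcut} and remove weakenings by admissibility. Your explicit treatment of the two-premise rules $\mand$ and $\mcut$ (absorbing principal contractions into multiplicities, realigning contexts by weakening) correctly fills in a case the paper's proof leaves implicit behind its appeal to Lemma~\ref{l:cdown}, which only covers single-premise rules.
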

\begin{proof}
  We first prove the claim that for each proof as shown on the left
  there is a proof as shown on the right:
\[
\vlderivation{
 \vltrnote{\PP_1}{\Gamma}{
\vlhy {}}{\vlhy {\qquad}}{\vlhy {}}{\Km + \str{\X} + \cut + \fctr}}
\qquad\leadsto\qquad
\vlderivation{
 \vlde{}{\fctr}{\Gamma}{
 \vltrnote{\PP_1'}{\Gamma'}{
\vlhy {}}{\vlhy {\qquad}}{\vlhy {}}{\Km + \str{\X} + \mcut + \wk}}}
\qquad ,
\]
The proof of the claim is by induction on the depth of $\PP_1$, using
Lemma~\ref{l:cdown} (Push down contraction).  The proof of our
proposition is as follows: by using Lemma~\ref{l:med} (Decompose
contraction) we obtain a proof in $\Km+\str{\X}+\cut+\fctr$, we
apply our claim, then we use Lemma~\ref{l:mcut} (From $\mcut$ to
$\cut$), to replace $\mcut$, starting with the top-most instances. Finally
we remove weakening using weakening admissibility.  
\end{proof}

It turns out that during the proof of cut-elimination for some system
$\Kc+\str{\X}$ some rules may be introduced that are not in $\str{\X}$
but that logically follow from $\X$. These additional rule instances will
then be removed from the proof after cut-elimination.

\begin{definition}[$\X^+$]
  Given some $\X \subseteq \modax$ we define
  \[
  \X^+ =
  \begin{cases}
    \X \cup \{{\sf 4}\} & \text{if }\{{\sf t,5}\} \subseteq \X \text{
      or
    }\{{\sf b,5}\} \subseteq \X\\
    \X \cup \{{\sf 5}\} &  \text{if }\{{\sf b,4}\} \subseteq \X \\
    \X & \text{otherwise} \quad ,\\
  \end{cases}
  \]
  and likewise for $\dia{\X}$ and $\str{\X}$.
\end{definition}

This definition matches the semantical notion of \emph{45-closed}
that we defined earlier: 

\begin{fact}[$\X^+$ is 45-closure of $\X$]
  If $\X\subseteq\modax$ then $\X^+$ is the least set which contains
  $\X$ and is 45-closed.
\end{fact}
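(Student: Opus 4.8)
The plan is to verify three things: that $\X \subseteq \X^+$, that $\X^+$ is 45-closed, and that $\X^+$ is contained in every 45-closed superset of $\X$. The first is immediate, since in each clause of the definition $\X^+$ is either $\X$ or $\X$ together with one further axiom name. Everything else rests on three elementary implications between frame conditions, which I would record first: (a) every reflexive euclidean relation is transitive, (b) every symmetric euclidean relation is transitive, and (c) every symmetric transitive relation is euclidean. Each follows in two lines from Figure~\ref{fig:framecond}; for (c), if $s \rel t$ and $s \rel u$ then symmetry gives $t \rel s$, and transitivity applied to $t \rel s$ and $s \rel u$ gives $t \rel u$. These are exactly the implications underlying the two clauses: $\{{\sf t,5}\}$ and $\{{\sf b,5}\}$ each force $\sf 4$ by (a) resp.\ (b), while $\{{\sf b,4}\}$ forces $\sf 5$ by (c).

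For 45-closedness of $\X^+$ I would split on the clauses. Whenever $\X^+ \neq \X$, the triggering condition already places $\sf 5$ (first clause, via $\{{\sf t,5}\}$ or $\{{\sf b,5}\}$) or $\sf 4$ (second clause, via $\{{\sf b,4}\}$) into $\X$, so $\X^+$ contains both $\sf 4$ and $\sf 5$ and is 45-closed for trivial reasons. The content is the ``otherwise'' case $\X^+=\X$, where I must show $\X$ forces neither $\sf 4$ nor $\sf 5$. I would establish the contrapositive characterisations: the only minimal condition-sets forcing transitivity are $\{{\sf t,5}\}$ and $\{{\sf b,5}\}$, and the only minimal set forcing euclideanness is $\{{\sf b,4}\}$. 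If transitivity were forced but neither of the first two were a subset of $\X$ (with $\sf 4 \notin \X$), then either $\sf 5 \notin \X$, so $\X\subseteq\{{\sf d,t,b}\}$ and a reflexive symmetric serial chain $0\sim 1\sim 2$ is an $\X$-frame that is not transitive; or $\sf 5\in\X$ but ${\sf t},{\sf b}\notin\X$, so $\X\subseteq\{{\sf d},{\sf 5}\}$ and a serial euclidean relation such as $\{(0,1),(1,1),(1,2),(2,1),(2,2)\}$ is an $\X$-frame that is not transitive. Symmetric counterexamples (the same chain, and the two-point preorder $\{(0,0),(0,1),(1,1)\}$ for the case $\sf 4\in\X$, $\sf b\notin\X$) show euclideanness is likewise not forced. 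In the otherwise case none of the three trigger sets is a subset of $\X$, so neither $\sf 4$ nor $\sf 5$ is forced and $\X$ is already 45-closed.

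For leastness, let $\Y$ be any 45-closed set with $\X \subseteq \Y$; I would show $\X^+ \subseteq \Y$, for which it suffices to place the possibly-added axiom into $\Y$. If $\{{\sf t,5}\}\subseteq\X$ or $\{{\sf b,5}\}\subseteq\X$, then every $\Y$-frame is reflexive (or symmetric) and euclidean, hence transitive by (a) or (b); since $\Y$ is 45-closed this forces ${\sf 4}\in\Y$, so $\X^+=\X\cup\{{\sf 4}\}\subseteq\Y$. Likewise if $\{{\sf b,4}\}\subseteq\X$ then every $\Y$-frame is euclidean by (c), so ${\sf 5}\in\Y$ and $\X^+\subseteq\Y$. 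In the remaining case $\X^+=\X\subseteq\Y$ trivially.

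I expect the main obstacle to be the bookkeeping in the otherwise case: one must be certain that $\{{\sf t,5}\}$, $\{{\sf b,5}\}$ and $\{{\sf b,4}\}$ really are all the minimal condition-sets forcing $\sf 4$ or $\sf 5$, which is precisely what the small concrete countermodels above pin down. I would also note that the case split in the definition of $\X^+$ is consistent, since the first two clauses can both apply only when $\sf 4$ and $\sf 5$ already lie in $\X$, in which case every clause returns $\X$.
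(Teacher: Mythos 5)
Your proof is correct, but there is nothing in the paper to compare it against: the paper states this Fact without any proof at all, immediately after the definition of $\X^+$, treating it as evident. Your argument supplies exactly the two ingredients the paper leaves implicit. First, the frame-condition implications (a) reflexive-euclidean implies transitive, (b) symmetric-euclidean implies transitive, (c) symmetric-transitive implies euclidean; these are the same implications that silently underlie Fact~\ref{f:45needed}~(Incompleteness), where the paper asserts that $\Box p \imp \Box\Box p$ holds in all $\{{\sf t,5}\}$-frames and $\Diamond p \imp \Box\Diamond p$ in all $\{{\sf b,4}\}$-frames, and they settle both the 45-closedness of $\X^+$ in the nontrivial clauses and the leastness direction. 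Second, and this is the only part with real mathematical content, the ``otherwise'' case needs witnesses that no \emph{other} combination of the axioms forces $\sf 4$ or $\sf 5$, and your three finite countermodels do this correctly: the reflexive-symmetric chain on $\{0,1,2\}$ kills both forced transitivity and forced euclideanness for $\X\subseteq\{{\sf d,t,b}\}$, the relation $\{(0,1),(1,1),(1,2),(2,1),(2,2)\}$ is serial and euclidean but not transitive for $\X\subseteq\{{\sf d,5}\}$, and the preorder $\{(0,0),(0,1),(1,1)\}$ is serial, reflexive and transitive but not euclidean for $\X\subseteq\{{\sf d,t,4}\}$. A nice touch, which your write-up exploits without comment, is that all three countermodels are serial, so $\sf d$ can be disregarded uniformly rather than by a separate closure argument as in Lemma~\ref{l:d}. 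Your closing observation that the two nontrivial clauses of the case distinction can overlap only when $\{{\sf 4,5}\}\subseteq\X$, so that the definition of $\X^+$ is unambiguous, addresses a genuine (if minor) well-definedness point that the paper also passes over.
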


The following lemma ensures that, after we have eliminated cut, we can
indeed remove the additional rules in $\X^+-\X$.

\begin{lemma}[From $\X^+$ to $\X$]\hspace{0mm}\\\label{l:xplustox}
(i)  The $\strfour$-rule is derivable for $\{\strt,\strfive,\nec\}$.\\
(ii)  The $\strfour$-rule is derivable for
$\{\strb,\strfive,\nec\}$.\\
(iii) The $\strfive$-rule is derivable for $\{\strb,\strfour,\wk\}$.\\
\end{lemma}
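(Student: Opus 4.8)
The three claims are the proof-theoretic counterparts of the semantic facts that on frames $t$ and $5$ imply $4$, that $b$ and $5$ imply $4$, and that $b$ and $4$ imply $5$. Accordingly I would prove each part by exhibiting, for an arbitrary instance of the rule to be derived, an explicit derivation built from the rules listed, read from the leaf (top) down to the conclusion (bottom). In each case the euclidean rule $\strfive$ (for (i) and (ii)) or the transitive rule $\strfour$ (for (iii)) performs the actual relocation of the boxed sequent, while the remaining rules either supply the box-depth demanded by the proviso of $\strfive$ or reflect a box so that the net effect is the desired one.

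For (i) the rule $\strt$ lets us undo a wrapping box, so a single uniform derivation works: wrap everything by $\nec$, move $[\Delta]$ into $[\Sigma]$ by $\strfive$ (whose source is now at positive depth), and strip the wrapping by $\strt$,
\[
\Gamma\cons{[\Delta],[\Sigma]} \overset{\nec}{\leadsto} [\,\Gamma\cons{[\Delta],[\Sigma]}\,] \overset{\strfive}{\leadsto} [\,\Gamma\cons{[[\Delta],\Sigma]}\,] \overset{\strt}{\leadsto} \Gamma\cons{[[\Delta],\Sigma]} \quad .
\]
For (ii) the rule $\strt$ is unavailable. When the active node already lies at positive depth a single $\strfive$ moving $[\Delta]$ into $[\Sigma]$ suffices. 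When it is the root I would instead move $[\Sigma]$ \emph{into} $[\Delta]$ and then reflect by $\strb$, choosing the direction so that the single $\strb$ simultaneously removes the $\nec$-box,
\[
[\Delta],[\Sigma] \overset{\nec}{\leadsto} [[\Delta],[\Sigma]] \overset{\strfive}{\leadsto} [[\Delta,[\Sigma]]] \overset{\strb}{\leadsto} [[\Delta],\Sigma] \quad .
\]

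For (iii) I would derive $\strfive$ from $\{\strb,\strfour,\wk\}$ by induction on the distance, in the corresponding tree, between the node carrying $[\Delta]$ and its destination, the two routed through their least common ancestor. The induction rests on two atomic moves. A step towards the root is realised by $\wk,\strfour,\strb$: insert an empty box beside $[\Delta]$ by $\wk$, nest $[\Delta]$ into it by $\strfour$ to obtain a doubled box $[[\Delta]]$, and pull it out by $\strb$, which strips exactly one box layer and leaves $[\Delta]$ one level higher. A step away from the root (into a child) is a single $\strfour$. Composing these along the path realises the euclidean relocation, with $\wk$ also discarding the empty boxes left behind. Since $\strfive$ itself does not occur inside this derivation its proviso imposes no internal constraint, and the hypothesis that the source lies at positive depth guarantees that every upward step has a parent node to move into.

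The main obstacle is the root (depth-zero) case of (i) and (ii): there the source of the would-be $\strfive$ sits at depth zero, so $\strfive$ is inapplicable and one must introduce and then discharge an auxiliary box. For (i) reflexivity $\strt$ discharges it outright, but for (ii) only symmetry is available, and making a single $\strb$ both reflect the box and delete the wrapping forces the opposite direction of the euclidean move (moving $[\Sigma]$ into $[\Delta]$ rather than the reverse); getting this bookkeeping right, together with checking the case analysis on the relative position of the two holes for the routing in (iii), is where the real care lies. Everything else is routine, and these derivations are exactly what is needed to remove the rules of $\str{\X^+}\setminus\str{\X}$ once the cut has been eliminated.
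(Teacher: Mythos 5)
Parts (i) and (iii) of your proposal are correct and essentially the paper's own argument: for (i) the paper uses exactly your $\nec$--$\strfive$--$\strt$ derivation in the depth-zero case (and simply observes that $\strfour$ is an instance of $\strfive$ otherwise, so your uniform version is if anything slightly smoother), and for (iii) the paper gives precisely your decomposition into a child-step (one $\strfour$) and a parent-step ($\wk$ inserts an empty box, $\strfour$ nests $[\Delta]$ into it, $\strb$ pulls it out one level), composed along the tree. One small inaccuracy there: read top-down, $\wk$ only \emph{adds} material, and the $\wk$--$\strfour$--$\strb$ step leaves no empty boxes behind anyway, so nothing needs ``discarding''; this is harmless.

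Part (ii), however, has a genuine gap at exactly the point you flagged as delicate. Your root-case derivation silently assumes the root contains \emph{nothing} besides $[\Delta]$ and $[\Sigma]$, but in general the conclusion of $\strfour$ at the root is $\Lambda, [[\Delta],\Sigma]$ with an arbitrary side context $\Lambda$ of formulas and boxed sequents. Since $\nec$ wraps the entire sequent, $\Lambda$ is trapped inside the wrapper: after your $\strfive$-step you stand at $[\Lambda,[\Delta,[\Sigma]]]$, and the only $\strb$-instances available yield $[\Lambda,[\Delta],\Sigma]$ or $[\Lambda],\Delta,[\Sigma]$ (or act in irrelevant subtrees) --- none of which is $\Lambda,[[\Delta],\Sigma]$ unless $\Lambda=\emptyset$. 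No single $\strb$ can simultaneously remove the wrapping box and restore $\Lambda$ to the root, and since only $\{\strb,\strfive,\nec\}$ are allowed here, there is no weakening or reflexivity to clean up with. The missing idea is the paper's \emph{double} wrapping: apply $\nec$ twice, perform the euclidean move, and then use \emph{two} $\strb$-steps, the first reflecting $[\Sigma]$ out past $\Lambda$ and the second unwrapping while depositing $\Lambda$ back at the root:
\[
\Lambda,[\Delta],[\Sigma]
\;\overset{\nec^2}{\leadsto}\;
[[\Lambda,[\Delta],[\Sigma]]]
\;\overset{\strfive}{\leadsto}\;
[[\Lambda,[\Sigma]],[\Delta]]
\;\overset{\strb}{\leadsto}\;
[[\Lambda],[\Delta],\Sigma]
\;\overset{\strb}{\leadsto}\;
\Lambda,[[\Delta],\Sigma]
\quad .
\]
Note also that the proviso of $\strfive$ is met here because the source position sits at depth two inside the double wrapper. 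Your three-step scheme is thus correct only in the special case $\Lambda=\emptyset$; the general case needs this extra bookkeeping.
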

\begin{proof}
  For (i) notice that the $\strfour$-rule is a special case of the
  $\strfive$-rule unless $\Gamma\context$ has depth zero, and thus
  $\Gamma\context=\Lambda, \context$. In that case we have:
\[
\vlderivation{
\vlin{\strfour}{}{\Lambda, [[\Delta], \Sigma]}{\vlhy{\Lambda, [\Delta], [\Sigma]}}
}
\qquad\leadsto\qquad
\vlderivation{
\vlin{\strt}{}{\Lambda, [[\Delta], \Sigma]}{
\vlin{\strfive}{}{[\Lambda, [[\Delta], \Sigma]]}{
\vlin{\nec}{}{[\Lambda,[\Delta], [ \Sigma]]}{
\vlhy{\Lambda, [\Delta], [\Sigma]}}}}
}
\qquad .
\]
For (ii) we again have to consider only the case where
$\Gamma\context=\Lambda, \context$:
\[
\vlderivation{
\vlin{\strfour}{}{\Lambda, [[\Delta], \Sigma]}{\vlhy{\Lambda, [\Delta], [\Sigma]}}
}
\qquad\leadsto\qquad
\vlderivation{
\vlin{\strb}{}{\Lambda, [[\Delta], \Sigma]}{
\vlin{\strb}{}{ [[\Lambda], [\Delta], \Sigma]}{
\vlin{\strfive}{}{ [[\Lambda, [\Sigma]], [\Delta]]}{
\vlin{\nec^2}{}{ [[\Lambda, [\Delta], [\Sigma]]]}{
\vlhy{ \Lambda, [\Delta], [\Sigma]}}}}}
}
\]
For (iii) notice that a sequent has a tree structure and that, seen
upwards, the $\strfive$-rule allows to move a boxed sequent $[\Delta]$
to any position in that tree, but not to the root. To move a boxed
sequent to any position in the tree it is enough if we are both able
to move it a) from a given node the parent of this node and b) to move
it from a given node to any child of that node. Point a) is just the
$\strfour$-rule and point b) is as follows:
\[
\vlderivation{
\vlin{\strb}{}{\Gamma\cons{[\Lambda],[\Delta]}}{
\vlin{\strfour}{}{\Gamma\cons{[\Lambda,[[\Delta]]]}}{
\vlin{\wk}{}{\Gamma\cons{[\Lambda,[\emptyset],[\Delta]]}}{
\vlhy{\Gamma\cons{[\Lambda,[\Delta]]}}}}}
}\qquad .
\]
\end{proof}

We are now preparing for the reduction lemma, which we prove as usual
by pushing the cut rule upwards. In general we cannot push the cut
above a modal structural rule, so we push it upwards together with the
cut.  The interesting case occurs once this conglomerate of cut and
modal structural rules needs to be pushed above the $\diak$-rule. Then
we have to permute the $\diak$-rule down through the modal structural
rules to meet the cut. In the course of this permutation, the
$\diak$-rule might turn into another $\Diamond$-rule. The following
two lemmas take care of these permutations.

\begin{lemma}[Push down \diafour, \diafive]\label{l:45down}
  Let $\X \subseteq \modaxnod$ and $\rho \in (\dia{\X} \cap \{ \diafour,\diafive\})$.  Given a derivation as shown on the left, where
  $\rho$ applies to $\Diamond A$, there is a derivation as shown on
  the right, where all rules in $\D_3$ apply to the instance
of $\Diamond A$ shown, and where $|\D_2|\leq|\D_1|$:

\[
\vlderivation{
\vlde{\mathcal{D}_1 }{\str{\X}+\med}{\Delta\cons{\Diamond A}}{
\vlin{\rho }{}{\Gamma_1\cons{\Diamond A}}{
\vlhy{\Gamma\cons{\Diamond A}}}}}
\qquad\qquad\leadsto\qquad\qquad
\vlderivation{
\vlde{\mathcal{D}_3 }{(\dia{\X}^+ \cap \{\diafour,\diafive\})}{\Delta\cons{\Diamond A}}{
\vlde{\mathcal{D}_2 }{\str{\X}+\med}{\Gamma_2\cons{\Diamond A}}{
\vlhy{\Gamma\cons{\Diamond A}}}}}
\qquad .
\]
\end{lemma}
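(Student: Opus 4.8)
The plan is to factor the given left-hand derivation into \textit{first reshape the tree, then relocate the single diamond}. The key observation is that, because $\X\subseteq\modaxnod$ excludes $\sf d$, none of the rules in $\str{\X}$ (nor $\med$) is sensitive to which formulas occupy a node or to whether a node is empty: each of $\strt,\strb,\strfour,\strfive,\med$ acts purely on the box-structure (the tree shape) of the sequent, carrying along whatever formulas sit in the nodes it moves, merges, or removes. Dually, $\diafour$ and $\diafive$ do not alter the tree shape at all; they only relocate the single occurrence of $\Diamond A$. In particular the premise $\Gamma\cons{\Diamond A}$ and the conclusion $\Gamma_1\cons{\Diamond A}$ of $\rho$ have identical tree shapes. (This is exactly where the absence of $\strd$, whose applicability depends on a box being empty, is used.)

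First I would replay $\D_1$ verbatim on $\Gamma\cons{\Diamond A}$ in place of $\Gamma_1\cons{\Diamond A}$. Since every rule instance of $\D_1$ is determined by the box-structure alone, and this structure is common to the two sequents, each instance remains applicable in the same order, yielding a derivation $\D_2$ in $\str{\X}+\med$ with $|\D_2|=|\D_1|$ and some conclusion $\Gamma_2\cons{\Diamond A}$. By construction $\Gamma_2$ and $\Delta$ are literally the same sequent except for the position of the distinguished occurrence of $\Diamond A$: writing $T$ for the tree transformation realised by $\D_1$ (with the diamond following its host node whenever that node is moved, merged, or deleted), the occurrence sits at $T(w)$ in $\Delta$ and at $T(v)$ in $\Gamma_2$, where $v$ and $w$ are the nodes holding $\Diamond A$ in the premise and conclusion of $\rho$. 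It then remains to build $\D_3$ moving the diamond from $T(v)$ to $T(w)$, and all its rules act on this single occurrence, as required.

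The heart of the argument is that $T(v)$ and $T(w)$ are always connected by $\diafour$/$\diafive$ steps whose rules lie in $\dia{\X}^+$. Since $\rho\in\dia{\X}$ we have ${\sf 4}\in\X$ when $\rho=\diafour$ and ${\sf 5}\in\X$ when $\rho=\diafive$, so $v$ and $w$ begin in the ancestor/relocation relationship that precisely the available rule can bridge. I would verify the general case by an induction on $\D_1$ that tracks the relative position of the two node-images under each structural rule: $\strt$ can make the two images coincide (leaving no diamond move to do), while $\strfour$ and $\strb$ can turn an ancestor relation into a lateral one. It is this last phenomenon that forces the complementary diamond rule — a configuration built with $\strb$ and $\diafour$ must be repaired by $\diafive$ — which is available exactly because $\X^+$ is 45-closed (for instance $\{{\sf b,4}\}^+\ni{\sf 5}$), matching the definition of $\X^+$. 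This bookkeeping of how the distance between $T(v)$ and $T(w)$ evolves, together with the matching against the closure $\X^+$, is the step I expect to be the main obstacle; once it is in place, $\D_3$ is the resulting sequence of $\diafour$/$\diafive$ moves, and $|\D_2|\le|\D_1|$ holds since $\D_2$ reuses only the structural instances of $\D_1$.
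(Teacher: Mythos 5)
Your proposal is correct in outline but takes a genuinely different route from the paper. The paper proves the lemma by induction on the length of $\D_1$, permuting the single instance of $\rho$ downward past one rule at a time and exhibiting each local permutation explicitly: $\diafour$ over $\strfour$ becomes two instances of $\diafour$; $\diafour$ over $\strb$ or over $\strfive$ becomes $\diafive$; $\diafive$ over $\strt$ or $\strb$, in the cases where its depth proviso bites, becomes $\diafour^*$; and $\strt$ can absorb $\diafour$ outright. You instead replay the entire structural derivation once, using the observation that the rules of $\str{\X}+\med$ consult only the box-structure of the sequent (correctly noting that this is where the exclusion of ${\sf d}$, whose rule tests emptiness of a box, is used), and then bridge the two positions of $\Diamond A$ in a single terminal block $\D_3$, justified by an invariant analysis of the relative positions of the two node-images. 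Your decomposition is more conceptual: it makes the role of the 45-closure visible as a single reachability statement (top-down, $\diafour$ reaches exactly the ancestors, $\diafive$ reaches exactly the non-root positions), where the paper distributes the same information over the permutation cases. Both give $|\D_2|\le|\D_1|$, yours with equality.

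Two corrections to the bookkeeping sketch, which is, as you say, the load-bearing step. First, $\strfour$ does \emph{not} turn an ancestor relation into a lateral one: read top-down it moves a box from a node $x$ to a child $y$ of $x$, so the ancestor sets of the displaced nodes only grow, and ancestor-or-equal pairs are preserved (likewise under $\strt$ and $\med$, which merge nodes). This matters, because for $\X=\{{\sf 4}\}$ or $\X=\{{\sf t,4}\}$ we have $\X^+=\X$, so $\D_3$ may use only $\diafour$, which moves $\Diamond A$ only to ancestors; if your claim about $\strfour$ were true, the lemma would fail in exactly this fragment. The rules that genuinely create laterality are $\strb$ and $\strfive$, and whenever they are present, ${\sf 5}\in\X^+$ holds (directly, or via $\{{\sf b,4}\}^+\ni{\sf 5}$), so $\diafive$ is available. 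Second, make the dual phenomenon explicit: when $\rho=\diafive$, the target image can be dragged to the root by $\strt$ or $\strb$, where $\diafive$ is barred by its depth proviso, and you then need $\diafour^*$ to climb to the root --- available because $\{{\sf t,5}\}^+$ and $\{{\sf b,5}\}^+$ contain ${\sf 4}$; and when $\X=\{{\sf 5}\}$, only $\strfive$ and $\med$ occur in $\D_1$, and these never merge a non-root node into the root, so the target stays reachable by $\diafive$. With these invariants in place --- ancestor-or-equal of the images when $\X\subseteq\{{\sf t,4}\}$, non-root target when $\X=\{{\sf 5}\}$, no constraint otherwise since $\{{\sf 4,5}\}\subseteq\X^+$ --- your plan does go through.
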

\begin{proof}
  The proof is by induction on the length of $\D_1$. We permute the
  instance of $\rho$ down and apply the induction hypothesis, possibly
  several times. We only show the non-trivial permutations. 

 \[
 \vlderivation {
 \vlin{\med}{}{ \Gamma\cons{\Diamond A, [\Delta,\Sigma]} } {
 \vlin{\diafour}{ }{\Gamma\cons{\Diamond A, [\Delta],[\Sigma]}}{
 \vlhy {\Gamma\cons{[\Diamond A, \Delta],[\Sigma]}}}}}
 \qquad\leadsto\qquad
 \vlderivation {
 \vlin{\diafour}{ }{\Gamma\cons{\Diamond A, [\Delta, \Sigma]}}{
 \vlin{\med}{ }{\Gamma\cons{[\Diamond A, \Delta, \Sigma]}}{
 \vlhy {\Gamma\cons{[\Diamond A, \Delta],[\Sigma]}}}}}
 \]

\[
\vlderivation {
\vlin{\strt}{}{ \Gamma\cons{\Diamond A, \Delta} } {
\vlin{\diafour}{ }{\Gamma\cons{\Diamond A, [\Delta]}}{
\vlhy {\Gamma\cons{[\Diamond A, \Delta]}}}}}
\qquad\leadsto\qquad
\vlderivation {
\vlin{\strt}{ }{\Gamma\cons{\Diamond A, \Delta}}{
\vlhy {\Gamma\cons{[\Diamond A, \Delta]}}}}
\]

\[
\vlderivation {
\vlin{\strb}{}{\Gamma\cons{[\Diamond A, \Delta],\Sigma} } {
\vlin{\diafour}{ }{\Gamma\cons{[\Diamond A, \Delta,[\Sigma]]}}{
\vlhy {\Gamma\cons{[\Delta,[\Diamond A, \Sigma]]}}}}}
\qquad\leadsto\qquad
\vlderivation {
\vlin{\diafive}{ }{\Gamma\cons{[\Diamond A, \Delta],\Sigma}}{
\vlin{\strb}{ }{\Gamma\cons{\Diamond A, [\Delta], \Sigma}}{
\vlhy {\Gamma\cons{[\Delta,[\Diamond A, \Sigma]]}}}}}
\]

\[
\vlderivation {
\vlin{\strfour}{}{\Gamma\cons{\Diamond A, [[\Delta], \Sigma]}   } {
\vlin{\diafour}{ }{\Gamma\cons{\Diamond A,[\Delta],[\Sigma] }   }{
\vlhy {\Gamma\cons{[\Diamond A,\Delta],[\Sigma] }}}}}
\qquad\leadsto\qquad
\vlderivation {
\vlin{\diafour}{ }{\Gamma\cons{\Diamond A, [[\Delta], \Sigma]    }}{
\vlin{\diafour}{ }{ \Gamma\cons{[\Diamond A,[\Delta], \Sigma]    } }{
\vlin{\strfour}{ }{ \Gamma\cons{[[\Diamond A,\Delta], \Sigma]}    }{
\vlhy {\Gamma\cons{[\Diamond A,\Delta],[\Sigma]}    }}}}}
\]

\[
\vlderivation { \vlin{\strfive}{}{\Gamma\cons{\Diamond
      A}\cons{[\Delta]}} { \vlin{\diafour}{ }{\Gamma\cons{\Diamond
        A,[\Delta]}\cons{\emptyset} }{ \vlhy {\Gamma\cons{[\Diamond
          A,\Delta]}\cons{\emptyset} }}}} \qquad\leadsto\qquad
\vlderivation { \vlin{\diafive}{ }{\Gamma\cons{\Diamond
      A}\cons{[\Delta]} }{ \vlin{\strfive}{
    }{\Gamma\cons{\emptyset}\cons{[\Diamond A,\Delta]} }{ \vlhy
      {\Gamma\cons{[\Diamond A,\Delta]}\cons{\emptyset} }}}}
\]

Permuting down the $\diafive$-rule is trivial except over the
$\strt$-rule and the $\strb$-rule, and this is also trivial unless
the restriction on the depth of the context in the $\diafive$-rule
becomes relevant:

\[
\vlderivation{
\vlin{\strt}{}{\Gamma_1, \Diamond A, \Delta, \Gamma_2\cons{\emptyset}}{
\vlin{\diafive}{}{\Gamma_1, [\Diamond A, \Delta], \Gamma_2\cons{\emptyset}}{
\vlhy{\Gamma_1, [\Delta], \Gamma_2\cons{\Diamond A}}}}
}
\qquad \leadsto\qquad
\vlderivation{
\vlin{\diafour^*}{}{\Gamma_1, \Diamond A, \Delta, \Gamma_2\cons{\emptyset}}{
\vlin{\strt}{}{\Gamma_1,  \Delta, \Gamma_2\cons{\Diamond A}}{
\vlhy{\Gamma_1, [\Delta], \Gamma_2\cons{\Diamond A}}}}
}
\]

\[
\vlderivation{
\vlin{\strb}{}{[\Delta], \Sigma, \Diamond A, \Gamma\cons{\emptyset}}{
\vlin{\diafive}{}{[\Delta, [\Sigma, \Diamond A]], \Gamma\cons{\emptyset}}{
\vlhy{[\Delta, [\Sigma]], \Gamma\cons{\Diamond A}}}}
}
\qquad \leadsto\qquad
\vlderivation{
\vlin{\diafour^*}{}{[\Delta], \Sigma, \Diamond A, \Gamma\cons{\emptyset} }{
\vlin{\strb}{}{[\Delta], \Sigma, \Gamma\cons{\Diamond A}}{
\vlhy{[\Delta, [\Sigma]], \Gamma\cons{\Diamond A}}}}
}
\]

\end{proof}

\begin{lemma}[Push down \diak, \diat, \diab]\label{l:ktbdown}
  Let $\X \subseteq \modaxnod$ and let $\rho={\diak}$ or $\rho
  \in (\dia{\X} \cap \{\diat ,\diab\})$.  Given a derivation as shown on
  the left, where $\rho$ applies to $\Diamond A$, there is a
  derivation as shown on the right, with $\sigma ={\diak}$ or $\sigma
  \in (\dia{\X} \cap \{\diat, \diab\})$, where all rules in $\D_3$ apply
  to the instance of $\Diamond A$ shown, and where $|\D_2|\leq|\D_1|$:

\[
\vlderivation{
\vlde{\mathcal{D}_1 }{\str{\X}+\med}{\Delta\cons{\Diamond A}}{
\vlin{\rho }{}{\Gamma_1\cons{\Diamond A}}{
\vlhy{\Gamma\cons{A}}}}}
\qquad\qquad\leadsto\qquad\qquad
\vlderivation{
\vlde{\mathcal{D}_3 }{(\dia{\X}^+ \cap \{\diafour,\diafive\})}{\Delta\cons{\Diamond A}}{
\vlin{\sigma}{}{\Gamma_2\cons{\Diamond A}}{
\vlde{\mathcal{D}_2 }{\str{\X}+\med}{\Gamma_3\cons{A}}{
\vlhy{\Gamma\cons{A}}}}}}
\qquad .
\]
\end{lemma}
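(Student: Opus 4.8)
The plan is to prove the lemma by induction on the length $|\D_1|$ of the structural block above $\rho$, in close analogy with Lemma~\ref{l:45down}~(Push down \diafour, \diafive), the difference being that $\rho$ now \emph{introduces} the occurrence of $\Diamond A$ (turning a formula $A$ into $\Diamond A$) rather than merely relocating an existing one. If $\D_1$ is empty there is nothing to do: take $\D_2$ and $\D_3$ empty and $\sigma=\rho$. Otherwise let $\tau$ be the topmost rule of $\D_1$, so that $\rho$ is immediately followed by $\tau$ and the remaining block $\D_1'$ has length $|\D_1|-1$. The core of the proof is a case analysis on $\tau\in\str{\X}\cup\{\med\}$ and on how $\tau$ interacts with the occurrence of $\Diamond A$ created by $\rho$, producing in each case a permutation that moves $\rho$ below $\tau$.

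The guiding observation is that $\Diamond A$ comes into being only at $\rho$; above $\rho$ there is merely the formula $A$ being shuffled by structural rules, and the single tracked formula is never duplicated by rules in $\str{\X}\cup\{\med\}$. Hence no $\diafour$- or $\diafive$-rule acting on the tracked formula can ever be produced \emph{above} the introducing rule, and that introducing step stays a single rule. In the cases where $\tau$ genuinely interacts with $\Diamond A$, the introducing rule is forced to change its type: for instance, pushing $\diak$ below a $\strt$ that removes the box holding $A$ turns it into $\diat$, while pushing it below a $\strb$ that relocates that box turns it into $\diab$. The retyped rule $\sigma$ is always available, because the structural rule triggering the retyping forces the corresponding axiom into $\X$ (an occurrence of $\strt$ forces ${\sf t}\in\X$, so $\diat\in\dia{\X}$; an occurrence of $\strb$ forces ${\sf b}\in\X$, so $\diab\in\dia{\X}$; and $\diak$ is available unconditionally). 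Alongside the possible retyping, such a permutation may spawn a block $\mathcal{E}$ of $\diafour$- and $\diafive$-rules, all acting on the shown occurrence of $\Diamond A$ and all lying in $\dia{\X}^+$, exactly as in Lemma~\ref{l:45down}.

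After one permutation the derivation has the shape: at most one structural rule, then the (possibly retyped) introducing rule $\rho'$, then the new $\diafour/\diafive$-block $\mathcal{E}$, and finally $\D_1'$. To restore the required shape I would push $\mathcal{E}$ down past $\D_1'$ by applying Lemma~\ref{l:45down}~(Push down \diafour, \diafive) repeatedly, one rule of $\mathcal{E}$ at a time; this yields a structural block $\D_1''$ with $|\D_1''|\leq|\D_1'|<|\D_1|$ followed by a $\diafour/\diafive$-block $\mathcal{E}'$ still acting on the tracked $\Diamond A$. Now $\rho'$ is immediately followed by the strictly shorter structural block $\D_1''$, so the induction hypothesis applies and delivers $\D_2'$, a single rule $\sigma$, and a $\diafour/\diafive$-block $\D_3'$. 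Reassembling, the leading structural rule together with $\D_2'$ forms $\D_2$, the rule $\sigma$ is the desired middle rule, and $\D_3'$ followed by $\mathcal{E}'$ forms $\D_3$; the bound $|\D_2|\leq|\D_1|$ follows since each permutation contributes at most one structural rule while the induction hypothesis and Lemma~\ref{l:45down} preserve their respective length bounds.

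The main obstacle is the size of the case analysis together with the bookkeeping of where the occurrence of $\Diamond A$ travels. For each $\rho\in\{\diak,\diat,\diab\}$ and each $\tau\in\{\strd,\strt,\strb,\strfour,\strfive,\med\}$ one must locate the node carrying $\Diamond A$, the occurrence of $A$ in the premise together with its surrounding box structure, and the boxes moved by $\tau$, and then exhibit a permutation whose generated $\diafour/\diafive$-rules act only on the tracked $\Diamond A$, so that Lemma~\ref{l:45down} is applicable. As there, the vast majority of cases are either trivial (disjoint occurrences) or mirror the permutations already displayed for Lemma~\ref{l:45down}; only those in which $\tau$ removes or relocates the box containing $A$ are genuinely new, and these are precisely the ones that retype the introducing rule.
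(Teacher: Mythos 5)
Your proposal is correct and matches the paper's proof: the paper likewise proceeds by induction on $|\D_1|$, permutes $\rho$ down one structural rule at a time (retyping $\diak/\diat/\diab$ exactly when $\strt$ or $\strb$ interacts with the box holding the tracked formula, which also guarantees $\sigma$'s availability), spawns $\diafour/\diafive$-rules in $\dia{\X}^+$ below the retyped rule, and discharges them via Lemma~\ref{l:45down} together with the induction hypothesis. Your write-up merely makes explicit the bookkeeping (push the spawned block down first, then apply the induction hypothesis, then reassemble with the stated length bound) that the paper compresses into ``apply Lemma~\ref{l:45down} and/or the induction hypothesis,'' while omitting the explicit permutation diagrams the paper displays.
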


\begin{proof}
  The proof is by induction on the length of $\D_1$. We permute the
  instance of $\rho$ down and apply Lemma~\ref{l:45down} (Push down
  {\diafour, \diafive}) and/or the induction hypothesis. We only show the
  non-trivial permutations.

\[
\vlderivation {
\vlin{\strt}{}{\Gamma\cons{\Diamond A, \Delta} } {
\vlin{{\diak}}{ }{\Gamma\cons{\Diamond A, [\Delta]}}{
\vlhy {\Gamma\cons{[A, \Delta]}}}}}
\qquad\leadsto\qquad
\vlderivation {
\vlin{\diat}{}{\Gamma\cons{\Diamond A, \Delta}}{
\vlin{\strt}{ }{\Gamma\cons{A, \Delta}}{
\vlhy {\Gamma\cons{[A, \Delta]}}}}}
\]

\[
\vlderivation {
\vlin{\strb}{}{\Gamma\cons{[\Diamond A, \Delta],\Sigma} } {
\vlin{{\diak}}{ }{\Gamma\cons{[\Diamond A, \Delta,[\Sigma]]}}{
\vlhy {\Gamma\cons{[\Delta,[A, \Sigma]]}}}}}
\qquad\leadsto\qquad
\vlderivation {
\vlin{\diab}{ }{\Gamma\cons{[\Diamond A, \Delta],\Sigma}}{
\vlin{\strb}{ }{\Gamma\cons{A, [\Delta], \Sigma}}{
\vlhy {\Gamma\cons{[\Delta,[A, \Sigma]]}}}}}
\]

\[
\vlderivation {
\vlin{\strfour}{}{\Gamma\cons{\Diamond A, [[\Delta], \Sigma] }  } {
\vlin{{\diak}}{ }{\Gamma\cons{\Diamond A,[\Delta],[\Sigma]  }  }{
\vlhy {\Gamma\cons{[A,\Delta],[\Sigma]} }}}}
\qquad\leadsto\qquad
\vlderivation {
\vlin{\diafour}{ }{\Gamma\cons{\Diamond A, [[\Delta], \Sigma]   } }{
\vlin{{\diak}}{ }{\Gamma\cons{ [\Diamond A,[\Delta], \Sigma]   }  }{
\vlin{\strfour}{ }{\Gamma\cons{ [[ A,\Delta], \Sigma]   } }{
\vlhy {\Gamma\cons{[A,\Delta],[\Sigma]  }  }}}}}
\]

\[
\vlderivation { \vlin{\strfive}{}{\Gamma\cons{\Diamond
      A}\cons{[\Delta]}} { 
\vlin{{\diak}}{ }{\Gamma\cons{\Diamond
        A,[\Delta]}\cons{\emptyset} }{ 
\vlhy {\Gamma\cons{[A,\Delta]}\cons{\emptyset} }}}} 
\qquad\leadsto\qquad
\vlderivation { 
\vlin{\diafive}{ }{\Gamma\cons{\Diamond A}\cons{[\Delta]} }{ 
\vlin{{\diak}}{  }{\Gamma\cons{\emptyset}\cons{\Diamond A, [\Delta]} }{
\vlin{\strfive}{  }{\Gamma\cons{\emptyset}\cons{[A,\Delta]} }{ \vlhy
      {\Gamma\cons{[A,\Delta]}\cons{\emptyset} }}}}}
\]

The cases for $\rho=\diat$ are trivial.

\[
\vlderivation{
\vlin{\strt}{}{\Gamma\cons{\Delta, \Diamond A}}{
\vlin{\diab}{}{\Gamma\cons{[\Delta, \Diamond A]}}{
\vlhy{\Gamma\cons{[\Delta], A}}}}
}
\qquad\leadsto\qquad
\vlderivation{
\vlin{\diat}{}{\Gamma\cons{\Delta, \Diamond A}}{
\vlin{\strt}{}{\Gamma\cons{\Delta, A}}{
\vlhy{\Gamma\cons{[\Delta, A]}}}}
}
\]

\[
\vlderivation{
\vlin{\strb}{}{\Gamma\cons{[\Sigma],\Delta,\Diamond A}}{
\vlin{\diab}{}{\Gamma\cons{[\Sigma,[\Delta,\Diamond A]]}}{
\vlhy{\Gamma\cons{[\Sigma,[\Delta], A]}}}}
}
\qquad\leadsto\qquad
\vlderivation{
\vlin{{\diak}}{}{\Gamma\cons{[\Sigma],\Delta,\Diamond A}}{
\vlin{\strb}{}{\Gamma\cons{[\Sigma,A],\Delta}}{
\vlhy{\Gamma\cons{[\Sigma,[\Delta], A]}}}}
}
\]

\[
\vlderivation{
\vlin{\strfour}{}{ \Gamma\cons{[[\Delta, \Diamond A], \Sigma]}}{
\vlin{\diab}{}{\Gamma\cons{[\Delta, \Diamond A],[\Sigma]}}{
\vlhy{\Gamma\cons{[\Delta], A,[\Sigma]}}}}
}
\qquad\leadsto\qquad
\vlderivation{
\vlin{\diafive}{}{\Gamma\cons{[[\Delta, \Diamond A], \Sigma]}}{
\vlin{\diab}{}{\Gamma\cons{[[\Delta], \Diamond A, \Sigma]}}{
\vlin{\strfour}{}{\Gamma\cons{[[\Delta], \Sigma], A } }{
\vlhy{\Gamma\cons{[\Delta], A,[\Sigma] }}}}}
}
\]

For permuting down over the $\strfive$-rule, in the only non-trivial
case, notice that the context has to be of the form shown because of
the restriction of context depth in the $\strfive$-rule:

\[
\vlderivation{
\vlin{\strfive}{}{ \Gamma\cons{[\Delta, \Diamond A]}\cons{[\Sigma,\emptyset]}}{
\vlin{\diab}{}{  \Gamma\cons{\emptyset}\cons{[\Sigma,[\Delta, \Diamond A]]} }{
\vlhy{\Gamma\cons{\emptyset}\cons{[\Sigma,[\Delta], A]}}}}
}
\qquad\leadsto\qquad
 \vlderivation{
 \vlin{\diafive}{}{\Gamma\cons{[\Delta, \Diamond A]}\cons{[\Sigma,\emptyset] }}{
 \vlin{{\diak}}{}{\Gamma\cons{[\Delta]}\cons{\Diamond A, [\Sigma,\emptyset]}   }{
 \vlin{\strfive}{}{\Gamma\cons{[\Delta]}\cons{[A,\Sigma] }   }{
 \vlhy{\Gamma\cons{\emptyset}\cons{[\Sigma,[\Delta], A]} }}}}
 }
\]

\end{proof}

Once a $\Diamond$-rule has been permuted down through the structural
modal rules to meet the cut, we want to build a new derivation
with a lower cut rank. This is not possible when this $\Diamond$-rule
is either $\diafour$ or $\diafive$ since these rules do not decrease
the size of the main formula, when seen upwards. The solution is to
``reflect'' them at the cut and incorporate them in the structural
rules that are pushed up together with the cut.

\begin{lemma}[Reflect \diafour, \diafive]\label{l:45reflect}
  Let $\X \subseteq \{{\sf 4,5}\}$.  Given a derivation as shown on the
  left, where all rules in $\D$ apply to the instance of $\Diamond A$
  shown, then for each sequent $\Delta$ there is a derivation as shown on
  the right:
\[
\vlderivation{
{\vlde{\D}{\dia{\X}}{\Gamma\cons{\emptyset}\cons{\Diamond{A}}}
{\vlhy{\Gamma\cons{\Diamond A}\cons{\emptyset}}}}
}
\qquad\qquad\leadsto\qquad\qquad
\vlderivation{
{\vlde{\D'}{\str{\X}}
{\Gamma\cons{[\Delta]}\cons{\emptyset}}
{\vlhy{\Gamma\cons{\emptyset}\cons{[\Delta]}}}}
}
\quad .
\]
\end{lemma}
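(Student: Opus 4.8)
The plan is to build $\D'$ by running $\D$ backwards and translating each inference one-to-one: every $\diafour$ becomes a $\strfour$ and every $\diafive$ becomes a $\strfive$, while the travelling formula $\Diamond A$ is everywhere replaced by the travelling box $[\Delta]$. Concretely, whenever $\Diamond A$ sits at a node $w$ of the tree somewhere in $\D$, the reflected sequent in $\D'$ carries $[\Delta]$ as a boxed child of that same node $w$. Since the $\diafour$- and $\diafive$-rules neither create nor destroy nodes but only relocate $\Diamond A$ among the (fixed) nodes of the context $\Gamma\context\context$, this assignment is well defined all along the derivation, and the parameter $\Delta$ never interacts with any rule, so it simply rides along.

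First I would record the two local correspondences. An instance of $\diafour$, which read downwards takes $\Gamma\cons{[\Theta,\Diamond A]}$ to $\Gamma\cons{\Diamond A,[\Theta]}$ (moving $\Diamond A$ from a child up to its parent), reflects to the instance of $\strfour$ taking $\Gamma\cons{[\Delta],[\Theta]}$ to $\Gamma\cons{[[\Delta],\Theta]}$ (moving $[\Delta]$ from that parent down into the child box). The directions are opposite, which is exactly what the global reversal requires. An instance of $\diafive$ taking $\Gamma\cons{\emptyset}\cons{\Diamond A}$ to $\Gamma\cons{\Diamond A}\cons{\emptyset}$ (with the first hole at depth $>0$) reflects to the instance of $\strfive$ taking $\Gamma\cons{[\Delta]}\cons{\emptyset}$ to $\Gamma\cons{\emptyset}\cons{[\Delta]}$. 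The point to verify here is the side condition: the node that $\diafive$ requires to be non-root is the one \emph{into} which $\Diamond A$ is moved, and this is precisely the node \emph{out of} which $\strfive$ moves $[\Delta]$, so the proviso $\depth{\Gamma\context\cons{\emptyset}}>0$ is literally the same in both rules. Because $\sf 4\in\X$ supplies $\strfour$ and $\sf 5\in\X$ supplies $\strfive$, the reflected derivation stays inside $\str{\X}$.

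I would then carry out the induction on the length of $\D$. If $\D$ is empty, its premise and conclusion coincide, which forces the two holes to be interchangeable in the context; the same interchange then holds with $[\Delta]$ in place of $\Diamond A$, so $\D'$ is empty as well. For the inductive step I would peel off the bottom rule $\rho$ of $\D$, which moves $\Diamond A$ from some node $p$ to the second hole; the remaining derivation $\D_0$ moves $\Diamond A$ from the first hole to $p$, and, regarding $p$ and the first hole as the two holes of a binary context (with the now-empty second hole absorbed into that context), it satisfies the hypotheses of the lemma. Applying the induction hypothesis to $\D_0$ yields a $\str{\X}$-derivation $\D_0'$ moving $[\Delta]$ from $p$ back to the first hole, i.e.\ with conclusion $\Gamma\cons{[\Delta]}\cons{\emptyset}$. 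Stacking the reflection of $\rho$ on top of $\D_0'$ as the new topmost rule — a single structural step moving $[\Delta]$ from the second hole to $p$ — produces $\D'$ with premise $\Gamma\cons{\emptyset}\cons{[\Delta]}$ and conclusion $\Gamma\cons{[\Delta]}\cons{\emptyset}$, as required.

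I expect the only delicate points to be bookkeeping: keeping the reversal of order straight (the last move of $\D$ is the first move of $\D'$, so that the box ends up traversing the path of $\Diamond A$ in the opposite direction), and confirming that the depth proviso of $\diafive$ transfers to $\strfive$ under the swap of source and target. Everything else is a mechanical rule-by-rule translation requiring no admissibility results, only the matching of contexts already verified in the two local cases above.
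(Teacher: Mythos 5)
Your proposal is correct and takes essentially the same route as the paper: the paper's proof is exactly an induction on the length of $\D$, with the rule-by-rule reflection of $\diafour$ into $\strfour$ and of $\diafive$ into $\strfive$ (and the reversal of the travelling formula's path) left implicit. Your check that the depth proviso of $\diafive$ transfers verbatim to $\strfive$ under the swap of source and target is precisely the detail the paper's one-line proof omits.
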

\begin{proof}
  By induction on the length of $\D$.
\end{proof}

We are now ready to prove the reduction lemma.

\begin{lemma}[Reduction Lemma]
  Let $\X \subseteq \modaxnod$. Given a proof as shown on the
  left, with $\PP_1$ and $\PP_2$ in $\Km + \str{\X} + \cut_r$,
  then there is a proof $\PP$ in $\Km + \str{\X}^+ + \cut_r$ as
  shown on the right:
\[
\vlderivation{
\vliin{\cut_{r+1}}{}{\Gamma\cons{\emptyset}}
{\vlde{}{\str{\X}+\med}{\Gamma\cons{A}}{\vltr{\PP_1}{\Gamma_1\cons{A} }{
\vlhy {}}{\vlhy {\qquad}}{\vlhy {}}}}
{\vlde{}{\str{\X}+\med}{\Gamma\cons{\neg A}}{\vltr{\PP_2}{\Gamma_2\cons{\neg A} }{
\vlhy {}}{\vlhy {\qquad}}{\vlhy {}}}}
}
\qquad \leadsto\qquad
\vlderivation{
\vltr{\PP}{\Gamma\cons{\emptyset} }{
\vlhy {}}{\vlhy {\qquad}}{\vlhy {}}}
\qquad .
\]  
\end{lemma}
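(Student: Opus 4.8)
The plan is to prove the Reduction Lemma by induction on $|\PP_1|+|\PP_2|$, reducing the single cut of rank $r+1$ to cuts of rank at most $r$. I regard the $\cut_{r+1}$ together with the two $\str{\X}+\med$ derivations sitting below its premises as a single conglomerate that is pushed upward, and I case-split on the lowermost rules of $\PP_1$ and $\PP_2$, namely the rules producing the cut formula $A$ and its dual $\neg A$. If such a rule is itself structural or medial, I absorb it into the adjacent block and recurse. If it is passive with respect to the cut formula (acting in a node disjoint from the one carrying $A$), I permute the conglomerate above it, which strictly decreases $|\PP_1|+|\PP_2|$, and appeal to the induction hypothesis. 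If one premise is an axiom whose principal atom is the cut formula, the conclusion $\Gamma\cons{\emptyset}$ is obtained directly from the other premise together with the admissible structural rules (Lemma~\ref{l:wkkminus}).

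The principal propositional case is routine. When $A=B\vlan C$ the left is produced by $\mand$ (with several copies of $B$ and of $C$) and the right by $\vlor$ (yielding $\neg B,\neg C$); I replace the rank-$(r+1)$ cut by successive cuts on $B$ and on $C$, each of rank at most $r$, supplying the missing formulas by weakening. Because $\mand$ produces multiple copies, these are really multicuts $\mcut_r$, which I turn into ordinary $\cut_r$ by Lemma~\ref{l:mcut} (From $\mcut$ to $\cut$).

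The genuinely difficult case is the modal one, $A=\Box B$ and $\neg A=\Diamond\neg B$. Here the left is produced by $\mboxx$, opening a box $[B,\dots,B]$, while on the right the formula $\Diamond\neg B$ is consumed by a $\diak$-rule that is in general separated from the cut by modal structural rules. First I apply Lemma~\ref{l:ktbdown} (Push down \diak, \diat, \diab) to permute that diamond rule downward through the structural block until it meets the cut; this may turn it into a $\diat$- or $\diab$-rule and, crucially, may spawn $\diafour$- and $\diafive$-instances above it, which is exactly why the conclusion lives in $\str{\X}^+$ rather than $\str{\X}$. I then use Lemma~\ref{l:45reflect} (Reflect \diafour, \diafive) to reflect those $\diafour/\diafive$-instances across the cut, converting them into modal structural rules on the $\Box B$-side. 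At this point the box exposed by the diamond rule is aligned with the box opened by $\mboxx$, so I can replace the cut on $\Box B/\Diamond\neg B$ by a cut on $B/\neg B$ performed inside that common box; since $\depth{B}<\depth{\Box B}$, this cut has rank at most $r$. The several copies of $B$ from $\mboxx$ make this a multicut, again reduced to $\cut_r$ by Lemma~\ref{l:mcut}, and the residual weakening, necessitation, and structural rules are removed using Lemma~\ref{l:wkkminus} and the push-down machinery.

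I expect the main obstacle to be precisely this modal principal case: correctly matching the box opened by the diamond rule---after it has been pushed down and possibly transformed into $\diat$ or $\diab$---with the box produced by $\mboxx$, while keeping track of the multiplicity of $B$'s, the single $\neg B$, and the extra $\diafour/\diafive$-instances that force the passage to $\str{\X}^+$. It is to tame exactly this bookkeeping that the auxiliary Lemmas~\ref{l:ktbdown}, \ref{l:45reflect} and~\ref{l:mcut} were established beforehand, so that the reduction step itself becomes a matter of assembling them in the right order.
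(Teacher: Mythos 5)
Your proposal takes essentially the same route as the paper's proof: the same induction on $|\PP_1|+|\PP_2|$ with the cut-plus-structural-block conglomerate pushed upward, the same principal modal case handled by pushing the $\diak$-rule down through the $\str{\X}+\med$ block (Lemma~\ref{l:ktbdown}), reflecting the spawned $\diafour/\diafive$-instances across the cut (Lemma~\ref{l:45reflect}), and replacing the rank-$(r+1)$ cut by a lower-rank multicut eliminated via Lemma~\ref{l:mcut} --- including the correct diagnosis of why the conclusion lands in $\str{\X}^+$. The only details the paper makes explicit that you leave implicit are permuting the $\mboxx$-instance down through its own structural block before the reflection step, and the final case split on $\sigma\in\{{\sf k},\diat,\diab\}$, which pairs the low-rank multicut with $\med$, $\strt$ or $\strb$ respectively.
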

\begin{proof}
  As usual, by an induction on $|\PP_1|+|\PP_2|$ and a case analysis
  on the lowermost rules in $\PP_1$ and $\PP_2$. We only show the most
  complicated case, in which we cut a box introduced by the
  $\mboxx$-rule against a diamond introduced by $\sf k$-rule. All
  other cases are much simpler. We have
\[
\vlderivation{ \vliin{\cut_{r+1}}{}{\Gamma\cons{\emptyset}}
  {\vlde{}{\str{\X}+\med}{\Gamma\cons{\Box B}}
    {\vlin{\mboxx}{}{\Gamma_1\cons{\Box
          B}}{\vlhy{\Gamma_1\cons{[B,\dots,B]}}}}}
  {\vlde{}{\str{\X}+\med}{\Gamma\cons{\Diamond \neg B}} {\vlin{{\sf
          k}}{}{\Gamma_2'\cons{\Diamond \neg B,
          [\Delta]}}{\vlhy{\Gamma_2'\cons{[\neg B, \Delta]}
        }}}} }
\]
In the left subderivation we permute down the instance of $\mboxx$ and
on the right subderivation we apply Lemma~\ref{l:ktbdown} (Push {\sf
  ktb} down) in order to obtain the following derivation, where
$\Gamma\context=\Gamma\context\cons{\emptyset}$. Note that the second
hole in the binary context marks the position to which the $\Diamond
\neg B$ is moved:
\[
\vlderivation{ \vliin{\cut_{r+1}}{}{\Gamma\cons{\emptyset}\cons{\emptyset}}
  {\vlin{\mboxx}{}{\Gamma\cons{\Box B}\cons{\emptyset}}
    {\vlde{}{\str{\X}+\med}{\Gamma\cons{[B,\dots,B]}\cons{\emptyset}}{\vlhy{\Gamma_1\cons{[B,\dots,B]}}}}}
  {\vlde{}{(\X^+ \cap \{{\sf 4,5}\})^\Diamond}{\Gamma\cons{\Diamond \neg
B}\cons{\emptyset}} 
{\vlin{\sigma}{}{\Gamma\cons{\emptyset}\cons{\Diamond \neg B}}{\vlde{}{\str{\X}+\med}
{\Gamma_3\cons{\neg B}}{\vlhy{\Gamma_2'\cons{[\neg B, \Delta]}
        }}}} }}
\]
By using Lemma~\ref{l:45reflect} (Reflect {\sf 45}) we obtain a
derivation $\D$ and build:
\[
\vlderivation{ 
\vliin{\cut_{r+1}}{}{\Gamma\cons{\emptyset}\cons{\emptyset}}
  {\vlin{\mboxx}{}{\Gamma\cons{\emptyset}\cons{\Box B}}{
    \vlde{\D}{(\X^+ \cap \{{\sf 4,5}\})^{\cdot}}{\Gamma\cons{\emptyset}\cons{[B,\dots,B]}}{
    \vlde{}{\str{\X}+\med}{\Gamma\cons{[B,\dots,B]}\cons{\emptyset}}{
\vlhy{\Gamma_1\cons{[B,\dots,B]}}}}}}
{\vlin{\sigma}{}{\Gamma\cons{\emptyset}\cons{\Diamond \neg B}}
{\vlde{}{\str{\X}+\med}
{\Gamma_3\cons{\neg B}}{\vlhy{\Gamma_2'\cons{[\neg B, \Delta]}
        }}}} 
}
\qquad .
\]
We now consider the three possible cases for $\sigma \in \{{\sf
  k},\diat,\diab\}$ and apply one of the following transformations
to the relevant part of the proof:
\[
\vlderivation{ 
\vliin{\cut_{r+1}}{}{\Sigma\cons{[\Delta]}}
  {\vlin{\mboxx}{}{\Sigma\cons{\Box B, [\Delta]}}{
\vlhy{\Sigma\cons{[B,\dots,B],[\Delta]}}}}
{\vlin{{\sf k}}{}{\Sigma\cons{\Diamond \neg B,[\Delta]}}
{\vlhy{\Sigma \cons{[\neg B, \Delta]}
        }}} 
}
\quad\leadsto\quad
\vlderivation{ 
\vliin{\mcut_{r}}{}{\Sigma\cons{[\Delta]}}
  {\vlin{\med}{}{\Sigma\cons{[B,\dots,B,\Delta]}}{
\vlhy{\Sigma\cons{[B,\dots,B],[\Delta]}}}}
{\vlhy{\Sigma \cons{[\neg B, \Delta]}
        }}
}
\]

\[
\vlderivation{ 
\vliin{\cut_{r+1}}{}{\Sigma\cons{\emptyset}}
  {\vlin{\mboxx}{}{\Sigma\cons{\Box B}}{
\vlhy{\Sigma\cons{[B,\dots,B]}}}}
{\vlin{\diat}{}{\Sigma\cons{\Diamond \neg B}}
{\vlhy{\Sigma \cons{\neg B}
        }}} 
}
\quad\leadsto\quad
\vlderivation{ 
\vliin{\mcut_{r}}{}{\Sigma\cons{\emptyset}}
  {\vlin{\strt}{}{\Sigma\cons{B,\dots,B}}{
\vlhy{\Sigma\cons{[B,\dots,B]}}}}
{\vlhy{\Sigma \cons{\neg B}
        }} 
}
\] 

\[
\vlderivation{ 
\vliin{\cut_{r+1}}{}{\Sigma\cons{[\Delta]}}
  {\vlin{\mboxx}{}{\Sigma\cons{[\Box B, \Delta]}}{
\vlhy{\Sigma\cons{[[B,\dots,B],\Delta]}}}}
{\vlin{\diab}{}{\Sigma\cons{[\Diamond \neg B,\Delta]}}
{\vlhy{\Sigma \cons{\neg B, [\Delta]}
        }}} 
}
\quad\leadsto\quad
\vlderivation{ 
\vliin{\mcut_{r}}{}{\Sigma\cons{[\Delta]}}
  {\vlin{\strb}{}{\Sigma\cons{B,\dots,B, \Delta]}}{
\vlhy{\Sigma\cons{[[B,\dots,B],\Delta]}}}}
{\vlhy{\Sigma \cons{\neg B, [\Delta]}
        }}
}
\]
We then eliminate $\mcut$ by using Lemma~\ref{l:mcut} (From $\mcut$ to
$\cut$) and weakening admissibility.  
\end{proof}

\begin{proposition}[Cut-elimination for $\Km$]\label{p:cekminus}
  Let $\X \subseteq \modaxnod$. If $\Km + \str{\X} + \cut
\vdash \Gamma$ then $\Km + \str{\X}^+\vdash \Gamma$. 
\end{proposition}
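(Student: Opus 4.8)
The plan is to run the standard Gentzen cut-elimination argument, now over the contraction-free system $\Km$ with the modal structural rules $\str{\X}$ in place of the $\diac{\X}$-rules, exactly parallel to the proof of Theorem~\ref{t:ce} (Cut-Elimination). Since $\X\subseteq\modaxnod$ excludes seriality, there is no $\strd$-rule to accommodate here, so the only genuine content is already packaged in the Reduction Lemma and its supporting permutation lemmas, and what remains is essentially bookkeeping.

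First I would pass to the $45$-closure and keep the ambient rule set fixed there. By the Fact that $\X^+$ is the $45$-closure of $\X$, the set $\X^+$ is itself $45$-closed, so $(\X^+)^+=\X^+$; and since $\str{\X}\subseteq\str{\X}^+$, any proof of $\Gamma$ in $\Km+\str{\X}+\cut$ is also a proof in $\Km+\str{\X}^+ +\cut$. It therefore suffices to eliminate cut over the single system $\Km+\str{\X}^+$. This choice is what makes the induction close: instantiating the Reduction Lemma with $\X^+$ in place of $\X$ yields proofs in $\Km+\str{\X}^+ +\cut_r$, using $(\X^+)^+=\X^+$, so the $\diafour$- and $\diafive$-rules that the Reduction Lemma manufactures internally never push us outside the system in which we are inducting.

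The elimination is then a routine double induction. The outer induction is on the cut rank of the proof: a proof of cut rank $0$ is already cut-free, and otherwise I reduce a proof of cut rank $r+1$ to one of cut rank $r$. This reduction step is an inner induction on the depth of the proof with a case analysis on the lowermost rule $\rho$. If $\rho$ is not a $\cut_{r+1}$, that is, $\rho$ is a non-cut rule of $\Km+\str{\X}^+$ or a $\cut_r$, then I apply the inner induction hypothesis to its premises to lower their cut rank to $r$ and reapply $\rho$, keeping the cut rank at most $r$. If $\rho$ is a maximal cut $\cut_{r+1}$, I first apply the inner induction hypothesis to its two subproofs $\PP_1,\PP_2$, obtaining proofs of cut rank at most $r$, so that they lie in $\Km+\str{\X}^+ +\cut_r$; the Reduction Lemma (taken with the trivial, empty $\str{\X}^+ +\med$ derivations between the cut and its subproofs) then replaces this cut by a derivation of cut rank at most $r$. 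Iterating the outer induction down to cut rank $0$ produces the desired cut-free proof in $\Km+\str{\X}^+$.

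The main obstacle is not in this proposition at all but in the Reduction Lemma it invokes, which already absorbs every delicate step: permuting a $\Diamond$-rule downwards through the structural modal rules (Lemmas~\ref{l:45down} and~\ref{l:ktbdown}), reflecting $\diafour$ and $\diafive$ across the cut (Lemma~\ref{l:45reflect}), and trading the multicut forced by the modalities for ordinary cut (Lemma~\ref{l:mcut}, together with weakening admissibility for $\Km$, Lemma~\ref{l:wkkminus}). The only point here that requires care is the one flagged above: fixing the ambient rule set to the $45$-closed $\X^+$, so that the fresh $\diafour$/$\diafive$-instances produced inside the Reduction Lemma remain available rules and the induction does not escape its own system.
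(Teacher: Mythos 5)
Your proposal is correct and takes essentially the same route as the paper: the paper proves the rank-lowering claim ($\Km+\str{\X}+\cut_{r+1}\vdash\Gamma$ implies $\Km+\str{\X}^+studios+\cut_r\vdash\Gamma$ — by inner induction on proof depth using the Reduction Lemma) and then inducts on the cut rank, exactly your double induction. Your one deviation — fixing the ambient system at $\X^+$ from the start and invoking $(\X^+)^+=\X^+$ — simply makes explicit the idempotence of the $45$-closure that the paper's iterated application of its claim uses implicitly.
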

\begin{proof}
  We first prove the claim: If $\Km + \str{\X} + \cut_{r+1} \vdash
  \Gamma$ then $\Km + \str{\X}^+ + \cut_r \vdash \Gamma$. The
  claim is proved by induction on the depth of the given proof, using
  the reduction lemma. Our proposition then follows from an induction
  on the cut rank of the given proof, using the claim.  
\end{proof}

Finally, we can prove cut-elimination for the systems $\Kc+\str{\X}$.

\begin{proof}[Proof of Theorem~\ref{t:cestr} (Cut-elimination)]
  We first prove the theorem for the cases where ${\sf d} \notin \X$.
  The transformation (i) is by Proposition~\ref{p:cdown} (Push down
  contraction), the transformation (ii) is
  Proposition~\ref{p:cekminus} (Cut-elimination for $\Km$), and
  transformation (iii) is by Lemma~\ref{l:xplustox} (From $\X^+$ to
  $\X$) and weakening and necessitation admissibility.
\[
\vlderivation{
 \vltrnote{\PP_1}{\Gamma}{
\vlhy {}}{\vlhy {\qquad}}{\vlhy {}}{\Kc + \str{\X} + \cut}}
\quad\stackrel{(i)}{\leadsto}\quad
\vlderivation{
 \vlde{}{\fctr}{\Gamma}{
 \vltrnote{\PP_2}{\Gamma'}{
\vlhy {}}{\vlhy {\qquad}}{\vlhy {}}{\Km + \str{\X} + \cut}}}
\quad\stackrel{(ii)}{\leadsto}\quad
\vlderivation{
 \vlde{}{\fctr}{\Gamma}{
 \vltrnote{\PP_3}{\Gamma'}{
\vlhy {}}{\vlhy {\qquad}}{\vlhy {}}{\Km + \str{\X}^+}}}
\]\[
\quad\stackrel{(iii)}{\leadsto}\quad
\vlderivation{
 \vltrnote{\PP_4}{\Gamma}{
\vlhy {}}{\vlhy {\qquad}}{\vlhy {}}{\Kc + \str{\X}}}
\quad .
\]
In the cases where ${\sf d} \in \X$ we first apply Lemma~\ref{l:ddown}
(Push down seriality) and then proceed the same way with the upper part
of the proof.  
\end{proof}

\section{Relation to 
Deep Inference}

\newcommand{\downar}{\mathord{\downarrow}}
\newcommand{\upar}{\mathord{\uparrow}}

\emph{Deep inference} is a proof-theoretic formalism introduced by
Guglielmi \cite{Gugl07} where inference rules are term rewriting rules
which work on formulas and where derivations are just reduction
sequences from one formula to another. Some deep inference systems for
modal logic have been studied by Hein, Stewart and Stouppa
\cite{HeinStew05,StSt,Sto06}.

Stewart and Stouppa give certain deep inference rules for the modal
axioms in their paper \cite{StSt} and conjecture that all combinations
yield cut-free systems that are complete for the corresponding frame
conditions (Conjecture~11 in \cite{StSt}). They prove their conjecture
just for some modal logics, namely K, KD, KT, S4 and S5, and in all
cases their method is embedding a cut-free (hyper-)sequent system.
They do not provide cut-free deep inference systems for the other 10
logics of the cube. Also, their method does not extend to logics for
which there is no known cut-free (hyper-)sequent system, such as $\sf
KB$ and $\sf K5$.

In this section we see cut-free deep inference systems for these modal logics.
Nested sequent systems can be easily embedded into corresponding deep
inference systems and via this embedding we get complete and cut-free
deep inference systems for all the modal logics considered in this
chapter. In fact, we get two sets, one based on the nested sequent
systems with logical rules, and one based on the ones with structural
rules. However, this does not settle Stewart and Stouppa's
Conjecture~11, since our rules are different.

The embedding of nested sequent systems into corresponding deep
inference systems is trivial: essentially, all derivations on nested
sequents are special deep inference derivations where rules do not
apply deeply with respect to all connectives, but only with respect to
the comma (structural disjunction) and structural box. The reverse
direction, embedding deep inference into nested sequent calculus is
also easy, but requires cut.

In this section we extend our language of formulas by the constants
$\true$ for \emph{true} and $\false$ for \emph{false}. 

A deep inference rule is just a labelled rewrite rule as used in term
rewriting. An example is the following \emph{switch-down}-rule:
\[
\vlinf{{\sf s}\downar}{}{S\cons{(A \vlan B) \vlor C}}{S\cons{A \vlan (B
\vlor C)}} \rlap{\qquad ,}
\]
which in term rewriting would be written as
\[
{\sf s}\downar:\quad(A \vlan B) \vlor C \rightarrow A \vlan (B\vlor C) \rlap{\qquad .}
\]
There is a notational difference: in the deep inference rule the
context in which it can be applied is made explicit, in this case any
formula context $S\context$. A \emph{proof} of a formula is a
rewriting sequence starting from the constant $\true$ and ending with
that formula. For more explicit definitions and more discussion of
deep inference systems, see \cite{BruTh}.

\newcommand{\KS}{{\sf KS}}
\newcommand{\SKS}{{\sf SKS}}

A deep inference system for propositional logic is shown in
Figure~\ref{fig:propcos}. This particular system is similar to the
one given by Stra{\ss}burger in \cite{Stra09a} and slightly
weaker than the one originally given in \cite{BruTh} because it
replaces the equivalence rule by several explicit rules for for
commutativity, associativity and units (which together are weaker than
the equivalence rule). Let us call it system $\KS$ for the purpose of
this section. Systems for modal logics can be obtained from it by
adding rules from Figure~\ref{f:cos}. The \emph{cut} in deep inference
has the form 
\[
\vlinf{{\sf i}\upar}{}{S\cons{\false}}{S\cons{A \vlan \neg A}}
\rlap{\qquad .}
\]

Let an instance of ${\sf 5}\downar$ be an instance of either ${\sf
5a}\downar, {\sf 5b}\downar$ or ${\sf 5c}\downar$.  For a set $\X$ of rule
names append the symbol $\downar$ to each name to obtain $\X\downar$.
Let system ${\sf KSk}$ be system $\KS+\{\nec\downar,{\sf k}\downar, {\sf
  r}\downar\}$.  

\begin{proposition}[Nested sequent calculus into deep inference]\nl  For all
  $\X \subseteq \modax$ and
  sequents $\Gamma$ we have that:\\
  If $\,\K+\diac{\X} \,\vdash\, \Gamma\,$ then $\,{\sf KSk} +
  \X\downar
  \;\vdash\,\formula{\Gamma}$.\\
\end{proposition}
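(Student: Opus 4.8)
The plan is to argue by induction on the given nested-sequent proof $\PP$ of $\Gamma$ in $\K+\diac{\X}$, translating it step by step into a deep inference proof, i.e.\ a rewriting sequence from $\true$ to $\formula{\Gamma}$ in ${\sf KSk}+\X\downar$. The guiding observation is that reading a nested-sequent rule from premise to conclusion corresponds to a \emph{down}-derivation from $\formula{\text{premise}}$ (top) to $\formula{\text{conclusion}}$ (bottom): each such segment stacks directly on top of the translated subproof(s), since for a single-premise rule the induction hypothesis already supplies a derivation $\true \leadsto \formula{\text{premise}}$. Thus it suffices to exhibit, for every rule, a ${\sf KSk}+\X\downar$-derivation realizing the corresponding translated step.

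\textbf{Two enabling facts.} First, deep inference rules fire in an \emph{arbitrary} formula context $S\context$, and the corresponding formula of a sequent context $\Gamma\context$ is a formula context built solely from $\vlor$ and $\Box$. Hence any derivation $X \leadsto Y$ lifts to $\formula{\Gamma\cons{X}} \leadsto \formula{\Gamma\cons{Y}}$ by simply relativizing each rule instance to the larger context; wrapping a derivation inside a box is free, because $\Box\context$ is again a formula context, so no separate monotonicity lemma is needed. Second, to handle the branching $\vlan$-rule we need to distribute a conjunction over the translated context, i.e.
\[
\formula{\Gamma\cons{A}} \vlan \formula{\Gamma\cons{B}} \;\leadsto\; \formula{\Gamma\cons{A \vlan B}} \rlap{\quad ,}
\]
which I would prove by induction on $\Gamma\context$: the $\vlor$-layers are handled by switch ${\sf s}\downar$ together with contraction ${\sf c}\downar$ (giving $(C\vlor X)\vlan(C\vlor Y)\leadsto C\vlor(X\vlan Y)$), and each $\Box$-layer by the $\K$-valid step $\Box X \vlan \Box Y \leadsto \Box(X\vlan Y)$, which is a short explicit derivation using ${\sf k}\downar$, ${\sf r}\downar$ and contraction.

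\textbf{The cases.} For the axiom $\Gamma\cons{p,\neg p}$ I would derive $\true \leadsto \formula{\Gamma\cons{p,\neg p}}$ by induction on $\Gamma\context$, using $\nec\downar$ to box the unit, the atomic identity ${\sf i}\downar$ to produce $p\vlor\neg p$, and weakening ${\sf w}\downar$ to insert the remaining disjuncts. The $\vlor$- and $\Box$-rules are identities under the translation, since $\formula{\Gamma\cons{A,B}}=\formula{\Gamma\cons{A\vlor B}}$ and $\formula{\Gamma\cons{[A]}}=\formula{\Gamma\cons{\Box A}}$ up to the commutativity, associativity and unit rules of $\KS$. The $\vlan$-rule is dispatched by running the two translated subproofs in the two conjuncts of $\true\vlan\true$ and then applying the distribution fact above. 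For $\kc$ the segment $\Diamond A\vlor\Box(\formula{\Delta}\vlor A)\leadsto\Diamond A\vlor\Box(\formula{\Delta})$ is obtained by one ${\sf k}\downar$ followed by one contraction, and every modal rule $\diac{\rho}$ with $\rho\in\X$ is simulated analogously by $\rho\downar\in\X\downar$ together with ${\sf k}\downar$, contraction, switch and weakening. The binary-context rule $\diafivec$ is first replaced, via Lemma~\ref{l:split5}~(Decompose \diafivec), by its unary variants $\diafivepa,\diafivepb,\diafivepc$, which match the rules ${\sf 5a}\downar,{\sf 5b}\downar,{\sf 5c}\downar$.

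\textbf{Main obstacle.} The only genuinely non-routine ingredient is the conjunction-distribution fact, and within it the modal step $\Box X \vlan \Box Y \leadsto \Box(X\vlan Y)$, which is exactly where the reverse direction of normality must be synthesized in the cut-free down-fragment; all other cases are bookkeeping once the context-relativization principle is in place. I would also flag that the precise simulations of the $\diac{\rho}$-steps depend on the exact shapes of the rules $\rho\downar$ in Figure~\ref{f:cos}, so the case analysis there must be checked against those definitions, but each case parallels the $\kc$-case and introduces no cut.
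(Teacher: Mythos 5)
Your proposal is correct and takes essentially the same route as the paper, whose proof is exactly the routine induction on the depth of the nested sequent proof, extending the propositional embedding of \cite{BruTh}, and whose only explicit remark singles out the same obstacle you identify: embedding the branching $\vlan$-rule requires distributing conjunction over the translated context, which at the box layer is precisely the ${\sf r}\downar$-rule. One simplification: your step $\Box X \vlan \Box Y \leadsto \Box(X \vlan Y)$ needs no ${\sf k}\downar$ or contraction, since it is literally a single instance of ${\sf r}\downar$.
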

\begin{proof}
A routine induction on the depth of the proof and a straightforward
extension of a corresponding embedding for the propositional system as
given in \cite{BruTh}. Note that embedding the $\vlan$-rule requires
the ${\sf r}\downar$-rule.
\end{proof}

\begin{proposition}[Deep inference into nested sequent calculus]\nl For all
  $\X \subseteq \modax$ and formulas $A$ we have that:\\
  If $\,{\sf KSk} + \X\downar+ {\sf i}\upar\; \vdash A\,$ then
  $\,\K+\diac{\X}+\cut \,\vdash\, A$.
\end{proposition}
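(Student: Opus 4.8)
The plan is to read a deep inference proof as a chain of provable implications and to glue the links together with the sequent $\cut$. First I would isolate a \emph{local soundness} claim: for every instance of a rule of ${\sf KSk}+\X\downar+{\sf i}\upar$ with premise $U$ and conclusion $V$, the implication $U \supset V$ is provable in $\K+\diac{\X}+\cut$, i.e.\ the sequent $\neg U, V$ is provable. I would prove this by separating the principal part of the rule from the formula context $S\context$ in which it fires. For the principal part there are only finitely many cases. The propositional rules of $\KS$ reduce to provable propositional implications, using the identity expansion $\vdash A,\neg A$, which holds already without $\cut$ by a routine induction on $A$ (atomic axioms at the base, and the $\vlan,\vlor,\Box,{\kc}$-rules together with necessitation in the inductive step). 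Each rule in $\X\downar$ is handled by the matching $\diac{\rho}$-rule; for instance ${\sf t}\downar$ yields the sequent $\neg A,\Diamond A$, which follows from the $\diatc$-rule applied to the axiom $\neg A,\Diamond A, A$, and the other four axioms are analogous. The deep-inference cut ${\sf i}\upar$ is settled by the identity expansion plus weakening, so the principal cases need no sequent $\cut$ at all.

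Next I would close local soundness under the surrounding context $S\context$, which is the provability analogue of Lemma~\ref{l:deepsound}~(Deep inference is sound): by induction on $S\context$ I would show that provability of $P \supset Q$ entails provability of $S\cons{P}\supset S\cons{Q}$. The $\vlan$- and $\vlor$-cases are the usual monotonicity derivations built from the $\vlan,\vlor$-rules; the $\Box$- and $\Diamond$-cases use the $\Box$-rule and the ${\kc}$-rule exactly as in the soundness argument, pushing the smaller implication under a box by necessitation and then copying it with ${\kc}$. Here I would invoke the depth- and cut-rank-preserving admissibility of $\nec$ and $\wk$ from Lemma~\ref{l:stradmcut}, so that these manipulations stay sound even when the premises carry cuts.

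I would then lift local soundness to whole derivations: for any deep inference derivation from a formula $B$ to a formula $C$ one has $\K+\diac{\X}+\cut \vdash \neg B, C$, proved by induction on the number of rule instances. The empty derivation gives $\neg C, C$, an identity; for one more step $B \rightsquigarrow B'$ followed by a derivation from $B'$ to $C$, the induction hypothesis yields $\vdash \neg B', C$, local soundness yields $\vdash \neg B, B'$, and a single $\cut$ on the cut formula $B'$ in the context $\neg B, C, \context$ (after weakenings) produces $\vdash \neg B, C$. This is exactly where the sequent $\cut$ is indispensable, explaining the remark that the embedding ``requires cut''. Finally, a proof of $A$ is a derivation from $\true$ to $A$, so the previous step gives $\vdash \neg\true, A$, that is $\vdash \false, A$ since $\neg\true=\false$; combining this with the evident proof of $\true$ and one further $\cut$ on $\true$ yields $\vdash A$.

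The main obstacle I anticipate is bookkeeping rather than a conceptual difficulty: aligning the direction of each rewrite rule with the implication $U \supset V$, and discharging the modal congruence cases ($\Box,\Diamond$) of the context-closure step so that the required instances of $\nec$ and $\wk$ are genuinely available without raising the cut rank. Since the whole construction uses only the explicit rules of $\K+\diac{\X}$, the $\cut$ rule, and the structural admissibilities of Lemmas~\ref{l:stradm} and~\ref{l:stradmcut}, it goes through for every $\X\subseteq\modax$ and never appeals to completeness, so no $45$-closedness hypothesis is needed here.
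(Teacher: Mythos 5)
Your proposal is correct and follows essentially the same route as the paper, whose proof is exactly the ``routine induction on the length of the proof'' (extending the propositional embedding of \cite{BruTh}) that you spell out: local soundness of each rule as a provable implication, closure under formula contexts via $\nec$, $\wk$ and the $\Box$-/$\kc$-rules, and chaining the steps with $\cut$ --- which is also why no $45$-closedness is needed. The only point you pass over silently is that $\true$ and $\false$ must be read as the defined constants $p\vlor\neg p$ and $p\vlan\neg p$ on the nested sequent side (so that $\neg\true=\false$ and $\vdash\Gamma\cons{\true}$ hold), but this is immediate and does not affect the argument.
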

\begin{proof}
A routine induction on the length of the proof and a straightforward
extension of a corresponding embedding for the propositional system as
given in \cite{BruTh}. 
\end{proof}

These propositions, together with cut-elimination for our nested
sequent systems, trivially yields cut-elimination for the
corresponding deep inference systems. By the second proposition we
translate a deep inference proof with cuts into a nested sequent
calculus proof with cuts, eliminate the cuts, and translate back to
deep inference by the first proposition.

\begin{corollary}[Cut elimination for deep inference]\nl
  For all 45-closed $\X \subseteq \modax$ we have that if a formula is
  provable in system ${\sf KSk} + \X\downar + {\sf i}\upar$ then it
  is also provable in system ${\sf KSk} + \X\downar$.
\end{corollary}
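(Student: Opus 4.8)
The plan is to combine the two embedding propositions just proved with the syntactic cut-elimination theorem for nested sequents, performing a round trip through the nested sequent calculus. Suppose a formula $A$ is provable in ${\sf KSk}+\X\downar+{\sf i}\upar$. First I would apply the proposition (Deep inference into nested sequent calculus) to obtain $\K+\diac{\X}+\cut \vdash A$; this is the step that trades the deep-inference cut ${\sf i}\upar$ for an ordinary nested-sequent $\cut$.

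Next, since the hypothesis assumes $\X$ is 45-closed, I can invoke Theorem~\ref{t:ce} (Cut-Elimination) to conclude $\K+\diac{\X} \vdash A$. The 45-closedness is precisely the side condition required by that theorem, and this is the only point in the argument where the hypothesis is used. Finally I would translate back by the proposition (Nested sequent calculus into deep inference), yielding ${\sf KSk}+\X\downar \vdash \formula{A}$.

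It only remains to observe that, viewing the formula $A$ as a one-element sequent, its corresponding formula $\formula{A}$ is $A$ itself (the case $m=1$, $n=0$ in the definition of the corresponding formula), so the last step in fact produces a cut-free deep-inference proof of $A$, as required. There is no real obstacle here: the argument is a purely formal three-step composition, and all the genuine difficulty has already been discharged in establishing the two embeddings and the cut-elimination theorem for the nested sequent systems. The only things to check are the trivial identity $\formula{A}=A$ and the alignment of the 45-closedness hypothesis with the requirement of Theorem~\ref{t:ce}.
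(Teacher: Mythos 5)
Your proposal is correct and follows essentially the same route as the paper: translate the proof with ${\sf i}\upar$ into $\K+\diac{\X}+\cut$ via the second embedding proposition, eliminate cuts by Theorem~\ref{t:ce} (which is exactly where 45-closedness is needed), and translate back via the first embedding proposition. Your explicit check that $\formula{A}=A$ for a single-formula sequent is a harmless detail the paper leaves implicit.
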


\begin{figure}
\fbox{
\parbox{\ftextwidth}{
  \centering
\[
\vlinf{{\sf as}\downar}{}{S\cons{(A \vlor B) \vlor C}}{S\cons{A \vlor (B
\vlor C)}}
\qquad
\vlinf{{\sf co}\downar}{}{S\cons{B \vlor A}}{S\cons{A \vlor B}}
\qquad
\vlinf{\false\downar}{}{S\cons{A \vlor \false}}{S\cons{A}}
\qquad
\vlinf{\true\downar}{}{S\cons{A \vlan \true}}{S\cons{A}}
\]
\[
\vlinf{{\sf ai}\downar}{}{S\cons{a \vlor \neg a}}{S\cons{\sf t}}
\qquad
\vlinf{{\sf s}\downar}{}{S\cons{(A \vlan B) \vlor C}}{S\cons{A \vlan (B
\vlor C)}}
\qquad
\vlinf{{\sf c}\downar}{}{S\cons{A}}{S\cons{A \vlor A}}
\qquad
\vlinf{{\sf w}\downar}{}{S\cons{A}}{S\cons{\sf f}}
\]  
}}
  \caption{A deep inference system for propositional logic}
\label{fig:propcos}
\end{figure}

\begin{figure}
\fbox{
\parbox{\ftextwidth}{
  \centering
\[
\vlinf{\nec\downar}{}{S\cons{\Box \sf t}}{S\cons{\sf t}}
\qquad
\vlinf{{\sf k}\downar}{}{S\cons{\Box
A \vlor \Diamond B}}{S\cons{\Box(A\vlor B)}}
\qquad
\vlinf{{\sf r}\downar}{}{S\cons{\Box (A \vlan B)}}
{S\cons{\Box A \vlan  \Box B}}
\]
\[
\vlinf{{\sf d}\downar}{}{S\cons{\Diamond A}}{S\cons{\Box A}} \qquad
\vlinf{{\sf t}\downar}{}{S\cons{\Diamond A}}{S\cons{A}} \qquad
\vlinf{{\sf b}\downar}{}{S\cons{\Box(\Diamond A \vlor B)}}{S\cons{A \vlor
    \Box B}} \qquad 
\vlinf{{\sf 4}\downar}{}{S\cons{\Box A \vlor \Diamond
    B}}{S\cons{\Box(A \vlor \Diamond B)}}
\]
\[
\vlinf{{\sf 5a}\downar}{}{S\cons{\Box(\Diamond A \vlor
      B)}}{S\cons{\Diamond A \vlor \Box B}} \qquad 
\vlinf{{\sf 5b}\downar}{}{S\cons{\Box(\Diamond A\vlor B) \vlor \Box
    C}}{S\cons{\Box B \vlor \Box(\Diamond A \vlor C)}} \qquad 
\vlinf{{\sf 5c}\downar}{}{S\cons{\Box(A \vlor \Diamond B \vlor \Box
      C)}}{S\cons{\Box(A \vlor \Box ( \Diamond B \vlor C))}}
\]}}
\caption{Deep inference rules for modal logic}
\label{f:cos}
\end{figure}

A similar exercise will obtain cut-free and complete deep inference
systems from the nested sequent systems with structural modal rules.

\begin{remark}[for some systems the ${\sf r}\downar$-rule is admissible]
  Some of the deep inference systems are not minimal: for example in
  system {\sf KSk} the ${\sf r}\downar$-rule is admissible for ${\sf
    KSk}-{\sf r}\downar$. This can be seen by embedding the usual
  sequent system for $\K$, of which we show the case for the
  $\Box$-rule:
\[
\vlderivation{
\vlin{\Box}{}{\Box A, \Diamond B_1, \dots ,\Diamond B_n}{
 \vltrnote{\PP}{A, B_1, \dots, B_n}{
\vlhy {}}{\vlhy {\qquad}}{\vlhy {}}{}}}
\qquad\leadsto\qquad
\vlderivation{
\vlin{{\sf k}^n}{}{\Box A \vlor \Diamond B_1 \vlor \dots \vlor\Diamond B_n}{
 \vlde{\Box\PP'}{}{\Box(A \vlor B_1 \vlor \dots \vlor B_n)}{
\vlin{\nec\downar}{}{\Box\true}{\vlhy{\true}}}}}
\qquad ,
\]  
where $\PP'$ is the translation of $\PP$, $\Box\PP'$ is obtained by
adding a box to every formula in $\PP'$ and ${\sf k}^n$ denotes $n$
instances of the $\sf k$-rule. For some systems, however, the ${\sf
  r}\downar$-rule is not admissible. For example in system ${\sf
  KSk}+{\sf b}\downar$ the formula $\Box(a \vlor \Diamond\Diamond\neg
a) \vlan (b \vlor \Diamond\Diamond\neg b))$ is provable, but it is not
provable without ${\sf r}\downar$-rule.
\end{remark}

\section{Discussion}

We have seen how nested sequents allow us to give a systematic proof
theory for the modal logics of the cube. In fact, we have seen two
distinct proof-theories, one based on formalising modal axioms as
logical rules and one based on formalising modal axioms as structural
rules. The first option is closer to the ordinary sequent calculus and
allows for a straightforward terminating proof-search procedure, but
fails to be modular: not every possible combination of rules yields a
complete system for the corresponding logic. The second option yields
a modular set of systems, but the presence of structural rules
devalues the subformula property. In any case, we have seen that
generalising hypersequents to nested sequents yields cut-free systems
for more modal logics, so it leads to greater expressivity. It is
particularly pleasant that this extra generality does not come at the
cost of extra complexity, but in fact simplifies hypersequent systems:
the two kinds of context in hypersequent inference rules (sequent
context and hypersequent context) are merged into one. Our systems
with logical rules enjoy invertibility of all rules. This property
does not seem to be achievable in an ordinary sequent system for modal
logic.  In hypersequent systems it also does not seem to be achievable
in a non-trivial way (although one could of course trivially make
rules invertible by copying a component whenever a rule applies in
it).

{\bf Relation to the display calculus.} Nested sequents and display
sequents share the idea of simply allowing the connective $\Box$ as a
structural connective.  There are two crucial differences.  First,
display sequents also contain a structural connective for the
backward-looking modality. This is crucial for the \emph{display
  property} to hold, a central property of display calculi which
allows to single out a formula in order to apply a logical rule to it.
Since in our proof systems logical rules apply deeply inside nested
sequents, there is no need for a display property, and thus no need
for the backward-looking modality, so we can stay inside the modal
language.  The second difference is that in the display calculus one
has to use structural rules called \emph{display postulates} to move a
formula to the top in order to apply a logical rule to it. In nested
sequent systems one can apply the rule on the spot and thus has no
need for such structural rules. Nested sequents thus allow for
deductive systems with fewer rules and shorter derivations.

{\bf Relation to labelled systems.}  The main conceptual advantage of
a nested sequent over a labelled sequent is that it can be read as a
modal formula. Labelled sequents are more general than nested
sequents: they can form an arbitrary graph, while nested sequents are
always trees. A cut-free proof in nested sequents is thus in general a
more restricted, simpler object than a cut-free proof in labelled
sequents. I hope that this fact will help in using nested sequent
systems for interpolation proofs, for which labelled systems do not
seem to be well-suited. It should also be easy to embed cut-free
nested sequent systems into corresponding cut-free labelled sequent
systems, while the opposite is not true in general. I thus think of the
completeness of a nested sequent system as a stronger result than the
completeness of a corresponding labelled sequent system.  To get this
stronger result we had to work harder, for example in our completeness
proof for the systems with logical rules: we had to establish certain
properties of, say, the euclidean closure of a relation, which is not
needed for labelled systems. There, that relation is part of the proof
system and it is being closed under euclideanness by the appropriate
inference rule.  The extra work also shows in our cut-elimination
procedure: we had to show admissibility of certain rules in order to
push the cut over the rules for the frame properties.  This, again, is
not needed for labelled systems. There the rules for the frame
conditions do not affect the cut-elimination procedure at all.

{\bf Relation to tableau systems.} While the focus of tableau systems
is on giving decision procedures, our focus is on giving proof systems
which support proof-transformations, in particular cut-elimination.
This is more easily and more commonly done with local rules, so
in sequent systems instead of tableau systems. Nevertheless, there is
correspondence between tableau systems and sequent systems. For an
overview of modal tableau systems see the survey by Gor{\'e}
\cite{Gore99}. The tableau formalism which corresponds most closely to
nested sequents is the prefixed tableau formalism, due to Fitting
\cite{Fitting1983}. In particular, prefixes impose the same tree structure on
formulas that is imposed in a nested sequent. However, prefixed
tableaux are closer to the semantics. In particular they have
rules which are parametrised by an accessibility relation, which is a
marked difference from our inference rules.

Specific tableau rules which correspond to our inference rules have
also been studied before, namely by Castilho et.~al.~\cite{Cast97}.
Their systems are based on graphs rather than trees, but they have structural
rules which closely correspond to (some of) ours and \emph{propagation
  rules} which correspond to our $\Diamond$-rules.  A difference is
that propagation rules and structural rules are mixed in
\cite{Cast97}, while here we first treat systems purely consisting of
propagation- or $\Diamond$-rules in Chapter 2 and systems purely
consisting of structural rules in Chapter 3.

{\bf Future work.} Of course we would like to extend the range of
logics for which there are cut-free nested sequent systems. Candidates
are the set of modal logics formalised by so-called \emph{primitive}
axioms, which have been captured in the display calculus \cite{Wan98}.
At the same time, it is interesting to generate such systems
automatically, so it is our goal to devise 1) easily checkable criteria
on rules, which guarantee cut-elimination, and 2) a procedure which
turns modal axioms into rules which satisfy these criteria. Such a
generic cut-elimination procedure exists already for the display
calculus \cite{Wan98}. Recently, such a procedure has also been
proposed by Ciabattoni et al.\ for certain hypersequent systems
\cite{Ciab08}.

On the other hand, we would like to use nested sequent systems to
obtain results which are harder or cannot be obtained with other
proof-theoretic formalisms. Neither display calculus nor labelled
sequent calculus seem to allow us to prove interpolation results, for
example. Conservativity results are another interesting field. Here
the property of staying inside the modal language is useful. The
conservativity of tense logic over modal logic is an immediate
consequence of the completeness of a cut-free nested sequent system
for tense logic, as noted by Gor{\'e} et al.~\cite{GorPosTiu09}. This
conservativity result is not an immediate consequence of
cut-elimination in the display calculus, precisely because of the
presence of (rules affecting) backward-looking structural connectives. 

Another area to explore is the one of explicit modal logics
\cite{Art94}. Here the modality in modal logic which can be read as
provability or as knowledge is replaced by specific terms which can be
read as individual proofs or as pieces of evidence. Researchers study
\emph{realisation}-procedures which turn a proof in modal logic into a
proof in explicit modal logic. Such procedures rely on cut-free
systems for modal logics.  Nested sequent systems may provide such
realisation procedures.

\chapter{Systems for Common Knowledge}

\def\mathllap{\mathpalette\mathllapinternal}
\def\mathrlap{\mathpalette\mathrlapinternal}
\def\mathclap{\mathpalette\mathclapinternal}
\def\mathllapinternal#1#2{\llap{$\mathsurround=0pt#1{#2}$}} 
\def\mathrlapinternal#1#2{\rlap{$\mathsurround=0pt#1{#2}$}} 
\def\mathclapinternal#1#2{\clap{$\mathsurround=0pt#1{#2}$}}

\newcommand{\com}{{\sf com}}
\newcommand{\assoc}{{\sf assoc}}
\newcommand{\ac}{{\sf ac}}

\newcommand{\rr}{{\rho}}


\def\calchilb{\ensuremath{\mathsf{H_C}}}
\def\calcdeep{\ensuremath{\mathsf{D_C}}}

\def\tautax{(\mathrm{TAUT})}
\def\kax{(\mathrm{K})}
\def\cocloax{(\mathrm{CCL})}
\def\mprule{(\mathrm{MP})}
\def\necrule{(\mathrm{NEC})}
\def\indrule{(\mathrm{IND})}

\def\rk{\mathit{rk}}
\def\max{\mathit{max}}
\def\nsum{\mathop{\#}}
\newcommand{\veblen}[2]{\varphi_{#1}#2}

\def\prov#1#2{\mathrel{\vrule width 0.3pt height 8pt depth
2pt\mkern-2mu
\textstyle{\frac{\hskip0.5ex\raise2pt\hbox{$\scriptstyle#1$}\hskip0.5ex}
{\hskip0.5ex#2\hskip0.5ex}}}}

\newcommand{\provlen}[2]{\prov{#1}{} #2}
\newcommand{\provable}[2]{\prov{#1}{#2}}

\newcommand{\diaa}{\mathord{\Diamond}}
\newcommand{\boxx}{\mathord{\Box}}
\newcommand{\mystar}{\mathord{\scriptstyle *}}

\newcommand{\ckdiaa}[2]{{%
    \declareslashed{\mathord}{\mystar}{#1}{#2}{\diaa}%
    \slashed{\diaa}%
  }} \newcommand{\ckboxx}[2]{{%
    \declareslashed{\mathord}{\mystar}{#1}{#2}{\boxx}%
    \slashed{\boxx}%
  }}

\newcommand{\ckbox}{\mathord{\ckboxx{0}{-0.05}}}%
\newcommand{\ckdia}{\mathord{\ckdiaa{0}{0.22}}}%

\newcommand{\calcgentzen}{\ensuremath{\mathsf{G_C}}}

The notion of {\em common knowledge} is well-studied in epistemic
logic, where modalities express knowledge of agents. Two standard
textbooks on epistemic logic and common knowledge in particular, are
\cite{FHMV} by Fagin, Halpern, Moses, and Vardi and \cite{MevdHoe95}
by Meyer and van der Hoek.

The fact that a proposition $A$ is common knowledge can be expressed
by the infinite conjunction ``all agents know $A$ and all agents know
that all agents know $A$ and so on''. In order to express this in a
finite way we can use fixpoints: common knowledge of $A$ is then
defined to be the greatest fixpoint of the function 
\[
X \mapsto \mbox{ everybody knows } A \mbox{ and everybody knows } X.
\]
Such a definition was introduced by Halpern and Moses \cite{hm90} and
further studied in \cite{FHMV}.

The traditional way to formalise common knowledge is to use a
Hilbert-style axiom system. Such a system has a fixpoint axiom, which
states that common knowledge is a fixpoint, and an induction rule,
which states that this fixpoint is the greatest fixpoint. However,
this approach does not work well for designing a Gentzen-style sequent
calculus. In particular, Alberucci and J\"ager show in \cite{alj05}
that a cut-free sequent system designed in this way is not complete. 

To obtain a complete cut-free system Alberucci and J\"ager replace the
induction rule by an infinitary $\omega$-rule. This results in a
system in which proofs have transfinite depth and in which common
knowledge is the greatest fixpoint of the function described above.
Although this system has been further studied in \cite{KS06,JKS07}, no
syntactic cut-elimination procedure has been found. Cut-elimination
was proved only indirectly by showing completeness of the cut-free
system. No non-trivial bound on the depth of proofs in this system is
known.

In this chapter, we give a syntactic cut-elimination procedure for an
infinitary system of common knowledge based on nested sequents. Since
its inference rules apply deeply inside of the nested sequents we call
this system ``deep'' while we call the system by Alberucci and J\"ager
``shallow''. The deep system allows to straightforwardly apply the
method of {\em predicative cut-elimination}, which is a standard tool
for the proof-theoretic analysis of systems of set theory and second
order number theory, see Pohlers \cite{pohlers89,pohlers98} and
Sch\"utte \cite{schutte:77}. Since the shallow and the deep system can
be embedded into each other, this also yields a syntactic
cut-elimination procedure for the shallow system. For both systems we
thus obtain an upper bound of $\veblen{2}{0}$ on the depth of proofs,
where $\veblen{}{}$ is the Veblen function.

Please note that, like Alberucci and J\"ager, our term \emph{logic of
  common knowledge} refers to the least normal modal logic {\K}, with
an added fixpoint modality. Some people might prefer to call that the
logic of common \emph{belief}. The methods introduced here should
transfer easily to cases where rules for the modal axioms are added
that were studied in the previous chapter.  The combination of the
techniques presented here and the ones in the previous chapter should
suffice to get cut-elimination for modal logics with additional modal
axioms and common knowledge.

Several cut-free systems for logics with common knowledge exist
already. The one that is closest to our system was introduced by
Tanaka in \cite{Tan03} for predicate common knowledge logic and is
based on Kashima's ideas. It essentially also uses nested sequents,
but uses explicit labels to name the nodes of the tree. In fact, if
one disregards the rather different notation and some choices in the
formulation of rules, then one could say that our system is the
propositional part of Tanaka's system. There are also finitary
systems. Abate, Gor\'e and Widmann, for example, introduce a cut-free
tableau system for common knowledge in \cite{AGW07}. Cut-free system
have also been studied in the context of \emph{explicit modal logic}
by Artemov \cite{artemov06} and by Antonakos
\cite{DBLP:conf/lfcs/Antonakos07}.

However, we do not know of syntactic cut-elimination procedures for
any of the systems mentioned.  Typically, cut-elimination is
established only indirectly.  There are cut-elimination procedures for
similar logics, for example by Pliu{\v s}kevi{\v c}ius for an infinitary system
for linear time temporal logic in \cite{Pliu91}. For linear temporal
logic there is no need for nested sequents. For this logic it is enough to
use indexed formulas of the form $A^i$ which denotes $A$ at the $i$-th
moment in time.

This chapter is organised as follows. We first review the shallow
sequent system by Alberucci and J\"ager and show the obstacle to
cut-elimination. We then present our nested sequent system, prove the
invertibility of its rules, the admissibility of the structural rules
and finally cut-elimination. Then we embed the shallow system into the
deep system and vice versa, thus establishing cut-elimination for the
shallow system. Then, by embedding the Hilbert system into our deep
sequent system, we obtain an upper bound for the depth of proofs in
both the shallow and the deep system. Some discussion about future
work ends this chapter.

\section{The Shallow Sequent 
System}

{\bf Formulas and sequents.} We are considering a language with $h$ agents for some
$h>0$. Propositions $p$ and their negations $\neg p$ are \emph{atoms},
with $\neg{\neg p}$ defined to be $p$. \emph{Formulas} are denoted by
$A,B,C,D$. They are given by the following grammar:
\[
A \grammareq p \mid \neg p \mid (A \vlor A) \mid (A \vlan A) \mid
\Diamond_i A \mid \Box_i A \mid \ckdia A \mid \ckbox A \quad ,
\]
where $1\leq i\leq h$. The formula $\Box_i A$ is read as ``agent $i$
knows $A$'' and the formula $\ckbox A$ is read as ``$A$ is common
knowledge''. The connectives $\Box_i$ and $\ckbox$ have $\Diamond_i$
and $\ckdia$ as their respective De Morgan duals. Binary connectives
are left-associative: $A \vlor B \vlor C$ denotes $((A \vlor B)
\vlor C)$, for example.

Given a formula $A$, its \emph{negation} $\neg A$ is defined as usual
using the De Morgan laws, $A \imp B$ is defined as $\neg A \vlor B$
and $\bot$ is defined as $p \vlan \neg p$ for some proposition $p$.
The formula $\Box A$ is an abbreviation for ``everybody knows $A$'':
\[
\Box A = \Box_1 A \vlan \ldots \vlan \Box_h A \qquad \text{ and } \qquad 
\Diamond A = \Diamond_1 A \vlor \ldots \vlor \Diamond_h A.
\]
A sequence of $n\geq 0$ modal connectives can be abbreviated, for example 
\[
\Box^n A = \underbrace{\Box\dots\Box}_{n-\text{times}}A
\rlap{\quad .}
\]
A \emph{(shallow) sequent} is a finite multiset of formulas.
Sequents are denoted by $\Gamma,\Delta,\Lambda,\Pi,\Sigma$.

{\bf Inference rules.} In an instance of the inference rule $\rr$
\[
\cinf{\rr}{\Delta}{\Gamma_1 \quad \Gamma_2  \quad \dots}
\]
the sequents $\Gamma_1,\Gamma_2\dots$ are its \emph{premises} and the
sequent $\Delta$ is its \emph{conclusion}.  An \emph{axiom} is a rule
without premises. We will not distinguish between an axiom and its
conclusion.  A \emph{system}, denoted by $\Sys$, is a set of rules.
Figure~\ref{f:gc} shows system $\calcgentzen$, a shallow sequent
calculus for the logic of common knowledge. Its only axiom is called
\emph{identity axiom}. Notice that the $\ckbox$-rule has infinitely
many premises. If $\Gamma$ is a sequent then $\Diamond_i \Gamma$ is
obtained from $\Gamma$ by prefixing the connective $\Diamond_i$ to
each formula occurrence in $\Gamma$, and similarly for other
connectives.

\begin{figure}
\fbox{
\parbox{\ftextwidth}{
  \centering
  \[ \Gamma, p,\neg p \qquad \cinf{\vlan}{\Gamma, A \vlan
      B}{\Gamma, A \quad \Gamma, B} \qquad
  \cinf{\vlor}{\Gamma,A \vlor B}{\Gamma,A,B}
\]
\[
  \cinf{\Box_i}{\Diamond_i \Gamma,  \ckdia \Delta,\Box_i A, \Sigma}
{\Gamma, \ckdia \Delta, A} 
\]
 \vspace{1ex}
\[
\cinf{\ckbox}{\Gamma,\ckbox A}{\Gamma,\Box^k A 
\quad \text{for all $k \geq 1$}}
\qquad\quad 
\cinf{\ckdia}{\Gamma, \ckdia A}{\Gamma, \ckdia A,\Diamond A}
\]
}}
  \caption{System \calcgentzen}
  \label{f:gc}
\end{figure}

\begin{figure}
\centering
\[
\cinf{\wk}{\Gamma, A}{\Gamma} \qquad
\cinf{\ctr}{\Gamma,A}{\Gamma,A,A} \qquad
\cinf{\cut}{\Gamma,\Delta}{\Gamma,A \quad \Delta,\neg A}
\]
  \caption{Weakening, contraction and cut for system \calcgentzen}
  \label{f:admwcc}
\end{figure}

{\bf Derivations and proofs.} In the following, a \emph{tree} is a
tree in the graph-theoretic sense, and may be infinite.  A tree is
\emph{well-founded} if it does not have an infinite path. A
\emph{derivation} in a system $\Sys$ is a directed, rooted, ordered
and well-founded tree whose nodes are labelled with sequents and which
is built according to the inference rules from $\Sys$. Derivations are
visualised as upward-growing trees, so the root is at the bottom. The
sequent at the root is the \emph{conclusion} and the sequents at the
leaves are the \emph{premises} of the derivation. A \emph{proof} of a
sequent $\Gamma$ in a system is a derivation in this system with
conclusion $\Gamma$ where all leaves are axioms. Proofs are denoted by
$\PP$. We write $\Sys \vdash \Gamma$ if there is a proof of $\Gamma$
in system $\Sys$. Given a proof $\PP$ we denote its depth by
$|\PP|$. Notice that derivations here are in general infinitely branching,
thus their depth can be infinite even though each branch has to be finite.

{\bf Formula rank.}  Notice that formulas in the premises of the
$\ckbox$-rule are generally larger than formulas in its conclusion.
This is typically a problem for cut-elimination, but we
can easily solve this by defining an appropriate measure. For a
formula $A$ 
we define its \emph{rank} $\rk(A)$ as follows:\\
\[
\begin{array}{l}
  \rk(p) = \rk(\neg{p}) = 0\\
  \rk(A \vlan B) = \rk(A \vlor B) = \max(\rk(A),\rk(B)) + 1\\
  \rk(\Box_i A) = \rk(\Diamond_i A) = \rk(A) + 1\\
  \rk(\ckbox A) = \rk(\ckdia A) = \omega+{\rk(A)}\\
\end{array}
\]

\begin{lemma}[Some properties of the rank] For all formulas $A$ we
\label{l:rk}
  have that\\
  (i)  $\rk(A)=\rk(\neg A)$,\\
  (ii) $\rk(A)<\omega^2$,\\
  (iii) for all $k<\omega$ we have $\rk(\Box^k A)<\rk(\ckbox A)$.
\end{lemma}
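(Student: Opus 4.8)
The plan is to prove each of the three items by induction on the structure of the formula $A$, following the recursive definition of the rank function $\rk$. The rank is defined by cases on the outermost connective, so a structural induction matches the definition precisely, and in each case the verification reduces to simple ordinal arithmetic.

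For (i), I would observe that negation $\neg A$ commutes with the connectives via the De Morgan laws: $\neg(A \vlor B) = \neg A \vlan \neg B$, $\neg \Box_i A = \Diamond_i \neg A$, $\neg \ckbox A = \ckdia \neg A$, and dually, while $\neg p$ and $p$ are the base atoms. Since the rank definition treats each dual pair ($\vlan$/$\vlor$, $\Box_i$/$\Diamond_i$, $\ckbox$/$\ckdia$) symmetrically—assigning the same clause to both members—the equation $\rk(A) = \rk(\neg A)$ follows directly: in the base case $\rk(p) = \rk(\neg p) = 0$, and in each inductive step the rank of $\neg A$ is computed by the same clause as $\rk(A)$ applied to the negated immediate subformulas, to which the induction hypothesis applies.

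For (ii), I would again induct on $A$, showing $\rk(A) < \omega^2$. The atomic case gives $0 < \omega^2$. For the propositional and $\Box_i/\Diamond_i$ cases, the induction hypothesis gives ranks below $\omega^2$, and adding one or taking a max keeps us below $\omega^2$ since $\omega^2$ is a limit ordinal closed under successor and max of finitely many smaller ordinals. The only genuinely interesting case is $\ckbox A$ (and its dual), where $\rk(\ckbox A) = \omega + \rk(A)$; here the induction hypothesis gives $\rk(A) < \omega^2$, and since $\omega + \beta < \omega^2$ whenever $\beta < \omega^2$ (because $\omega + \beta \leq \omega \cdot 2 + \beta' $ for appropriate decomposition, and more directly $\omega + \beta < \omega^2$ as $\omega^2 = \omega \cdot \omega$ absorbs the leading $\omega$), the bound is preserved.

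For (iii), I would compute $\rk(\Box^k A)$ explicitly. Since $\Box A = \Box_1 A \vlan \dots \vlan \Box_h A$, a single $\Box$ contributes a finite amount to the rank: $\rk(\Box B)$ is finite plus $\rk(B)$ in the sense that it equals $\rk(B) + c$ for some finite $c$ depending on $h$ (iterating the max-plus-one clauses over the conjunction and the $\Box_i$). Iterating $k$ times, $\rk(\Box^k A) = \rk(A) + (\text{finite number})$, which is of the form $\rk(A) + m$ for some $m < \omega$. Meanwhile $\rk(\ckbox A) = \omega + \rk(A)$. The comparison then reduces to showing $\rk(A) + m < \omega + \rk(A)$ for all finite $m$; using part (ii) we have $\rk(A) < \omega^2$, so $\rk(A) + m < \omega + \rk(A)$ follows from the fact that finite increments on the right of an ordinal below $\omega^2$ stay below adding $\omega$ on the left—the key point being that $\omega$ added on the \emph{left} dominates any finite quantity added on the \emph{right}. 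The main obstacle, and the only place demanding care, is precisely this last ordinal inequality: one must handle the non-commutativity of ordinal addition correctly and verify that $\beta + m < \omega + \beta$ holds for $\beta < \omega^2$ and $m < \omega$, which requires writing $\beta$ in Cantor normal form (necessarily of the form $\omega \cdot j + r$ with $j, r < \omega$) and checking the inequality componentwise.
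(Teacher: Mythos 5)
Your proposal is correct and takes essentially the same route as the paper's proof: the paper likewise treats (i) and (ii) as immediate, establishes $\rk(\Box^k A) = \rk(A)+k\cdot h$ by induction on $k$, and uses (ii) to reduce (iii) to the ordinal inequality $\alpha + k\cdot h < \omega + \alpha$ for all $\alpha < \omega^2$, which is precisely the Cantor-normal-form check you describe. You merely make explicit details the paper omits, such as the exact finite constant (your $c$ equals $h$) and the closure properties of $\omega^2$ used in (ii).
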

\begin{proof}
  Statements (i) and (ii) are immediate. For (iii), an induction on
  $k$ yields that $\rk(\Box^k A) = \rk(A)+k\cdot h$. By (ii) it is
  then enough to check that for all $k$ and all $\alpha<\omega^2$ we
  have $\alpha+k\cdot h < \omega + \alpha$.
\end{proof}

{\bf Cut rank.}  The \emph{cut rank} of an instance of $\cut$ as shown
in Figure~\ref{f:admwcc} is the rank of its \emph{cut formula} $A$.
For an ordinal $\gamma$ we define the rule $\cut_\gamma$ which is cut
with at most rank $\gamma$ and the rule $\cut_{<\gamma}$ which is cut
with a rank strictly smaller than $\gamma$. For a system $\Sys$ and
ordinals $\alpha$ and $\gamma$ and a sequent $\Gamma$ we write
$\Sys\prov{\alpha}{\gamma}\Gamma$ to say that there is a proof of
$\Gamma$ in system $\Sys + \cut_{<\gamma}$ with depth bounded by
$\alpha$. We write $\Sys\prov{<\alpha}{\gamma}\Gamma$ to say that
there is an ordinal $\alpha_0 < \alpha$ such that
$\Sys\prov{\alpha_0}{\gamma}\Gamma$.

{\bf Admissibility and invertibility.} An inference rule $\rr$ is
\emph{depth- and cut-rank-preserving admissible} or, for short,
\emph{perfectly admissible} for a system $\Sys$ if for each instance
of $\rr$ with premises $\Gamma_1,\Gamma_2\dots$ and conclusion
$\Delta$, whenever $\Sys\prov{\alpha}{\gamma}\Gamma_i$ for each
premise $\Gamma_i$ then $\Sys\prov{\alpha}{\gamma}\Delta$.  For each
rule $\rr$ there is its \emph{inverse}, denoted by $\inverse{\rr}$,
which has the conclusion of $\rr$ as its only premise and any premise
of $\rr$ as its conclusion.  An inference rule $\rr$ is
\emph{perfectly invertible} for a system $\Sys$ if $\inverse{\rr}$
is perfectly admissible for $\Sys$.

We omit the proof of the following lemma, which is standard.

\begin{lemma}[Admissibility of the structural rules and invertibility]\nl
  (i) The rules weakening and contraction from Figure~\ref{f:admwcc} are perfectly admissible for system $\calcgentzen$.\\
  (ii) All rules of $\,\calcgentzen$ except for the $\Box_i$-rule are
  perfectly invertible for system $\calcgentzen$.
\end{lemma}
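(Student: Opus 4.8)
The plan is to establish the three results in the order weakening, invertibility, contraction, each by transfinite induction on the depth of the given proof. Since perfect admissibility means \emph{depth- and cut-rank-preserving}, and since none of the three transformations introduces, removes, or alters a $\cut$ (they only rearrange side formulas of the already-present inferences), cut-rank preservation is automatic; all the work lies in preserving the ordinal depth. Throughout, the only rules needing genuine attention are $\Box_i$, which discards its context $\Sigma$ when read upwards and is the single non-invertible rule, and the $\ckbox$-rule, whose infinitely many premises force the induction to pass through an ordinal supremum.

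First I would prove weakening admissible: if $\calcgentzen\prov{\alpha}{\gamma}\Gamma$ then $\calcgentzen\prov{\alpha}{\gamma}\Gamma,B$ for any formula $B$. By induction on $\alpha$ and a case analysis on the lowermost rule, the new formula $B$ is simply absorbed into the arbitrary context of that rule and carried to its premises. For the identity axiom, $\Gamma,p,\neg p,B$ is again an instance of the axiom. For $\Box_i$ the added $B$ goes into $\Sigma$, which is arbitrary, so the single premise $\Gamma,\ckdia\Delta,A$ and its subproof are left untouched and the depth does not grow. For the $\ckbox$-rule I would weaken each premise $\Gamma,\Box^k A$ by $B$ using the induction hypothesis and reassemble with one $\ckbox$-inference over the same index set. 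As an immediate dividend, the inverse of the $\ckdia$-rule merely adjoins $\Diamond A$, hence is an instance of weakening; this already discharges the invertibility of $\ckdia$ claimed in (ii).

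Next I would treat invertibility of $\vlor$, $\vlan$ and $\ckbox$, again by induction on the proof depth with a case analysis on the lowermost rule $\rho$. If the formula being inverted (one of $A\vlor B$, $A\vlan B$, $\ckbox A$) is passive in $\rho$, I permute the inverse upward to the premise(s) by the induction hypothesis and reapply $\rho$; if it is principal, then $\rho$ is the matching logical rule and the required premise is read off directly, which for $\inverse{\ckbox}$ is the $k$-th premise $\Gamma,\Box^k A$. The one rule that could interfere is $\Box_i$, but it does not: a formula of the form $A\vlor B$, $A\vlan B$ or $\ckbox A$ is neither a $\Diamond_i$-formula, nor a $\ckdia$-formula, nor the principal $\Box_i$-formula, so it can only occur in the discarded context $\Sigma$, where I simply rewrite it without changing the premise. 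The genuinely infinitary case is $\inverse{\ckbox}$ passing over another $\ckbox$-inference: I invert in each of the infinitely many premises and re-collect, and the ordinal depth is preserved precisely because the reassembled proof uses the same rule over the same index set.

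Finally I would prove contraction admissible, by induction on the depth and using the invertibility just obtained. Duplicates sitting passively in a rule are contracted in the premises by the induction hypothesis; when one copy is the principal formula of the lowermost rule, I apply that rule's invertibility to its premises to expose the second copy, contract by the induction hypothesis, and reapply the rule, the essential instances being $\ckbox$ (which needs $\inverse{\ckbox}$) and the fixpoint rule $\ckdia$, whose premise retains $\ckdia A$ so that the two copies meet there. For $\Box_i$ every contraction case either pulls a duplicate through the rule and contracts inside the single premise $\Gamma,\ckdia\Delta,A$ by the induction hypothesis, or removes one of two redundant copies from the arbitrary context $\Sigma$ at no cost. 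I expect the main obstacle to be exactly this interaction of contraction with the two special rules: coordinating the infinitary $\ckbox$-branching so that the ordinal depth stays fixed, and checking that the non-invertibility of $\Box_i$ never blocks a contraction. Both are resolved by the observations that $\Sigma$ is arbitrary and that reassembling over a fixed index set preserves depth. Since no step touches a cut, cut ranks are preserved throughout, yielding perfect admissibility and invertibility.
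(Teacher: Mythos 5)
Your overall architecture --- weakening first, then invertibility of $\vlan$, $\vlor$, $\ckbox$ (noting that the inverse of $\ckdia$ is a weakening), then contraction via the invertibilities --- is the standard route, and your treatment of the two delicate rules is right: a formula of the shapes in question can only sit in the discarded context $\Sigma$ of a $\Box_i$-inference, where it is rewritten or dropped at no cost to the premise, and the infinitary $\ckbox$-inference is disassembled and reassembled over the same index set, so the ordinal depth is preserved at the supremum. For cut-free proofs, i.e.\ for $\gamma=0$, your argument is complete.

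The gap is in your blanket claim that the transformations never ``touch a cut.'' Perfect admissibility quantifies over proofs in $\calcgentzen+\cut_{<\gamma}$ for \emph{every} $\gamma$, so the induction must contain a case for $\cut$ --- and the cut associated with $\calcgentzen$ in Figure~\ref{f:admwcc} is \emph{context-splitting}: from $\Gamma,A$ and $\Delta,\neg A$ it concludes $\Gamma,\Delta$. For weakening and invertibility this case is indeed harmless (the weakened or inverted formula lies in the context of exactly one premise, and the induction hypothesis applies there), but for contraction it is precisely where your method fails: the two copies of the contracted formula $B$ may lie one in $\Gamma$ and one in $\Delta$. Writing $\Gamma=\Gamma',B$ and $\Delta=\Delta',B$, neither premise contains both copies, so ``contract in the premises by the induction hypothesis'' is inapplicable; and no instance of $\cut$ on the given subproofs can conclude $\Gamma',\Delta',B$, because the conclusion of the splitting cut is forced to be the multiset union of the premise contexts, and deleting the surplus $B$ from one premise is not an admissible operation. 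Detouring through weakening does not help either: weakening the premises to $\Gamma',\Delta',B,A$ and $\Gamma',\Delta',B,\neg A$ and cutting yields $\Gamma',\Delta',B,\Gamma',\Delta',B$, whose contraction is circular, since that proof already has the full depth $\alpha$ and so lies outside the induction hypothesis.

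Note the contrast with the deep system, whose associated cut is context-sharing ($\Gamma\cons{A}$ and $\Gamma\cons{\neg A}$ yield $\Gamma\cons{\emptyset}$): there the copies of $B$ occur in \emph{both} premises and the same induction goes through, which is exactly why the analogous lemma for $\calcdeep$ is unproblematic. To repair the shallow statement at positive $\gamma$ you should either (a) formulate $\prov{\alpha}{\gamma}$ with the context-sharing version of cut --- nothing is lost, since the splitting cut is then derivable from it depth-preservingly via admissible weakening --- or (b) restrict the contraction claim to cut-free proofs, which is where the paper actually invokes it for $\calcgentzen$, or (c) explicitly prove depth- and cut-rank-preserving admissibility of the context-sharing cut from the splitting one, which is the genuine missing content here and not a triviality. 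In any case, the $\cut$ case must appear in your case analysis rather than being waved off.
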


\subsection{The Problem for 
Cut-Elimination}

Let us look at the problem of cut-elimination in system
$\calcgentzen$.   Consider the following proof:
\[
\vlderivation{
\vliin{\cut}{}{\Box_i A, \Diamond_i \Gamma, \Sigma, \Delta}  {
\vlin{\Box_i}{}{\Box_i A, \Diamond_i \Gamma, \Sigma, \ckdia \neg
      B}{\Proofleaf{\PP_1}{A, \Gamma, \ckdia \neg B}}}  {
\vliiin{\ckbox}{\;^{1\leq k <\omega}}
{\ckbox B,\Delta}
{\vlhy{\vdots}}
{\Proofleaf{\PP_{2k}}{\Box^k B,\Delta}}
{\vlhy{\vdots}}}}
\]
Here the inference rule above the cut on the left does not apply to
the cut formula while the inference rule on the right does. The
typical transformation would push the left rule instance below the
cut, as follows:
\[
\vlderivation{
\vlin{\Box_i}{}{\Box_i A, \Diamond_i \Gamma, \Sigma,{{\Diamond_i}
  \Delta}} {
\vliin{\cut}{}{A, \Gamma,{\Delta}}{\Proofleaf{\PP_1}
{A, \Gamma,{\ckdia \neg
        B}}} {{
\vliiin{\ckbox}{\;^{1\leq k <\omega}}{{\ckbox
          B},
        {\Delta}}{\vlhy{\vdots}}{\Proofleaf{\PP_{2k}}{{\Box^k
            B},{\Delta}}}{\vlhy{\vdots}}}}}}
\]

However, this transformation introduces the $\Diamond_i$ in
$\Diamond_i \Delta$, and thus it does not yield a proof of the
original conclusion. This problem is caused by the context restriction
in the $\Box_i$-rule.

Such a context restriction also occurs in the standard sequent
calculus for the modal logic {\K}. While it destroys invertibility, at
least it does not cause any difficulties for syntactic cut-elimination
for {\K}.  However, we see that the context restriction poses a genuine
problem for logics with more modalities like in the logic of common
knowledge. In the next section we will see how a more general format
for sequents and inference rules solves the problem since it does not
require context restrictions.

\section{The Nested Sequent 
System}

\textbf{Nested sequents.} A \emph{nested sequent} is a finite multiset of
formulas and boxed sequents.  A \emph{boxed sequent} is an expression
$[\Gamma]_i$ where $\Gamma$ is a nested sequent and $1 \leq i \leq h$.
The letters $\Gamma,\Delta,\Lambda,\Pi,\Sigma$ from now on denote nested
sequents and the word sequent from now on refers to nested sequent,
except when it is clear from the context that a sequent is shallow,
such as a sequent appearing in a derivation in $\calcgentzen$.  A
sequent is always of the form
\[
A_1,\dots,A_m,[\Delta_1]_{i_1},\dots,[\Delta_n]_{i_n} \rlap{\quad ,}
\]
where the $i_j$ denote agents and thus range from $1$ to $h$. As
usual, the comma denotes multiset union and there is no distinction
between a singleton multiset and its element.

Fix an arbitrary linear order on formulas. Fix an arbitrary linear
order on boxed sequents. The \emph{corresponding formula} of a
non-empty sequent $\Gamma$, denoted $\formula{\Gamma}$, is defined as
follows:
\[
\formula{A_1,\dots,A_m,[\Delta_1]_{i_1},\dots,[\Delta_n]_{i_n}} = 
A_1 \vlor \dots \vlor A_m \vlor \Box_{i_1}\formula{\Delta_1} \vlor \dots
\vlor 
\Box_{i_n}\formula{\Delta_n} \mathrlap{\; ,}
\]
where formulas and boxed sequents are listed according to the fixed
orders. The \emph{corresponding formula} of the empty sequent is
$\bot$.  A sequent has a \emph{corresponding tree} whose
nodes are marked with multisets of formulas and whose edges are marked
with agents. The corresponding tree of the above sequent is
\[
\vcenter{\xymatrix@!C=1cm{{}  & {} & {\{A_1,\dots,A_m\}} 
\ar[dll]_{i_1} \ar[dl]^{i_2} \ar[dr]_{i_{n-1}} \ar[drr]^{i_n}\\
    {\tree{\Delta_1}} & {\tree{\Delta_2}} & {\dots} &
    {\tree{\Delta_{n-1}}}& {\tree{\Delta_n}}}} \qquad ,
\]
where $\tree{\Delta_1}\dots\tree{\Delta_n}$ are the corresponding
trees of $\Delta_1\dots\Delta_n$.  Often we do not distinguish between
a sequent and its corresponding tree, for example the \emph{root} of a
sequent is the root of its corresponding tree.

{\bf Formula contexts and sequent contexts.} A \emph{formula context}
is a formula with exactly one occurrence of the special atom
$\context$, which is called \emph{the hole} or \emph{the empty
  context}. A \emph{sequent context} is a sequent with exactly one
occurrence of the hole, which does not occur inside formulas. Formula
contexts are denoted by $A\context, B\context$, and so on. Sequent
contexts are denoted by $\Gamma\context$, $\Delta\context$, and so on.
Formally, sequent contexts are generated inductively as follows: if
$\Delta$ is a sequent then $\Delta, \context$ is a sequent context,
and if $\Delta$ is a sequent and $\Gamma\context$ is a sequent
context, then $\Delta, [\Gamma\context]$ is a sequent context.

The formula $A\cons{B}$ is obtained by replacing $\context$ inside
$A\context$ by $B$ and the sequent $\Gamma\cons{\Delta}$ is obtained
by replacing $\context$ inside $\Gamma\context$ by $\Delta$. For
example, if $\Gamma\context=A,[[B]_1,\context]_2$ and $\Delta=C,[D]_3$
then
\[
\Gamma\cons{\Delta}=A,[[B]_1,C,[D]_3]_2 \quad .
\]
Formally, given a sequent $\Delta$ and a sequent context
$\Gamma\context$ then $\Gamma\cons{\Delta}$ is defined inductively as
follows: if $\Gamma\context=\Gamma_1,\context$ then
$\Gamma\cons{\Delta}=\Gamma_1,\Delta$ and if
$\Gamma\context=\Gamma_1,[\Gamma_2\context]$ then
$\Gamma\cons{\Delta}=\Gamma_1,[\Gamma_2\cons{\Delta}]$.

The \emph{corresponding formula context} of a sequent context
$\Gamma\context$, denoted $\formula{\Gamma\context}$ is defined as
follows:
\[
\begin{array}{l}
  \formula{\Gamma,\context} = \formula{\Gamma} \vlor \context\\
  \formula{\Gamma, [\Delta\context]_i} = \formula{\Gamma} \vlor \Box_i
\formula{\Delta\context}  
\end{array}
\]

\begin{figure}
\fbox{
\parbox{\ftextwidth}{
  \centering
  \[ \Gamma\cons{p,\neg p} \qquad \cinf{\vlan}{\Gamma\cons{A \vlan
      B}}{\Gamma\cons{A} \quad \Gamma\cons{B}} \qquad
  \cinf{\vlor}{\Gamma\cons{A \vlor B}}{\Gamma\cons{A,B}}
\]
\[
  \cinf{\Box_i}{\Gamma\cons{\Box_i A}}{\Gamma\cons{[A]_i}} \qquad\quad
\cinf{\Diamond_i}{\Gamma\cons{\Diamond_i A, [\Delta]_i}}{\Gamma\cons{\Diamond_i A, [\Delta, A]_i}}
\]
 \vspace{1ex}
\[
\cinf{\ckbox}{\Gamma\cons{\ckbox A}}{\Gamma\cons{\Box^k
    A} \quad \text{for all $k \geq 1$}}
\qquad\quad 
\cinf{\ckdia}{\Gamma\cons{\ckdia A}}{\Gamma\cons{\ckdia A,\Diamond^k A}}
\]
}}
  \caption{System \calcdeep}
  \label{f:dc}
\end{figure}

\begin{figure}
\centering
\[
\cinf{\nec}{[\Gamma]_i}{\Gamma} \qquad
\cinf{\wk}{\Gamma\cons{\Delta}}{\Gamma\cons{\emptyset}} \qquad
\cinf{\ctr}{\Gamma\cons{\Delta}}{\Gamma\cons{\Delta,\Delta}} \qquad
\cinf{{\cut}}{\Gamma\cons{\emptyset}}{\Gamma\cons{A} \qquad
  \Gamma\cons{\neg A}}
\]

  \caption{Necessitation, weakening, contraction and cut for system
\calcdeep}
  \label{f:adm}
\end{figure}

Figure~\ref{f:dc} shows our nested sequent system $\calcdeep$.
Figure~\ref{f:adm} shows the structural rules \emph{necessitation},
\emph{weakening} and \emph{contraction} as well as the rule
\emph{cut}, which are associated to system $\calcdeep$. Notice that
the rules of system $\calcdeep$ and the associated rules are different
from the corresponding rules in system $\calcgentzen$ but have the
same names. If we refer to a rule only by its name then it will be
clear from the context which rule is meant. For example the cut in
$\calcgentzen+\cut$ is the one associated to system $\calcgentzen$ and
the one in $\calcdeep+\cut$ is the one associated with system
$\calcdeep$.

\begin{lemma}[Admissibility of the structural rules and invertibility]
  \nl (i) The rules necessitation, weakening and contraction from
  Figure~\ref{f:adm} are
  perfectly admissible for system $\calcdeep$.\\
  (ii) All rules in $\calcdeep$ are perfectly
  invertible for $\calcdeep$.\\
\end{lemma}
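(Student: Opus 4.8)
The plan is to follow the pattern of the corresponding result for the modal cube, Lemma~\ref{l:stradm}, except that every induction now becomes a transfinite induction on the proof depth $\alpha$, since the $\ckbox$-rule makes proofs infinitely branching. Throughout, the cut-rank bound $\gamma$ is merely carried along unchanged, so ``perfect'' (depth- and cut-rank-preserving) admissibility reduces to controlling the depth. The decisive structural feature I would exploit is that, unlike the $\Box_i$-rule of $\calcgentzen$, no rule of $\calcdeep$ carries a context restriction: a rule applicable to a sequent stays applicable after we add formulas anywhere in it or wrap the whole sequent in a box. This is precisely what removes the obstruction that blocks the analogous permutations for $\calcgentzen$.

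First I would treat necessitation and weakening. For necessitation it is cleanest to observe that, since rules apply in arbitrary contexts, replacing every sequent $\Sigma$ occurring in a proof of $\Gamma$ by $[\Sigma]_i$ again yields a legal proof: each rule instance in context $\Lambda\context$ becomes the same rule instance in context $[\Lambda\context]_i$, and an axiom $\Sigma\cons{p,\neg p}$ becomes the axiom $[\Sigma\cons{p,\neg p}]_i$. Depth and cut rank are untouched, giving $\calcdeep\prov{\alpha}{\gamma}[\Gamma]_i$. For weakening I would argue by induction on $\alpha$: inspecting the last rule $\rho$ of a proof of $\Gamma\cons{\emptyset}$, the added material $\Delta$ sits passively at the hole and never blocks $\rho$, so $\rho$ can be reapplied after invoking the induction hypothesis on its premises, the axiom case being immediate.

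Next I would establish invertibility, part (ii). For the two rules that keep their principal formula I note that their inverses are literally instances of weakening: $\inverse{\Diamond_i}$ only inserts $A$ into the box $[\Delta]_i$, and $\inverse{\ckdia}$ only inserts $\Diamond^k A$, so both are perfectly admissible by the weakening result just proved. For $\vlor$, $\vlan$, $\Box_i$ and the infinitary $\ckbox$ I would run the usual induction on $\alpha$, permuting the inverse upward past rules acting elsewhere and reading off the relevant premise when the last rule is the matching one; in the $\ckbox$ case the premise $\Gamma\cons{\Box^k A}$ we want is simply one of the infinitely many premises of that rule instance, and extracting it strictly decreases the depth, which is harmless.

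Finally, contraction (the remaining clause of part (i)) is proved by transfinite induction on $\alpha$ using the depth-preserving invertibility just obtained. Given a proof of $\Gamma\cons{\Delta,\Delta}$ with last rule $\rho$, I distinguish whether the principal formula of $\rho$ lies in the surrounding context (commute $\rho$ below the contraction and apply the induction hypothesis) or inside one copy of $\Delta$. In the latter case the second copy carries the corresponding undecomposed principal formula, so I invert $\rho$ on that copy to align the two copies, apply the induction hypothesis to contract them, and reapply $\rho$ once; for the kept rules $\Diamond_i,\ckdia$ this inversion step is again just a weakening, exactly as in case~1.2 of Lemma~\ref{l:stradm}. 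I expect the $\ckbox$-rule to be the only genuinely delicate point: there the alignment must be carried out uniformly for each of the infinitely many premises $\Gamma\cons{\Box^k A}$ before a single final $\ckbox$ reintroduces $\ckbox A$, and one must check that the supremum of the premise depths plus one does not exceed $\alpha$. Since $\calcdeep$ has no context restrictions, none of these cases hits the pathology that defeats the shallow system.
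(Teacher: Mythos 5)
Your proposal is correct and follows essentially the same route as the paper's proof: necessitation and weakening by a routine (transfinite) induction on proof depth, invertibility obtained either from weakening admissibility (for $\Diamond_i$ and $\ckdia$, whose inverses merely add material) or by the same induction (for $\vlan,\vlor,\Box_i,\ckbox$), and contraction by induction on depth using depth-preserving invertibility in the invert--contract--reapply pattern. The only difference is one of emphasis: the paper declares the $\ckbox$ case of contraction trivial and instead spells out the position case analysis for the $\Diamond_i$-rule, while your uniform alignment across the infinitely many $\ckbox$-premises (with the supremum-plus-one depth check) is precisely what that ``trivial'' case unwinds to.
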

\begin{proof}
  Admissibility of necessitation and weakening follow from a routine
  induction on the depth of the proof. The same works for the
  invertibility of the $\vlan, \vlor, \Box_i$ and $\ckbox$-rules in
  (ii). The inverses of all other rules are just weakenings. For
  admissibility of contraction we also proceed by induction on the
  depth of the proof tree, using invertibility of the rules.  The
  cases for the propositional rules and for the
  $\Box_i,\ckbox,\ckdia$-rules are trivial. For the $\Diamond_i$-rule
  we consider the formula $\Diamond_i A$ from its conclusion
  $\Gamma\cons{\Diamond_i A, [\Delta]_i}$ and its position inside the
  premise of contraction $\Lambda\cons{\Sigma,\Sigma}$. We have the
  cases 1) $\Diamond_i A$ is inside $\Sigma$ or 2) $\Diamond_i A$ is
  inside $\Lambda\context$.  We have two subcases for case 1: 1.1)
  $[\Delta]_i$ inside $\Lambda\context$, 1.2) $[\Delta]_i$ inside
  $\Sigma$. There are three subcases of case 2: 2.1) $[\Delta]_i$
  inside $\Lambda\context$ and 2.2) $[\Delta]_i$ inside $\Sigma$, 2.3)
  $\Sigma,\Sigma$ inside $[\Delta]_i$.  All cases are either simpler
  than or similar to case 2.2, which is as follows:
  \[
  \vlderivation{
\vlin{{\ctr}}{}{\Lambda'\cons{\Diamond_i A, \Sigma', [\Delta]_i}}{
\vlin{{\Diamond_i}}{}{\Lambda'\cons{\Diamond_i A, \Sigma',
          [\Delta]_i,\Sigma',
          [\Delta]_i}}{
\vlhy{\Lambda'\cons{\Diamond_i A, \Sigma', 
[\Delta,A]_i, \Sigma', [\Delta]_i}}  
}}} \quad
  \leadsto \quad 
\vlderivation{ 
\vlin{{\Diamond_i}}{}{\Lambda'\cons{\Diamond_i A,
        \Sigma', [\Delta]_i}}{ 
\vlin{\ctr}{}{\Lambda'\cons{\Diamond_i A,
          \Sigma', [\Delta,A]_i}}{ 
\vlin{{\inverse{\Diamond}_i}}{}
{\Lambda'\cons{\Diamond_i A, \Sigma', [\Delta,A]_i, \Sigma',
            [\Delta,A]_i}}{\vlhy{\Lambda'\cons{\Diamond_i
              A, \Sigma', [\Delta,A]_i, \Sigma', [\Delta]_i}}}}} } \quad
  ,
  \]
  where the instance of $\inverse{\Diamond}_i$ in the proof on the right
  is removed because it is perfectly admissible and the
  instance of contraction is removed by the induction hypothesis.
\end{proof}

\begin{lemma}[Derivability of the general identity axiom]\label{l:gid}
  For all contexts $\Gamma\context$ and all formulas $A$ we have
$\calcdeep \prov{2\cdot\rk(A)}{0} \Gamma\cons{A,\neg{A}}$.
\end{lemma}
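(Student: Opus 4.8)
The plan is to prove the statement by induction on $\rk(A)$, using Lemma~\ref{l:rk} (Some properties of the rank). Induction on the rank rather than on the structure of $A$ is essential here: in the case $A=\ckbox B$ the induction hypothesis must be applied to $\Box^k B$, which is not a subformula of $A$, but which does have strictly smaller rank by Lemma~\ref{l:rk}(iii). Throughout, I apply the induction hypothesis in a suitably enlarged sequent context; note that if $\Gamma\context$ is a sequent context then so are things like $\Gamma\cons{C,\context}$ and $\Gamma\cons{\ckdia\neg B,\context}$, since they still contain exactly one hole, not inside a formula. Depth-preserving admissibility of the structural rules (from the preceding lemma) is available if needed.

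For the base case $A=p$ or $A=\neg p$, the sequent $\Gamma\cons{p,\neg p}$ is an instance of the identity axiom, so it has a proof of depth $0=2\cdot\rk(A)$. For $A=B\vlor C$ we have $\neg A=\neg B\vlan\neg C$; the induction hypothesis in the contexts $\Gamma\cons{C,\context}$ and $\Gamma\cons{B,\context}$ yields proofs of $\Gamma\cons{B,C,\neg B}$ and $\Gamma\cons{B,C,\neg C}$ of depths at most $2\cdot\rk(B)$ and $2\cdot\rk(C)$, and one application of $\vlan$ followed by one of $\vlor$ gives $\Gamma\cons{B\vlor C,\neg B\vlan\neg C}$ of depth at most $\max(2\rk(B),2\rk(C))+2=2\cdot\rk(A)$. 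The case $A=B\vlan C$ is dual. For $A=\Box_i B$ we have $\neg A=\Diamond_i\neg B$; the induction hypothesis in the context $\Gamma\cons{\Diamond_i\neg B,[\context]_i}$ gives a proof of $\Gamma\cons{\Diamond_i\neg B,[B,\neg B]_i}$ of depth at most $2\rk(B)$, whence one application of $\Diamond_i$ and then of $\Box_i$ yields $\Gamma\cons{\Box_i B,\Diamond_i\neg B}$ of depth at most $2\rk(B)+2=2\cdot\rk(A)$. The case $A=\Diamond_i B$ is dual.

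The interesting case is $A=\ckbox B$, with $\neg A=\ckdia\neg B$ (the case $A=\ckdia B$ being dual, using Lemma~\ref{l:rk}(i)). I apply the $\ckbox$-rule to the displayed $\ckbox B$, producing one premise $\Gamma\cons{\Box^k B,\ckdia\neg B}$ for each $k\geq 1$. Each such premise is derived by first applying $\ckdia$ to unfold $\ckdia\neg B$ into $\ckdia\neg B,\Diamond^k\neg B$, and then using that $\Box^k B$ and $\Diamond^k\neg B$ are De Morgan duals (the negation of $\Box^k B$ is $\Diamond^k\neg B$): since $\rk(\Box^k B)<\rk(\ckbox B)=\rk(A)$ by Lemma~\ref{l:rk}(iii), the induction hypothesis in the context $\Gamma\cons{\ckdia\neg B,\context}$ provides a proof of $\Gamma\cons{\ckdia\neg B,\Box^k B,\Diamond^k\neg B}$ of depth at most $2\cdot\rk(\Box^k B)$. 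Hence the $k$-th premise has a proof of depth $\alpha_k\leq 2\cdot\rk(\Box^k B)+1$.

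It remains to check the ordinal bookkeeping for the $\ckbox$-rule, and this is the only step that is more than routine. Writing $\rk(B)=\omega\cdot m+n$ with $m,n$ finite (possible by Lemma~\ref{l:rk}(ii)) and using $\rk(\Box^k B)=\rk(B)+k\cdot h$, each $\alpha_k$ is bounded by $\omega\cdot m+2n+2kh+1$, a finite amount above $\omega\cdot m$. The depth of the conclusion of the infinitary $\ckbox$-rule is $\sup_{k}(\alpha_k+1)$; since every $\alpha_k+1$ has the form $\omega\cdot m+(\text{finite})$, this supremum is $\omega\cdot m+\omega=\omega\cdot(m+1)$, and $\omega\cdot(m+1)\leq\omega\cdot(m+1)+2n=2\cdot(\omega+\rk(B))=2\cdot\rk(\ckbox B)=2\cdot\rk(A)$, as required. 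The main obstacle is precisely this absorption of the finite increments (the two $+1$'s coming from $\ckdia$ and from the $\ckbox$-rule, and the growing $2kh$) into the limit ordinal: it is what renders the infinitely branching rule harmless for the depth count, and it is exactly what forces the use of induction on rank together with Lemma~\ref{l:rk}(iii) to legitimise applying the induction hypothesis to the non-subformula $\Box^k B$.
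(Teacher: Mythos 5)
Your proof is correct and takes essentially the same route as the paper's: induction on $\rk(A)$ with a case analysis on the main connective, the identical derivations in the $\Box_i$ and $\ckbox$ cases, and Lemma~\ref{l:rk}(iii) to justify applying the induction hypothesis to the non-subformula $\Box^k B$. The only differences are cosmetic: you absorb the side formula $\ckdia\neg B$ into the context rather than invoking weakening admissibility, and you do the ordinal bookkeeping explicitly via $\rk(B)=\omega\cdot m+n$ where the paper computes $2\cdot(\rk(B)+\omega)\leq 2\cdot(\omega+\rk(B))$ directly.
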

\begin{proof}
  We perform an induction on $\rk(A)$ and a case analysis on the main
  connective of $A$. The cases for atoms and for the propositional
  connectives are obvious. For $A=\Box_i B$ and $A=\ckbox B$ we
  respectively have
\[
\vlderivation{
\vlin{\Box_i}{}{\Gamma\cons{\Box_i B, \Diamond_i \neg B}}
{\vlin{\Diamond_i}{}{\Gamma\cons{[B]_i, \Diamond_i \neg B}}
{\vlhy{\Gamma\cons{[B,\neg B]_i, \Diamond_i \neg B}}}}
}
\qquad\mbox{and}\qquad
\vlderivation{
\vliiin{\ckbox}{\;^{1\leq k < \omega}}{\Gamma\cons{\ckbox B, \ckdia
\neg B}}
{\vlhy{\vdots}}
{\vlin{\wk,\ckdia}{}{\Gamma\cons{\Box^k B, \ckdia\neg B}}
  {\vlhy{\Gamma\cons{\Box^k B, \Diamond^k \neg B}}}}
{\vlhy{\vdots}}
} \rlap{\quad .}
\]
On the left by induction hypothesis we get a proof of the premise of
depth $2\cdot\rk(B)$ and thus a proof of the conclusion of depth
$2\cdot\rk(B)+2=2\cdot(\rk(B)+1)=2\cdot\rk(\Box_i B)$. On the right by
Lemma~\ref{l:rk} we can apply the induction hypothesis for each
premise to get a proof of depth $2\cdot \rk(\Box^k B)=2\cdot
(\rk(B)+k\cdot h)$ and thus a proof of the conclusion of depth $2\cdot
(\rk(B)+\omega) \leq 2\cdot (\omega + \rk(B))=2\cdot\rk(\ckbox B)$.
\end{proof}

\section{Cut-Elimination for the Nested System}

We first need some notions concerning ordinals. For an introduction to
ordinals we refer to Sch\"utte \cite{schutte:77}. We write $\alpha
\nsum \beta$ for the {\em natural sum of $\alpha$ and $\beta$} which,
in contrast to the ordinary ordinal sum, does not cancel additive
components. In particular, the natural sum is commutative. To give
names to the ordinals which measure our proofs we also need the
following definition.

\begin{definition}[Veblen function]
  The \emph{binary Veblen function} $\varphi$ is generated inductively
  as follows:
  \begin{enumerate}
  \item $\veblen{0}{\beta} := \omega^\beta$,
  \item if $\alpha>0$, then $\veblen{\alpha}{\beta}$ is the
    $(\beta+1)$th common fixpoint of the functions
    $\xi\mapsto\veblen{\gamma}{\xi}$ for all $\gamma < \alpha$.
  \end{enumerate}
\end{definition}

The Veblen function just generates an increasing sequence of ordinals,
as follows: $\veblen{0}{0}=\omega^0=1$,
$\veblen{0}{1}=\omega^1=\omega$, $\dots$,
$\veblen{0}{\omega}=\omega^\omega$, $\dots$,\\
$\veblen{1}{0}=$ first fixpoint of the function
($\xi\mapsto\veblen{0}{\xi}=\xi\mapsto\omega^\xi)=\varepsilon_0$,
$\dots$,\\
$\veblen{2}{0}=$ first fixpoint of the function
($\xi\mapsto\veblen{1}{\xi}=\xi\mapsto\varepsilon_\xi)$,
$\dots$\quad .

Here we will only need ordinals up to $\veblen{2}{0}$. In this
subsection we write $\prov{\alpha}{\beta}\Gamma$ for
$\calcdeep\prov{\alpha}{\beta}\Gamma$. We now prove the central lemma.

\begin{lemma}[Reduction Lemma]
  If there is a proof
\[
\vlderivation{
\vliin{\cut_{\gamma}}{} {\Gamma\cons{\emptyset}}
  {\Proofleaf{\PP_1}{\Gamma\cons{A}}}
  {\Proofleaf{\PP_2}{\Gamma\cons{\neg A}}}}
\]
with $\PP_1$ and $\PP_2$ in $\calcdeep+\cut_{<\gamma}\,$ then
$\prov{|\PP_1|\nsum|\PP_2|}{\gamma}\Gamma\cons{\emptyset}\,$.\\
\end{lemma}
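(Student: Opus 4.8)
The plan is to prove the Reduction Lemma by a main induction on $|\PP_1| \nsum |\PP_2|$ with a case analysis on the lowermost rules of $\PP_1$ and $\PP_2$. The cut in the conclusion has rank $\gamma$, meaning its cut formula $A$ has $\rk(A) = \gamma$, and all cuts in $\PP_1,\PP_2$ have rank strictly below $\gamma$. The goal is to produce a proof of $\Gamma\cons{\emptyset}$ of depth $|\PP_1| \nsum |\PP_2|$ using only cuts of rank $<\gamma$, so I must be careful never to introduce a fresh cut whose rank reaches $\gamma$, and to keep the depth bookkeeping exact with respect to the natural sum.

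First I would dispose of the easy cases. If either lowermost rule is the identity axiom, or if the cut formula is not principal in one of the two lowermost inferences (the \emph{passive} case), then that inference can be permuted below the cut: I apply the induction hypothesis to the relevant immediate subproof (which has strictly smaller $|\PP_i|$, hence strictly smaller natural sum), and then reapply the rule underneath. Here I would invoke perfect admissibility of weakening and contraction from the structural-rules lemma to adjust contexts, and use that the natural sum is strictly monotone in each argument so the depth bound is preserved. This handles all cases except where $A$ is principal in \emph{both} bottommost inferences.

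The heart of the proof is the \textbf{principal case}, split according to the main connective of $A$. For $A = B \vlan C$ (dually $\vlor$), the cut on $B \vlan C$ against $\neg B \vlor \neg C$ reduces to two nested cuts on the smaller formulas $B$ and $C$, both of rank $<\gamma$; weakening admissibility supplies the missing context components, and the rank measure guarantees $\rk(B),\rk(C) < \rk(B\vlan C)$. The genuinely new and delicate cases are the modal ones. For $A = \Box_i B$ cut against $\Diamond_i \neg B$, the $\Box_i$-rule creates $[B]_i$ while the $\Diamond_i$-rule feeds $\neg B$ into a boxed child $[\Delta, \neg B]_i$; I push the premise of $\Box_i$ inside the box using necessitation/weakening and form a new cut on $B$ of strictly lower rank, matching the pattern already used in the classical $\Diamond_i$-contraction argument. \textbf{The main obstacle} will be the fixpoint case $A = \ckbox B$ cut against $\ckdia \neg B$: the $\ckbox$-rule has \emph{infinitely many premises} $\Gamma\cons{\Box^k B}$ for all $k\geq 1$, while the $\ckdia$-rule on the other side unfolds $\ckdia\neg B$ to $\ckdia\neg B, \Diamond^k \neg B$. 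Here I cannot simply cut on $\ckbox B$ again; instead I must cut the finite approximant $\Box^k B$ against its dual $\Diamond^k\neg B$. By Lemma~\ref{l:rk}(iii) we have $\rk(\Box^k B) < \rk(\ckbox B) = \gamma$ for every finite $k$, so these replacement cuts all have rank strictly below $\gamma$ — this is precisely why the rank was designed with $\rk(\ckbox A) = \omega + \rk(A)$. The subtlety is selecting the right $k$: the $\ckdia$-inference on the right premise of the cut exposes some particular $\Diamond^k\neg B$, and I use exactly the matching premise $\Gamma\cons{\Box^k B}$ from the infinitely-branching $\ckbox$-inference on the left, then cut the two at rank $\rk(\Box^k B) < \gamma$.

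Throughout, the depth accounting is what forces the \emph{natural} (commutative, non-absorbing) sum rather than ordinary ordinal sum: when I recombine subproofs of depths below $|\PP_1|$ and $|\PP_2|$ via the induction hypothesis and stack a bounded number of admissible structural inferences on top, the resulting depth is bounded by $|\PP_1| \nsum |\PP_2|$ because the natural sum is strictly increasing in each component and does not collapse additive terms. I would verify in each case that the rules reintroduced below the new cuts are single-premise structural rules whose admissibility (necessitation, weakening, contraction) is depth-preserving by the earlier lemma, so no depth is lost and the final bound holds exactly as stated.
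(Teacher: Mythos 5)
Your overall skeleton (induction on $|\PP_1|\nsum|\PP_2|$, case analysis on the lowermost rules, rank bookkeeping via $\rk(\ckbox A)=\omega+\rk(A)$, and cutting the matching approximant $\Box^k$ against $\Diamond^k$) is the paper's. But there is a genuine gap in exactly the cases you call the heart of the proof. The rules $\Diamond_i$ and $\ckdia$ of $\calcdeep$ have built-in contraction: their principal formula survives into the premise. So after the active inference on the right-hand premise of the cut, the sequent is $\Gamma\cons{[\Delta,\neg A]_i,\Diamond_i\neg A}$, respectively $\Gamma\cons{\ckdia\neg A,\Diamond^j\neg A}$, and your single lower-rank cut does not remove the residual copy of the cut formula. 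Concretely, in the fixpoint case your plan cuts $\Gamma\cons{\Box^j A}$ against $\Gamma\cons{\ckdia\neg A,\Diamond^j\neg A}$ at rank $\rk(\Box^j A)<\gamma$; this yields $\Gamma\cons{\ckdia\neg A}$, which is the original right premise of the cut you set out to reduce --- the construction is circular and never reaches $\Gamma\cons{\emptyset}$. The missing device is a \emph{second cut on the original cut formula at the same rank $\gamma$}, legitimate because it is applied to strictly smaller subproofs and is therefore discharged by the induction hypothesis on the natural sum: in the $(\ckbox\mbox{--}\ckdia)$ case one first reassembles, via weakening and the $\ckbox$-rule, a proof of $\Gamma\cons{\ckbox A,\Diamond^j\neg A}$ from the premises $\PP_{1k}$, applies the induction hypothesis to cut this at rank $\omega+\sigma=\gamma$ against $\PP_{21}$, obtaining $\Gamma\cons{\Diamond^j\neg A}$ \emph{without} the residual $\ckdia\neg A$, and only then cuts against $\PP_{1j}$ at rank $\sigma+j\cdot h<\gamma$ (Lemma~\ref{l:rk}). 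The $(\Box_i\mbox{--}\Diamond_i)$ case needs the same two-cut structure (the same-rank cut removes the residual $\Diamond_i\neg A$; the lower cut on $A$ happens inside the merged box, obtained from $\Gamma\cons{[\Delta]_i,[A]_i}$ by weakening and contraction, not necessitation). This re-application of the cut at rank $\gamma$ is the entire reason the induction measure is $|\PP_1|\nsum|\PP_2|$ rather than the rank of the cut formula; your proposal announces that measure but never invokes the induction hypothesis at rank $\gamma$, which is its only non-routine use.

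A secondary imprecision: in the passive case you propose adjusting contexts with weakening and contraction admissibility, but when the passive rule $\rr$ is $\vlan$, $\vlor$, $\Box_i$ or $\ckbox$ the context genuinely changes shape, and what is needed is the depth- and cut-rank-preserving \emph{invertibility} of all $\calcdeep$-rules (the paper applies $\inverse{\rr}$ to the other premise before the induction-hypothesis cut); weakening and contraction alone cannot turn a proof of $\Gamma\cons{A}$ into one of $\Gamma_i\cons{A}$ in those cases.
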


\begin{proof}
  By induction on $|\PP_1| \nsum |\PP_2|$. We perform a case analysis
  on the two lowermost rules in the given proofs. If one of the two
  rules is passive and an axiom then $\Gamma\cons{\emptyset}$ is
  axiomatic as well. If one is active and an axiom then we have
\[
\vlderivation{
\vliin {\cut_{0}}{} {\Gamma\cons{\neg p}}
  {\vlhy{\Gamma\cons{p, \neg p}}}
  {\Proofleaf{\PP_{2}}{\Gamma\cons{\neg p, \neg p}}}}
\qquad \leadsto \qquad
\vlderivation{
\vlin {\ctr}{} {\Gamma\cons{\neg p}}
  {\Proofleaf{\PP_{2}}{\Gamma\cons{\neg p, \neg p}}}}
\quad ,
\]
and by contraction admissibility we have
$\prov{|\PP_2|}{0}\Gamma\cons{\neg p}$ and thus
$\prov{|\PP_1|\nsum|\PP_2|}{0}\Gamma\cons{\neg p}$. If some rule
$\rr$ is passive then we have
\[
\vlderivation{
\vliin {\cut_{\gamma}}{} {\Gamma\cons{\emptyset}}
  {\Proofleaf{\PP_1}{\Gamma\cons{A}}}
  {\vliiin{\rr}{}{\Gamma\cons{\neg A}}{\vlhy{\vdots}}{
  \Proofleaf{\PP_{2i}}{\Gamma_i\cons{\neg A}}}{\vlhy{\vdots}}}}
\quad \leadsto \quad
\]\[
\vlderivation{\vliiin{\rr}{}{\Gamma\cons{\emptyset}}{\vlhy{\vdots}}
  {\vliin {\cut_{\gamma}}{} {\Gamma_i\cons{\emptyset}}
  {\vlin{\neg \rr}{}{\Gamma_i\cons{A}}{\Proofleaf{\PP_1}{\Gamma\cons{A}}}}
  {\Proofleaf{\PP_{2i}}{\Gamma_i\cons{\neg A}}}}{\vlhy{\vdots}}}
\quad ,
\]
where $i$ ranges from 1 to the number of premises of $\rr$. By
invertibility of $\rr$ we get
$\prov{|\PP_1|}{\gamma}\Gamma_i\cons{A}$, thus by induction hypothesis
$\prov{|\PP_1|\nsum |\PP_{2i}|}{\gamma}\Gamma_i\cons{\emptyset}$ for
all $i$ and by $\rr$ we get $\prov{|\PP_1|\nsum
  |\PP_{2}|}{\gamma}\Gamma\cons{\emptyset}$.

This leaves the case that both rules are active and neither is an
axiom. We have:

$(\vlan - \vlor)$:
\[
\vlderivation{
\vliin {\cut_{\sigma+1}}{} {\Gamma\cons{\emptyset}}
  {\vliin{\vlan}{}{\Gamma\cons{B \vlan C}}
  {\Proofleaf{\PP_{11}}{\Gamma\cons{B}}}
  {\Proofleaf{\PP_{12}}{\Gamma\cons{C}}}}
  {\vlin{\vlor}{}{\Gamma\cons{\neg B \vlor \neg C}}{
  \Proofleaf{\PP_{21}}{\Gamma\cons{\neg B, \neg C}}}}}
\qquad \leadsto \qquad
\]\[
\vlderivation{
\vliin{\cut_\sigma}{}{\Gamma\cons{\emptyset}}
{\Proofleaf{\PP_{11}}{\Gamma\cons{B}}}
{\vliin{\cut_{\sigma}}{} {\Gamma\cons{\neg B}}
  {\vlin{\wk}{}{\Gamma\cons{\neg B, C}}
{\Proofleaf{\PP_{12}}{\Gamma\cons{C}}}}
  {\Proofleaf{\PP_{21}}{\Gamma\cons{\neg B, \neg C}}}}} \qquad ,
\]
where by weakening admissibility we get
$\prov{|\PP_{12}|}{\gamma}{\Gamma\cons{\neg B, C}}$, and since
$\sigma<\sigma+1=\gamma$ we get
$\prov{\alpha}{\gamma}\Gamma\cons{\emptyset}$ for
$\alpha=\max(|\PP_{11}|,\max(|\PP_{12}|,|\PP_{21}|)+1)+1$. It is easy
to check that $\alpha \leq |\PP_1|\nsum|\PP_2|$.

$(\Box_i - \Diamond_i)$:
\[
\vlderivation{
\vliin {\cut_{\sigma+1}}{} {\Gamma\cons{[\Delta]_i}}
 {\vlin{\Box_i}{}{\Gamma\cons{[\Delta]_i,\Box_i A}}
   {\Proofleaf{\PP_{11}}{\Gamma\cons{[\Delta]_i,[A]_i}}}} 
{\vlin{{\Diamond_i}}{}
   {\Gamma\cons{[\Delta]_i, \Diamond_i \neg A}}
   {\Proofleaf{\PP_{21}}{\Gamma\cons{[\Delta,\neg A]_i, \Diamond_i
\neg A}}}}
}
\qquad \leadsto \qquad
 \] \[
 \vlderivation{
\vliin {\cut_{\sigma}}{} {\Gamma\cons{[\Delta]_i}} {
\vlin {\ctr}{}
   {\Gamma\cons{[\Delta,A]_i}}
   {\vlin{\wk^2}{}{\Gamma\cons{[\Delta,A]_i,[\Delta,A]_i}}{\Proofleaf{\PP_{11}}
{\Gamma\cons{[\Delta]_i,[A]_i}}}}}
 {\vliin {\cut_{\sigma+1}}{} {\Gamma\cons{[\Delta, \neg A]_i}} {
\vlin {\wk,\Box_i}{}
     {\Gamma\cons{[\Delta, \neg A]_i,\Box_i A}}
     {\Proofleaf{\PP_{11}}{\Gamma\cons{[\Delta]_i,[A]_i}}}}
   {\Proofleaf{\PP_{21}}{\Gamma\cons{[\Delta, \neg A]_i, \Diamond_i
\neg A}}}}}
\quad ,
\] 
where the premises of the upper cut have been derived by use of
weakening admissibility with depth $|\PP_{11}|+1$ and $|\PP_{21}|$,
the natural sum of which is smaller than $|\PP_1|\nsum|\PP_2|$. The
induction hypothesis thus yields
$\prov{(|\PP_{11}|+1)\nsum|\PP_{21}|}{\gamma}\Gamma\cons{[\Delta,\neg
  A]_i}$ and since $\sigma<\sigma+1=\gamma$ we get
$\prov{|\PP_{1}|\nsum|\PP_{2}|}{\gamma}\Gamma\cons{[\Delta]_i}$ by the
lower cut.

$(\ckbox - \ckdia)$:
\[
\vlderivation{
\vliin{\cut_{\omega+\sigma}}{}{\Gamma\cons{\emptyset}}{
\vliiin{\ckbox}
{\;^{1 \leq k}}{\Gamma\cons{\ckbox A}}
{\vlhy{\vdots}}
{\Proofleaf{\PP_{1k}}{\Gamma\cons{\Box^k A}}}
{\vlhy{\vdots}}}
{\vlin{\ckdia}{}{\Gamma\cons{\ckdia \neg A}}
{\Proofleaf{\PP_{21}}{\Gamma\cons{\ckdia\neg A, \Diamond^j \neg A}}}}
}\qquad \leadsto
\]
\[
\vlderivation{
\vliin{\cut_{\sigma+(j\cdot h)}}{}{\Gamma\cons{\emptyset}}
{\Proofleaf{\PP_{1j}}{\Gamma\cons{\Box^j A}}}
{\vliin{\cut_{\omega+\sigma}}{}{\Gamma\cons{\Diamond^j \neg A}}
{\vliiin{\ckbox}{\;^{1 \leq k}}
{\Gamma\cons{\ckbox A, \Diamond^j \neg A}}
{\vlhy{\vdots}}{
\vlin{\wk}{}{\Gamma\cons{\Box^k A, \Diamond^j \neg A}}
{\Proofleaf{\PP_{1k}}{\Gamma\cons{\Box^k A}}}}{\vlhy{\vdots}}}
{\Proofleaf{\PP_{21}}{\Gamma\cons{\ckdia \neg A, \Diamond^j \neg A}}}}}
\quad ,
\]
where the induction hypothesis applied on the upper cut gives us
$\prov{|\PP_1|\nsum|\PP_{21}|}{\gamma}\Gamma\cons{\Diamond^j \neg A}$
and since by Lemma~\ref{l:rk} we have $\sigma+j\cdot h < \omega +
\sigma = \gamma$ the lower cut yields
$\prov{|\PP_1|\nsum|\PP_{2}|}{\gamma}\Gamma\cons{\emptyset}$.
\end{proof}

From the reduction lemma we obtain the first and the second
elimination lemma as usual, see for instance Pohlers
\cite{pohlers89,pohlers98} or Sch\"utte \cite{schutte:77}.

\begin{lemma}[First Elimination Lemma]
If\/ $\prov{\alpha}{\gamma+1}\Gamma$ then  $\;\prov{2^\alpha}{\gamma}\Gamma$.
\end{lemma}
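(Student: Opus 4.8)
The plan is to proceed by transfinite induction on $\alpha$, inspecting the lowermost rule $\rho$ of the given proof of $\Gamma$ in $\calcdeep+\cut_{<\gamma+1}$. By definition such a proof contains only cuts of rank at most $\gamma$, so the task is to remove exactly the cuts of rank $\gamma$ while keeping those of rank $<\gamma$. Each premise $\Gamma_i$ of $\rho$ is proved with depth $\alpha_i<\alpha$, still in $\calcdeep+\cut_{<\gamma+1}$, so the induction hypothesis supplies $\prov{2^{\alpha_i}}{\gamma}\Gamma_i$ for every $i$.

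If $\rho$ is anything other than a cut of rank exactly $\gamma$ — that is, a logical or structural rule of $\calcdeep$, or a cut of rank $<\gamma$ — then $\rho$ is already a rule of $\calcdeep+\cut_{<\gamma}$, and I simply reapply it to the proofs furnished by the induction hypothesis. The result has depth at most $\sup_i(2^{\alpha_i}+1)$, and since $\alpha_i+1\le\alpha$ we have $2^{\alpha_i}+1\le 2^{\alpha_i}\cdot 2=2^{\alpha_i+1}\le 2^\alpha$, so $\prov{2^\alpha}{\gamma}\Gamma$ follows. This argument is insensitive to $\rho$ being infinitely branching, as with the $\ckbox$-rule, because it only takes a supremum of quantities each already bounded by $2^\alpha$.

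The one interesting case is when $\rho$ is a cut whose cut formula $A$ has rank exactly $\gamma$, so that the target sequent is of the form $\Gamma\cons{\emptyset}$ with premises $\Gamma\cons{A}$ and $\Gamma\cons{\neg A}$. The induction hypothesis turns the two subproofs into proofs $\PP_1,\PP_2$ in $\calcdeep+\cut_{<\gamma}$ of depths $|\PP_1|\le 2^{\alpha_1}$ and $|\PP_2|\le 2^{\alpha_2}$. Since their remaining cuts all have rank $<\gamma$, this is precisely the configuration to which the Reduction Lemma applies, yielding $\prov{|\PP_1|\nsum|\PP_2|}{\gamma}\Gamma\cons{\emptyset}$ and hence $\prov{2^{\alpha_1}\nsum 2^{\alpha_2}}{\gamma}\Gamma\cons{\emptyset}$ by monotonicity in the height.

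The main obstacle is the ordinal arithmetic needed to conclude $2^{\alpha_1}\nsum 2^{\alpha_2}\le 2^\alpha$. Setting $\beta=\max(\alpha_1,\alpha_2)<\alpha$, monotonicity of the natural sum gives $2^{\alpha_1}\nsum 2^{\alpha_2}\le 2^\beta\nsum 2^\beta$. The key point is that $2^\beta$ always has a single term in Cantor normal form: writing $\beta=\omega\cdot\delta+n$ with $n<\omega$ gives $2^\beta=(2^\omega)^\delta\cdot 2^n=\omega^\delta\cdot 2^n$. For an ordinal with a single Cantor term the natural sum with itself coincides with the ordinary sum, so $2^\beta\nsum 2^\beta=2^\beta+2^\beta=2^{\beta+1}$, and $\beta+1\le\alpha$ gives $2^{\beta+1}\le 2^\alpha$. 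This closes the interesting case and completes the induction.
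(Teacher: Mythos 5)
Your proof is correct and follows essentially the same route as the paper's: induction on $\alpha$ with a case analysis on the last rule, dispatching a rank-$\gamma$ cut via the Reduction Lemma to get $\prov{2^{\alpha_0}\nsum 2^{\alpha_0}}{\gamma}\Gamma$ for some $\alpha_0<\alpha$. The only difference is that you spell out the ordinal arithmetic ($2^\beta$ has a single Cantor normal form term, so $2^\beta\nsum 2^\beta=2^{\beta+1}\le 2^\alpha$) which the paper leaves implicit, and your computation there is accurate.
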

\begin{proof}
  By induction on $\alpha$ and a case analysis on the last rule
  applied. Most cases are trivial, in case of a cut with rank $\gamma$
  we apply the induction hypothesis to both proofs of the premises of
  the cut and then apply the reduction lemma to obtain
  $\prov{2^{\alpha_0} \nsum 2^{\alpha_0}}{\gamma}\Gamma$ for some
  $\alpha_0<\alpha$ and thus $\prov{2^\alpha}{\gamma}\Gamma$.
\end{proof}

\begin{lemma}[Second Elimination Lemma]
  If\/ $\prov{\alpha}{\beta + \omega^\gamma}\Gamma$ then\/
  $\prov{\veblen{\gamma}{\alpha}}{\beta}\Gamma$.
\end{lemma}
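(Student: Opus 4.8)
The plan is to prove the statement by transfinite induction on $\gamma$, with a \emph{side} induction on the proof depth $\alpha$, following the pattern of predicative cut-elimination in Pohlers and Sch\"utte. Two facts about the Veblen function for $\gamma\geq 1$ will be used throughout: first, that $\veblen{\gamma}{\alpha}$ is an epsilon number (since it is a fixed point of $\veblen{0}{\cdot}=\xi\mapsto\omega^\xi$), hence a limit ordinal closed under natural sum; and second, that $\veblen{\gamma}{\alpha}$ is a fixed point of $\veblen{\gamma_0}{\cdot}$ for every $\gamma_0<\gamma$, so that by strict monotonicity $\sigma<\veblen{\gamma}{\alpha}$ implies $\veblen{\gamma_0}{\sigma}<\veblen{\gamma}{\alpha}$.

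For the base case $\gamma=0$ we have $\beta+\omega^0=\beta+1$, so a single application of the First Elimination Lemma turns $\prov{\alpha}{\beta+1}\Gamma$ into $\prov{2^\alpha}{\beta}\Gamma$, and since $2^\alpha\leq\omega^\alpha=\veblen{0}{\alpha}$ the claim follows after relaxing the depth bound. For the inductive step $\gamma>0$, I would fix a proof witnessing $\prov{\alpha}{\beta+\omega^\gamma}\Gamma$ and analyse its last inference. If it is \emph{not} a cut of rank $\geq\beta$ (a logical or structural rule, or a cut of rank $<\beta$), then each premise has depth $<\alpha$ and cut rank $<\beta+\omega^\gamma$; the side induction hypothesis lowers each premise to cut rank $<\beta$ and depth $<\veblen{\gamma}{\alpha}$, and because $\veblen{\gamma}{\alpha}$ is a limit I may reapply the inference while keeping the depth $\leq\veblen{\gamma}{\alpha}$ and the cut rank $<\beta$.

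The interesting case is a cut whose cut formula has rank $\rho$ with $\beta\leq\rho<\beta+\omega^\gamma$. Writing the premises as $\PP_1,\PP_2$ of depths $\alpha_1,\alpha_2<\alpha$, I first apply the side induction hypothesis with the parameter $\beta$ replaced by $\rho$ (legitimate since the premises have cut rank $<\beta+\omega^\gamma\leq\rho+\omega^\gamma$), obtaining premises of cut rank $<\rho$ and depths $\leq\veblen{\gamma}{\alpha_1}$ and $\leq\veblen{\gamma}{\alpha_2}$. The Reduction Lemma now eliminates the remaining rank-$\rho$ cut and yields $\prov{\sigma}{\rho}\Gamma\cons{\emptyset}$, where $\sigma=\veblen{\gamma}{\alpha_1}\nsum\veblen{\gamma}{\alpha_2}<\veblen{\gamma}{\alpha}$ by closure of epsilon numbers under natural sum.

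It remains to push the cut rank from $\rho$ down to $\beta$, and here lies the main obstacle, which is the successor case. If $\gamma$ is a limit I can pick $\gamma_0<\gamma$ with $\rho\leq\beta+\omega^{\gamma_0}$, so the main induction hypothesis for $\gamma_0$ turns $\prov{\sigma}{\beta+\omega^{\gamma_0}}\Gamma\cons{\emptyset}$ into $\prov{\veblen{\gamma_0}{\sigma}}{\beta}\Gamma\cons{\emptyset}$, with $\veblen{\gamma_0}{\sigma}<\veblen{\gamma}{\alpha}$ by the fixed-point property. If $\gamma=\gamma'+1$, however, $\rho<\beta+\omega^{\gamma'}\cdot\omega$ only yields $\rho\leq\beta+\omega^{\gamma'}\cdot m$ for some finite $m$, so I must invoke the main induction hypothesis for $\gamma'$ a total of $m$ times, peeling off one factor $\omega^{\gamma'}$ at each step; the resulting depth is $\veblen{\gamma'}{\cdot}$ iterated $m$-fold on $\sigma$, which stays strictly below $\veblen{\gamma'+1}{\alpha}=\veblen{\gamma}{\alpha}$ precisely because that ordinal is a fixed point of $\veblen{\gamma'}{\cdot}$ and a finite iteration can never reach it. This finite-multiplicity bookkeeping in the successor case is the only genuinely delicate point; everything else is a routine rearrangement driven by the Reduction Lemma and the two ordinal facts above.
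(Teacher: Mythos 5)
Your proof is correct, and its skeleton coincides with the paper's: a main induction on $\gamma$ with a side induction on $\alpha$, the base case $\gamma=0$ via the First Elimination Lemma together with $2^\alpha\leq\omega^\alpha=\veblen{0}{\alpha}$, the trivial propagation through non-critical inferences using that $\varphi_\gamma(\alpha)$ is a limit, and the final descent in cut rank by repeatedly peeling off factors $\omega^{\sigma}$ with the main induction hypothesis, justified by the fixed-point property of $\varphi_\gamma(\alpha)$. Where you genuinely diverge is in how the critical cut is discharged. The paper applies the side induction hypothesis to the two subproofs at the \emph{original} parameter $\beta$, so both premises drop to cut rank $<\beta$; it then simply re-performs the cut, obtaining $\prov{\varphi_\gamma(\alpha_0)+1}{\beta+\omega^\sigma\cdot n}\Gamma$ with $\alpha_0=\max(|\PP_1|,|\PP_2|)$ and suitable $\sigma<\gamma$, $n<\omega$ satisfying $\rk(A)<\beta+\omega^\sigma\cdot n$, and eliminates that single retained high-rank cut by $n$ applications of the main induction hypothesis; the Reduction Lemma is never invoked at this level (it enters only through the First Elimination Lemma at the bottom of the recursion). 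You instead shift the side induction's parameter from $\beta$ to $\rho=\rk(A)$, so the premises drop to cut rank $<\rho$, dispatch the rank-$\rho$ cut immediately with the Reduction Lemma at depth $\varphi_\gamma(\alpha_1)\nsum\varphi_\gamma(\alpha_2)$, and then iterate the main induction hypothesis to descend from $\rho$ to $\beta$. Both routes are sound: yours makes the role of the Reduction Lemma explicit at every level, at the price of needing closure of $\varphi_\gamma(\alpha)$ under natural sum, which you correctly supply from the epsilon-number property; the paper's retained-cut trick yields the marginally tighter local bound $\varphi_\gamma(\alpha_0)+1$ and avoids natural sums here entirely. Note also that the paper's uniform formulation --- choose $\sigma<\gamma$ and $n<\omega$ with $\rk(A)<\beta+\omega^\sigma\cdot n$ --- subsumes your explicit limit/successor case split (in the limit case $n=1$ suffices), so the finite-multiplicity bookkeeping you single out as the delicate point of the successor case is exactly the paper's $\varphi_{\sigma}^{n}$ iteration, bounded below $\varphi_\gamma(\alpha)$ by the same fixed-point argument you give.
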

\begin{proof}
  By induction on $\gamma$ with a subinduction on $\alpha$. For
  $\gamma=0$ this trivially follows from the first elimination lemma.
  Assume $\gamma>0$. The non-trivial case is where the last rule in
  the given proof of $\Gamma$ is a cut with a rank of $\beta$ or
  greater. With $\Gamma=\Gamma\cons{\emptyset}$ the proof is of the
  following form:
\[
\vlderivation{
\vliin{\cut_{<\beta+\omega^\gamma}}{}{\Gamma\cons{\emptyset}}
{\Proofleaf{\PP_1}{\Gamma\cons{A}}}
{\Proofleaf{\PP_2}{\Gamma\cons{\neg A}}}}
\rlap{\qquad .}
\]
Let $\alpha_0=\max(|\PP_1|,|\PP_2|)$. We apply the subinduction
hypothesis on the subproofs of the cut and obtain
$\prov{\veblen{\gamma}(\alpha_0)}{\beta}\Gamma\cons{A}$ and
$\prov{\veblen{\gamma}(\alpha_0)}{\beta}\Gamma\cons{\neg A}$. Since
$\rk(A) < \beta+\omega^\gamma$ a quick calculation by case analysis on
$\gamma$ yields the existence of $\sigma$ with $\sigma<\gamma$ and of
$n$ such that $\rk(A) < \beta+\omega^\sigma\cdot n$. Thus, by a cut we
obtain $\prov{\veblen{\gamma}(\alpha_0)+1}{\beta+\omega^\sigma\cdot
  n}\Gamma$. We apply the induction hypothesis $n$ times to obtain
$\prov{\veblen{\sigma}^n(\veblen{\gamma}(\alpha_0)+1)} {\beta}\Gamma$,
where $\veblen{\sigma}^n$ means $\veblen{\sigma}$ applied $n$ times.
Since
$\veblen{\sigma}^n(\veblen{\gamma}(\alpha_0)+1)<\veblen{\gamma}(\alpha)$
we have $\prov{\veblen{\gamma}(\alpha)}{\beta}\Gamma$.
\end{proof}

The cut-elimination theorem follows by iterated application of the
second elimination lemma.

\begin{theorem}[Cut-elimination for the deep system]\nl
  If\/ $\calcdeep\prov{\alpha}{\omega\cdot n}\Gamma$ 
then
  $\calcdeep\prov{\veblen{1}^n(\alpha)}
{0}\Gamma$.
\label{t:cutel}
\end{theorem}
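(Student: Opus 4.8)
The plan is to prove the theorem by a straightforward induction on $n$, with the Second Elimination Lemma as the single engine; all the genuine work is already packaged into the Reduction Lemma and the two Elimination Lemmas, so what remains is only the ordinal bookkeeping that lines up $n$ successive cut-rank reductions with $n$ iterations of the function $\veblen{1}{}$.

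First I would isolate one reduction step. One application of the Second Elimination Lemma strips exactly one copy of $\omega$ from the cut bound while wrapping the height in one copy of $\veblen{1}{}$: writing $\omega\cdot n = \omega\cdot(n-1)+\omega^1$ and instantiating the Second Elimination Lemma with $\beta=\omega\cdot(n-1)$ and $\gamma=1$, so that $\omega^\gamma=\omega$ and $\beta+\omega^\gamma=\omega\cdot n$, we obtain: if $\calcdeep\prov{\alpha}{\omega\cdot n}\Gamma$ then $\calcdeep\prov{\veblen{1}{\alpha}}{\omega\cdot(n-1)}\Gamma$. This is the step I would iterate.

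Then I would run the induction on $n$. For the base case $n=0$ we have $\omega\cdot 0 = 0$ and $\veblen{1}^0(\alpha)=\alpha$, so the claim reduces to the tautology that $\calcdeep\prov{\alpha}{0}\Gamma$ implies $\calcdeep\prov{\alpha}{0}\Gamma$. For the inductive step, assume the theorem holds for $n-1$. Given $\calcdeep\prov{\alpha}{\omega\cdot n}\Gamma$, the reduction step yields $\calcdeep\prov{\veblen{1}{\alpha}}{\omega\cdot(n-1)}\Gamma$; applying the induction hypothesis with the ordinal $\veblen{1}{\alpha}$ in place of $\alpha$ gives $\calcdeep\prov{\veblen{1}^{n-1}(\veblen{1}{\alpha})}{0}\Gamma$, which is exactly $\calcdeep\prov{\veblen{1}^n(\alpha)}{0}\Gamma$ since $\veblen{1}^{n-1}\circ\veblen{1}{}=\veblen{1}^n$.

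The main obstacle here is not a mathematical difficulty but an arithmetic one: I must make the additive decomposition $\omega\cdot n = \omega\cdot(n-1)+\omega^1$ exact, so that the premise $\cut_{<\omega\cdot n}$ really matches $\cut_{<\beta+\omega^\gamma}$ for $\beta=\omega\cdot(n-1)$ and $\gamma=1$, and so that the composition of the successive $\veblen{1}{}$-wrappings coincides with the iterate $\veblen{1}^n$. Everything substantial — the genuine cut-reduction, the doubling in the First Elimination Lemma, and the Veblen fixpoint estimate in the Second Elimination Lemma — has already been carried out in the preceding lemmas, so no further case analysis on the rules of $\calcdeep$ is required at this stage.
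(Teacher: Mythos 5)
Your proposal is correct and matches the paper's proof, which states exactly that the theorem ``follows by iterated application of the second elimination lemma'': your instantiation $\beta=\omega\cdot(n-1)$, $\gamma=1$ (so $\beta+\omega^\gamma=\omega\cdot(n-1)+\omega=\omega\cdot n$) is the intended single step, and the induction on $n$ with base case $\veblen{1}^0(\alpha)=\alpha$ is the intended iteration. You have merely spelled out the ordinal bookkeeping the paper leaves implicit.
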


\section{Cut-Elimination 
for the Shallow System
}

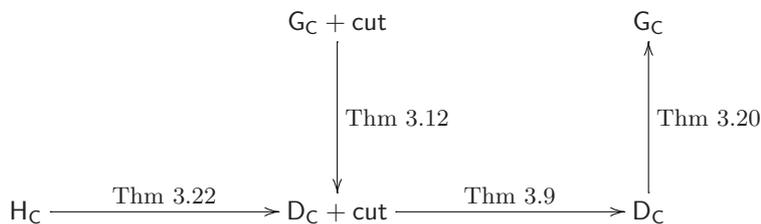
\begin{figure}
  \centering
\[
  \vcenter{ \xymatrix@R=2cm@C=3cm{{} & {\calcgentzen+\cut}
      \ar[d]^-{\txt{\small Thm \ref{t:shallowdeep}}} &
      {\calcgentzen} \\
      {\calchilb} \ar[r]^-{\txt{\small Thm \ref{t:hilb}}} &
      {\calcdeep+\cut} \ar[r]^-{\txt{\small Thm \ref{t:cutel}}} &
      {\calcdeep} \ar[u]_-{\txt{\small Thm \ref{t:deepshallow}}}\\
    } }
  \]
  \caption{Overview of the various embeddings}
\label{fig:emb}
\end{figure}

In this section we give a cut-elimination procedure for the shallow
system. To do so, we first embed the shallow system with cut into the
deep system with cut, eliminate the cut there, and embed the cut-free
deep system into the cut-free shallow system. Figure~\ref{fig:emb}
gives an overview of the embeddings. We have seen the horizontal arrow
on the right in the last section. Now we are going to see the vertical
arrows. System $\calchilb$ is a Hilbert system which we will see in
the last section, together with the horizontal arrow on the left.

\subsection{Embedding Shallow 
into Deep}

This is the easy direction. We first define a notion of admissibility
which is weaker than ``depth-preserving'': it allows the proof
to grow by a finite amount.

\begin{definition}
  A rule $\rr$ is \emph{finitely admissible} for a system $\Sys$ if
  for each instance of $\rr$ with premises $\Gamma_1,\Gamma_2\dots$
  and conclusion $\Delta$ there exists a finite ordinal $n$ such that
  whenever $\Sys\prov{\alpha}{\gamma}\Gamma_i$ for all $i$ then
  $\Sys\prov{\alpha+n}{\gamma}\Delta$.
\end{definition}

Note that every perfectly admissible (that is, depth- and
cut-rank-preserving admissible) rule is also finitely admissible: in
that case the $n$ in the above definition is zero. A finitary rule
which is contained in a system is also finitely admissible for that
system: in that case the $n$ in the above definition is one. The cut
rule, on the other hand, is generally not finitely admissible for
(cut-free) infinitary systems.

\begin{lemma}
  The rule $\vlinf{{\sf d}}{}{\Gamma\cons{\ckdia A,
      [\Delta]_i}}{\Gamma\cons{[\ckdia A, \Delta]_i}}$ is finitely
  admissible for system $\calcdeep$.
\end{lemma}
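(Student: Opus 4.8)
The plan is to argue by induction on the height $\alpha$ of the given proof of the premise $\Gamma\cons{[\ckdia A, \Delta]_i}$, permuting the ${\sf d}$-step upward and invoking the fixpoint character of $\ckdia$ at the one place where it is genuinely needed. Concretely, I would establish: whenever $\calcdeep \prov{\alpha}{\gamma} \Gamma\cons{[\ckdia A, \Delta]_i}$ then $\calcdeep \prov{\alpha+n}{\gamma} \Gamma\cons{\ckdia A, [\Delta]_i}$ for a fixed finite $n$ that depends on the instance but not on $\alpha$ or $\gamma$. The soundness intuition guiding the syntactic manipulation is the transitivity of common knowledge, $\Diamond_i \ckdia A \supset \ckdia A$: a $\ckdia A$ sitting one step down an $i$-edge may be reflected back to its parent, which is exactly the direction in which ${\sf d}$ moves the formula.

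Most cases are routine permutations. If the lowermost rule acts inside the context $\Gamma\context$ or inside $\Delta$, or introduces an unrelated connective, it is passive with respect to the displayed box and its $\ckdia A$; I would apply the induction hypothesis to its premises and reapply the rule, preserving the height bound (appealing to weakening and necessitation admissibility from the preceding admissibility lemma wherever a premise must be adjusted). The crucial case is when the lowermost rule is the $\ckdia$-rule applied to the displayed, in-box occurrence of $\ckdia A$, so that its premise is $\Gamma\cons{[\ckdia A, \Diamond^k A, \Delta]_i}$. Reading the target conclusion $\Gamma\cons{\ckdia A, [\Delta]_i}$ bottom-up, I would unfold the root $\ckdia A$ once with the $\ckdia$-rule, split the resulting $\Diamond A = \Diamond_1 A \vlor \dots \vlor \Diamond_h A$ by $h-1$ invertible $\vlor$-steps, and use the $\Diamond_i$-rule to copy the relevant material across the $i$-edge into the box. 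This is precisely the syntactic shadow of $\Diamond_i \ckdia A \supset \ckdia A$, and it accounts for the bounded number, $O(h)$, of extra inferences.

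To keep the height increase a single finite constant rather than something that grows with each unfolding of $\ckdia A$, I would strengthen the induction hypothesis so that the box on the left may carry, besides $\ckdia A$, any finite multiset of \emph{descendants} of $\ckdia A$ — formulas of the shape $\Diamond^m A$ together with the disjuncts produced from them by $\vlor$ — all of which are subsumed by the root $\ckdia A$ in the conclusion and are therefore simply discarded. With this strengthening the critical $\ckdia$-case reduces to an immediate appeal to the induction hypothesis on the (shorter) subproof, with no additional height cost, so that the fixed finite amount is paid only once, near the root. The main obstacle, and the reason the rule is merely \emph{finitely} rather than depth-preservingly admissible, is exactly this $\ckdia$-case: the in-box $\ckdia A$ cannot be pushed out or deleted by any bounded derivation (the converse implication $\ckdia A \supset \Diamond_i \ckdia A$ is unsound), so its effect must be re-simulated from the root via the fixpoint rule, and the delicate point is to arrange the strengthened invariant so that these re-simulations do not accumulate and the bound $n$ stays uniform in $\alpha$.
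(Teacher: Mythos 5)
Your core argument coincides with the paper's proof: induct on the depth of the proof of the premise, permute the ${\sf d}$-instance above every passive rule, and in the single interesting case --- the $\ckdia$-rule applied to the in-box occurrence of $\ckdia A$, with premise $\Gamma\cons{[\ckdia A,\Diamond^k A,\Delta]_i}$ --- simulate it from below by unfolding the outer $\ckdia A$, decomposing the unfolding with $\vlor$ and weakening, and transporting across the $i$-edge with the $\Diamond_i$-rule, while the relocated ${\sf d}$-instance (now sitting on the shorter subproof) is removed by the induction hypothesis. This is exactly the paper's transformation. One local correction: the root unfolding must produce $\Diamond^{k+1}A$ for the \emph{same} $k$ as the in-box unfolding; one then isolates the disjunct $\Diamond_i\Diamond^{k}A$ (via $h-1$ applications of $\vlor$ together with weakenings) and the $\Diamond_i$-rule copies $\Diamond^{k}A$, not $A$, into the box. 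Your displayed $\Diamond A=\Diamond_1 A\vlor\dots\vlor\Diamond_h A$ covers only $k=1$.

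The genuine gap is your strengthened induction hypothesis. You claim that descendants $\Diamond^m A$ (and their $\vlor$-pieces) carried inside the box are ``subsumed by the root $\ckdia A$ \dots\ and are therefore simply discarded,'' so that the critical case costs nothing. But descendants are not inert: the subproof above may apply $\vlor$ to them, may apply the $\Diamond_j$-rule to a piece $\Diamond_j\Diamond^{m-1}A$ and thereby move $\Diamond^{m-1}A$ into a \emph{nested} box --- which already escapes your invariant, since you confine descendants to the top level of $[\,\cdot\,]_i$ --- and after enough such steps may expose $A$ itself, to be consumed by an axiom or further logical rules. A proof of the strengthened premise therefore does not become a proof of the descendant-free conclusion by mere omission: every discarded contribution must be re-derived from the surviving $\ckdia A$ (unfold, split, copy down), and that re-derivation costs precisely the finite depth per unfolding that the strengthening was designed to avoid. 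So the uniformity problem you correctly identify is relocated, not solved. Note that the paper does not attempt any such uniformization: it runs the plain induction and pays the $\ckdia$-, $\vlor^{h-1}$- and $\Diamond_i$-steps (the weakenings are depth-preserving, hence free) each time the critical case occurs; the finite overheads are absorbed at limit depths, and the places where the lemma is used (the ${\sf d}^*$-block in the shallow-into-deep embedding, with its bound $\omega\cdot\alpha$) leave an $\omega$ of slack, so nothing in the paper depends on the stronger claim your strengthened invariant was meant to secure.
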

\begin{proof}
  By induction on the depth of the proof of the premise. The only
interesting case is the one with a $\ckdia$-rule:
\[
\vlderivation{\vlin{{\sf d}}{}
{\Gamma\cons{\ckdia A, [\Delta]_i}}
{\vlin{\ckdia}{}
{\Gamma\cons{[\ckdia A, \Delta]_i}}
{\Proofleaf{\PP}
{\Gamma\cons{[\ckdia A, \Diamond^k A, \Delta]_i}}
}}}
\qquad\leadsto\qquad
\vlderivation{
\vlin{\ckdia}{}
{\Gamma\cons{\ckdia A, [\Delta]_i} }
{\vlin{\wk,\vlor^*}{}
{\Gamma\cons{\ckdia A, \Diamond^{k+1} A, [\Delta]_i} }
{\vlin{\wk, \Diamond_i}{}
{\Gamma\cons{\ckdia A, \Diamond_i \Diamond^k A, [\Delta]_i} }
{\vlin{{\sf d}}{}
{\Gamma\cons{\ckdia A, [\Diamond^k A, \Delta]_i} }
{\Proofleaf{\PP}
{\Gamma\cons{[\ckdia A, \Diamond^k A, \Delta]_i} }
}}}}
} \rlap{\qquad ,}
\]
where the instance of $\sf d$ shown on the right is removed by
induction hypothesis.
\end{proof}

\begin{theorem}[Shallow into deep] If $\;\calcgentzen
  \provable{\alpha}{\gamma}{\Gamma}\;$ then
  $\;\calcdeep \provable{\omega \cdot \alpha}{\gamma}\Gamma\;$ .\\
\label{t:shallowdeep}  
\end{theorem}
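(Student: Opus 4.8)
The plan is to proceed by induction on the depth $\alpha$ of the given shallow proof, with a case analysis on its lowermost rule, carrying the cut-rank bound $\gamma$ along unchanged. The guiding observation is that a shallow sequent, being a flat multiset of formulas, is literally a nested sequent of depth zero, so the identity axiom and the propositional rules $\vlan,\vlor$ transfer verbatim, and likewise the $\ckbox$- and $\ckdia$-rules of $\calcgentzen$ are just the special cases of the corresponding rules of $\calcdeep$ in which the context $\Gamma\context$ is $\Gamma,\context$ and, for $\ckdia$, the index $k$ is $1$ so that $\Diamond^1 A=\Diamond A$. In each of these cases the premises are supplied by the induction hypothesis and at most one deep rule is appended; since each premise has depth $\alpha_i<\alpha$, the induction hypothesis gives a deep proof of depth $\le\omega\cdot\alpha_i$, and $\omega\cdot\alpha_i+n\le\omega\cdot(\alpha_i+1)\le\omega\cdot\alpha$ for any finite $n$. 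This absorption of finite overhead by the factor $\omega$ is the entire reason for the bound $\omega\cdot\alpha$; for the infinitely-branching $\ckbox$-rule the same estimate applies in the form $\sup_k(\omega\cdot\alpha_k+1)\le\omega\cdot\alpha$ since each $\alpha_k<\alpha$.

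For the shallow $\cut$, which is multiplicative (its two contexts $\Gamma$ and $\Delta$ differ), I would first use weakening admissibility to turn the premises $\Gamma,A$ and $\Delta,\neg A$ into $\Gamma,\Delta,A$ and $\Gamma,\Delta,\neg A$, and then apply the additive deep $\cut$ in the shared context $\Gamma,\Delta,\context$. The cut formula keeps its rank, so this is still a $\cut_{<\gamma}$, and the depth grows only by a finite amount.

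The only real work is the $\Box_i$-rule of $\calcgentzen$,
\[
\cinf{\Box_i}{\Diamond_i \Gamma, \ckdia \Delta,\Box_i A, \Sigma}{\Gamma, \ckdia \Delta, A}\rlap{\quad ,}
\]
whose difficulty is exactly the context restriction that obstructed cut-elimination in $\calcgentzen$: the formulas of $\ckdia\Delta$ survive the modal step without being prefixed. I would simulate it as follows. By the induction hypothesis there is a deep proof of the premise $\Gamma,\ckdia\Delta,A$ of depth $\le\omega\cdot\alpha_0$. Perfectly admissible necessitation boxes it up to $[\Gamma,\ckdia\Delta,A]_i$. Now the finitely admissible ${\sf d}$-rule established just above lets me move each formula $\ckdia C$ of $\ckdia\Delta$ out of the box (to prove $\ckdia C$ outside it suffices to have it inside), yielding $\ckdia\Delta,[\Gamma,A]_i$; weakening in $\Diamond_i\Gamma$ gives $\Diamond_i\Gamma,\ckdia\Delta,[\Gamma,A]_i$; the deep $\Diamond_i$-rule, applied once per formula of $\Gamma$, removes the bare copies of $\Gamma$ from inside the box against their diamonds outside, producing $\Diamond_i\Gamma,\ckdia\Delta,[A]_i$; the deep $\Box_i$-rule turns $[A]_i$ into $\Box_i A$; and a final weakening reinstates $\Sigma$. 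Every step here is either a deep rule (costing depth $1$ and introducing no cut) or a cut-rank-preserving admissible rule, so the block adds only a finite amount $n$ of depth while preserving $\gamma$, and again $\omega\cdot\alpha_0+n\le\omega\cdot\alpha$.

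The main obstacle is precisely this $\Box_i$-case, and the crucial ingredient is the finite admissibility of the ${\sf d}$-rule: it is what lets the common-knowledge formulas cross the agent-box boundary, so that the context restriction of the shallow $\Box_i$-rule can be reproduced in the nested setting without any corresponding restriction on the deep rules. Everything else is routine bookkeeping, the only point needing care being the ordinal arithmetic $\omega\cdot\alpha_i+n\le\omega\cdot\alpha$ that justifies the uniform multiplicative loss factor $\omega$.
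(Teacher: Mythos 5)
Your proof is correct and takes essentially the same route as the paper: induction on $\alpha$ with every rule except $\Box_i$ transferring as a special case of its deep counterpart, and the $\Box_i$-case simulated by admissible necessitation, the finitely admissible ${\sf d}$-rule, weakening, and the deep $\Diamond_i$- and $\Box_i$-rules (you merely permute the ${\sf d}$- and $\Diamond_i$-steps, which is immaterial). Your explicit handling of the multiplicative shallow cut via weakening admissibility and the ordinal estimate $\omega\cdot\alpha_i+n\le\omega\cdot(\alpha_i+1)\le\omega\cdot\alpha$ only spell out bookkeeping the paper leaves implicit.
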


\begin{proof}
  By induction on $\alpha$ and a case analysis on the last rule in the
  proof. Each rule of $\calcgentzen$ except for the $\Box_i$-rule is a
  special case of its respective rule in $\calcdeep$. For the
$\Box_i$-rule we have the following transformation:
\[
\vlderivation{
\vlin{\Box_i}{}
{\Diamond_i \Gamma, \ckdia \Delta, \Box_i A, \Sigma}
{\Proofleaf{\PP}
{\Gamma,\ckdia\Delta,A}}}
\qquad\leadsto\qquad
\vlderivation{
\vlin{\Box_i,\wk}{}
{\Diamond_i \Gamma, \ckdia\Delta, \Box_i A, \Sigma}
{\vlin{{\sf d}^*}{}
{\Diamond_i \Gamma, \ckdia\Delta, [A]_i  }
{\vlin{\wk^*,\Diamond_i^*}{}
{\Diamond_i \Gamma, [\ckdia\Delta,A]_i }
{\vlin{\nec}{}
{[\Gamma,\ckdia\Delta,A]_i }
{\Proofleaf{\PP'}
{\Gamma,\ckdia\Delta,A}
}}}}}\rlap{\qquad ,}
\]
where $\PP'$ is obtained by induction hypothesis.
\end{proof}

\subsection{Embedding Deep 
into Shallow}

This is the harder direction, since we need to simulate deep
applicability of rules in the shallow system. We use the invertibility
of rules in the shallow system in order to do so. The $\Box_i$-rule is
the only rule in $\calcgentzen$ which is not invertible. However, a
somewhat weaker property than invertibility holds, which is sufficient
for our purposes, and which is stated in the upcoming lemma. 

\begin{example}
  To motivate the following definition consider the following three
  provable sequents to which the $\Box_i$-rule cannot be
  applied (upwards) in an invertible way:
\[
\Box_i(a \vlan b), \Diamond_i \neg a \vlor \Diamond_i \neg b
\qquad 
\Box_i(a \vlan b), \ckdia \neg a \vlor \ckdia \neg b
\qquad
\Box_i a, \ckdia \neg a
\qquad .
\]

\end{example}

\begin{definition}[hiding formula, $\ckdia$-saturated sequent]
  A formula is \emph{essentially $\Diamond_i$} if 1) it is of the form
  $\Diamond_i A$ for any formula $A$ or 2) it is of the form $A \vlor
  B, B \vlor A, A \vlan B$ or $B\vlan A$ where $A$ is any formula and
  $B$ is a formula which is essentially $\Diamond_i$.  A formula is
  \emph{hiding $\Diamond_i$} in case 2). We define \emph{essentially
    $\ckdia$} and \emph{hiding $\ckdia$} formulas likewise. A formula
  is just \emph{hiding} if it is either \emph{hiding $\Diamond_i$} for
  some $i$ or \emph{hiding $\ckdia$}.  A sequent $\Gamma$ is
  \emph{$\ckdia$-saturated} if $\ckdia A \in\Gamma$ implies $\Diamond_i A \in
  \Gamma$, for each formula $A$ and each $i$ with
  $1\leq i\leq h$.
\end{definition}

\begin{definition}[canonical $\Box_i$-instance]
  An instance of the rule
\[
  \cinf{\Box_i}{\Diamond_i \Gamma,  \ckdia \Delta,\Box_i A, \Sigma}
{\Gamma, \ckdia \Delta, A} 
\]
is \emph{canonical} if no formulas of the form $\Diamond_i B$ or
$\ckdia B$ are in $\Sigma$.
\end{definition}

\begin{lemma}[Quasi-invertibility of the $\Box_i$-rule]\label{l:boxquinv}
  Let $\Gamma$ be a $\ckdia$-saturated sequent without hiding formulas
  and let there be a proof of the sequent $\Box_i A, \Gamma$ in
  $\calcgentzen$. Then there is a proof of the same depth in
  $\calcgentzen$ either 1) of the sequent $\Gamma$ or 2) of the
  sequent $\Box_i A, \Gamma$ where the last rule instance is a
  canonical instance of the $\Box_i$-rule applying to the shown
  formula $\Box_i A$.
\end{lemma}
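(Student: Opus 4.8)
The plan is to argue by induction on the depth of the given proof $\PP$ of $\Box_i A, \Gamma$, with a case analysis on its lowermost rule $\rho$. Two facts will be used repeatedly: that $\ckdia$-saturation and freedom from hiding formulas are preserved when I pass to the premises that I feed back into the induction hypothesis, and that the premise of a canonical $\Box_i$-instance on $\Box_i A, \Gamma$ is determined solely by the top-level $\Diamond_i$-formulas and $\ckdia$-formulas of $\Gamma$. First the straightforward cases. If $\rho$ is an axiom, then since $\Box_i A$ is not atomic the matched atoms $p, \neg p$ lie in $\Gamma$, so $\Gamma$ is itself an axiom and the first alternative holds. If $\rho$ is the $\Box_i$-rule acting on the shown occurrence $\Box_i A$, I make it canonical by moving any $\Diamond_i B$ or $\ckdia B$ sitting in its passive context into the active premise; depth-preserving weakening admissibility gives a proof of the enlarged premise, and one application of the now-canonical rule yields the second alternative. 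If $\rho$ is any $\Box_j$-rule whose active formula is not the shown $\Box_i A$ (including $j=i$ on a different occurrence), then $\Box_i A$, being neither a $\Diamond_j$- nor a $\ckdia$-formula nor the active box, lies in the discarded passive context; deleting it there leaves a valid instance of the same rule with the same premise but conclusion $\Gamma$, giving the first alternative.

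Next the rules acting on a compound formula of $\Gamma$. For $\rho=\vlor$ or $\rho=\vlan$ on $B\vlor C$ or $B\vlan C$ in $\Gamma$, the hypothesis that $\Gamma$ has no hiding formula means this compound is not essentially $\Diamond_j$ or essentially $\ckdia$, hence neither immediate subformula is; so the decomposed premises are again $\ckdia$-saturated and hiding-free, and I invoke the induction hypothesis on them. If some premise yields the second alternative, I observe that the decomposed pieces and the original compound all belong to the passive context of the canonical $\Box_i$-instance, so that instance has the very same premise when reattached to $\Box_i A, \Gamma$; reusing its subproof gives the second alternative. Otherwise every premise yields the first alternative and re-applying $\rho$ to the resulting proof(s) of the decomposed contexts rebuilds a proof of $\Gamma$. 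The rule $\rho=\ckbox$ on $\ckbox D\in\Gamma$ is handled exactly like $\vlan$, but with the infinitely many premises $\Gamma_1,\Box^k D$: each $\Box^k D$ is a conjunction of boxes, hence again not essentially $\Diamond_j$ or $\ckdia$, so it lands in the passive context; if one premise gives the second alternative its canonical premise is the same for every $k$, and otherwise all premises give the first alternative and a single $\ckbox$-rule reassembles $\Gamma$.

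The genuinely delicate case, which I expect to be the main obstacle, is $\rho=\ckdia$ on $\ckdia D\in\Gamma$: its premise $\Box_i A,\Gamma,\Diamond D$ introduces $\Diamond D=\Diamond_1 D\vlor\dots\vlor\Diamond_h D$, which is a hiding formula, so I cannot feed this premise to the induction hypothesis. Here the saturation hypothesis is exactly what rescues the argument. By depth-preserving invertibility of $\vlor$ I obtain a proof of $\Box_i A,\Gamma,\Diamond_1 D,\dots,\Diamond_h D$, and since $\ckdia D\in\Gamma$ forces $\Diamond_j D\in\Gamma$ for every $j$, this sequent has the same underlying set as $\Box_i A,\Gamma$; depth-preserving contraction admissibility then produces a proof of $\Box_i A,\Gamma$ of strictly smaller depth, to which the induction hypothesis applies directly. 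This is the only point where the added diamonds would break the no-hiding invariant, and it is precisely $\ckdia$-saturation, together with contraction admissibility, that simultaneously restores the invariant and prevents the depth from growing. The remaining book-keeping, namely verifying that the second-alternative proofs re-dock onto $\Box_i A,\Gamma$ within the same depth bound, is routine once the invariance of the canonical premise under decomposition is in place.
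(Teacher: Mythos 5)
Your proposal is correct and takes essentially the same route as the paper's proof: an induction on proof depth with a case analysis on the last rule, canonicalising a non-canonical $\Box_i$-instance via depth-preserving weakening, deleting $\Box_i A$ from the passive context $\Sigma$ of other $\Box_j$-instances, and re-docking a canonical instance after decomposition of non-hiding formulas (your ``invariance of the canonical premise'' is the paper's replacement of $\Box^k B$ by $\ckbox B$ inside $\Sigma$). In particular you identify and resolve the crucial $\ckdia$-case exactly as the paper does, by inverting the disjunction $\Diamond D$ and using $\ckdia$-saturation together with depth-preserving contraction admissibility to obtain a strictly shorter proof of the same endsequent to which the induction hypothesis applies.
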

\begin{proof}
  By induction on the depth of the given proof and a case analysis on the
  last rule.  If the endsequent is axiomatic then $\Gamma$ is
  axiomatic and the first disjunct of our lemma applies. If the last
  rule is the $\ckbox$-rule then the proof is of the form
\[
\vlderivation{
\vliiin{\ckbox}{\;^{1 \leq k}}{\Box_i A,\Gamma_1,\ckbox B}
  {\vlhy{\vdots}}{\Proofleaf{\PP_k}{\Box_i A, \Gamma_1,\Box^k B}}
{\vlhy{\vdots}}}
\]

We apply the induction hypothesis to each premise, with $\Gamma =
\Gamma_1, \Box^k B$. Notice that $\Gamma$ is $\ckdia$-saturated and
does not contain hiding formulas.  There are two cases. First, if for all
premises the first disjunct of the induction hypothesis is true then
for each $k$ we have a proof $\PP'_k$ such that the following shows
the first disjunct of our lemma:
\[
\vlderivation{\vliiin{\ckbox}{\;^{1 \leq k}}{\Gamma_1,\ckbox B}
  {\vlhy{\vdots}}{\Proofleaf{\PP'_k}{\Gamma_1,\Box^k B}}
{\vlhy{\vdots}}}\rlap{\qquad .}
\]

Second, if for some premise the second disjunct of the induction hypothesis is
true then for some $k$ we have a proof of the form
\[
\vlderivation{
\vlin{\Box_i}{}{\Box_i A, \Gamma_1, \Box^k B}
{\Proofleaf{}{A, \Gamma'}}}\rlap{\qquad .}
\]
Notice that the $\Box_i$-rule can only introduce a formula of the form
$\Box^k B$ in $\Sigma$, so we can easily turn this into a proof 
\[
\vlderivation{
\vlin{\Box_i}{}{\Box_i A, \Gamma_1, \ckbox B}
{\Proofleaf{}{A, \Gamma'}}}\rlap{\qquad ,}
\]
and we have shown the second disjunct of our lemma.  The cases for
$\vlor$ and $\vlan$ are similar.

If the last rule is the $\ckdia$-rule then the following
transformation yields a shorter proof:
\[
\vlderivation{
\vlin{\ckdia}{}{\Box_i A, \Gamma_1, \ckdia B}
{\Proofleaf{\PP}{\Box_i A, \Gamma_1, \ckdia B, \Diamond B}}}
\qquad\leadsto\qquad
\vlderivation{
\vlin{\ctr^*}{}{\Box_i A, \Gamma_1, \ckdia B}
{\vlin{\neg{\vlor}^*}{}{\Box_i A, \Gamma_1, \ckdia B, \Diamond_1
B,\dots,\Diamond_n B }
{\Proofleaf{\PP}{\Box_i A, \Gamma_1, \ckdia B, \Diamond B}}}}
\rlap{\qquad ,}
\]
where by assumption of $\ckdia$-saturation all the $\Diamond_i B$ are
in $\Gamma_1$. To this proof we can now apply the induction hypothesis
which yields our lemma.

If the last rule in the given proof is the $\Box_j$-rule, then we
distinguish two cases. First, if $\Box_i A$ is the active formula then the
second disjunct of our lemma is either immediate or obtained via
weakening admissibility if the rule instance is not canonical.

Second, if $\Box_i A$ is not the active formula then the proof is of
the form
\[
\vlderivation{
\vlin{\Box_j}{}{\Box_i A, \Gamma_1, \Box_j B}
{\Proofleaf{\PP}{\Gamma'}}}
\rlap{\qquad ,}
\]
where the formula $\Box_i A$ has been introduced inside $\Sigma$. We
can thus change it into a proof 
\[
\vlderivation{
\vlin{\Box_j}{}{\Gamma_1, \Box_j B}
{\Proofleaf{\PP}{\Gamma'}}}
\rlap{\qquad ,}
\]
which shows the first disjunct of our lemma.
\end{proof}

In order to translate a derivation with deep rule applications into a
derivation where only shallow rules are allowed we need a way of
simulating the deep applicability. It turns out that, for certain
shallow rules, if they are admissible for the shallow system, then
their ``deep version'' is also admissible.

\begin{definition}[Make a shallow rule deep]
  Let $C\context$ be a formula context. Given a rule $\rho$ we define
  a rule \emph{rule $C\cons{\rho}$} as follows: an instance of the
  rule $\rho$ is shown on the left iff an instance of the \emph{rule
    $C\cons{\rho}$} is shown on the right:
  \[
  \vlinf{\rho}{}{\Gamma,A}{\Gamma, A_1\,\dots\,\Gamma,A_i\,\dots}
\qquad\qquad
  \vlinf{C\cons{\rho}}{}{\Gamma, C\cons{A}}
    {\Gamma,C\cons{A_1}\,\dots\,\Gamma, C\cons{A_i}\,\dots} 
 \rlap{\qquad .}
  \]
  We define a \emph{restricted context} as a formula context in which
  the hole is in the scope of at most the connectives from $\{\vlor,
  \Box_1,\dots,\Box_h\}$. Given a rule $\rho$ we define the rule
  \emph{rule $\check{\rho}$} as follows: its set of instances is the
  union of all sets of instances of $C\cons{\rho}$ where $C\context$
  is a restricted context.
\end{definition}

\begin{lemma}[Deep applicability preserves finite admissibility]\label{l:depth}
  Let $C\context$ be a restricted context.\\
  (i) There is an $n$ such that for all $\Gamma$ we have
  $\;\calcgentzen \provable{n}{0} \Gamma, C\cons{p
    \vlor \neg p}\;$ . \\
  (ii) If a rule $\rho$ is finitely admissible for $\calcgentzen$ then
  $C\cons{\rho}$ is also finitely admissible for system
  $\calcgentzen$.\\
  (iii) If a rule $\rho$ is finitely admissible for $\calcgentzen$ then
  $\check{\rho}$ is also finitely admissible for system
  $\calcgentzen$.\\
\end{lemma}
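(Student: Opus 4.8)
The plan is to prove all three statements by induction on the structure of the restricted context $C\context$, treating (iii) as an immediate consequence of (ii). Recall that a restricted context is built using only the hole, disjunctions $\vlor$, and the modalities $\Box_1,\dots,\Box_h$, so the induction has a base case $C\context=\context$ and two inductive cases, $C\context=D\context\vlor B$ (or $B\vlor D\context$) and $C\context=\Box_i D\context$. For (iii) I would observe that every instance of $\check{\rho}$ is by definition an instance of $C\cons{\rho}$ for some restricted context $C\context$; since finite admissibility permits the bound $n$ to depend on the instance, finite admissibility of each $C\cons{\rho}$ from (ii) immediately yields finite admissibility of $\check{\rho}$.

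For (i) I would argue directly by induction on $C\context$. In the base case $\Gamma, p\vlor\neg p$ reduces by one $\vlor$-step to the identity axiom $\Gamma, p, \neg p$, with a bound independent of $\Gamma$. In the $\vlor$-case I apply $\vlor$ and invoke the induction hypothesis on $\Gamma, B, D\cons{p\vlor\neg p}$, so the bound grows by one and stays uniform in $\Gamma$. In the $\Box_i$-case I apply the $\Box_i$-rule, placing all of $\Gamma$ into the discarded context $\Sigma$, so the single premise is $D\cons{p\vlor\neg p}$; the induction hypothesis with empty side context then gives a bound depending only on $D\context$, hence uniform in $\Gamma$.

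For (ii) the base case is immediate, since $C\cons{\rho}$ with $C\context=\context$ is just $\rho$, finitely admissible by hypothesis. The $\vlor$-case uses perfect invertibility of the $\vlor$-rule (from the invertibility lemma for $\calcgentzen$) to expose $D\cons{A_j}$ in each premise at no cost in depth, then the induction hypothesis to pass from the family $D\cons{A_j}$ to $D\cons{A}$ with a finite increase, and finally one $\vlor$-step to rebuild $C\cons{A}$. The $\Box_i$-case is the crux: since the $\Box_i$-rule is not invertible, I cannot simply descend through the box and instead appeal to Lemma~\ref{l:boxquinv} (Quasi-invertibility of the $\Box_i$-rule) on each premise $\Gamma, \Box_i(D\cons{A_j})$. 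Either the surrounding context is already provable, so $\Gamma, \Box_i(D\cons{A})$ follows by weakening admissibility, or a canonical $\Box_i$-instance exposes a premise $\Lambda, \ckdia\Theta, D\cons{A_j}$, where the split $\Gamma=\Diamond_i\Lambda, \ckdia\Theta, \Sigma$ into its $\Diamond_i$-part, its $\ckdia$-part, and a canonical remainder is determined by $\Gamma$ and hence the same for every $j$. Applying the induction hypothesis with side context $\Lambda, \ckdia\Theta$ turns the family $D\cons{A_j}$ into $D\cons{A}$, after which one canonical $\Box_i$-step rebuilds $\Gamma, \Box_i(D\cons{A})$ at a finite cost in depth.

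The delicate point, and where I expect the real work to lie, is that Lemma~\ref{l:boxquinv} requires the surrounding context to be $\ckdia$-saturated and free of hiding formulas, which is not guaranteed for an arbitrary instance. I would discharge this by first applying the perfectly invertible rules $\vlor$, $\vlan$ and $\ckdia$ both to the goal $\Gamma, \Box_i(D\cons{A})$ and, in parallel, to the premise proofs of $\Gamma, \Box_i(D\cons{A_j})$, reducing to goals whose surrounding context is $\ckdia$-saturated and hiding-free; there the box-descent argument above applies, and re-applying the same forward rules recovers the original conclusion at the cost of finitely many additional steps. Checking that this saturation preprocessing interacts correctly with the subsequent box-descent, so that the total depth increase stays finite, is the main obstacle, whereas the propositional, disjunctive, and base cases are routine.
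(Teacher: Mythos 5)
Your proposal is correct and follows essentially the same route as the paper's proof: induction on the restricted context with (iii) immediate from (ii), perfect invertibility of $\vlor$ for the disjunction case, and, for the $\Box_i$ case, preprocessing the premise proofs by $\vlan$/$\vlor$-invertibility and weakening to reach $\ckdia$-saturated, hiding-free contexts before applying Lemma~\ref{l:boxquinv} branchwise, then the induction hypothesis and one canonical $\Box_i$-step, rebuilding the conclusion with the same rules applied downward. Your remark that the canonical split of the context is determined by the sequent and hence uniform across the premises of $\rho$ is in fact slightly more explicit than the paper's wording, and the depth bookkeeping you flag as the remaining obstacle goes through as you expect because the surrounding context is a fixed finite multiset, so the saturation phase costs only finitely many steps.
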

\begin{proof}
  Statement (iii) is immediate from (ii). Both (i) and (ii) are proved
  by induction on $C\context$. The case with
  $\;C\context=C_1\context\vlor C_2\;$ is of course analogous to the
  case with $\;C\context=C_1\vlor C_2\context\;$ and is omitted. We
  first prove (i). The case that $C\context$ is empty is handled by an
  application of the $\vlor$-rule. If $\;C\context=C_1\vlor
  C_2\context\;$ or $\;C\context=\Box_i C_1\context\;$ then we obtain
  a proof respectively as follows:
\[
\vlderivation{
\vlin{\vlor}{}{\Gamma, C_1\vlor C_2\cons{p\vlor\neg p}}
{\Proofleaf{\PP}{\Gamma, C_1,C_2\cons{p\vlor\neg p}}}}
\qquad \mbox{or} \qquad
\vlderivation{
\vlin{\Box_i}{}{\Gamma, \Box_i C_1\cons{p\vlor\neg p}}
{\Proofleaf{\PP}{C_1\cons{p\vlor\neg p}}}}
\]
where in both cases $\PP$ exists by induction hypothesis. For
statement (ii) the case that $C\context$ is empty is clear, so we
assume that it is non-empty. If $\;C\context=C_1\vlor C_2\context\;$
then the following transformation proves our claim:
\[
\vlderivation{
\vliiin{C_1\vlor C_2\cons{\rho}}{}{\Gamma, C_1\vlor
  C_2\cons{A}}{\vlhy{\vdots}}{\Proofleaf{\PP_k}{\Gamma, C_1\vlor
    C_2\cons{A_k}}}{\vlhy{\vdots}}}
\qquad \leadsto \qquad
\vlderivation{
\vlin{\vlor}{}{\Gamma, C_1 \vlor C_2\cons{A}}{
\vliiin{C_2\cons{\rho}}{}{\Gamma, C_1,
  C_2\cons{A}}{\vlhy{\vdots}}{
\vlin{\neg \vlor}{}{\Gamma, C_1,
    C_2\cons{A_k}}{\Proofleaf{\PP_k}{\Gamma, C_1\vlor
    C_2\cons{A_k}}}}{\vlhy{\vdots}}}}
\]

If $\;C\context=\Box_i C_1\context\;$ then we have the following situation:
\[
\vlderivation{ 
\vliiin{\Box_i C_1\cons{\rho}}{}{\Gamma, \Box_i
    C_1\cons{A}}{\vlhy{\vdots}}{\Proofleaf{\PP_k}{\Gamma, \Box_i
      C_1\cons{A_k}}}{\vlhy{\vdots}}}
\mathrlap{\qquad .}
\]

In order to apply quasi-invertibility of $\Box_i$,
Lemma~\ref{l:boxquinv}, we first need to replace the shown instance of
the rule $\Box_i C_1\cons{\rho}$ by several instances of it which are
applied in a context which is $\ckdia$-saturated and free of hiding
formulas. We apply conjunction invertibility, disjunction
invertibility and weakening admissibility to each $\PP_k$ to obtain a
sequence of proofs $\PP_{k1}\dots\PP_{km}$ such that for each $k$
there is a proof of the form
\[
\toks0={0.5}
\vlderivation{
\vltrf{\vlan,\vlor,\ckdia}{\Gamma, \Box_i C_1\cons{A_k} }
{\Proofleaf{\PP_{k1}}{\Gamma_1, \Box_i C_1\cons{A_k}}}
{\vlhy{\dots}}
{\Proofleaf{\PP_{km}}{\Gamma_m, \Box_i C_1\cons{A_k}}}
{\the\toks0}
}
\mathrlap{\qquad ,}
\]
where each $\Gamma_j$
is $\ckdia$-saturated and free of hiding formulas.

Fix some $j$. For all $k$ apply quasi-invertibility of $\Box_i$,
Lemma~\ref{l:boxquinv}, to the proof $\PP_{kj}$.   Either this yields some proof $\PP$ of
$\Gamma_j$ or for each $k$ it yields a proof $\PP'_{kj}$ of some sequent
$\Gamma_j', C_1\cons{A_k}$. Then we can build either
\[
\vlderivation{ 
\vlin{\wk}{}{\Gamma_j, \Box_i
    C_1\cons{A}}{\Proofleaf{\PP}{\Gamma}}} \qquad \mbox{or} \qquad
\vlderivation{ 
\vlin{\Box_i}{}{\Gamma_j, \Box_i
    C_1\cons{A}}{
\vliiin{C_1\cons{\rho}}{}{\Gamma_j',
    C_1\cons{A}}{\vlhy{\vdots}}{\Proofleaf{\PP'_{kj}}{\Gamma_j',
      C_1\cons{A_k}}}{\vlhy{\vdots}}}}
\mathrlap{\qquad ,}
\]
where in the second case $C_1\cons{\rho}$ is finitely admissible by
induction hypothesis. Repeat this argument for each $j$ with $1\leq j
\leq m$, which for each $j$ yields a proof $\PP''_j$ in
$\calcgentzen$. From those we build
\[
\toks0={0.5}
\vlderivation{
\vltrf{\vlan,\vlor,\ckdia}{\Gamma, \Box_i C_1\cons{A} }
{\Proofleaf{\PP''_1}{\Gamma_1, \Box_i C_1\cons{A}}}
{\vlhy{\dots}}
{\Proofleaf{\PP''_m}{\Gamma_m, \Box_i C_1\cons{A}}}
{\the\toks0}
}
\mathrlap{\qquad ,}
\]
which shows our lemma.
\end{proof}

\begin{figure}
  \centering
  \fbox{
\parbox{\ftextwidth}{
\medskip
\[
\cinf{{\sf g_c}}{\Gamma, {B \vlor A}}{\Gamma, {A \vlor B}}
\qquad\qquad
\cinf{{\sf g_a}}{\Gamma, {A \vlor (B \vlor C)}}{\Gamma, {(A
    \vlor B) \vlor C}}
\]
\smallskip
\[
\cinf{{\sf g}_{\ctr}}{\Gamma,A}{\Gamma,A\vlor A}
\qquad 
\cinf{{\sf g}_{\Diamond}}{\Gamma, \Diamond_i A ,\Box_i B}
{\Gamma, \Box_i (A \vlor B)}
\qquad 
\cinf{{\sf g}_{\ckdia}}{\Gamma, \ckdia A}{\Gamma, \Diamond^k A}
\text{ where }{k\geq 1}
\]
}}
  \caption{Some glue}
  \label{fig:glue}
\end{figure}

\begin{lemma}[Some glue]\label{l:glue}
  The rules in Figure~\ref{fig:glue} are finitely admissible for
  system $\calcgentzen$.
\end{lemma}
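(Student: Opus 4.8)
The plan is to handle the five rules separately, in increasing order of difficulty, in each case transforming a given proof of the premise into a proof of the conclusion whose depth exceeds that of the given proof by a fixed finite amount, independent of the proof, and whose cut rank is unchanged; this is exactly finite admissibility. Throughout I use the already-established facts that the $\vlor$-rule is perfectly invertible and that $\wk$ and $\ctr$ are perfectly admissible for $\calcgentzen$. The three propositional rules are immediate: for ${\sf g_c}$ I invert the $\vlor$ in a proof of $\Gamma, A\vlor B$ to get a proof of $\Gamma, A, B$ of the same depth, which as a multiset is $\Gamma, B, A$, and one $\vlor$ application gives $\Gamma, B\vlor A$ (so $n=1$); for ${\sf g_a}$ I invert twice to reach $\Gamma, A, B, C$ and re-apply $\vlor$ twice ($n=2$); for ${\sf g}_{\ctr}$ I invert the $\vlor$ of $A\vlor A$ to reach $\Gamma, A, A$ and apply contraction admissibility ($n=0$).

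The rule ${\sf g}_\Diamond$, which is the $\sf k$-axiom phrased as a rule, I would show to be even depth-preserving admissible, by induction on the depth of the given proof of $\Gamma, \Box_i(A\vlor B)$ and a case analysis on the last rule. If that rule is passive with respect to the displayed $\Box_i(A\vlor B)$ it is permuted below, the induction hypothesis is applied to its premises, and it is re-applied; here one uses that the $\ckbox$-, $\ckdia$-, $\vlan$- and $\vlor$-rules impose no context restriction. The crucial case is when the last rule is the $\Box_i$-rule with $\Box_i(A\vlor B)$ active, so that $\Gamma = \Diamond_i\Gamma', \ckdia\Delta, \Sigma$ and the premise is $\Gamma', \ckdia\Delta, A\vlor B$; inverting the $\vlor$ gives a proof of $\Gamma', A, \ckdia\Delta, B$ of no greater depth, to which I re-apply the $\Box_i$-rule, now with $\Box_i B$ active and $A$ absorbed into the diamond-context, yielding exactly $\Diamond_i\Gamma', \Diamond_i A, \ckdia\Delta, \Box_i B, \Sigma = \Gamma, \Diamond_i A, \Box_i B$. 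The remaining case, where $\Box_i(A\vlor B)$ sits in the unrestricted context $\Sigma$ of a $\Box_j$-rule and is discarded upwards, is handled by re-applying that $\Box_j$-rule with $\Diamond_i A, \Box_i B$ adjoined to $\Sigma$.

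The rule ${\sf g}_\ckdia$ is the main obstacle, because $\ckdia A$ is a least fixpoint and the naive inductive step would require a congruence rule ``from $\vdash\Sigma,C$ infer $\vdash\Sigma,\Diamond_i C$'', which is unsound for $\K$. Instead I would first establish, by the same depth-preserving induction as for ${\sf g}_\Diamond$, the auxiliary rule taking $\Gamma, \Diamond_i\ckdia A$ to $\Gamma, \ckdia A$: its crux is again a $\Box_i$-rule, where a proof of $\Lambda', \ckdia A, \ckdia\Delta, D$ is turned, by re-applying $\Box_i$ with $\ckdia A$ treated as part of the preserved $\ckdia$-context, into a proof of $\Gamma, \ckdia A$; taking the $h$ agents together (inverting the outer $\vlor$ and contracting) gives $\Gamma, \Diamond\ckdia A \leadsto \Gamma, \ckdia A$ at fixed cost. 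I would then prove ${\sf g}_\ckdia$ by induction on $k$: after inverting the outermost $\vlor$ of $\Diamond^k A = \bigvee_i \Diamond_i\Diamond^{k-1}A$, a nested induction on the proof depth follows each formula $\Diamond_i\Diamond^{k-1}A$ down to the $\Box_i$-rule that consumes it; at such a point the premise contains $\Diamond^{k-1}A$, the induction hypothesis for $k-1$ produces $\ckdia A$ inside the box, and re-applying $\Box_i$ lifts it out along the $\ckdia$-context exactly as in the auxiliary rule.

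The delicate points, which I expect to be where the real work lies, are the bookkeeping for the several agents in ${\sf g}_\ckdia$ and the verification that each branch triggers the $k-1$ hypothesis only once, so that the total depth increase is bounded by a function of $k$ alone and the cut rank never rises. It is worth stressing that the context restriction of the $\Box_i$-rule, far from obstructing the argument, is precisely what makes the lifting step work: because only $\Diamond_i$-formulas, $\ckdia$-formulas and boxes survive passing upward through a $\Box_i$-rule, a deeply-derived $\ckdia A$ can be re-routed through the preserved $\ckdia$-context and transported to the root, which is exactly the manoeuvre that the shallow system otherwise lacks and that motivates the move to nested sequents.
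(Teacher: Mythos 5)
Your proposal is correct and follows essentially the same route as the paper's proof: the three propositional rules via invertibility of the $\vlor$-rule, the ${\sf g_{\Diamond}}$-rule by induction on proof depth whose only nontrivial case is a $\Box_i$-rule (resolved by inverting the disjunction in the premise and re-applying $\Box_i$ with $A$ absorbed into the $\Diamond_i$-context), and the ${\sf g_{\ckdia}}$-rule by induction on $k$ with a subinduction on proof depth, where at the $\Box_i$-rule consuming $\Diamond_i\Diamond^{k-1}A$ the outer induction hypothesis is applied to the premise and the resulting $\ckdia A$ is transported back down through the preserved $\ckdia$-context. Your auxiliary rule $\Gamma,\Diamond_i\ckdia A \leadsto \Gamma,\ckdia A$ is an inessential warm-up that the paper omits (its lifting manoeuvre is exactly the step repeated in the main induction), and your two flagged bookkeeping concerns -- each branch triggering the $k{-}1$ hypothesis at most once, and cut-rank preservation -- are indeed the points the paper's depth accounting silently relies on.
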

\begin{proof}
  The rules $\sf g_c,g_a$ and $\sf g_{ctr}$ are easily seen to be
  finitely admissible by using invertibility of the $\vlor$-rule. For
  the ${\sf g_{\Diamond}}$-rule we proceed by induction on the given
  proof of the premise and make a case analysis on the last rule in
  this proof.  All cases are trivial except when this is the
  $\Box_i$-rule. We distinguish two cases: either 1) $\Box_i(A \vlor
  B)$ is the active formula or 2) it is not. In the first case we
  have:
\[
\vlderivation{
\vlin{{\sf g_{\Diamond}}}{}
{\Sigma, \ckdia \Delta, \Diamond_i \Lambda, \Diamond_i A, \Box_i B}
{\vlin{\Box_i}{}
{\Sigma, \ckdia \Delta, \Diamond_i \Lambda, \Box_i (A
\vlor B)}{\Proofleaf{\PP}
{\ckdia \Delta, \Lambda, A\vlor B}}}
}
\qquad \leadsto \qquad
\vlderivation{
\vlin{\Box_i}{}
{\Sigma, \ckdia \Delta, \Diamond_i \Lambda, \Diamond_i A, \Box_i B}
{\vlin{\inverse{\vlor}}{}
{\ckdia \Delta, \Lambda, A, B}
{\Proofleaf{\PP}{\ckdia \Delta, \Lambda, A\vlor B}}}
}
\]
and in the second case we have the following:
\[
\vlderivation{
\vlin{{\sf g_{\Diamond}}}{}
{\Box_i C, \Gamma',\Diamond_i A, \Box_i B}
{\vlin{\Box_i}{}
{\Box_i C, \Gamma', \Box_i(A\vlor B)}
{\Proofleaf{\PP}{C, \Gamma''}}}}
\qquad \leadsto \qquad
\vlderivation{
\vlin{\Box_i}{}{\Box_i C, \Gamma',\Diamond_i A, \Box_i B}
{\Proofleaf{\PP}{C, \Gamma''}}
}
\mathrlap{\qquad .}
\]
For the ${\sf g_{\ckdia}}$-rule we proceed by induction on $k$ and a
subinduction on the depth of the given proof of the premise. For $k=1$
the ${\sf g_{\ckdia}}$-rule coincides with the $\ckdia$-rule plus a
weakening, so we assume that we have a proof of $\Gamma,
\Diamond^{k+1} A$. By invertibility of the $\vlor$-rule we obtain a
proof
\[
\Proof{\PP}{\Gamma, \Diamond_1\Diamond^k A, \dots ,
  \Diamond_h\Diamond^k A}
\]
of the same depth. By induction on the depth of $\PP$ and a case analysis
on the last rule in $\PP$ we now show that we have a proof of the same
depth of $\Gamma, \ckdia A$. All cases are trivial except when the
last rule is $\Box_i$. Then the following transformation:
\[
\vlderivation{
\vlin{\Box_i}{}
{\Box_i B, \Diamond_i \Delta, \ckdia \Lambda, \Sigma,
\Diamond_1 \Diamond^k A, \dots, \Diamond_h \Diamond^k A }
{\Proofleaf{\PP'}
{B, \Delta, \ckdia \Lambda, \Diamond^k A}
}}
\quad\leadsto\quad
\vlderivation{
\vlin{\Box_i}{}
{\Box_i B, \Diamond_i \Delta, \ckdia \Lambda, \Sigma, \ckdia A}
{\vlin{{\sf g_{\ckdia}}}{}
{B, \Delta, \ckdia \Lambda, \ckdia A}
{\Proofleaf{\PP'}
{B, \Delta, \ckdia \Lambda, \Diamond^k A}
}}}
\]
proves our claim, where the instance of the ${\sf g_{\ckdia}}$-rule on the
right is finitely admissible by the outer induction hypothesis.
\end{proof}

For our translation from deep into shallow we translate nested
sequents into formulas and thus fix an arbitrary order and association
among elements of a sequent. The arbitrariness of this translation
gets in the way, and we work around it as follows: we write
\[
\cinf{{\sf ac}}{B}{A}
\]
 if the formula $B$ can be derived
from the formula $A$ in $\{{\sf \check{g}_c},{\sf \check{g}_a}\}$.
Clearly, in that case $A$ and $B$ are equal modulo commutativity and
associativity of disjunction. The converse is not the case. For
example $\ckdia (C \vlor D)$ can not be derived from $\ckdia (D \vlor
C)$ by {\sf ac}, in general. Note that since ${\sf \check{g}_c}$ and
${\sf \check{g}_a}$ are finitely admissible for system {\calcgentzen},
so is the rule {\sf ac}.

\begin{theorem}[Deep into shallow]\nl
If $\;\calcdeep \provable{\alpha}{0}\Gamma\;$ then we
have $\;\calcgentzen
  \provable{\,\omega\cdot(\alpha+1)}{0}\formula{\Gamma}\;$.\\
\label{t:deepshallow}
\end{theorem}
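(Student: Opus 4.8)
The plan is to induct on the proof depth $\alpha$, performing a case analysis on the last rule $\rr$ of the given $\calcdeep$-proof of $\Gamma$, and to translate each deep inference step into finitely many shallow steps acting on $\formula{\Gamma}$. The crucial structural observation is that for any sequent context $\Gamma\context$ the corresponding formula context $\formula{\Gamma\context}$ is a \emph{restricted context}: along the path from the root to the hole one only ever passes through commas (which become $\vlor$) and boxed-sequent brackets (which become some $\Box_i$), so the hole lies in the scope of connectives from $\{\vlor,\Box_1,\dots,\Box_h\}$ only. Consequently a deep rule applied at the node marked by $\Gamma\context$ becomes, under the translation, an application of $C\cons{\rho}$ for the restricted context $C\context=\formula{\Gamma\context}$, where $\rho$ is a suitable \emph{shallow} rule. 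By Lemma~\ref{l:depth} (Deep applicability preserves finite admissibility), whenever $\rho$ is finitely admissible for $\calcgentzen$ its deep version $\check\rho$ is as well, which is exactly what lets me realize each deep step shallowly.

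First I would set up the dictionary between the rules of $\calcdeep$ and finitely admissible shallow rules, using the glue of Lemma~\ref{l:glue}. The identity axiom $\Gamma\cons{p,\neg p}$ is handled directly by Lemma~\ref{l:depth}~(i). The $\vlan$-rule translates to $\check\vlan$ (the shallow $\vlan$, being a rule of $\calcgentzen$, is finitely admissible), and the $\ckbox$-rule likewise to $\check\ckbox$. The deep $\Box_i$- and $\vlor$-rules are essentially transparent: $\formula{\Gamma\cons{[A]_i}}$ and $\formula{\Gamma\cons{\Box_i A}}$ denote the same formula up to the fixed orders, and similarly $\formula{\Gamma\cons{A,B}}$ and $\formula{\Gamma\cons{A\vlor B}}$ differ only by association and commutativity, so both are bridged by the rule $\ac$. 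The deep $\Diamond_i$-rule, which replaces $[\Delta,A]_i$ by $[\Delta]_i$ in the presence of $\Diamond_i A$, corresponds precisely to $\check{\sf g}_{\Diamond}$ (with $B=\formula{\Delta}$), and the deep $\ckdia$-rule is simulated by first applying $\check{\sf g}_{\ckdia}$ to turn the newly introduced $\Diamond^k A$ into a second copy of $\ckdia A$ and then contracting the two copies with $\check{\sf g}_{\ctr}$. In every case any discrepancy caused by the arbitrary ordering and association chosen by $\formula{\,\cdot\,}$ is absorbed by finitely many applications of $\ac$, which is finitely admissible because $\check{\sf g}_c$ and $\check{\sf g}_a$ are.

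The ordinal bookkeeping then runs as follows. Each finitely admissible rule used above contributes only a fixed finite increase $c$ to the proof depth. For a premise of deep-depth $\alpha_i<\alpha$ the induction hypothesis gives a shallow proof of depth at most $\omega\cdot(\alpha_i+1)$, and appending the finitely many glue steps yields depth at most $\omega\cdot(\alpha_i+1)+c<\omega\cdot(\alpha_i+2)\le\omega\cdot(\alpha+1)$, using $\alpha_i<\alpha$. For the infinitary $\ckbox$-rule I would take the supremum of the bounds $\omega\cdot(\alpha_k+1)$ over its premises before adding the finite glue; since $\alpha_k+1\le\alpha$ for every $k$ this supremum is at most $\omega\cdot\alpha$, and adding $c<\omega$ keeps the total below $\omega\cdot(\alpha+1)$. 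This is exactly where the factor $\omega$ in the statement is needed: it provides room to absorb the bounded but nonzero cost of each simulated step.

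The hard part is not really in this theorem but has been front-loaded into Lemma~\ref{l:depth} (Deep applicability preserves finite admissibility) and, through it, Lemma~\ref{l:boxquinv} (Quasi-invertibility of the $\Box_i$-rule): simulating deep applicability underneath a $\Box_i$ is obstructed by the context restriction of the shallow $\Box_i$-rule, and it is the quasi-invertibility lemma that makes it possible. Given those results, the remaining difficulty here is organisational: choosing, for each deep rule, the right finitely admissible shallow companion (the glue rules ${\sf g}_{\Diamond}$ and ${\sf g}_{\ckdia}$ are indispensable for the $\Diamond_i$- and $\ckdia$-cases), and checking that $\ac$ always suffices to repair the order-and-association mismatch introduced by the translation $\formula{\,\cdot\,}$. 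The one place demanding genuine care is verifying that these $\ac$-repairs stay within restricted contexts, so that the finite admissibility supplied by Lemma~\ref{l:depth} actually applies.
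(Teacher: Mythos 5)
Your proposal is correct and follows essentially the same route as the paper's proof: induction on $\alpha$ with a case analysis on the last rule, the observation that $\formula{\Gamma\context}$ is a restricted context so that Lemma~\ref{l:depth} supplies deepened versions of the finitely admissible glue rules of Lemma~\ref{l:glue}, $\ac$ absorbing the order-and-association mismatches of $\formula{\,\cdot\,}$, and the same $\omega\cdot(\alpha+1)$ bookkeeping with a uniform finite bound on the glue (the paper notes this bound holds because the context $\Gamma\context$ is finite). The only cosmetic slip is that the deep $\Diamond_i$-case does not correspond to $\check{\sf g}_{\Diamond}$ alone but also needs $\check{\sf g}_{\ctr}$, since the deep $\Diamond_i$-rule keeps $\Diamond_i A$ in its premise and $\check{\sf g}_{\Diamond}$ introduces a second copy --- exactly the mechanism you describe for the $\ckdia$-case, and exactly what the paper's displayed derivation does.
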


\begin{proof} By induction on $\alpha$. If the endsequent of the given
  proof is of the form $\Gamma\cons{p,\neg p}$, then we have
\[
\Gamma\cons{p,\neg p}
\qquad\leadsto\qquad
\vlderivation{
\vlin{\ac}{}{\formula{\Gamma\cons{p,\neg p}}}
{\Proofleaf{\PP}{\formula{\Gamma}\cons{p\vlor \neg p}}}}
\]
where $\PP$ is of finite depth by Lemma~\ref{l:depth} and $\ac$ is
finitely admissible by Lemma~\ref{l:glue} and Lemma~\ref{l:depth}. If
the last rule is the $\vlor$-rule then an application of $\ac$ proves
our claim. The case of the $\Box_i$-rule is trivial since the
corresponding formula for the premise is the corresponding formula of
the conclusion. For the $\ckbox$-rule we apply the following
transformation, where the $\PP'_k$ are obtained by induction
hypothesis:
\[
\vlderivation{
\vliiin{\ckbox}{\;^{1\leq k < \omega}}
{\Gamma\cons{\ckbox A}}
{\vlhy{\vdots}}
{\Proofleaf{\PP_k}{\Gamma\cons{\Box^k A}}}
{\vlhy{\vdots}}
}
\qquad\leadsto\qquad
\vlderivation{
\vlin{\ac}{}{\formula{\Gamma\cons{\ckbox A}}}{
\vliiin{\formula{\Gamma}\cons{\ckbox}}
{\;^{1\leq k < \omega}}
{\formula{\Gamma}\cons{\ckbox A}}
{\vlhy{\vdots}}
{\vlin{\ac}{}{\formula{\Gamma}\cons{\Box^k A}}
{\Proofleaf{\PP'_k}{\formula{\Gamma\cons{\Box^k A}}}}}
{\vlhy{\vdots}}}
}
\]

Let the depth of the proof on the left be $\beta$ with
$\beta\leq\alpha$ and the depth of a proof $\PP_k$ be $\beta_k$. Note
that the depth of the {\sf ac}-derivations both below and above the
infinitary rule is bounded by a finite ordinal $m$ because the context
$\Gamma\context$ is finite. Then, by finite admissibility of the rule
$\formula{\Gamma}\cons{\ckbox}$ (Lemma~\ref{l:depth}) there is a
finite ordinal $n$ such that the proof on the right has the depth
\[
\begin{array}{ll}
  & \sup_k(|\PP'_k|+m+1)+n+m \;<\; 
\sup_k(|\PP'_k|)+\omega\\
  \leq & \sup_k(\omega \cdot (\beta_k+1))+\omega \;=\; 
\omega\cdot \sup_k(\beta_k+1)+\omega\\
=    & \omega\cdot \beta +\omega \;=\; \omega\cdot(\beta+1) \;\leq\;
\omega\cdot(\alpha+1) \rlap{\qquad .}\\
\end{array}
\]
The case for the $\vlan$-rule is similar. For the $\Diamond_i$-rule
we apply the following transformation, where $\PP'$ is obtained by
induction hypothesis and the bound on the depth is easy to check:
\[
\vlderivation{
\vlin{\Diamond_i}{}{\Gamma\cons{\Diamond_i A, [\Delta]_i}}
{\Proofleaf{\PP}{\Gamma\cons{\Diamond_i A, [A, \Delta]_i}}}}
\qquad\leadsto\qquad
\vlderivation{
\vlin{\ac}{}{\formula{\Gamma
\cons{\Diamond_i A, [\Delta]_i}}}{
\vlin{\formula{\Gamma}\cons{{\sf g_{ctr}} \vlor \Box_i\formula{\Delta}}}{}
{\formula{\Gamma}
\cons{\Diamond_i A \vlor\Box_i \formula{\Delta}}}{
\vlin{\ac}{}
{\formula{\Gamma}
\cons{(\Diamond_i A \vlor \Diamond_i A) \vlor\Box_i \formula{\Delta}}}{
\vlin{\formula{\Gamma}\cons{\Diamond_i A \vlor {\sf g_{\Diamond}}}}{}
{\formula{\Gamma}
\cons{\Diamond_i A \vlor (\Diamond_i A \vlor\Box_i \formula{\Delta})}}{
\vlin{\ac}{}{\formula{\Gamma}
\cons{\Diamond_i A \vlor \Box_i (A\vlor \formula{\Delta})}}{
\Proofleaf{\PP'}
{\formula{\Gamma\cons{\Diamond_i A, [A, \Delta]_i}} }}}}}}}
\rlap{\quad .}
\]
Note that here a rule like $C\cons{\rho \vlor A}$ means the rule $\rho$
applied in the context $C\cons{\context \vlor A}$, and is finitely
admissible for $\calcgentzen$ if is $\rho$ is finitely admissible for
$\calcgentzen$, by Lemma~\ref{l:depth}. 

The case for the $\ckdia$-rule
is similar.
\end{proof}

We can now state the cut-elimination theorem for the shallow system.

\begin{theorem}[Cut-elimination for the shallow system]\nl
  If\/ $\calcgentzen \prov{\alpha}{\omega\cdot n}\Gamma$ then\/ $\calcgentzen
  \prov{\omega\cdot(\veblen{1}^n(\omega\cdot\alpha)+1)}{0}\Gamma$
\end{theorem}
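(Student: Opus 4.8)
The plan is to obtain this theorem as a routine composition of the three results already established, following exactly the path drawn in Figure~\ref{fig:emb}, and then applying one invertibility step to pass from the corresponding formula back to the sequent. The only real content is the ordinal bookkeeping, which I would arrange so that the bound comes out precisely as stated.

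First I would feed the given proof through Theorem~\ref{t:shallowdeep} (Shallow into deep): since a shallow sequent is in particular a nested sequent, from $\calcgentzen \prov{\alpha}{\omega\cdot n}\Gamma$ we get $\calcdeep \prov{\omega\cdot\alpha}{\omega\cdot n}\Gamma$. Next I would apply Theorem~\ref{t:cutel} (Cut-elimination for the deep system) with cut rank $\omega\cdot n$, which turns this into the cut-free deep proof $\calcdeep \prov{\veblen{1}^n(\omega\cdot\alpha)}{0}\Gamma$. Finally I would translate back via Theorem~\ref{t:deepshallow} (Deep into shallow), obtaining a shallow proof $\calcgentzen \prov{\omega\cdot(\veblen{1}^n(\omega\cdot\alpha)+1)}{0}\formula{\Gamma}$ of the \emph{corresponding formula} of $\Gamma$.

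The single remaining subtlety is that the deep-into-shallow embedding delivers a proof of the one-formula sequent $\formula{\Gamma}$ rather than of $\Gamma$ itself. Writing $\Gamma = A_1,\dots,A_m$, its corresponding formula is the left-associated disjunction $A_1 \vlor \dots \vlor A_m$, which as a multiset is identical to $\Gamma$. Here I would apply $\inverse{\vlor}$ a total of $m-1$ times; since the $\vlor$-rule is perfectly invertible for $\calcgentzen$, each application is depth- and cut-rank-preserving, so the sequent $A_1,\dots,A_m = \Gamma$ is derived with the same depth bound $\omega\cdot(\veblen{1}^n(\omega\cdot\alpha)+1)$ and cut rank $0$. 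The degenerate case $m=0$ does not arise, since the empty sequent has corresponding formula $\bot$ and is not provable, so $\Gamma$ is nonempty. This yields exactly $\calcgentzen \prov{\omega\cdot(\veblen{1}^n(\omega\cdot\alpha)+1)}{0}\Gamma$.

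I do not expect any genuine obstacle: all the hard work already resides in the reduction lemma and the two elimination lemmas for $\calcdeep$ and in the two embeddings. The only point demanding care is that the ordinals compose exactly — that the $\omega\cdot\alpha$ produced by the shallow-into-deep step is precisely the argument handed to $\veblen{1}^n$, and that the outer $\omega\cdot(\,\cdot\,+1)$ contributed by the deep-into-shallow step survives the final $\vlor$-inversions unchanged.
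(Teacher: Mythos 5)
Your proposal is correct and follows exactly the paper's intended route: the paper states this theorem without an explicit proof, as the immediate composition of Theorem~\ref{t:shallowdeep}, Theorem~\ref{t:cutel} and Theorem~\ref{t:deepshallow} along the path of Figure~\ref{fig:emb}, and your ordinal bookkeeping ($\omega\cdot\alpha$ into $\veblen{1}^n$, then the outer $\omega\cdot(\,\cdot\,+1)$) matches the stated bound precisely. Your closing step, recovering $\Gamma$ from $\formula{\Gamma}$ by finitely many depth- and cut-rank-preserving $\inverse{\vlor}$ applications, correctly supplies the one detail the paper leaves implicit.
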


\section{An Upper Bound on 
the Depth of Proofs}

The Hilbert system $\calchilb$ is obtained from some Hilbert system
for classical propositional logic by adding the axioms and rules shown
in Figure~\ref{fig:hc}. It is essentially the same as system $K^C_h$
from the book \cite{FHMV}, where also soundness and completeness are
shown. We will now embed $\calchilb$ into $\calcdeep+\cut$, keeping
track of the proof depth and thus, via cut-elimination for
$\calcdeep$, establish an upper bound for proofs in $\calcdeep$. Via
the embedding of the deep system into the shallow system, this bound also
holds for the shallow system.

\begin{figure}
\fbox{
\parbox{\ftextwidth}{
  \centering
\[
\kax\quad \Box_i A \vlan \Box_i(A \imp B) \imp \Box_i B 
\qquad
\cocloax\quad \ckbox A \imp (\Box A \vlan \Box \ckbox A) 
\]
\[
\cinf{\indrule}{B \imp \ckbox A}{B \imp (\Box A \vlan \Box B)} \qquad 
\cinf{\mprule}{B}{A \quad A \imp B} \qquad
\cinf{\necrule}{\Box_i A}{A}
\]
}}
  \caption{System \calchilb}
  \label{fig:hc}
\end{figure}

\begin{theorem}[$\calchilb$ into $\calcdeep+\cut$] If\/ $\calchilb \vdash A$ then 
$\calcdeep\prov{<\omega^2}{\omega^2} A$.
\label{t:hilb}
\end{theorem}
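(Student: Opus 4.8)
The plan is to proceed by induction on the length of the given Hilbert proof in $\calchilb$, treating each axiom and each rule in turn while tracking both the proof depth and the cut rank. For the cut rank the only thing to verify is that every cut formula we introduce has rank strictly below $\omega^2$, which is automatic by Lemma~\ref{l:rk}(ii). For the depth the induction invariant is that a formula provable in $\calchilb$ receives a proof in $\calcdeep+\cut_{<\omega^2}$ of some depth below $\omega^2$; since adding $\omega+1$ to any ordinal below $\omega^2$ keeps it below $\omega^2$, this invariant is preserved at every step, as each step will be seen to raise the depth by at most $\omega+1$.

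For the axioms no cut is needed. The axioms of the underlying classical propositional calculus (indeed any propositional tautology) are proved by decomposing their Boolean structure with the $\vlor$- and $\vlan$-rules down to leaves of the form $\Gamma\cons{B,\neg B}$, each closed by the generalised identity axiom of Lemma~\ref{l:gid} at depth $2\cdot\rk(B)<\omega^2$. The axiom $\kax$ is the ordinary modal $\K$ axiom for a single agent $\Box_i$ and is proved cut-free with finite depth above such identity leaves. The co-closure axiom $\cocloax$, that is $\ckbox A \imp (\Box A \vlan \Box\ckbox A)$, is proved by splitting the conjunction and, in the $\Box\ckbox A$ conjunct, using the $\ckdia$- and $\Diamond_i$-rules to feed copies of $\neg A$ through the boxes, closing against $\ckdia\neg A$ via the generalised identity for $\ckbox A$; by Lemma~\ref{l:rk} its depth stays below $\omega^2$.

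For the finitary rules the treatment is routine, using the admissibility of weakening, necessitation and contraction and the invertibility of the rules established earlier for $\calcdeep$. For $\mprule$ we are given proofs of $A$ and of $\neg A \vlor B$; by invertibility of $\vlor$ the latter yields a proof of $\neg A, B$ and by weakening admissibility the former yields a proof of $B, A$, so a single $\cut$ on $A$, of rank $\rk(A)<\omega^2$, produces $B$ while raising the depth by one. For $\necrule$ we turn a proof of $A$ into one of $[A]_i$ by necessitation admissibility and then apply the $\Box_i$-rule to obtain $\Box_i A$.

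The main obstacle is the induction rule $\indrule$: from a proof of $B \imp (\Box A \vlan \Box B)$ we must produce a proof of $B \imp \ckbox A = \neg B \vlor \ckbox A$. First, by invertibility of $\vlor$ and $\vlan$ we extract proofs of $\neg B, \Box A$ and of $\neg B, \Box B$. The heart of the argument is a \emph{monotonicity lift}: from a proof of $\neg C, D$ one builds a proof of $\neg\Box C, \Box D$ by decomposing $\Box = \bigwedge_i \Box_i$ and $\Diamond = \bigvee_i \Diamond_i$ and, for each agent $i$, applying the $\Box_i$-rule, then the $\Diamond_i$-rule to copy $\neg C$ inside the box, then weakening and necessitation down to $\neg C, D$; this costs only finitely many inference steps. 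Iterating the lift on $\neg B, \Box B$ gives proofs of $\neg\Box^{j}B, \Box^{j+1}B$ for each $j$, and iterating it on $\neg B, \Box A$ gives $\neg\Box^{k-1}B, \Box^{k}A$; chaining these with cuts on the formulas $\Box^{j}B$, each of rank $\rk(B)+j\cdot h<\omega^2$ by Lemma~\ref{l:rk}, yields for every $k\geq 1$ a proof of $\neg B, \Box^{k}A$. The crucial bookkeeping point is that the depth of this proof is bounded by $\alpha_0 + c\cdot k$ for a fixed finite constant $c$ and the premise depth $\alpha_0<\omega^2$. Applying the infinitary $\ckbox$-rule to the family $\{\neg B, \Box^{k}A\}_{k\geq 1}$ then produces $\neg B, \ckbox A$ at depth $\sup_k(\alpha_0 + c\cdot k)+1 = \alpha_0 + \omega + 1 < \omega^2$, with all cut ranks remaining below $\omega^2$. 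This is exactly the step where the design of the rank in Lemma~\ref{l:rk}, giving both $\rk(\Box^{k}A)<\rk(\ckbox A)$ and $\rk(A)<\omega^2$, is indispensable, and the one demanding care to keep the supremum below $\omega^2$.
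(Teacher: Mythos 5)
Your proposal is correct and follows essentially the same route as the paper's proof: induction on the Hilbert derivation, Lemma~\ref{l:gid} to close the axioms, weakening admissibility and $\vlor$-invertibility plus one cut for $\mprule$, necessitation admissibility for $\necrule$, and for $\indrule$ an inner induction yielding proofs of $\neg B, \Box^k A$ whose depths grow by a fixed finite amount per step, so that the infinitary $\ckbox$-rule closes at depth $\alpha_0 + \omega + 1 < \omega^2$, with all cut ranks controlled by Lemma~\ref{l:rk}. The only cosmetic difference is inside the $\indrule$ case, where the paper repeatedly cuts on the single fixed formula $\Box B$ (lifting the proof of $\neg B, \Box^k A$ to one of $\Diamond \neg B, \Box^{k+1} A$ via exactly your ``monotonicity lift'' block of $\nec$, $\Diamond_i$, $\Box_i$, $\vlor$, $\vlan$), whereas you pre-lift both premises and chain cuts on the formulas $\Box^j B$ of growing rank $\rk(B)+j\cdot h$ --- these ranks remain uniformly below $\rk(B)+\omega < \omega^2$, so the bookkeeping goes through just the same.
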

\begin{proof}
  The proof is by induction on the depth of the derivation in
  $\calchilb$.  If $A$ is a propositional axiom of $\calchilb$ then
  there is a finite derivation of $A$ in the propositional part of
  system $\calcdeep$ such that all premises are instances of the
  general identity axiom. Thus we obtain $\calcdeep\prov{\omega\cdot
    m}{0} A$ for some $m<\omega$ by admissibility of the general
  identity axiom (Lemma \ref{l:gid}).

  If $A$ is an instance of $\kax$, then we obtain
  $\calcdeep\prov{\omega\cdot m}{0} A$ for some $m<\omega$ from the
  following derivation and admissibility of the general identity axiom
  to take care of the premises.
\[
\vlderivation{
\vlin{\vlor^2}{}{\Box_i A \vlan \Box_i(A \imp B) \imp \Box_i B }
{\vlin{\Box_i}{}{\Diamond_i \neg A,\Diamond_i(A \vlan \neg B),\Box_i B}
{\vlin{\Diamond_i}{}{\Diamond_i \neg A,\Diamond_i(A \vlan \neg
    B),[B]_i}{ 
\vliin{\vlan}{}{\Diamond_i \neg A,\Diamond_i(A \vlan
      \neg B),[B, A \vlan \neg B]_i}{
\vlin{\Diamond_i}{}{\Diamond_i \neg
        A,\Diamond_i(A \vlan \neg B),[B, A]_i}{\vlhy{\Diamond_i \neg
          A,\Diamond_i(A \vlan \neg B),[B, A, \neg
          A]_i}}}{\vlhy{\Diamond_i \neg A,\Diamond_i(A \vlan \neg
        B),[B, \neg B]_i}}}}}}
\]

If $A$ is an instance of $\cocloax$, then we obtain
$\calcdeep\prov{\omega\cdot m}{0} A$ for some $m<\omega$ from the
following derivation and again admissibility of the general identity
axiom to take care of the premises. An argument similar to the one
used to derive the general identity axiom guarantees that all premises
of the $\ckbox$ rule are derivable with depth smaller than $\rk(\ckbox
A)$.

\[
\vlderivation{
\vlin{\vlor}{}{\ckbox A \imp (\Box A \vlan \Box \ckbox A) }{
\vliin{\vlan}{}{\ckdia \neg A, \Box A \vlan \Box\ckbox A}{
\vlin{\ckdia,\wk}{}{\ckdia \neg A, \Box A}{
\vlhy{\Diamond \neg A, \Box A}}}
{\vliiin{\vlan}{\;^{1\leq i \leq h}}{\ckdia \neg A, \Box\ckbox A}
  {\vlhy{\vdots}}
  {\vlin{\Box_i}{}{\ckdia \neg A, \Box_i\ckbox A}{
    \vliiin{\ckbox}{\;^{1\leq k < \omega}}{\ckdia \neg A, [\ckbox A]_i}
       {\vlhy{\vdots}} 
       {\vlin{\ckdia, \wk}{}{\ckdia \neg A, [\Box^k A]_i}
         {\vlin{\vlor, \wk}{}{\Diamond^{k+1} \neg A, [\Box^k A]_i }
           {\vlin{\Diamond_i, \wk}{}{\Diamond_i\Diamond^k \neg A,[\Box^k A]_i}
             {\vlhy{ [\Diamond^k \neg A, \Box^k A]_i}}}}}
       {\vlhy{\vdots}}}}
  {\vlhy{\vdots}}}
}}
\]

If the last rule in the derivation is an instance of $\mprule$, then
by the induction hypothesis there are $m_1,m_2,n_1,n_2<\omega$ such
that $\calcdeep\prov{\omega\cdot m_1}{\omega\cdot n_1} A$ and
$\calcdeep\prov{\omega\cdot m_2}{\omega\cdot n_2} A \imp B$.  Thus we
get $\calcdeep\prov{\omega\cdot m_1}{\omega\cdot n_1} A, B$ by
weakening admissibility and $\calcdeep\prov{\omega\cdot
  m_2}{\omega\cdot n_2} \neg A, B$ by invertibility.  An application
of $\cut$ yields $\calcdeep\prov{\omega\cdot m}{\omega\cdot n} B$ for
$m = \max(m_1,m_2)+1$ and $n = \max(n_1,n_2,\rk(A)+1)$.

If the last rule in the derivation is an instance of  $\necrule$, then the claim follows from the induction hypothesis, the fact that $\nec$ is
cut-rank- and depth-preserving admissible, and an application of $\Box_i$.

If the last rule in the derivation is an instance of  $\indrule$, then by the induction hypothesis there are $m_1,n_1<\omega$ such that
$\calcdeep\prov{\omega\cdot m_1}{\omega\cdot n_1} B \imp (\Box A \vlan \Box B)$.
Then by invertibility of the $\vlan$- and $\vlor$-rules we obtain 
\[
1)\; \calcdeep\prov{\omega\cdot m_1}{\omega\cdot n_1} \neg B, \Box B 
\qquad \text{ and }\qquad
2)\; \calcdeep\prov{\omega\cdot m_1}{\omega\cdot n_1} \neg B, \Box A.
\]
Let $n_2$ be such that $\rk(\Box B) < \omega \cdot n_2$. We set $n = \max(n_1,n_2)$.
By induction on $k$ we show that
for all $k\geq 1$ there is an $m_2<\omega$ such that
$\calcdeep\prov{\omega\cdot m_1 + m_2}{\omega\cdot n} \neg B, \Box^k A$.
The case $k=1$
is given by 2) and the induction step is as follows:
\[
\vlderivation{
\vliin{\cut}{}{\neg B, \Box^{k+1} A}
{\vlhy{\neg B, \Box B}}
{\vliiin{\vlan}{\;^{1\leq i\leq h}}{\Diamond \neg B, \Box^{k+1}A}
{\vlhy{\vdots}}
{\vlin{\vlor,\wk}{}{\Diamond \neg B, \Box_i\Box^{k}A}
{\vlin{\Box_i}{}{\Diamond_i \neg B, \Box_i\Box^{k}A}
{\vlin{\Diamond_i, \wk}{}{\Diamond_i \neg B, [\Box^{k}A]_i}
{\vlin{\nec}{}{[\neg B, \Box^{k}A]_i}
{\vlhy{\neg B, \Box^{k}A}}}}}}
{\vlhy{\vdots}}}
} \qquad ,
\]
where the premise on the left is 1) and the premise on the right
follows by induction hypothesis. The claim follows by applications of
$\ckbox$ and $\vlor$.
\end{proof}

The embedding of the Hilbert system into the nested sequent system
together with the cut-elimination theorem for the deep system gives us
the following upper bounds on the depth of proofs in the cut-free
systems.

\begin{theorem}[Upper bounds] If $A$ is a valid formula then\nl
(i) $\calcdeep \prov{<\veblen{2}{0}}{0} A$, and\\
(ii) $\calcgentzen \prov{<\veblen{2}{0}}{0} A$.
\end{theorem}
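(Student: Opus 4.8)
The plan is to chain the embeddings summarised in Figure~\ref{fig:emb} with cut-elimination for the deep system, keeping careful track of the ordinals. I treat (i) first. Since $A$ is valid, completeness of the Hilbert system (\cite{FHMV}) gives $\calchilb \vdash A$, and Theorem~\ref{t:hilb} ($\calchilb$ into $\calcdeep+\cut$) then produces a proof of $A$ in $\calcdeep+\cut$ of depth below $\omega^2$. The crucial point I would read off from the construction in that theorem is that a Hilbert derivation is \emph{finite}, so only finitely many distinct cut ranks ever occur; hence the resulting proof actually has cut rank bounded by some $\omega\cdot n$ with $n<\omega$, that is, $\calcdeep\prov{\alpha_0}{\omega\cdot n} A$ for some $\alpha_0<\omega^2$. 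This refinement is exactly what allows me to invoke the finite-level cut-elimination theorem instead of trying to eliminate a cut of rank $\omega^2$ in one step.

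Applying Theorem~\ref{t:cutel} (Cut-elimination for the deep system) to this proof yields $\calcdeep\prov{\veblen{1}^n(\alpha_0)}{0} A$. It remains to bound the depth. Here I use that $\veblen{2}{0}$ is the least fixpoint of the normal function $\xi\mapsto\veblen{1}{\xi}$: since $\veblen{1}$ is strictly increasing and $\veblen{1}{(\veblen{2}{0})}=\veblen{2}{0}$, any $\beta<\veblen{2}{0}$ satisfies $\veblen{1}{\beta}<\veblen{2}{0}$. As $\alpha_0<\omega^2<\veblen{1}{0}<\veblen{2}{0}$, iterating this closure property $n$ times gives $\veblen{1}^n(\alpha_0)<\veblen{2}{0}$, so $\calcdeep\prov{<\veblen{2}{0}}{0} A$, which is (i).

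For (ii) I feed this cut-free deep proof into Theorem~\ref{t:deepshallow} (Deep into shallow). Writing $\beta=\veblen{1}^n(\alpha_0)<\veblen{2}{0}$, that theorem gives $\calcgentzen\prov{\omega\cdot(\beta+1)}{0}\formula{A}$; and since $A$ is a single formula, the corresponding formula $\formula{A}$ of the sequent $A$ is just $A$, so $\calcgentzen\prov{\omega\cdot(\beta+1)}{0} A$. Finally, $\veblen{2}{0}$ is in particular a fixpoint of $\xi\mapsto\omega^{\xi}$, hence an epsilon number and so closed under ordinal addition and multiplication; from $\beta<\veblen{2}{0}$ and $\omega<\veblen{2}{0}$ I conclude $\omega\cdot(\beta+1)<\veblen{2}{0}$, giving $\calcgentzen\prov{<\veblen{2}{0}}{0} A$, which is (ii).

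The only genuine obstacle is the first step. The cut-elimination theorem is stated at the finite cut levels $\omega\cdot n$, whereas Theorem~\ref{t:hilb} advertises cut rank $\omega^2$. A naive direct reduction from rank $\omega^2$ via the second elimination lemma would produce depth $\veblen{2}{\alpha_0}$, which is at least $\veblen{2}{0}$ and so misses the target $<\veblen{2}{0}$. I therefore need the observation, obtained from the finiteness of the Hilbert proof, that the embedded proof in fact lives at some finite cut level $\omega\cdot n$; once that is secured the remaining estimates are routine closure facts about $\veblen{2}{0}$.
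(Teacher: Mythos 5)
Your proposal is correct and follows essentially the same route as the paper: completeness of $\calchilb$, the embedding of Theorem~\ref{t:hilb} read in its refined form $\calcdeep\prov{\omega\cdot m}{\omega\cdot n} A$ with $m,n<\omega$ (which is exactly what the paper's proof of that theorem establishes and what its proof of the upper-bound theorem invokes, so your worry about the advertised cut rank $\omega^2$ is resolved the same way there), then Theorem~\ref{t:cutel}, closure of $\veblen{2}{0}$ under $\veblen{1}$, and finally Theorem~\ref{t:deepshallow} together with $\omega\cdot(\beta+1)<\veblen{2}{0}$. The only cosmetic difference is that you justify the closure step via the least-fixpoint property of $\veblen{2}{0}$ while the paper cites the general monotonicity fact $\veblen{\beta_1}{\gamma_1}<\veblen{\beta_2}{\gamma_2}$ for $\beta_1<\beta_2$ and $\gamma_1<\veblen{\beta_2}{\gamma_2}$; these are interchangeable.
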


\begin{proof}
  If $A$ is valid then by completeness of $\calchilb$ we have
  $\calchilb \vdash A$ and by the embedding of the Hilbert system into
  the nested sequent system there are natural numbers $m,n$ such that
  $\calcdeep \prov{\omega\cdot m}{\omega\cdot n} A$.  By the cut
  elimination theorem for the nested sequent system we obtain $\calcdeep
  \prov{\veblen{1}^n(\omega\cdot m)}{0} A$.  We know
  $\veblen{\beta_1}{\gamma_1} < \veblen{\beta_2}{\gamma_2}$ if
  $\beta_1 < \beta_2$ and $\gamma_1 < \veblen{\beta_2}{\gamma_2}$.
  Thus $\calcdeep \prov{<\veblen{2}{0}}{0} A$. For (ii) by the
  embedding of the deep system into the shallow system it suffices to
  check that for $\alpha<\veblen{2}0$ we have
  $\omega\cdot(\alpha+1)<\veblen{2}0$.
\end{proof}

\section{Discussion}

We have introduced a nested sequent system for common knowledge which,
in contrast to the ordinary sequent system by Alberucci and J\"ager,
admits a syntactic cut-elimination procedure. We have shown this
cut-elimination procedure, and, via embedding the two systems into
each other, have also provided a cut-elimination procedure for the
shallow system. We embedded a Hilbert style system and obtained
$\veblen{2}{0}$ as upper bound on the depth of cut-free proofs for
both sequent systems.

Notice in particular how we used the nested sequent system as a tool
in order to prove a result about an existing, ordinary sequent system.
Designing some kind of infinitary proof system with a syntactic
cut-elimination for the logic of common knowledge was the less
interesting part: we could have used both the display calculus and
labelled sequents to do so. However, it is hard to imagine how a
cut-free display calculus or cut-free labelled sequent calculus could
have been translated back into our cut-free ordinary sequent calculus.
The fact that nested sequents stay inside the modal language allowed
us to do so.

{\bf Other modal logics.} We have looked at common knowledge based on
the least normal modal logic. In a sense, \emph{common belief} would
be a better name. Given the previous chapter, it seems that any other
modal logic of the cube could be used instead and thus our approach is
independent of the particular underlying axiomatisation of knowledge.
The modal logic {\sf S5} is often proposed as an adequate logic for
knowledge.  As we have seen in the previous chapter, contrary to
shallow sequents, nested sequents can easily handle {\sf S5}. So it is
easy to design a system for {\sf S5}-based common knowledge.  We just
need to add a single rule to system $\calcdeep$:
\[
\cinf{{\sf S5}}
{\Gamma\cons{\Diamond A}\cons{\emptyset}}
{\Gamma\cons{\Diamond A}\cons{A}} \qquad .
\]
However, hypersequents are already sufficient to capture {\sf S5} and a system
for {\sf S5}-based common knowledge based on the hypersequent system
{\sf LS5} by Mints \cite{MintsHS} seems to admit a cut-elimination
procedure similar to the one given here for system $\calcdeep$. So for
{\sf S5}-based common knowledge there is no need for nested sequents.

{\bf Future work.} Of course there are also more speculative questions. What is the
mathematical meaning of the upper bound on the depth of cut-free
proofs? Is there a kind of boundedness lemma in modal logic similar to
the one used in the analysis of set theories and second order
arithmetic?  What would be the equivalent of a well-ordering proof in
modal logic? Is $\veblen{2}{0}$ the best possible upper bound on the
depth of proofs?  Can our analysis be extended to more powerful modal
fixpoint logics? Work under preparation suggests that a cut-free
infinitary nested sequent system for the $\mu$-calculus can be bounded
by $\veblen{\omega}{0}$. 

A more interesting, and harder problem is the design of cut-free
finitary sequent systems for modal fixpoint logics. While such systems
exist, for example for temporal logics \cite{Gaintz07,BL08}, their
rules are context-dependent in a way which makes it hard to study them
proof-theoretically, in particular it seems hard to design a syntactic
cut-elimination procedure.

\bibliographystyle{plain}
\bibliography{../../kai}

\end{document}